\newcommand{\B}{\mbox{$\{0,1\}$}}
\newcommand{\Bn}{\mbox{$\{0,1\}^n$}}
\newcommand\defeq{\stackrel{\mathrm{\scriptsize def}}{=}}
\newtheorem{theorem}{Theorem}
\newtheorem{corollary}[theorem]{Corollary}
\newtheorem{lemma}[theorem]{Lemma}
\newtheorem{definition}[theorem]{Definition}
\definecolor{lgray}{gray}{0.9}
\definecolor{llgray}{gray}{0.97}
\title{Asymptotically Optimal Circuit Depth for Quantum State Preparation  and General Unitary Synthesis}
\author[*1,2]{Xiaoming Sun}
\author[*1,2]{Guojing Tian}
\author[1,2]{Shuai Yang\footnote{Email:\{sunxiaoming, tianguojing, yangshuai21b\}@ict.ac.cn}}
\author[$\dagger$3]{Pei Yuan}
\author[3]{Shengyu Zhang \footnote{Email: \{peiyuan, shengyzhang\}@tencent.com}}
\affil[1]{State Key Lab of Processors, Institute of Computing Technology, Chinese Academy of Sciences, Beijing 100190, China }
\affil[2]{School of Computer Science and Technology, University of Chinese Academy of Sciences, Beijing 100049, China}
\affil[3]{Tencent Quantum Laboratory, Tencent, Shenzhen, Guangdong 518057, China}
\date{} 
\pgfplotsset{compat=1.17}
\begin{document}

\maketitle
\begin{abstract}
    The Quantum State Preparation problem aims to prepare an $n$-qubit quantum state $|\psi_v\rangle =\sum_{k=0}^{2^n-1}v_k|k\rangle$ from {the} initial state $|0\rangle^{\otimes n}$, for a given unit vector $v=(v_0,v_1,v_2,\ldots,v_{2^n-1})^T\in \mathbb{C}^{2^n}$ with $\|v\|_2 = 1$. The problem is of fundamental importance in quantum algorithm design, Hamiltonian simulation and quantum machine learning, yet its circuit depth complexity remains open when ancillary qubits are available. In this paper, we study quantum circuits when there are $m$ ancillary qubits available. We construct, for any $m$, circuits that can prepare $|\psi_v\rangle$ in depth $\tilde O\big(\frac{2^n}{m+n}+n\big)$ and size $O(2^n)$, achieving the optimal value for both measures simultaneously. 
    These results also imply a depth complexity of $\Theta\big(\frac{4^n}{m+n}\big)$ for quantum circuits implementing a general $n$-qubit unitary for any $m \le O(2^n/n)$ number of ancillary qubits.
    This resolves the depth complexity for circuits without ancillary qubits. And for circuits with exponentially many  ancillary qubits, our result quadratically improves the currently best upper bound of {$O(4^n)$} to $\tilde \Theta(2^n)$. 
    
    Our circuits are deterministic, prepare the state and carry out the unitary precisely, utilize the ancillary qubits tightly and the depths are optimal in a wide parameter regime. The results can be viewed as (optimal) time-space trade-off bounds, which are not only theoretically interesting, but also practically relevant in the current trend that the number of qubits starts to take off, by showing a way to use a large number of qubits to compensate the short qubit lifetime.
\end{abstract}

\section{Introduction}
\label{sec:introduction}
Quantum computers provide a great potential of solving certain important information processing tasks that are  believed to be intractable  for classical computers. In recent years, quantum machine learning \cite{biamonte2017quantum} and Hamiltonian simulation \cite{berry2015simulating,low2017optimal,low2019hamiltonian,berry2015hamiltonian} have also been extensively investigated, including quantum principal component analysis (QPCA) \cite{lloyd2014quantum}, quantum recommendation systems \cite{kerenidis2017quantum}, quantum singular value decomposition \cite{rebentrost2018quantum}, quantum linear system algorithm \cite{harrow2009quantum,wossnig2018quantum}, quantum clustering \cite{kerenidis2018q,kerenidis2020quantum} and quantum support vector machine (QSVM) \cite{rebentrost2014quantum}.
One of the challenges to fully exploit quantum algorithms for these tasks, however, is to efficiently prepare a starting state\footnote{These starting states (for example, those in \cite{harrow2009quantum,wossnig2018quantum}) are very generic. Indeed, the lower bound argument in our later Theorem \ref{thm:lowerbound_QSP} applies to the generation of these states as well.}, which is usually the first step of those algorithms. 
This raises the fundamental question about the complexity of the quantum state preparation (QSP) problem.

The QSP problem can be formulated as follows. Suppose we have a vector $v=(v_0,v_1,v_2,\ldots,v_{2^n-1})^T\in \mathbb{C}^{2^n}$ with unit $\ell_2$-norm, i.e. $\sqrt{\sum_{k=0}^{2^n-1}|v_k|^2}=1$. The task is to generate a corresponding $n$-qubit quantum state 
\[|\psi_v\rangle=\sum_{k=0}^{2^n-1}v_k|k\rangle,\]
by a quantum circuit from the initial state $|0\rangle^{\otimes n}$,  where $\{|k\rangle: k=0, 1, \ldots, 2^n-1\}$ is the computational basis of the quantum system. 


Different cost measures can be studied for quantum circuits: Size, depth, and number of qubits are among the most prominent ones. For a quantum circuit, the depth  corresponds to the time for executing the quantum circuit, and the number of qubits used to its space cost. 
Apart from minimizing each cost measure individually, it is of particular interest to study a time-space trade-off for quantum circuits. The reason is that in the past decade, we have witnessed a rapid development in qubit number and in qubit lifetime\footnote{Take superconducting qubits, for example, the qubit number jumped from 5 in 2014 to 127 in 2021 \cite{barends2014superconducting,kelly2015state,song201710,arute2019quantum,IBM65qubit,chow2021ibm} 
.}, but it seems hard to significantly improve \textit{both} on the same chip. Looking into the near future, big players such as IBM and Google announced their roadmaps of designing and manufacturing quantum chips with about 1,000,000 superconducting qubits by 2026 and 2029, respectively, rocketing from 50-100 today \cite{IBMroadmap, Googleroadmap}.
This raises a natural question for quantum algorithm design: How to utilize the fast-growing number of qubits to overcome the relatively limited decoherence time? This seems especially relevant in the near future when we have $10^4- 10^5$ qubits, which are expected to run certain quantum simulation algorithms for chemistry problems, but are not sufficient for the full quantum error correction to fight the decoherence.  
Or put in a computational complexity language, how to efficiently trade space for time in a quantum circuit? In this paper, we will address this question in the fundamental tasks of quantum state preparation and general unitary circuit synthesis.

Let us first fix a proper circuit model. If we aim to generate the target state $|\psi_v\rangle$ or perform the target unitary precisely, then a finite universal gate set is not enough. A natural choice is the set of circuits that consist of arbitrary single-qubit gates and CNOT gates
, which is expressive enough to generate arbitrary states $|\psi_v\rangle$ precisely  with certainty. 
We will study the optimal depth for this class of circuits\footnote{Since two-qubit gates are usually harder to implement, one may also like to consider CNOT depth, the number of layers with at least one CNOT gate. But note that between two CNOT layers, consecutive single-qubit gates on the same qubit can be compressed to one single-qubit gate, and single-qubit gates on different qubits can be  paralleled to within one layer, we can always assume that the circuit has alternative single-qubit gate layers and CNOT gate layers. Therefore the circuit depth is at most twice of the CNOT depth, making the two measures the same up to a factor of 2.}.

The study of QSP dates back to 2002, when Grover and Rudolph gave an algorithm for QSP for the special case of efficiently integrable probability density functions \cite{grover2002creating}. {Their circuit has $n$ stages, and each stage $j$ has $2^{j-1}$ layers, with each layer being a rotation on last qubit conditioned on the first $j-1$ qubits being certain computational basis state. This type of multiple-controlled $(2\times 2)$-unitary can be implemented in depth $O(n)$ without ancillary qubit\footnote{The standard method \cite{nielsen2002quantum} gives a depth upper bound of $O(n^2)$ without ancillary qubit and $O(n)$ with sufficiently many ancillary qubits. The first bound can be improved to $O(n)$ by the method in \cite{multi-controlled-gate}.}, yielding a depth upper bound of $O(n2^n)$ for the QSP problem.} 
In \cite{bergholm2005quantum}, Bergholm \textit{et al.} gave an upper bound of $2^{n+1}-2n-2$ for the \emph{number} of CNOT gates, with depth also of order {$O(2^n)$}.  The number of CNOT gates is improved to $\frac{23}{24} 2^n - 2^{\frac{n}{2}+1} + \frac{5}{3}$ for even $n$, and $\frac{115}{96} 2^n $ for odd $n$ by Plesch and Brukner  \cite{plesch2011quantum}, based on a universal gate decomposition technique in \cite{mottonen2005decompositions}. The same paper \cite{bergholm2005quantum} also gives a depth upper bound of $\frac{23}{48}2^n$ for even $n$ and $\frac{115}{192}2^n$ for odd $n$. All these results are about the exact quantum state preparation without ancillary qubits.

With ancillary qubits, Zhang \textit{et al.} \cite{zhang2021low} proposed circuits which involve measurements and can generate the target state in $O(n^2)$ depth but only with certain success probability, which is at least $\Omega(1 / (\max_i |v_i|^2 2^n))$, but in the worst case can be an exponentially small order of $O(1/2^n)$.  In addition, they need $O(4^n)$ ancillary qubits to achieve this depth. 
In a different paper \cite{zhang2021parallel}, the authors showed that for $\epsilon\le 2^{-\Omega(n)}$, an $n$-qubit quantum state $|\psi'_v\rangle$ can be implemented by an $O(n^3)$-depth quantum circuit with sufficiently many ancillary qubits\footnote{No explicit bound on the ancillary qubits is given.}, where $\||\psi'_v\rangle-|\psi_v\rangle\|\le \epsilon$. 
Though QSP is only used as a tool for their main topic of parallel quantum walk, their concluding section did call for studies on the trade-off between the circuit depth and the number of ancillary qubits for better parallel quantum algorithms.
{Another related study is \cite{johri2021nearest}, which considers to prepare a state not in the binary encoding $\sum_{k=0}^{2^n-1} v_k |k\rangle$, but in the \textit{unary} encoding $\sum_{k=0}^{2^n-1} v_k |e_k\rangle$, where $e_i\in \{0,1\}^{2^n}$ is the vector with the $k$-th bit being 1 and all other bits being 0. The paper shows that the unary encoding QSP can be carried out by a quantum circuit of depth $O(n)$ and size $O(2^n)$. Note that the unary encoding itself takes $2^n$ qubits, as opposed to $n$ qubits in the binary encoding. The binary encoding is the most efficient one in terms of the number of qubits needed for the resulting state, and indeed in most quantum machine learning tasks the quantum speedup depends crucially on this encoding efficiency at the first place \cite{harrow2009quantum,kerenidis2017quantum,kerenidis2020classification,dervovic2018quantum,zhao2021smooth,larose2020robust}. 
In \cite{johri2021nearest} the authors also extended this by using a $d$-dimensional tensor $(k_1, k_2, \ldots, k_d)$ to encode $k$, which needs $d 2^{n/d}$ qubits to encode and a circuit of depth $O(\frac{n}{d} 2^{n-n/d})$ to prepare. When $d = n$ the encoding coincides with the binary encoding, but their depth bound is $O(2^n)$, which is not optimal. 
} 

In this paper, we tightly characterize the depth and size complexities of the quantum state preparation problem by constructing optimal quantum circuits. Our circuits generate the target state precisely, with certainty, and use an optimal number of ancillary qubits. 
We present our results on QSP first, where a general number $m$ of ancillary qubits are available.

\begin{theorem} \label{thm:QSP_anci}
For any $m \ge 2n$, any $n$-qubit quantum state $\ket{\psi_v}$ can be generated by a circuit with $m$ ancillary qubits, using single-qubit gates and CNOT gates, of {size $O(2^n)$ and} depth 
\[\left\{\begin{array}{ll}
   O\big(\frac{2^n}{m+n}\big),  & \text{if~} m\in [2n, O(\frac{2^n}{n\log n})],  \\
   O\left(n\log n\right),  & \text{if~} m\in [\omega(\frac{2^n}{n\log n}),o(2^n)],\\
   O\left(n\right),  & \text{if~} m = \Omega(2^n).
\end{array}\right.\]

\end{theorem}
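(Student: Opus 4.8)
The plan is to reduce the state-preparation task to a sequence of \emph{uniformly controlled rotations} (multiplexors), organized by a prefix/suffix splitting of the qubits, and then to parallelize each multiplexor using the ancillary qubits via fan-out and parity computation. First I would recall the standard decomposition (in the spirit of Grover--Rudolph and Möttönen \emph{et al.}) that writes the preparation of $\ket{\psi_v}$ from $\ket{0}^{\otimes n}$ as $n$ stages, where stage $j$ applies a uniformly controlled single-qubit rotation to qubit $j$ controlled by qubits $1,\ldots,j-1$ (an $R_y$ fixing the magnitudes of the conditional amplitudes and an $R_z$ fixing their phases). A uniformly controlled rotation on $k$ controls is specified by $2^k$ angles, and the total angle count over all stages is $\sum_{j=1}^n 2^{j-1}=O(2^n)$, dominated by the last stage. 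This already caps the size at $O(2^n)$ up to the cost of parallelization and localizes the depth bottleneck to implementing a single multiplexor on up to $n-1$ controls.

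Second, for a fixed multiplexor I would pass to the Walsh--Hadamard (parity) representation: a uniformly controlled $R_z$, and likewise $R_y$ after the standard conjugation, becomes a product of $2^k$ single-qubit rotations, the $S$-th of which is applied conditioned on the parity $\bigoplus_{i\in S} x_i$ of a subset $S$ of the control bits. Ordering the subsets by a Gray code, consecutive parities differ in a single control bit, so a strictly sequential implementation maintains one parity register, advances it by a single CNOT between rotations, and runs in depth $O(2^k)$ and size $O(2^k)$. Third, I would parallelize this Gray-code chain with the ancillae: partition the $2^k$ rotations into $\Theta(m+n)$ contiguous blocks of length $O(2^k/(m+n))$, fan out the $k\le n$ control bits through a CNOT tree of depth $O(\log(\cdot))$, compute the $\Theta(m+n)$ block-starting parities in parallel into separate ancilla qubits, run all blocks simultaneously in depth $O(2^k/(m+n))$, and finally uncompute the parity registers. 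Summing the dominant last stage with the $O(n)$ or $O(n\log n)$ overhead from fan-out trees, parity seeding, and the number of stages yields the claimed depth, while a careful accounting keeps the total gate count at $O(2^n)$.

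The three regimes then fall out of balancing these two contributions: when $m=O(2^n/(n\log n))$ the block term $2^n/(m+n)$ dominates; once $m$ grows past this threshold the fixed fan-out/stage overhead $O(n\log n)$ takes over; and in the fully ancillary regime $m=\Omega(2^n)$ one can afford to compute every parity in a single parallel layer rather than amortizing through block seeding, which shaves the extra $\log n$ factor and leaves only the $O(n)$ floor set by the number of stages. The requirement $m\ge 2n$ is exactly what is needed to have enough width to fan out the $n$ control bits at least once while retaining working space.

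The main obstacle I anticipate is attaining both optimalities \emph{simultaneously}. The naive parallelization of the Gray-code chain tends to inflate either the depth by polylogarithmic factors, from repeatedly recomputing parities at block boundaries, or the size beyond $O(2^n)$, from over-copying the controls; so the delicate point is to choose the block size and fan-out pattern so that the seeding of blocks is shared and amortized across stages, the uncomputation carries no asymptotic overhead, and the $+n$ improvement in the denominator—i.e.\ using the $n$ data qubits themselves as part of the effective working width, consistent with the $m=0$ no-ancilla optimum $2^n/n$—is genuinely realized rather than lost to rounding. Pinning down that the overhead is $O(n\log n)$ in the intermediate regime but truly $O(n)$ in the exponential-ancilla regime, matching the later lower bound $\Theta(4^n/(m+n))$ for general unitaries, is where the construction must be most careful.
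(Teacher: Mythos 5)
Your plan for the regime $m=O(2^n/n^2)$ matches the paper's construction: decompose into uniformly controlled gates, pass each to the Walsh--Hadamard/parity picture, order the parities by a Gray code, and parallelize the chain with fanned-out copies of the control bits. That part is sound and yields depth $O\big(n^2+\frac{2^n}{m+n}\big)$ with size $O(2^n)$, which suffices whenever $m=O(2^n/n^2)$ so that $n^2=O(2^n/(m+n))$.

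The genuine gap is in the large-$m$ regimes. Within the multiplexor framework, each stage $V_j$ costs depth $\Omega(j)$ (or at least $\Theta(\log m_j)$ for the fan-out and parity seeding, with $\log m_j$ up to $\Theta(n)$ for the later stages), so summing over the $n$ stages gives an additive $\Theta(n^2)$ overhead, not the $O(n\log n)$ you assert from ``fan-out trees, parity seeding, and the number of stages.'' The paper explicitly flags that this quadratic term seems unavoidable within the Grover--Rudolph cascade. Your proposed fix for $m=\Omega(2^n)$ --- computing all parities in one parallel layer --- does not escape this either: you would still pay per-stage overhead across $n$ stages (and making the required $\Theta(2^{n})$ copies of each of $n$ control bits needs $\Theta(n2^n)$ ancillae, more than the $O(2^n)$ budget). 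The paper instead switches frameworks for all $m=\Omega(2^n/n^2)$: it prepares the prefix state in the \emph{unary} encoding $\sum_k v'_k\ket{e_k}$ in depth $O(n)$ and size $O(2^n)$ (Johri \emph{et al.}), converts unary to binary via an intermediate $\ket{e_s}\ket{e_t}$ ``matrix'' encoding using only $O(2^t)$ ancillae in depth $O(n)$, and then finishes the remaining $n-t$ suffix qubits with the multiplexor method; the $O(n\log n)$ arises as $O(n)$ per UCG times the $O(\log n)$ remaining UCGs, and $O(n)$ when $m=\Omega(2^n)$ lets $t=n$. Without this second construction (or an equally effective substitute), your argument does not establish the $O(2^n/(m+n))$ bound for $m\in[\omega(2^n/n^2),O(2^n/(n\log n))]$, nor the $O(n\log n)$ and $O(n)$ bounds.
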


These depth bounds improve the depth of $O(2^n)$ in \cite{bergholm2005quantum,plesch2011quantum} by a factor of $m$ for any $m\in [2n, O(\frac{2^n}{n\log n})]$, and the result shows that more ancillary qubits can indeed provide more help in shortening the depth for QSP. Compared with the result in \cite{zhang2021low} which needs $O(4^n)$ ancillary qubits to achieve depth $O(n^2)$, ours needs only $m=O(2^n/n^2)$ qubits to reach the same depth. In addition, our circuit is deterministic and generates the state with certainty, and the only two-qubit gates used are the CNOT gates.  

The above construction needs at least $2n$ ancillary qubits. Next we show an optimal depth construction of circuits without ancillary qubits. 
\begin{theorem}
Any $n$-qubit quantum state $\ket{\psi_v}$ can be generated by a quantum circuit, using single-qubit gates and CNOT gates, of depth $O(2^n/n)$ {and size $O(2^n)$,} without using ancillary qubits.
 \label{thm:QSP_noanci}
\end{theorem}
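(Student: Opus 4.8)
The plan is to reduce the ancilla-free $n$-qubit problem to the synthesis of two general unitaries on roughly $n/2$ qubits, via the Schmidt-decomposition technique already exploited by Bergholm \textit{et al.} and Plesch--Brukner \cite{bergholm2005quantum,plesch2011quantum}. I would split the $n$ qubits into a register $A$ of $\lceil n/2\rceil$ qubits and a register $B$ of $\lfloor n/2\rfloor$ qubits, and write the Schmidt decomposition $\ket{\psi_v}=\sum_{j}\sigma_j\ket{u_j}_A\ket{w_j}_B$, whose rank is at most $2^{\lfloor n/2\rfloor}$. The circuit then has three parts: (i) prepare the $\lceil n/2\rceil$-qubit state $\sum_j\sigma_j\ket{j}_A$ while $B$ is still clean; (ii) copy the index into $B$ with a single layer of $\lfloor n/2\rfloor$ parallel CNOTs, producing $\sum_j\sigma_j\ket{j}_A\ket{j}_B$; and (iii) apply, in parallel, the partial isometries $U=\sum_j\ket{u_j}\bra{j}$ on $A$ and $W=\sum_j\ket{w_j}\bra{j}$ on $B$ (extended arbitrarily to unitaries on the full half-registers), which yields exactly $\ket{\psi_v}$ since $\{\ket{u_j}\}$ and $\{\ket{w_j}\}$ are orthonormal.

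The cheap parts are (i) and (ii). Step (i) prepares a state on only $\lceil n/2\rceil$ qubits, so even the ancilla-free bound $O(2^{n/2})$ of \cite{bergholm2005quantum} suffices, and this is $o(2^n/n)$; step (ii) has depth $1$. Hence the entire depth and size budget is spent in step (iii), and the theorem follows once I show that a general $N$-qubit unitary with $N=\lceil n/2\rceil$ can be realized \emph{without ancilla} in depth $O(4^N/N)=O(2^n/n)$ and size $O(4^N)=O(2^n)$. I would obtain this ancilla-free unitary synthesis by a recursive cosine--sine (quantum Shannon) decomposition: peeling qubits off expresses the $N$-qubit unitary as a constant number of block-diagonal unitaries on fewer qubits together with uniformly controlled single-qubit rotations and diagonal unitaries on all $N$ qubits. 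These last two objects are the ``easy'' primitives that can be parallelized, and I would implement each of them ancilla-free in depth $O(2^N/N)$ by a Gray-code-scheduled fan-in/fan-out scheme in the spirit of the parallelization underlying Theorem~\ref{thm:QSP_anci}.

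The main obstacle is step (iii), and specifically controlling the depth accounting through the recursion. Inside a half-register of size $N$ there are genuinely no spare qubits: the control qubit nominally ``freed'' by each cosine--sine step is itself entangled and cannot serve as workspace, so one cannot invoke the ancilla-assisted primitives of Theorem~\ref{thm:QSP_anci} as a black box, nor can one afford the naive $\Theta(2^N)$-depth implementation of the terminal uniformly controlled rotation on $N-1$ controls. The crux is therefore twofold: first, to give an ancilla-free, depth-$O(2^N/N)$ implementation of the uniformly controlled rotations and diagonal unitaries produced by the decomposition; and second, to verify that, when these costs are summed over the levels and nodes of the recursion, the total telescopes to $O(4^N/N)$ rather than incurring a $\poly(N)$ overhead (a naive ``prepare each column as a state'' recursion, for instance, loses a factor per level and only yields $O(n\,2^n)$). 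To keep this overhead under control I would use a recursion that splits each register in half at every level rather than peeling one qubit at a time, and the real work lies in establishing that this recursion closes cleanly, with every borrowed sub-register correctly uncomputed so that no ancilla are consumed overall.
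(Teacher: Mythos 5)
Your high-level reduction is sound and is essentially the Plesch--Brukner Schmidt-decomposition route: steps (i) and (ii) are cheap, and step (iii) correctly reduces the problem to ancilla-free synthesis of a general $N$-qubit unitary, $N=\lceil n/2\rceil$, in depth $O(4^N/N)$ and size $O(4^N)$. You also correctly identify where the difficulty lives. But that is exactly where the proposal has a genuine gap: the entire burden lands on the claim that the diagonal unitaries and uniformly controlled rotations arising from the cosine--sine recursion can be implemented \emph{without ancilla} in depth $O(2^N/N)$, and you justify this only by appeal to ``a Gray-code-scheduled fan-in/fan-out scheme in the spirit of the parallelization underlying Theorem~\ref{thm:QSP_anci}.'' That parallelization cannot be transplanted: it fundamentally depends on a copy register (to fan out the input variables $x_i$ so that many CNOTs can fire in the same layer) and a phase register (to hold $\Theta(m)$ linear functions $\langle s,x\rangle$ simultaneously). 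With zero ancilla there is no place to hold more than $n$ linear functions at a time, so the whole scheduling argument collapses; a naive Gray-code walk gives only $O(2^N)$ depth per diagonal, i.e.\ $O(4^N)$ total, which is the old bound you are trying to beat.

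The paper closes this gap with a construction (Section~\ref{sec:QSP_withoutancilla}) that is qualitatively different from the ancilla-assisted one, not a rescheduling of it: the $n$-qubit register is split into a control half and a target half, the $2^{r_t}-1$ suffixes are enumerated in $\ell\approx 2^{r_t}/r_t$ batches of $r_t$ \emph{linearly independent} vectors each (Lemma~\ref{lem:partition}), the transition between batches is an invertible linear map over $\mathbb{F}_2$ implemented in depth $O(r_t/\log r_t)$ via the CNOT-circuit compression of \cite{jiang2020optimal}, overlaps between batches are deduplicated through the sets $F_k$, and the all-zero suffix is handled by recursing on the control half. None of these ingredients appears in your proposal, and each is needed to make the depth accounting close at $O(2^n/n)$. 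Separately, your route is also more circuitous than necessary: the paper never passes through general unitary synthesis for this theorem; it goes directly $\text{QSP}\to\text{UCGs}\to\text{diagonal unitaries}$ and sums $\sum_j O(2^j/j)=O(2^n/n)$ (Lemma~\ref{lem:partial_result}), whereas your Schmidt detour still bottoms out at the same missing primitive. So the proposal is a plausible outline, but the theorem is not proved until the ancilla-free diagonal-unitary circuit is actually constructed.
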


These two theorems combined give asymptotically optimal bounds for depth and size complexity. Indeed, a lower bound of $\Omega(2^n)$ for size is known \cite{plesch2011quantum}, and the same paper also presents a depth lower bound of $\Omega(2^n/n)$ for quantum circuits without ancillary qubits. This can be extended to a lower bound of $\Omega\big(\frac{2^n}{n+m}\big)$ for circuits with $m$ ancillary qubits. This bound deteriorates to 0 as $m$ grows to infinity. In \cite{aharonov2018quantum}, the authors gave a depth lower bound of $\Omega(\log n)$ for circuit with arbitrarily many ancillary qubits. We note that it can be improved to $\Omega(n)$ for any $m$, as stated in the next theorem as well as independently discovered in \cite{zhang2021low}.



\begin{theorem}
\label{thm:lowerbound_QSP}
Given $m$ ancillary qubits, there exist $n$-qubit quantum states which can only be prepared by quantum circuits of depth at least $\Omega\big(\max\big\{n,\frac{2^n}{m+n}\big\}\big)$, for circuits using arbitrary single-qubit and 2-qubit gates.
\end{theorem}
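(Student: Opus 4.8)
The plan is to prove the two terms of the bound separately and combine them, since $\max\{n,\,2^n/(m+n)\}$ is within a factor of two of the sum of the two terms. The term $\Omega\big(2^n/(m+n)\big)$ will come from a parameter/dimension count and dominates whenever $m+n\le 2^n/n$; the $m$-independent term $\Omega(n)$ will come from a causality (backward light-cone) argument and is only needed in the complementary regime of very many ancillas.

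For the dimension bound, fix the total number of qubits $N=n+m$ and suppose a depth-$d$ circuit of arbitrary one- and two-qubit gates prepares the target. First I would observe that there are only finitely many circuit \emph{topologies} on $N$ qubits of depth $d$ (a choice, for each of the at most $Nd$ gate slots, of whether and how a gate sits there), so it suffices to bound the reachable set for each fixed topology and take a finite union. For a fixed topology each layer contains at most $N$ gates, and each gate is parameterised by an element of $U(2)$ or $U(4)$, contributing at most $16$ real parameters; hence the output state $|\Psi\rangle=U(\theta)\,|0\rangle^{\otimes N}$ depends smoothly on a parameter vector $\theta$ ranging over a manifold of real dimension $P\le 16\,Nd$. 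Composing with the partial trace over the $m$ ancillas gives a smooth map $\theta\mapsto\rho_{\mathrm{sys}}(\theta)$ into $n$-qubit density operators, whose image therefore has Hausdorff dimension at most $P$.

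The circuit prepares $|\psi_v\rangle$ \emph{exactly and deterministically} iff $\rho_{\mathrm{sys}}(\theta)=|\psi_v\rangle\langle\psi_v|$; in particular the reduced state must be pure, i.e.\ the system is unentangled from the ancillas at the end. The pure $n$-qubit states form a smooth manifold $\cong\mathbb{CP}^{2^n-1}$ of real dimension $2^{n+1}-2$. If $P<2^{n+1}-2$, then for each topology the reachable pure states are the intersection of an image of dimension $\le P$ with this manifold, a subset of dimension strictly below $2^{n+1}-2$ and hence of measure zero; a finite union over topologies is still measure zero. Consequently almost every $|\psi_v\rangle$ is unreachable, so preparing all states forces $16\,Nd\ge 2^{n+1}-2$, that is $d=\Omega\big(2^n/(m+n)\big)$. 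The only technical care needed here is the standard facts that a smooth map does not raise dimension and that a finite union of measure-zero sets is measure zero.

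For the $m$-independent bound I would use a backward light-cone argument. Since all inputs are $|0\rangle$ and each layer can at most double a cone, the final reduced state of any single output qubit depends only on the gates inside its backward cone, which contains at most $2^{d}$ qubits; choosing a target in which every qubit is genuinely correlated with all others (e.g.\ a GHZ-type or generic state) forces these cones to be causally connected, which immediately yields $d=\Omega(\log n)$ as in \cite{aharonov2018quantum}. \textbf{The main obstacle is upgrading this $\Omega(\log n)$ to the claimed $\Omega(n)$ for arbitrary (all-to-all) two-qubit gates and unboundedly many ancillas:} neither entanglement entropy across a cut (capped at $n/2$ and achievable in depth one) nor the crude cone-size bound is by itself strong enough, so this step requires a more carefully designed hard instance together with a finer causal invariant that grows by only $O(1)$ per layer yet must reach $\Omega(n)$ (the same bound being obtained independently in \cite{zhang2021low}). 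Combining the two estimates gives the stated $\Omega\big(\max\{n,\,2^n/(m+n)\}\big)$.
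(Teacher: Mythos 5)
Your first bound is fine: the parameter/dimension count giving $16Nd \ge 2^{n+1}-2$ and hence $d = \Omega\big(\frac{2^n}{m+n}\big)$ is essentially the argument the paper inherits from \cite{plesch2011quantum}, and your handling of the finitely-many-topologies issue and the measure-zero union is the standard and correct way to make it rigorous.

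The $\Omega(n)$ term, however, has a genuine gap, and you say so yourself: you get $\Omega(\log n)$ from causal connectivity and then search for ``a finer causal invariant that grows by only $O(1)$ per layer,'' which you do not supply. The missing idea is that no new invariant is needed --- the two tools you already have combine directly. Define the backward light cone of the $n$ output qubits of $\ket{\psi_v}$ in the time-space graph of the circuit: since every gate touches at most $2$ qubits, each node has in-degree at most $2$, so after $D$ layers the cone contains at most $O(n\cdot 2^D)$ nodes, \emph{independently of $m$}. Every gate outside this cone can be deleted (peeling from the last layer backwards, each such deletion is equivalent to appending a unitary acting only on the ancilla factor $\ket{\phi}$ of $U\ket{0^{n+m}}=\ket{\psi}\otimes\ket{\phi}$, which leaves $\ket{\psi}$ unchanged). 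Hence the state $\ket{\psi}$ is determined by at most $O(n\cdot 2^D)$ gates, i.e.\ $O(n\cdot 2^D)$ real parameters, and your own measure-zero argument now forces $n\cdot 2^D = \Omega(2^n)$, that is $D \ge n - \log n - O(1)$. This is exactly how the paper proves the linear lower bound (its Lemma \ref{lem:lower_bound_QSP}); the point you missed is that the light cone should be used to cap the \emph{number of relevant parameters} rather than to track a per-layer entanglement or correlation quantity, which, as you correctly observe, cannot work.
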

{The proof of Theorem \ref{thm:lowerbound_QSP} is shown in Appendix \ref{sec:QSP_lowerbound}.}

Putting the above results together, we can tightly characterize the size and depth complexity of QSP, except for a logarithmic factor gap over a small parameter regime for $m$. It is interesting to note that our circuits achieve the optimal depth and size simultaneously. Our results are summarized in the next Corollary \ref{corol:tight_depth} and illustrated in Figure \ref{fig:result}.

\begin{corollary}\label{corol:tight_depth}
For a circuit preparing an $n$-qubit quantum state with $m$ ancillary qubits, the minimum size is $\Theta(2^n)$, and the minimum depth $D_{\textsc{QSP}}(n,m)$ for different ranges of $m$ are characterized as follows.
\[
\begin{cases}
   \Theta\big(\frac{2^n}{m+n}\big), & \text{if } m=O\big(\frac{2^n}{n\log n}\big), \\
    \left[ \Omega(n),  O(n\log n)\right], & \text{if }m \in [\omega\big(\frac{2^n}{n\log n}\big),o\left(2^n\right)],\\
    \Theta(n), & \text{if }m = \Omega\left(2^n\right).\\
\end{cases}
\]
\end{corollary}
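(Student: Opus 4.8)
The plan is to derive Corollary \ref{corol:tight_depth} purely by assembling bounds already in hand: the size and depth upper bounds of Theorems \ref{thm:QSP_anci} and \ref{thm:QSP_noanci}, the depth lower bound of Theorem \ref{thm:lowerbound_QSP}, and the size lower bound $\Omega(2^n)$ from \cite{plesch2011quantum}. No new construction is needed; the work is a regime-by-regime matching of the upper and lower bounds together with a handful of asymptotic comparisons. One preliminary observation I would record is that $D_{\textsc{QSP}}(n,m)$ is nonincreasing in $m$, since a circuit using fewer ancillae is a special case of one permitted more; in particular $D_{\textsc{QSP}}(n,m) \le D_{\textsc{QSP}}(n,0) = O(2^n/n)$ for every $m$.

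For the size I would note that both Theorem \ref{thm:QSP_anci} (for $m \ge 2n$) and Theorem \ref{thm:QSP_noanci} (for $m < 2n$, simply ignoring the extra qubits) yield circuits of size $O(2^n)$, while \cite{plesch2011quantum} gives a matching $\Omega(2^n)$ lower bound valid for arbitrarily many ancillae. Hence the minimum size is $\Theta(2^n)$ for all $m$. For the depth I would split into the three stated ranges. When $m = O(2^n/(n\log n))$, the upper bound $O(2^n/(m+n))$ comes from Theorem \ref{thm:QSP_anci} if $m \ge 2n$, and from the monotonicity remark plus Theorem \ref{thm:QSP_noanci} if $m < 2n$ (there $m+n = \Theta(n)$, so $O(2^n/n) = O(2^n/(m+n))$). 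For the matching lower bound I would verify that $m = O(2^n/(n\log n))$ forces $2^n/(m+n) = \Omega(n\log n)$, so the maximum in Theorem \ref{thm:lowerbound_QSP} is attained by the $2^n/(m+n)$ term; this gives $\Omega(2^n/(m+n))$ and hence $\Theta(2^n/(m+n))$.

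In the middle range $m \in [\omega(2^n/(n\log n)), o(2^n)]$, Theorem \ref{thm:QSP_anci} supplies the upper bound $O(n\log n)$, while Theorem \ref{thm:lowerbound_QSP} gives at least $\Omega(n)$; here $m = \omega(2^n/(n\log n))$ makes $2^n/(m+n) = o(n\log n)$, so the lower bound never exceeds the $O(n\log n)$ ceiling but is always at least $n$, leaving the honest interval $[\Omega(n), O(n\log n)]$. Finally, for $m = \Omega(2^n)$, Theorem \ref{thm:QSP_anci} gives depth $O(n)$ while $2^n/(m+n) = O(1)$ collapses the lower bound to exactly $\Omega(n)$, so the depth is $\Theta(n)$.

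This is essentially bookkeeping, so there is no real obstacle beyond stating the asymptotic inequalities cleanly. The one point needing care is the boundary $m < 2n$ inside the first range, where Theorem \ref{thm:QSP_anci} does not literally apply and one must fall back on Theorem \ref{thm:QSP_noanci} through the monotonicity remark. All the genuine content---the $\Theta(2^n/(m+n))$ tradeoff and the $O(n\log n)$ plateau, including the unavoidable $\log n$ gap in the middle regime---lives in the three theorems being combined and is simply inherited here.
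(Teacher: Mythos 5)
Your proposal is correct and matches the paper's own treatment, which presents Corollary \ref{corol:tight_depth} simply as the combination of Theorems \ref{thm:QSP_anci}, \ref{thm:QSP_noanci} and \ref{thm:lowerbound_QSP} with the $\Omega(2^n)$ size lower bound of \cite{plesch2011quantum}. Your explicit handling of the $m<2n$ boundary via monotonicity in $m$ and the regime-by-regime check of which term dominates the lower bound $\Omega\big(\max\big\{n,\frac{2^n}{m+n}\big\}\big)$ is exactly the bookkeeping the paper leaves implicit.
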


\begin{figure}[!hbt]
\centering
\begin{tikzpicture}
\tikzstyle{every node}=[font=\small,scale=0.8]
\draw[thick,dotted,white,fill=blue!10] (0,0) -- (0,3) -- (3.3,3) -- (3.3,0) -- cycle;
\draw[thick,dotted,white,fill=green!10] (3.3,0) -- (3.3,3) -- (4.7,3) -- (4.7,0) -- cycle;
\draw[thick,dotted,white,fill=blue!10] (4.7,0) -- (4.7,3) -- (7.5,3) -- (7.5,0) -- cycle;

\draw[->] (0,0) -- (8,0);
\draw (6.8,0)  node[anchor=north] { $\sharp$ ancillary qubits $\slash~ O(\cdot)$};
\draw	(0,0) node[anchor=north] {$0$}
		(3.3,0) node[anchor=north] {$\frac{2^n}{n\log n}$}
		(4,0) node[anchor=north] {$\frac{2^n}{n}$}
		(4.7,0) node[anchor=north] { $2^n$}
		(0,0.9) node[anchor=east] {$n$}
		(0,3) node[anchor=east] {$\frac{2^n}{n}$}
		(0,1.2) node[anchor=east] {$n \log n$}
		(0,4) node[anchor=east] {$2^n$}
		(0,5) node[anchor=east] {$n2^n$};
		
\draw	(1.8,2.7) node{{optimal depth}}
        (1.8,2.2) node{{$\Theta(\frac{2^n}{n+m})$}}
		(4,2.7) node{{depth}}
		(4,2.3) node{{$O(n\log n)$}}
		(4,1.9) node{{$\Omega(n)$}}
		(6.3,2.7) node{{optimal depth}}
		(6.3,2.2) node{{$\Theta(n)$}};

\draw[thick,magenta] plot[smooth, domain = 0:4] (\x, {6/(\x + 2)});
\draw[thick,dashed] plot[smooth, domain = 0:3.3] (\x, {6/(\x + 2)});


\draw[->] (0,0) -- (0,6);
\draw (0,6.2) node {depth $\slash~ O(\cdot)$};
\draw[thick,magenta] (4,1) -- (7.5,1);

\draw[thick,dashed] (3.3,1.132)--(4.35,1.132);
\draw[thick,dashed] plot[smooth, domain = 4.35:4.7] (\x, {3.002/(\x-1.698)});
\draw[thick,dashed] (4.7,1)--(7.5,1);
\draw[thick,dashed] (1.6,4)--(1.9,4);
\draw[thick,magenta] (1.6,3.5)--(1.9,3.5);

\draw(0,4) node[fill,white,draw=black,circle,scale = 0.6]{}
		(0,5) node[fill,black,circle,scale = 0.6]{}
		(4.7,1) node[fill,gray,circle,scale = 0.6]{}
		(1.8,5.5) node[fill,white,draw=black,circle,scale = 0.6]{}
		(1.8,5) node[fill,black,circle,scale = 0.6]{}
		(1.8,4.5) node[fill,gray,circle,scale = 0.6]{};

\draw[thick,dotted,gray] (0,1.132) -- (3.3,1.132);
\draw[thick,dotted,gray] (0,1) -- (4,1) -- (4,0);

\draw	(2,5.5) node[anchor=west]{  Upper bound in \cite{plesch2011quantum,bergholm2005quantum,mottonen2005decompositions}}
		(2,5) node[anchor=west]{ Upper bound in \cite{grover2002creating}}
		(2,4) node[anchor=west]{  Our upper bound}
		(2,3.5) node[anchor=west]{  Our lower bound}
		(2,4.5) node[anchor=west]{ Upper bound in \cite{johri2021nearest}};
\draw[thick] (1.4,3.1)--(1.4,5.9)--(7.5,5.9)--(7.5,3.1)--(1.4,3.1);
\end{tikzpicture}
\caption{Circuit depth upper and lower bound for $n$-qubit quantum state preparation. $m$ denote the number of ancillary qubits. If $m=O(\frac{2^n}{n\log n})$ and $\Omega(2^n)$, our circuit depths are $\Theta\big(\frac{2^n}{n+m}\big)$ and $\Theta(n)$, which are asymptotically optimal. When $m\in[\omega(\frac{2^n}{n\log n}),o(2^n)]$, the gap between our depth upper and lower bound is at most logarithmic.
}
\label{fig:result}
\end{figure}
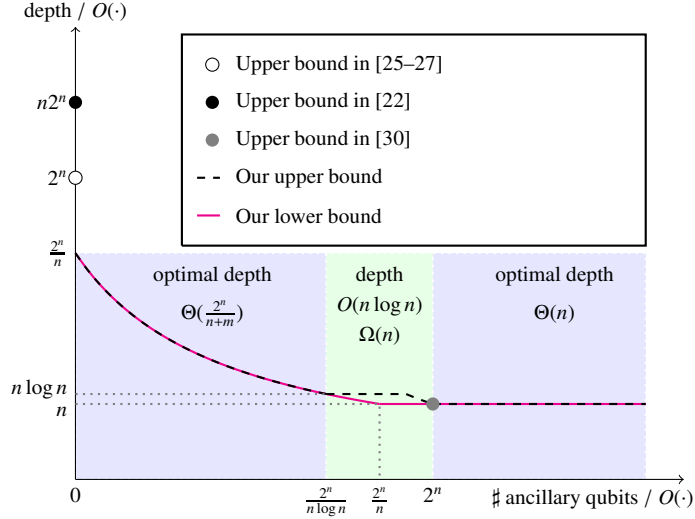
\medskip
Now we give two applications of the result, the first of which is general unitary synthesis. Given a unitary matrix, a fundamental question is to find a circuit implementing it in optimal depth or size. Previous studies on this problem focus on circuits without ancillary qubits. Barenco \textit{et al.} \cite{barenco1995elementary} gave an upper bound $O(n^34^n)$ for the number of CNOT gates for arbitrary $n$-qubit unitary matrix. Knill \cite{knill1995approximation} improved the upper bound to $O(n4^n)$.  Vartiainen \textit{et al.} \cite{vartiainen2004efficient} constructed a quantum circuit for an $n$-qubit unitary matrix with $O(4^n)$ CNOT gates. Mottonen and Vartiainen \cite{mottonen2005decompositions} designed a quantum circuit of depth $O(4^n)$ using $\frac{23}{48}4^n$ CNOT gates. The best known lower bound for \textit{number} of CNOT gates is $\left\lceil\frac{1}{4}(4^n-3n-1)\right\rceil$ \cite{shende2004minimal}, which also implies a depth lower bound of $\Omega(4^n/n)$. In a nutshell, the previous work put the optimal depth to within the range of $[{\Omega}(4^n/n), O(4^n)]$ for general $n$-qubit circuit compression without ancillary qubits.

Our results on QSP can be applied to close this gap, by showing a circuit of depth $O(4^n/n)$. And this is actually a special case of the next theorem which handles a general number $m$ of ancillary qubits.
\begin{theorem}
Any unitary matrix $U\in\mathbb{C}^{2^n\times 2^n} $ can be implemented by a quantum circuit of {size $O(4^n)$ and} depth $O\big(n2^n+\frac{4^n}{m+n}\big)$ with $m \le 2^n$ ancillary qubits.
 \label{thm:unitary_anci}
\end{theorem}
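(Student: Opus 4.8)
The plan is to reduce general unitary synthesis to the state-preparation primitive of Theorem~\ref{thm:QSP_anci} via a Householder-style triangularization. Viewing $U$ as a $2^{n}\times 2^{n}$ matrix with orthonormal columns, I would eliminate the sub-diagonal of its columns one at a time from the left: there exist unit vectors $\ket{w_{1}},\dots,\ket{w_{2^{n}-1}}$ and phases $\theta_{1},\dots,\theta_{2^{n}-1}$ such that
\[
R_{2^{n}-1}\cdots R_{1}\,U=\Lambda,\qquad R_{j}=I-(1-e^{i\theta_{j}})\ket{w_{j}}\bra{w_{j}},
\]
where $R_j$ is the complex Householder reflection that zeroes the entries below the diagonal in column $j$, and the resulting upper-triangular unitary $\Lambda$ is necessarily diagonal. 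Inverting gives $U=R_{1}^{\dagger}\cdots R_{2^{n}-1}^{\dagger}\,\Lambda$, a product of $2^{n}-1$ generalized reflections (each $R_j^{\dagger}=I-(1-e^{-i\theta_j})\ket{w_j}\bra{w_j}$ again of reflection form) followed by a diagonal unitary.

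The key step is that every such reflection is a conjugation of a single phase gate by a state-preparation circuit. Let $W_{j}$ be the QSP circuit of Theorem~\ref{thm:QSP_anci} that prepares $\ket{w_{j}}=W_{j}\ket{0}^{\otimes n}$ on the $n$ system qubits while returning the $m$ ancillae to $\ket{0}^{\otimes m}$. Then
\[
R_{j}\otimes I=W_{j}\,D_{j}\,W_{j}^{\dagger},\qquad D_{j}=I-(1-e^{i\theta_{j}})\,\ket{0}^{\otimes(n+m)}\bra{0}^{\otimes(n+m)},
\]
where $D_{j}$ phases only the global all-zeros string of system and ancillae. It is important to phase the global (not merely the system) zero state: this is exactly what guarantees that $W_{j}^{\dagger}$ followed by $W_{j}$ restores the ancillae to $\ket{0}^{\otimes m}$ on every system input, so that the ancillae are clean and can be reused by the next reflection. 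The gate $D_{j}$ is a multi-controlled phase on $n+m$ qubits, implementable in depth $O\big(\log(n+m)\big)=O(n)$ (since $m\le 2^{n}$) by reserving a constant fraction of the ancillae as scratch for the AND-tree; feeding the remaining $\Theta(m)$ ancillae to $W_j$ changes its depth only by a constant factor. Hence each reflection costs two QSP circuits plus an $O(n)$ phase gate: depth $O\big(\frac{2^{n}}{m+n}+n\big)$ and size $O(2^{n})$.

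Summing over the $2^{n}-1$ reflections gives depth $O\big(\frac{4^{n}}{m+n}+n2^{n}\big)$ and size $O(4^{n})$. The leftover diagonal $\Lambda$ is a product of at most $2^{n}$ multi-controlled phase gates, realizable in depth $O(n2^{n})$ and size $O(n2^{n})$, which is absorbed into these bounds. Finally, a circuit for $U^{\dagger}$ yields one for $U$ by reversing the gate order and daggering each gate at no cost, so the depth $O\big(n2^{n}+\frac{4^{n}}{m+n}\big)$ and size $O(4^{n})$ hold as stated.

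The step I expect to be most delicate is the depth accounting across the full ancilla range $m\le 2^{n}$. The QSP depth of Theorem~\ref{thm:QSP_anci} is $O\big(\frac{2^{n}}{m+n}\big)$ only for $m=O\big(\frac{2^{n}}{n\log n}\big)$; in the intermediate regime it is $O(n\log n)$, and the naive reflection-by-reflection sum then gives $O(n\log n\,2^{n})$, which exceeds the clean target $O\big(n2^{n}+\frac{4^{n}}{m+n}\big)$ by a logarithmic factor once $m$ is near the top of that regime. Reconciling this---either by showing the $n2^{n}$ term genuinely dominates throughout, or by replacing the black-box reflection-by-reflection construction with a more integrated circuit that amortizes the QSP overhead across columns---is the main obstacle. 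A secondary technical point is verifying the complex-Householder phase conventions so that $\Lambda$ is exactly diagonal and the reflection directions $\ket{w_j}$ are correctly normalized.
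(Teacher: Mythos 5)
Your reduction is genuinely different from the paper's: you triangularize $U$ by generalized Householder reflections and realize each reflection as $W_j D_j W_j^\dagger$ with $W_j$ a QSP circuit, whereas the paper (Theorem \ref{thm:unitary}, proved in Appendix \ref{sec:US_depth}) uses the cosine-sine decomposition (Lemma \ref{lem:CSD}) to write $U$ as a product of $O(2^n)$ uniformly controlled gates and implements each UCG via three diagonal unitaries (Lemma \ref{lem:UCG_depth}), each of depth $O\big(n+\frac{2^n}{n+m}\big)$ for \emph{every} $m$. Your conjugation identity is sound on the relevant subspace (note that $W_jD_jW_j^\dagger$ equals $R_j\otimes \mathbb{I}_m$ only on states with ancillae in $\ket{0}^{\otimes m}$, which suffices since every such block preserves that subspace), and your observation that $D_j$ must phase the \emph{global} all-zeros string is exactly right.

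However, the depth accounting does not close, and you have correctly located where: your primitive is full QSP (Theorem \ref{thm:QSP_anci}), whose guaranteed depth in the regime $m\in[\omega(2^n/(n\log n)),o(2^n)]$ is only $O(n\log n)$, so summing over the $2^n-1$ reflections yields $O(n\log n\cdot 2^n)$, which exceeds the claimed $O\big(n2^n+\frac{4^n}{m+n}\big)$ by a $\log n$ factor near the top of that regime (e.g.\ at $m=2^n/n$, where the target is $O(n2^n)$). This is not merely presentational: capping the number of ancillae fed to QSP does not help, and Lemma \ref{lem:partial_result} gives $O\big(n^2+\frac{2^n}{m+n}\big)$ per state, which is worse. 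The reason the paper's route avoids the loss is that the $n\log n$ intermediate regime is an artifact of \emph{full} state preparation (the hybrid construction of Section \ref{sec:QSP_withmoreancilla}); a single UCG, implemented via the diagonal-unitary lemmas (Lemmas \ref{lem:DU_with_ancillary} and \ref{lem:DU_without_ancillary}), costs only $O\big(\log m+\frac{2^n}{m}\big)$, i.e.\ $O\big(n+\frac{2^n}{n+m}\big)$, uniformly in $m$ with no logarithmic penalty. To repair your argument you would need either a genuinely $O\big(n+\frac{2^n}{n+m}\big)$ QSP bound in the intermediate regime (which the paper does not provide) or to abandon the black-box use of QSP in favor of the finer diagonal-unitary primitives, at which point you are essentially rebuilding the paper's UCG-based proof.
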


The second application of our QSP result is approximate QSP, for which one can obtain the following bound for circuit with a finite set of gates such as $\{CNOT, H, S, T\}$ using a variant of the Solovay–Kitaev theorem.
\begin{corollary}\label{coro:approx_QSP}
    For any $n$-qubit target state $\ket{\psi_v}$, one can prepare a state $\ket{\psi'_v}$ which is $\epsilon$-close to $\ket{\psi_v}$ in $\ell_2$-distance, by a circuit consisting of $\{CNOT,H,S,T\}$ gates of depth 
        \[\left\{\begin{array}{ll}
        O\big(\frac{2^n\log(2^n/\epsilon)}{m+n}\big), &  \text{if~}m=O\big(\frac{2^n}{n\log n}\big),\\
         O(n\log n\log(2^n/\epsilon)),& \text{if~}m\in [\omega\big(\frac{2^n}{n\log n}\big),o(2^n)],\\
         O(n \log(2^n/\epsilon)),& \text{if~}m=\Omega(2^n),\\
    \end{array}\right.\]
    using $m$ ancillary qubits.
\end{corollary}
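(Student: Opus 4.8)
The plan is to take the \emph{exact} circuits guaranteed by Theorem~\ref{thm:QSP_anci} and Theorem~\ref{thm:QSP_noanci}, which use arbitrary single-qubit gates together with CNOT gates, and compile each single-qubit gate into the finite set $\{H,S,T\}$ while controlling the accumulated error. The CNOT gates already belong to the target gate set, so they need no approximation and contribute nothing to the error. Write the exact circuit as a product $U=G_{L}\cdots G_{1}$, where each $G_i$ is either a CNOT or a single-qubit gate (tensored with identity on the other qubits); by the size bound, the number of single-qubit factors is $O(2^n)$.

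First I would fix the per-gate precision. Replacing each single-qubit $G_i$ by an approximation $\tilde G_i$ with $\|G_i-\tilde G_i\|\le\delta$ in operator norm, and keeping every CNOT exact, the standard telescoping estimate for products of unitaries gives
\[
\Big\|G_L\cdots G_1-\tilde G_L\cdots\tilde G_1\Big\|\le\sum_i\|G_i-\tilde G_i\|\le O(2^n)\,\delta,
\]
since each factor has unit norm and only the $O(2^n)$ single-qubit factors are perturbed. Applying this to the input $\ket{0}^{\otimes n}$ (tensored with the ancilla register) bounds the $\ell_2$-distance of the prepared $n$-qubit state from $\ket{\psi_v}$ by the same quantity. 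Hence choosing $\delta=\Theta(\epsilon/2^n)$ makes the overall error at most $\epsilon$.

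Next I would invoke the Solovay--Kitaev variant that synthesizes an arbitrary single-qubit unitary to precision $\delta$ using $O(\log(1/\delta))$ gates from $\{H,S,T\}$ (for instance, decomposing into three $z$-rotations via Euler angles and applying an optimal rotation-synthesis routine). With $\delta=\Theta(\epsilon/2^n)$ this costs $O(\log(2^n/\epsilon))$ gates, hence depth $O(\log(2^n/\epsilon))$, per single-qubit gate. Since the single-qubit gates in any one layer of the exact circuit act on disjoint qubits, their replacement blocks run in parallel, so each single-qubit layer expands to depth $O(\log(2^n/\epsilon))$ while each CNOT layer stays depth $1$. The new depth is therefore $O\big(D\cdot\log(2^n/\epsilon)\big)$, where $D$ is the exact depth. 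Substituting the three regimes of $D$ coming from Theorem~\ref{thm:QSP_anci} and Theorem~\ref{thm:QSP_noanci} (which together give $D=O(2^n/(m+n))$, $O(n\log n)$, and $O(n)$ respectively) yields the three stated bounds.

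The only delicate point is obtaining the \emph{linear}-in-$\log$ factor rather than the $\log^{c}$ factor of the classical Solovay--Kitaev theorem. This relies on two things: using an optimal single-qubit (rotation) synthesis giving $O(\log(1/\delta))$ gates, and crucially on the fact that the underlying exact circuit has size $O(2^n)$, so that the required per-gate precision is only $\delta=\Theta(\epsilon/2^n)$ and thus $\log(1/\delta)=O(\log(2^n/\epsilon))$. A larger gate count would inflate both the precision demand and the resulting log factor, so it is precisely the simultaneously optimal size of the exact circuit that makes this clean bound possible.
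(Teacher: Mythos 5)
Your proposal is correct and follows essentially the same route as the paper's proof (Appendix~\ref{sec:clifford_decomposition}, Corollary~\ref{coro:psi_apprx}): take the exact size-$O(2^n)$ circuit from Theorem~\ref{thm:QSP_anci}, approximate each single-qubit gate to precision $\Theta(\epsilon/2^n)$ using the Euler decomposition of Eq.~\eqref{eq:single_qubit_gate} together with the optimal $O(\log(1/\delta))$-gate rotation synthesis of Lemma~\ref{lem:sk}, bound the accumulated error by telescoping over the $O(2^n)$ perturbed factors, and multiply the exact depth by $O(\log(2^n/\epsilon))$. Your additional remarks on parallelizing the replacement blocks within a layer and on why the $O(2^n)$ size bound is what keeps the logarithmic factor linear are implicit in the paper's argument and correctly identified.
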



%


\paragraph{Proof techniques} We give a brief account of the proof techniques used in our circuit constructions. We first reduce the problem to implementing diagonal unitary matrices. Making a phase shift for each computational basis state costs at least {$\Omega(n2^n)$}-size, which is unnecessarily high. We make the shift in Fourier basis, and carefully use ancillary qubits to parallelize the process. With ancillary qubits, we can first make some copies of the computational basis variables $x_i$, then partition $\{0,1\}^n$ into some parts of equal size, and use the ancillary qubits to handle different parts in parallel. We define the partition via a Gray code to minimize the update cost. Gray codes were also used in \cite{bullock2004asymptotically} to minimize the circuit size. They only need to minimize the difference between adjacent two words, so the defining property of Gray Code is enough. In our construction, however, we also need to make sure that the changed bits in different parts of the Gray code are evenly distributed. 

When no ancillary qubits are available, designing efficient circuit needs more ideas. Since there is no ancillary qubit available, all phase shifts need be made inside the input register. 
We divide the input register into two parts, control register and target register, and make phase shifts in the latter. 
As we only have a small space, we cannot use it to enumerate all $2^{r_t}-1$ suffixes as in the previous case, where $r_t\approx n/2$ is the length of suffixes. But we can enumerate them in many stages, by which we pay the price of time to compensate for the shortage of space. We need to make a transition between two consecutive stages. It turns out that the transition can be realized by a low-depth circuit if the suffixes enumerated in each stage are linearly independent as vectors over $\mathbb{F}_2$. Thus we need carefully divide the set of suffixes into sets of linearly independent vectors to facilitate the efficient update. Some other parts need special treatment as well. One is that we need to reset the suffix to the original input variables after going along a Gray code path. Another one is that the all-zero suffix cannot be handled in the same way for some singularity reason, for which we will use a recursion to solve the issue. It turns out that the overall depth and size obtained this way are asymptotically optimal.

The above constructions work well when $m$ is relatively small, but do not give a tight bound when $m = \Omega(2^n/n^2)$, for which we use another method. As we mentioned earlier, \cite{johri2021nearest} shows that unary-encoded QSP can be made in $O(n)$ depth and $O(2^n)$ size. Though the resulting state uses an exponentially long unary encoding, we can transform it to a binary encoding. A direct parallelization for this transform takes $O(n2^n)$ ancillary qubits, which can be improved to the $O(2^n)$ by first transforming it to a $2^{n/2+1}$-long matrix encoding $\ket{e_i} \to \ket{e_s}\ket{e_t}$, and then to the binary encoding. This gives the optimal depth and size for the regime $m \ge 2^n$. For $m\in[\omega(2^n/n^2),o(2^n)]$, the ancillary qubits only suffice for conducting the above for the first $\log_2 m$ qubits of the target state. For the rest $\le 2\log_2 n$ qubits, we invoke our first construction to complete the generation. This gives the optimal depth if $m\in[\omega(2^n/n^2),O(2^n/(n\log n))]$, the overall depth is asymptotically optimal, leaving a gap $[\Omega(n), O(n\log n)]$ only when $m$ is in a small range $[\omega(2^n/(n\log n), o(2^n)]$.


\paragraph{Other related work}  

Besides the standard QSP, researchers have also studied some relaxed versions.  
Araujo \textit{et al.} \cite{araujo2020divide} have given a depth upper bound of $O(n^2)$  to prepare a state $\sum_{k=0}^{2^n-1}v_k |k\rangle |\text{garbage}_k\rangle $, where $|\text{garbage}_k\rangle$ is $O(2^n)$-qubit state entangled with the target state register. Note that there is no generic way to remove the entangled garbage, this cannot be directly used to solve the standard QSP problem. 

One may also consider to approximately prepare quantum states by quantum circuits made of $\{H,S,T,CNOT\}$ gates to generate $|\psi'_v\rangle $ satisfying $ \left\| |\psi'_v\rangle -|\psi_v\rangle \right\|\leq \epsilon$ for various distance measures $\|\cdot \|$. Previous attention was paid to minimizing the number and depth of $T$ gates  \cite{low2018trading,babbush2018encoding}, which is non-Clifford and usually thought to be hard to realize experimentally \cite{low2018trading}. They have applied ancillary qubits to implement a circuit such that the number of $T$ gates can be optimized to $\frac{2^n}{\lambda}+\lambda \log ^2 \frac{2^n\lambda}{\epsilon}$ \cite{low2018trading}, where $\lambda \in [1,O(\sqrt{2^n})]$. We shall show that our construction can be adapted to this gate set and the circuit depth increases only by $O(n+\log(1/\epsilon))$. 

\paragraph{Subsequent work} After this work appeared on arXiv \cite{sun2021asymptotically}, Rosenthal \cite{rosenthal2021query} constructed a QSP circuit of depth $O(n)$, using $O(n2^n)$ ancillary qubits, as opposed to ours that only uses $O(2^n)$ ancillary qubits. Rosenthal also presented a circuit for general unitary synthesis of depth $\tilde{O}(2^{n/2})$ using $\tilde{O}(4^n)$ ancillary qubits.
This year, Zhang \emph{et al.} \cite{zhang2022quantum} presented yet another QSP circuit of depth $O(n)$ using $\Theta(2^n)$ ancillary qubits, which is a special cases of our results.

\paragraph{Organization} The rest of this paper is organized as follows. In  Section \ref{sec:preliminaries}, we will review notations and a framework of quantum state preparation. Then we will present how to decompose the uniformly controlled gate to diagonal unitary matrices and show the depth of quantum state preparation when the number of ancillary qubits $m=O(2^n/n^2)$ in Section \ref{sec:diag_matrix}. Next we will show two quantum circuit for diagonal unitary matrices used in previous section, with and without ancillary qubits in Section \ref{sec:QSP_withancilla} and Section \ref{sec:QSP_withoutancilla}, respectively. Furthermore, we present a new circuit framework for quantum state preparation when $m = \Omega\left(2^n/n^2\right)$ in Section \ref{sec:QSP_withmoreancilla}. 
In Section \ref{sec:extensions}, we will show some extensions and implications of the above bounds. Finally we conclude in Section \ref{sec:conclusions}. 

\section{Preliminaries}
\label{sec:preliminaries}
In this section, we will introduce some basic concepts and notation. 

\paragraph{Notation} Let $[n]$ denote the set $\{1,2,\cdots,n\}$. All logarithms $\log(\cdot)$ are base 2 in this paper. Let $\mathbb{I}_n\in\mathbb{R}^{2^n\times 2^n}$ be the $n$-qubit identity operator. Denote by $\mathbb{F}_2$ the field with 2 elements, with multiplication $\cdot$ and addition $\oplus$, which can be overloaded to vectors: $x\oplus y=(x_1\oplus y_1,x_2\oplus y_2\cdots ,x_n\oplus y_n)^T$ for any $x,y\in \mathbb{F}_2^n$. 
The inner product of two vectors $s,x\in \mathbb{F}_2^n$ is $\langle s,x\rangle:=\oplus_{i=1}^{n}s_i\cdot x_i$ in which the addition and multiplication are over $\mathbb{F}_2$. 
%
We use $0^n$ and $1^n$ for the all-zero and all-one vectors of length $n$, respectively. Vector $e_i$ is the vector where the $i$-th element is $1$ and all other elements are $0$. The multiplication $\cdot$ is sometimes dropped if no confusion is caused. 
For $t,k\ge 1$ and $U_1,\ldots, U_k\in \mathbb{C}^{t\times t}$, $diag(U_1,U_2,\ldots,U_k)$ is defined as
\[diag(U_1,\ldots,U_k)\defeq\left[\begin{array}{ccc}
    U_1 &  &\\
     & \ddots &\\
     & & U_k
\end{array}\right]\in\mathbb{C}^{kt \times kt}.\]

\paragraph{Elementary gates} 
We will use the following $R_y(\theta)$, $R_z(\theta)$ and $R(\theta)$ to denote 1-qubit rotation (about Y-axis, Z-axis) gates and  phase-shift gate, i.e.,
\[
    R_y(\theta)=\left[\begin{array}{cc}
  \cos(\theta/2)   & -\sin(\theta/2) \\
   \sin(\theta/2)  &  \cos(\theta/2)
\end{array}\right], \quad
R_z(\theta)=\left[\begin{array}{cc}
  e^{-i(\theta/2)}   &  \\
    &  e^{i(\theta/2)} 
\end{array}\right],\quad
R(\theta)=\left[\begin{array}{cc}
  1   &  \\
    &  e^{i\theta} 
\end{array}\right],
\]
where $\theta\in\mathbb{R}$ is a parameter. All blank elements denote zero throughout this paper. Three important and special cases are the $\pi/8$ gate $T$, the phase gate $S$ and the Hadamard gate $H$,
\[{T=\left[\begin{array}{cc}
  1   &  \\
    &  e^{i\pi/4}
\end{array}\right],~
S=\left[\begin{array}{cc}
  1   &  \\
    &  i
\end{array}\right],~
H=\frac{1}{\sqrt{2}}\left[\begin{array}{cc}
  1   & 1 \\
   1  &  -1
\end{array}\right].}
\] 
The 2-qubit controlled-NOT gate is 
\[\text{CNOT}=\left[\begin{array}{cccc}
1 & & &\\
& 1 & &\\
& & & 1\\
& & 1 &\\
\end{array}\right].\]
The gate flips the \textit{target qubit} conditioned that the \textit{control qubit} is $\ket{1}$.

\paragraph{Single-qubit gate decomposition} Any single-qubit operator $U\in\mathbb{C}^{2\times 2}$ can be decomposed as
\[U=e^{i\alpha}R_z(\beta)R_y(\gamma)R_z(\delta)\]
for some $\alpha,\beta,\gamma,\delta\in\mathbb{R}$ \cite{nielsen2002quantum}. 
It is not hard to verify that the Y-axis rotation $R_y(\gamma)\in \mathbb{R}^{2\times 2}$ can be decomposed as $R_y(\gamma)=SHR_z(\gamma)HS^{\dagger},$
for any $\gamma\in\mathbb{R}$.
Putting these two facts together, we know that for a single-qubit operation $U$, there exist $\alpha,\beta,\gamma,\delta\in\mathbb{R}$ such that
\begin{equation}\label{eq:single_qubit_gate}
    U=e^{i\alpha}R_z(\beta)SHR_z(\gamma)HS^{\dagger}R_z(\delta).
\end{equation}

\paragraph{Gray code} 
A Gray code path is an ordering of all $n$-bit strings $\Bn$ in which any two adjacent strings differ by exactly one bit \cite{frank1953pulse,savage1997survey,gilbert1958gray}, and the first and the last string differ by one bit. That is, a Gray code path/cycle is a Hamiltonian path/cycle on the Boolean hypercube graph. Gray code paths/cycles are not unique, and a common one, called reflected binary code (RBC) or Lucal code, is as follows. Denote the ordering of $n$-bit strings by $x^1, x^2, \ldots, x^{2^n}$ and we will construct them one by one. Take $x^1 = 0^n$. For each $i = 1, 2, \ldots, 2^n-1$, the next string $x^{i+1}$ is obtained from $x^i$ by flipping the $\zeta(i)$-th bit, where the Ruler function $\zeta(i)$ is defined as $\zeta(i)=\max\{k:2^{k-1}|i\}$. In other words, $\zeta(i)$ is 1 plus the exponent of 2 in the prime factorization of $i$. The following fact is easily verified.
\begin{lemma}
\label{lem:GrayCode}
    The reflected binary code defined above is a Gray code cycle.
\end{lemma}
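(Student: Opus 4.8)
The plan is to prove the three defining properties of a Gray code cycle in turn: (i) consecutive strings $x^i, x^{i+1}$ differ in exactly one bit, (ii) the $2^n$ strings $x^1, \ldots, x^{2^n}$ are pairwise distinct (so the path is Hamiltonian on the Boolean hypercube), and (iii) the endpoints $x^1 = 0^n$ and $x^{2^n}$ differ in exactly one bit. Property (i) is immediate from the construction, since each step flips the single bit in position $\zeta(i)$. The substance lies in (ii), and I would prove both (ii) and (iii) simultaneously by induction on $n$, exploiting a self-similar recursion of the flip sequence $\sigma_n := (\zeta(1), \zeta(2), \ldots, \zeta(2^n-1))$.

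The first step is to establish the recursion $\sigma_n = (\sigma_{n-1},\, n,\, \sigma_{n-1})$. For $1 \le i \le 2^{n-1}-1$ the value $\zeta(i)$ is unchanged, giving the first copy of $\sigma_{n-1}$. At $i = 2^{n-1}$ we have $\zeta(2^{n-1}) = n$, the middle entry. For $i = 2^{n-1} + j$ with $1 \le j \le 2^{n-1}-1$, the $2$-adic valuations satisfy $\nu_2(2^{n-1}) = n-1 > n-2 \ge \nu_2(j)$, so $\nu_2(2^{n-1}+j) = \nu_2(j)$ and hence $\zeta(2^{n-1}+j) = \zeta(j)$; this reproduces the second copy of $\sigma_{n-1}$.

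With the recursion in hand, I would set up the induction with the hypothesis that the $(n-1)$-bit path $w^1 = 0^{n-1}, \ldots, w^{2^{n-1}}$ generated by $\sigma_{n-1}$ visits all of $\{0,1\}^{n-1}$ exactly once and ends at $w^{2^{n-1}} = e_{n-1}$. Describing each $n$-bit string by its coordinate in position $n$ together with its remaining $n-1$ (``low'') bits, the sequence $\sigma_n$ splits into three phases. In the first $2^{n-1}$ strings the bit in position $n$ stays $0$ and the low bits move exactly along $w^1, \ldots, w^{2^{n-1}}$; by the hypothesis these are distinct and exhaust the strings with bit $n=0$. The middle flip of position $n$ sets that bit to $1$ without changing the low bits. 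In the third phase the low bits again follow $\sigma_{n-1}$, but now starting from $w^{2^{n-1}}$; since a fixed flip sequence acts as a fixed sequence of $\oplus$-masks $M_j = \bigoplus_{l=1}^{j-1} e_{\zeta(l)}$ with $w^j = M_j$, the low-bit string at the $j$-th step of this phase equals $w^{2^{n-1}} \oplus w^j$, and because $c \mapsto c \oplus w^{2^{n-1}}$ is a bijection of $\mathbb{F}_2^{n-1}$, these run over all of $\{0,1\}^{n-1}$ exactly once, now with bit $n=1$. Hence all $2^n$ strings appear exactly once, proving (ii). Moreover the final low-bit string is $w^{2^{n-1}} \oplus w^{2^{n-1}} = 0^{n-1}$, so $x^{2^n} = e_n$, which both differs from $x^1 = 0^n$ in a single bit (property (iii)) and matches the inductive claim on the endpoint, closing the induction; the base case $n=1$ is the trivial path $0, 1$.

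The only genuinely delicate point---the ``main obstacle''---is the third phase: one must recognize that re-running the same flip sequence from a shifted start still sweeps all strings. This is exactly where the bijectivity of XOR-by-a-constant does the work, and it is the reflection symmetry that gives the reflected binary code its name. An alternative, non-inductive route would be to verify the closed form $x^{k+1} = k \oplus \lfloor k/2 \rfloor$ together with the identity that $k \oplus \lfloor k/2\rfloor$ and $(k+1) \oplus \lfloor (k+1)/2 \rfloor$ differ precisely in bit $\zeta(k+1)$; but establishing that closed form costs roughly the same effort, so I would favor the clean induction above.
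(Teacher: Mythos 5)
Your proof is correct and complete. The paper itself offers no argument for this lemma (it is stated as ``easily verified''), so there is nothing to compare against; your route --- the flip-sequence recursion $\sigma_n=(\sigma_{n-1},n,\sigma_{n-1})$ from the $2$-adic valuation identity $\nu_2(2^{n-1}+j)=\nu_2(j)$, a strengthened induction hypothesis that also tracks the endpoint $w^{2^{n-1}}=e_{n-1}$, and the observation that re-running a fixed flip sequence from a shifted start produces the XOR-translate $w^{2^{n-1}}\oplus w^j$ and hence still sweeps all of $\{0,1\}^{n-1}$ --- is the standard reflection argument and handles exactly the one delicate point (distinctness in the second half). The endpoint computation $x^{2^n}=e_n$ also confirms the paper's later claim that bit $n$ is flipped exactly once along the path and once more to close the cycle.
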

Note that in the above construction, if we list all the bits changed between circularly adjacent strings, we will get a list of length $2^n$. For instance, when $n=4$, the list is: 1,2,1,3,1,2,1,4,1,2,1,3,1,2,1,4.
In general, bit 1 appears $2^{n-1}$ times, bit 2 appears $2^{n-2}$ times, ..., bit $n-1$ appears twice, and bit $n$ appears twice as well. If we regard the code as a path, i.e. ignore the change of bit from the last string to the first string, then bit $n$ appears once. 

By circularly shifting the bits, we can also construct Gray code cycle such that bit 2 appears $2^{n-1}$ times, ..., bit $n$ and bit $1$ appear twice. In general, for any $k\in [n]$, we can make each bit $k,k+1,\ldots, n, 1, 2, \ldots, k-1$ to appear $2^n$, $2^{n-1},2^{n-2}, \ldots,2^2, 2, 2$ times, respectively. Let us call this construction \emph{$(k,n)$-Gray code path/cycle}, or simply the $k$-Gray code path/cycle if $n$ is clear from context.

\section{Quantum state preparation with $O(2^n/n^2)$ ancillary qubits}
\label{sec:diag_matrix}
In this section, we will review a natural framework of algorithm for quantum state preparation, first appeared in \cite{grover2002creating}. Our results presented in Section \ref{sec:QSP_withancilla} and \ref{sec:QSP_withoutancilla}, which achieve the optimal circuit depth, also fall into this framework. 
The framework to prepare an $n$-qubit quantum state is depicted in Figure \ref{fig:QSP_circuit}(a), where each qubit $j$ is handled by the circuit $V_j$.  The task for $V_j$ is to apply a single-qubit unitary on the last qubit conditioned on the basis state of the first $j-1$ qubits. In a matrix form, $V_j$ is a block-diagonal operator
\begin{equation}\label{matrix:Vn}
    V_j=diag(U_1,U_2,\ldots,U_{2^{j-1}})\in\mathbb{C}^{2^j\times 2^j},
\end{equation}
where each $U_i$ is a $2\times 2$ unitary matrix. There are different ways to implement $V_j$, and the most  natural one, which is also the one suggested in \cite{grover2002creating}, is in Figure \ref{fig:QSP_circuit}(b): it includes $2^{j-1}$ layers, and each layer is a controlled gate, which conditions on every possible computational basis state of the previous $j-1$ qubits and operates on the current qubit $j$. 
This is why sometimes $V_j$ is called \emph{uniformly controlled gate} (UCG). {We give a specific example for illustration in Appendix \ref{sec:app_BST}.}

\begin{figure}[ht]
\centering
\subfloat[]{{
\Qcircuit @C=0.6em @R=0.6em {
\lstick{\scriptstyle\ket{0}} & \gate{\scriptstyle V_1} & \gate{} \qwx[1] & \gate{} \qwx[1] & \qw & {\scriptstyle\cdots} & &  \qw & \gate{} \qwx[1] & \qw\\
\lstick{\scriptstyle\ket{0}} & \qw & \gate{\scriptstyle V_2} & \gate{} \qwx[1] & \qw & {\scriptstyle \cdots} & & \qw & \gate{} \qwx[1] & \qw\\
\lstick{\scriptstyle\ket{0}} & \qw & \qw & \gate{\scriptstyle V_3} & \qw & {\scriptstyle \cdots} &  & \qw & \gate{} \qwx[1] & \qw\\
{\scriptstyle\vdots~~~~~} & \qw & \qw  & \qw  & \qw & {\scriptstyle \cdots}  &  & \qw & \gate{} \qwx[1]& \qw\\
\lstick{\scriptstyle\ket{0}} & \qw & \qw & \qw & \qw & {\scriptstyle \cdots} &  & \qw & \gate{\scriptstyle V_n} & \qw 
\\
}}%
}\label{fig:circuit_n}
\subfloat[]{{
\Qcircuit @C=0.64em @R=1em {
& \gate{} \qwx[1] & \qw & & & \ctrlo{1} & \ctrl{1} & \ctrlo{1} & \qw &  ... & & \ctrl{1} & \qw\\
& \gate{} \qwx[1] & \qw & && \ctrlo{1} & \ctrlo{1} & \ctrl{1} &\qw & ... & & \ctrl{1}& \qw\\
& \gate{} \qwx[1] & \qw & & \boldsymbol{\scriptstyle =}~~ & \ctrlo{1}& \ctrlo{1} & \ctrlo{1} & \qw  &...  & & \ctrl{1}& \qw\\
& \gate{} \qwx[1] & \qw & & & \ctrlo{1} &\ctrlo{1} & \ctrlo{1} &  \qw  & ... & &  \ctrl{1} & \qw\\
& \gate{\scriptstyle V_j} & \qw &  & & \gate{\scriptstyle U_1} & \gate{ \scriptstyle U_2} & \gate{\scriptstyle U_3}  & \qw & ... & & \gate{\scriptstyle U_{2^{j-1}}} & \qw 
\\
}}%
}\label{fig:V_n}
\caption{(a) A quantum circuit to prepare an $n$-qubit quantum state. Every $V_j$ ($j\in[n]$) is a $j$-qubit uniformly controlled gate, where the first $j-1$ qubits are controlled qubits and the last one qubit is the target qubit. (b) A $j$-qubit uniformly controlled gate.}
\label{fig:QSP_circuit}
\end{figure}
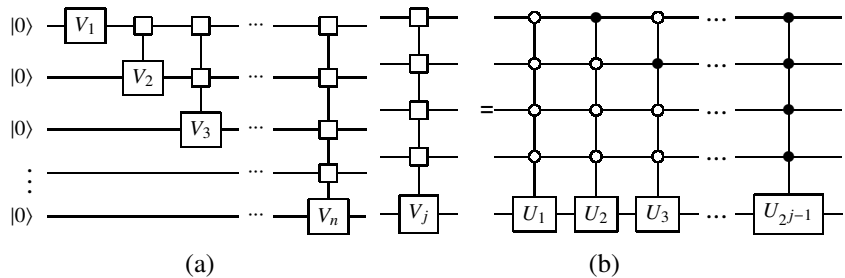
Thus the depth of the circuit {for quantum state preparation} in the above framework crucially depends on the circuit depth of the implementation of $V_j$'s.
\begin{lemma}\label{lem:QSP-by-UCG}
    If each $V_j$ can be implemented by a quantum circuit of depth $d_j$, then the quantum state can be prepared by a circuit of depth $\sum_{j=1}^n d_j$. 
\end{lemma}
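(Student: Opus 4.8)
The plan is to verify the two ingredients that the statement combines: correctness of the framework and additivity of depth under sequential composition. First I would establish that, for a suitable choice of the single-qubit unitaries, the ordered application $V_n V_{n-1}\cdots V_1$ maps $\ket{0}^{\otimes n}$ exactly to $\ket{\psi_v}$. This is the conditional-amplitude decomposition underlying the framework of Figure~\ref{fig:QSP_circuit}(a), going back to \cite{grover2002creating}: writing each $v_k$ in terms of the marginal amplitude of the first $j-1$ bits of $k$ and the conditional amplitude of the $j$-th bit, the factor $V_j$ applies, for every fixed computational basis state of qubits $1,\dots,j-1$, the single-qubit unitary on qubit $j$ that installs the correct conditional amplitude (magnitude and phase). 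Since $V_j$ is precisely the block-diagonal uniformly controlled gate of \eqref{matrix:Vn}, this per-branch conditional structure is exactly what it realizes, so the composite prepares $\ket{\psi_v}$.

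Next, for the depth bound, I would observe that the overall circuit is the concatenation $V_n\circ\cdots\circ V_1$, and that the depth of a concatenation of circuits is at most the sum of their individual depths. Substituting for each $V_j$ its assumed depth-$d_j$ implementation yields a circuit of depth at most $\sum_{j=1}^n d_j$, which is the claimed bound.

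The only point requiring brief care is that the $V_j$'s genuinely must be applied one after another rather than in parallel: whenever $j<j'$, the gate $V_{j'}$ uses qubits $1,\dots,j'-1$ as controls and hence overlaps $V_j$ on qubits $1,\dots,j$, so no two of them can occupy a common time layer. Thus the generic bound $\sum_{j=1}^n d_j$ cannot be improved by parallelizing across stages, and since the lemma asserts exactly this (an upper bound), no further optimization is needed here.

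I do not anticipate a real obstacle. The substantive content lies entirely in the correctness of the conditional-amplitude decomposition, which is classical, while the depth claim is the immediate additivity of depth under concatenation. The heavy lifting is deferred to the later sections that bound each $d_j$, and this lemma simply packages those bounds into the final depth for quantum state preparation via Lemma~\ref{lem:QSP-by-UCG}.
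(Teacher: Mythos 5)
Your proposal is correct and matches the paper's (implicit) reasoning: the paper states this lemma as an immediate consequence of the framework in Figure~\ref{fig:QSP_circuit}(a), where the conditional-amplitude decomposition guarantees $V_n\cdots V_1\ket{0}^{\otimes n}=\ket{\psi_v}$ and the depth of the sequential composition is at most the sum of the individual depths. Your additional observation that the stages overlap on qubits and hence cannot be parallelized is a reasonable remark but not needed for the upper bound the lemma asserts.
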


As mentioned in Section \ref{sec:introduction}, if we implement each $V_j$ directly as in \cite{grover2002creating}, then the whole QSP circuit has a depth of $\Theta(n2^n)$, which is sub-optimal compared to our bound of $\Theta(2^n/n)$ in Theorem \ref{thm:QSP_noanci}. More importantly, the method in \cite{grover2002creating} cannot well utilize ancillary qubits to reduce the circuit depth. In this section, we will give a framework of efficient implementation of UCGs with the help of $m=O(2^n/n^2)$ ancillary qubits. The case when more ancillary qubits are available, i.e.  $m=\Omega(2^n/n^2)$, is handled by a different framework in Section \ref{sec:QSP_withmoreancilla}.

To overcome these drawbacks, we will first reduce the implementation of UCG to that of diagonal operators of the following form:
\begin{equation}\label{matrix:lambda_n}
    \Lambda_n=diag(1,e^{i\theta_1},e^{i\theta_2},\ldots,e^{i\theta_{2^n-1}} )\in\mathbb{C}^{2^n\times 2^n}.
\end{equation}

\begin{lemma}\label{lem:lamda2circuit}
    If one can implement $\Lambda_n$ in Eq. \eqref{matrix:lambda_n} by a circuit of depth $D(n)$ and size $S(n)$ using $m\ge 0$ ancillary qubits, then any $n$-qubit quantum state can be prepared by a circuit of depth $3 \sum_{k=1}^n D(k) + 2n + 1$  and size $3\sum_{k=1}^n S(k)+2n+1$. 
\end{lemma}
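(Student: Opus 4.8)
The plan is to combine the UCG framework of Lemma~\ref{lem:QSP-by-UCG} with the single-qubit decomposition of Eq.~\eqref{eq:single_qubit_gate}, so that implementing each uniformly controlled gate $V_j$ reduces to implementing a constant number of diagonal operators of the form $\Lambda_j$ in Eq.~\eqref{matrix:lambda_n}. By Lemma~\ref{lem:QSP-by-UCG} the total depth (resp.\ size) of the state-preparation circuit is $\sum_{j=1}^n d_j$ (resp.\ $\sum_{j=1}^n s_j$), where $d_j$ and $s_j$ are the depth and size of a circuit for $V_j=diag(U_1,\dots,U_{2^{j-1}})$; hence it suffices to bound $d_j$ and $s_j$ in terms of $D(j)$ and $S(j)$.

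First I would expand each block via Eq.~\eqref{eq:single_qubit_gate}, writing $U_i=e^{i\alpha_i}R_z(\beta_i)\,SH\,R_z(\gamma_i)\,HS^\dagger R_z(\delta_i)$. Because the control projectors $\{\ket{i}\!\bra{i}\}$ are mutually orthogonal, $\sum_i\ket{i}\!\bra{i}\otimes(A_iB_i)=\big(\sum_i\ket{i}\!\bra{i}\otimes A_i\big)\big(\sum_i\ket{i}\!\bra{i}\otimes B_i\big)$, which lets me factor
\[
V_j=\Phi_\alpha\cdot R_z^{\beta}\cdot(I\otimes SH)\cdot R_z^{\gamma}\cdot(I\otimes HS^\dagger)\cdot R_z^{\delta},
\]
where $\Phi_\alpha=\sum_i\ket{i}\!\bra{i}\otimes e^{i\alpha_i}I$ is a uniformly controlled phase and each $R_z^{\theta}=\sum_i\ket{i}\!\bra{i}\otimes R_z(\theta_i)$ is a uniformly controlled $R_z$. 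The crucial point is that $\Phi_\alpha$ and all three $R_z^{(\cdot)}$ are \emph{diagonal} in the computational basis, since $e^{i\alpha}I$ and $R_z(\theta)$ are diagonal.

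Next I would consolidate the diagonal factors: absorbing the diagonal single-qubit gate $S$ into $\Phi_\alpha R_z^{\beta}$ and $S^\dagger$ into $R_z^{\delta}$ leaves exactly three diagonal operators on $j$ qubits, separated by two Hadamards on the target:
\[
V_j=D_1\cdot(I\otimes H)\cdot D_2\cdot(I\otimes H)\cdot D_3.
\]
Each $D_\ell$ has unit-modulus diagonal, so $D_\ell=c_\ell\,\Lambda_j^{(\ell)}$ for a scalar $c_\ell$ and a matrix $\Lambda_j^{(\ell)}$ of the normalized form~\eqref{matrix:lambda_n}. By hypothesis each $\Lambda_j^{(\ell)}$ is realizable in depth $D(j)$ and size $S(j)$ with $m$ ancillae, reused across the three implementations, while the two Hadamards add depth and size $2$. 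Thus $d_j\le 3D(j)+2$ and $s_j\le 3S(j)+2$, and summing over $j\in[n]$ gives $3\sum_k D(k)+2n$ and $3\sum_k S(k)+2n$; the remaining $+1$ absorbs a single global-phase correction.

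The step I expect to be the main obstacle is bookkeeping the scalar phases $c_\ell$. Unlike $\Phi_\alpha$—a genuine diagonal operator that must be implemented rather than discarded—the $c_\ell$ are true global phases, and I would track how they accumulate, over all three factors and all $n$ values of $j$, into one overall phase on the output. Since the circuit acts on the fixed input $\ket{0}^{\otimes n}$, this residual phase is physically irrelevant, or can be removed by one $R(\phi)$ gate on a single qubit, which I expect to be the origin of the additive constant $+1$. A minor point to verify is that the orthogonal-projector factorization above is exact and that the $m$ ancillae are returned to $\ket{0}$ after each diagonal block so they can be reused throughout.
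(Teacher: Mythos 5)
Your proposal is correct and follows essentially the same route as the paper: decompose each block $U_i$ via Eq.~\eqref{eq:single_qubit_gate}, factor the UCG $V_j$ into diagonal operators interleaved with single-qubit gates on the target, reduce each diagonal to the normalized form $\Lambda_j$ up to a scalar, and collect all scalars into one global phase implemented by a single gate (the $+1$). The only cosmetic difference is in how you reach the count of three diagonal calls per $V_j$ — you absorb $S$ and $S^\dagger$ into adjacent diagonal factors, whereas the paper merges the uniformly controlled phase $A_1$ into $A_2$ and treats $SH$, $HS^\dagger$ as the two single-qubit layers — but the resulting bound $3D(j)+2$ per UCG is identical.
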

\begin{proof}
According to Eq. \eqref{eq:single_qubit_gate}, each unitary matrix $U_k\in \mathbb{C}^{2\times 2}$ can be decomposed as
\[
U_k=e^{i\alpha_k}R_z(\beta_k)SHR_z(\gamma_k)HS^{\dagger}R_z(\delta_k).
\]
Then the UCG $V_n$ can thus be decomposed to
\begin{multline} \label{eq:UCG}
    {V_n}=\underbrace{
    {diag(e^{i\alpha_1},\cdots,e^{i\alpha_{2^{n-1}}})\otimes \mathbb{I}_1}
    }_{A_1}  \cdot \underbrace{
   {diag(R_z(\beta_1),\cdots,R_z(\beta_{2^{n-1}}))
    }}_{A_2}\\\cdot
    \underbrace{  {\mathbb{I}_{n-1}\otimes (SH)}}_{A_3}
     \cdot\underbrace{
    {diag(R_z(\gamma_1),\cdots,R_z(\gamma_{2^{n-1}}))}
    }_{A_4}\cdot
     \underbrace{ { \mathbb{I}_{n-1}\otimes (HS^\dagger)}}_{A_5}
     \cdot\underbrace{
    {diag(R_z(\delta_1),\cdots,R_z(\delta_{2^{n-1}}))}
    }_{A_6}.
\end{multline}

Note that the unitary matrix $A_3$ can be implemented by a Hadamard gate $H$ and a phase gate $S$ operating on the last qubit, and similarly for $A_5$. The rest matrices, $A_1$, $A_2$, $A_4$, and $A_6$ are all $n$-qubit diagonal unitary matrices. 
Since a global phase can be easily implemented by a rotation on any one qubit, we can focus on implementing diagonal matrices of the form as in  Eq. \eqref{matrix:lambda_n}. If $\Lambda_n$ can be implemented by a circuit of depth $D(n)$ and size $S(n)$, so will be QSP by a circuit of depth { and size}  $\sum_{k=1}^n (3 D(k) +2) + 1 = 3\sum_{k=1}^n D(k) +2n + 1$ and $\sum_{k=1}^n (3 S(k) +2) + 1 = 3\sum_{k=1}^n S(k) +2n + 1,$
where the terms ``$3D(k)$'' {and ``$3S(k)$'' }
are for diagonal matrices $A_1$, $A_2$, $A_4$ and $A_6$, the term ``2'' is for $A_3$ and $A_5$, and the term ``1'' is for the global phase.
\end{proof}

Thus we only need to consider how to implement diagonal operators as in Eq. \eqref{matrix:lambda_n}. We will prove the following lemmas in Section \ref{sec:QSP_withancilla} and Section \ref{sec:QSP_withoutancilla}. 
\begin{lemma} \label{lem:DU_with_ancillary}
For any $m\in [2n,2^n/n]$, any diagonal unitary matrix $\Lambda_n\in\mathbb{C}^{2^n\times 2^n}$ as in Eq. \eqref{matrix:lambda_n} can be implemented by a quantum circuit of depth $O\left(\log m + \frac{2^n}{m}\right)$ and size $O(2^n)$, with $m$ ancillary qubits. 
\end{lemma}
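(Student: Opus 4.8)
The plan is to reduce $\Lambda_n$ to a product of commuting \emph{parity-phase} operations and then parallelize these across the $m$ ancillary qubits, using cyclically-shifted Gray codes to balance the workload so that every parallel step costs only $O(1)$ depth. First I would write the diagonal entries as $e^{i\theta(x)}$ for $x\in\{0,1\}^n$ with $\theta(0^n)=0$, and expand $\theta$ in the Walsh basis, $\theta(x)=\sum_{s}\hat\theta_s(-1)^{\langle s,x\rangle}$. Using $(-1)^{\langle s,x\rangle}=1-2\langle s,x\rangle$ this becomes $e^{i\theta(x)}=e^{ic}\prod_{s\ne 0}e^{i\alpha_s\langle s,x\rangle}$ for suitable reals $\alpha_s$, with $e^{ic}$ an overall global phase. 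Hence $\Lambda_n=\prod_{s\ne 0}P_s(\alpha_s)$ up to global phase, where the commuting operator $P_s(\alpha)$ multiplies $\ket{x}$ by $e^{i\alpha}$ exactly when $\langle s,x\rangle=1$. Each $P_s(\alpha)$ is realized by computing the parity $\langle s,x\rangle$ onto a work qubit with CNOTs, applying the phase gate $R(\alpha)$ there, and later uncomputing; so it suffices to apply all $2^n-1$ parity phases efficiently in parallel.

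For the layout I would set $t=\lfloor\log m\rfloor$ and split the index $s=(u,w)$ with $u\in\{0,1\}^t$, $w\in\{0,1\}^{n-t}$, so that $\langle s,x\rangle=\langle u,x_{1..t}\rangle\oplus\langle w,x_{t+1..n}\rangle$ (note $n-t\ge\log n\ge 1$ throughout the range $m\in[2n,2^n/n]$). I assign one work qubit — a \emph{stream} — to each $u$, giving $2^t=\Theta(m)$ streams whose index sets $\{(u,w):w\in\{0,1\}^{n-t}\}$ partition the parity indices. After fanning out the inputs by CNOT-trees, I initialize each stream's work qubit to the constant part $c_u:=\langle u,x_{1..t}\rangle$; producing all $2^t$ such parities at once is the ``all-parities'' transform on $t$ bits, which I would carry out in depth $O(t)=O(\log m)$ with $O(m)$ ancilla by pipelining the fan-outs of $x_1,\dots,x_t$ into a recursive doubling so that every input copy is reused as a parity register. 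Each stream $u$ then walks $w$ along a Gray cycle of $\{0,1\}^{n-t}$ (Lemma \ref{lem:GrayCode}): at each step it XORs one coordinate $x_{t+j}$ into its work qubit (one CNOT) and applies $R(\alpha_{(u,w)})$, so the work qubit always holds the full parity $\langle s,x\rangle$. Each stream runs for $2^{n-t}=O(2^n/m)$ steps, after which the work qubits and input copies are uncomputed.

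The crux is that the CNOTs of different streams must not contend for the same input qubit, or a single step would cost depth $\omega(1)$. If every stream used the \emph{same} Gray code, they would all flip the same coordinate at the same step and all demand the same copy of $x_{t+j}$, forcing $\Theta(m)$ copies. I would avoid this by letting stream $u$ use a $(k_u,n-t)$-Gray code — the cyclically-shifted codes introduced after Lemma \ref{lem:GrayCode} — with the shift values $k_u$ distributed evenly over $[n-t]$. A cyclic shift only permutes \emph{which} coordinate is flipped at a given step, so at every step the $\Theta(m)$ streams flip each of the $n-t$ low coordinates about $m/(n-t)$ times; crucially, reordering a stream's Gray walk does not change which $w$'s it visits, so the partition — and hence correctness — is preserved. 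Consequently $O(m/n)$ copies of each input bit suffice, each parallel step has depth $O(1)$, and the main phase has depth $O(2^n/m)$. Adding the $O(\log m)$ fan-out, initialization, and uncomputation yields total depth $O(\log m+2^n/m)$, while the gate count is $O(2^n)$ since there is one CNOT/phase pair per parity index plus $O(m)=O(2^n)$ copy gates.

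The step I expect to be the main obstacle is precisely this balancing argument: verifying that distributing the shifts $k_u$ evenly makes the multiset of flipped coordinates uniform at \emph{every} step (and handling divisibility when $(n-t)\nmid m$, plus the $s=(0^t,\cdot)$ and constant-$u$ boundary cases). A secondary technical point, on which the $O(\log m)$ term depends, is showing that the pipelined all-parities initialization genuinely fits in $O(\log m)$ depth while respecting the $O(m)$ ancilla budget. Once these two combinatorial/scheduling claims are established, the depth and size bounds follow by the accounting above.
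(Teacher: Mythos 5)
Your proposal follows essentially the same route as the paper's proof: a Walsh/parity-phase decomposition, a prefix--suffix split with one phase qubit per prefix, an $O(\log m)$-depth stage computing all prefix parities, and a Gray-code walk over suffixes using cyclically shifted Gray codes so that the control-qubit load is balanced against the $O(m/n)$ copies of each suffix variable --- the two ``obstacles'' you flag are precisely the paper's Lemma \ref{lem:2D-array}(3) and Lemma \ref{lem:GrayInit}, and both hold as you sketch. The only adjustment needed is constant-factor ancilla bookkeeping (take $t=\lfloor\log(m/2)\rfloor$ so that the stream qubits and the input copies together fit within the $m$ available ancillas, as the paper does by splitting the ancillas into a copy register and a phase register of $m/2$ qubits each).
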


\begin{lemma} \label{lem:DU_without_ancillary}
Any diagonal unitary matrix $\Lambda_n\in\mathbb{C}^{2^n\times 2^n}$ as in Eq. \eqref{matrix:lambda_n} can be implemented by a quantum circuit of depth $O\big(\frac{2^n}{n}\big)$  and size $O\big(2^n\big)$ without ancillary qubits. 
\end{lemma}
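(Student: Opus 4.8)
The plan is to realize $\Lambda_n$ directly through its \emph{phase polynomial}. Writing $\Lambda_n\ket{x}=e^{i\phi(x)}\ket{x}$ for $x\in\mathbb{F}_2^n$, I would expand the real function $\phi$ in the parity basis, $\phi(x)=\sum_{0\neq s\in\mathbb{F}_2^n}\alpha_s\langle s,x\rangle$, where the coefficients $\alpha_s$ are obtained from the $\theta_k$ of Eq.~\eqref{matrix:lambda_n} by a fixed (classically precomputable, invertible) real linear transform. Thus $\Lambda_n=\prod_{s\neq 0}P_s$, where $P_s\ket{x}=e^{i\alpha_s\langle s,x\rangle}\ket{x}$ applies a phase exactly on the set $\{x:\langle s,x\rangle=1\}$. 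Implementing each $P_s$ independently by the textbook ``compute the parity onto one qubit, rotate, uncompute'' gadget costs $\Theta(n)$ depth and $\Theta(n2^n)$ in total, which is exactly the factor of $n$ I must shave off.

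The key primitive I would exploit is parallelism across the $n$ qubits. If at some moment the register holds an invertible linear image $y=Mx$ of the input with $M\in GL_n(\mathbb{F}_2)$ (qubit $j$ storing the parity $\langle M_j,x\rangle$, $M_j$ being the $j$-th row of $M$), then a single depth-$1$ layer of single-qubit phase gates $\bigotimes_j R(\alpha^{(j)})$ simultaneously realizes the $n$ terms $P_{M_1},\dots,P_{M_n}$. The task therefore reduces to scheduling: find a sequence of invertible matrices $I=M_0,M_1,\dots,M_T$ whose rows together run over every nonzero $s\in\mathbb{F}_2^n$ (with little repetition, to keep the size $O(2^n)$), such that each transition $M_{t-1}\to M_t$ is realizable by an $O(1)$-depth CNOT network and the register returns to $M=I$ at the end. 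Since each phase layer discharges $n$ of the $2^n-1$ parities, a schedule of length $T=O(2^n/n)$ with constant-depth transitions gives depth $O(2^n/n)$ and, charging $O(n)$ CNOTs per transition, size $O(2^n)$.

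To build such a schedule without ancillary qubits I would split the register into a control register holding the prefix variables $x^c$ and a target register holding the suffix variables $x^t$ of length $r_t\approx n/2$, making all phase rotations inside the target register. Each target qubit is made to hold a full parity $\langle s,x\rangle=\langle a,x^c\rangle\oplus\langle b,x^t\rangle$ with $s=(a,b)$, the prefix contribution $\langle a,x^c\rangle$ being XORed in from the control register by CNOTs, so that one phase gate on it discharges the corresponding $P_s$. Because there is no room to hold all $2^{r_t}-1$ suffix patterns $b$ at once, I would enumerate them in stages, each stage loading a batch of $\Theta(r_t)$ suffixes that are \emph{linearly independent} over $\mathbb{F}_2$; linear independence is precisely what lets a batch, and the transition to the next batch, be realized by a constant-depth CNOT network, after which the batch is traversed along a Gray-code path so that consecutive parities differ in a single bit. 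Sweeping the control contributions over all prefixes within each of the $\approx 2^{r_t}/r_t$ stages then costs $O(2^{r_c})$ depth per stage, for a total of $O(2^n/r_t)=O(2^n/n)$.

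The main obstacle I anticipate is keeping \emph{every} transition at $O(1)$ depth while still covering all nonzero $s$ and returning the register to the identity map, since an arbitrary change of basis in $GL_n(\mathbb{F}_2)$ would cost $\Omega(n/\log n)$ depth and ruin the bound; this is exactly where the linear-independence condition on each batch and the Gray-code ordering are essential. Two corner cases need separate care: after traversing a Gray-code path one must \emph{reset} the target register back to the original input variables (another short CNOT network), and the all-zero suffix $b=0^{r_t}$ cannot be loaded by the same mechanism because the associated linear map is singular, which I would treat by recursing on the control register alone. Once the construction is in place it yields depth $O(2^n/n)$ and size $O(2^n)$ for $\Lambda_n$, establishing Lemma~\ref{lem:DU_without_ancillary}; feeding $D(k)=O(2^k/k)$ into Lemma~\ref{lem:lamda2circuit} then gives the $O(2^n/n)$-depth QSP circuit of Theorem~\ref{thm:QSP_noanci}.
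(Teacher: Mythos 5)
Your proposal follows essentially the same route as the paper's Section \ref{sec:QSP_withoutancilla}: Fourier/parity decomposition of the phase, a control/target split with $r_t\approx n/2$, batches of linearly independent suffixes loaded into the target register, a Gray-code sweep over prefixes within each batch, a final reset of the target register, and a recursion on the control register for the all-zero suffix. Two points, however, deserve attention. First, your claim that linear independence lets the \emph{transition between batches} be realized in constant depth is not correct: that transition is the map $\hat{T}^{(k)}(\hat{T}^{(k-1)})^{-1}\in GL_{r_t}(\mathbb{F}_2)$, a generic invertible linear map, which in general requires $\Theta(r_t/\log r_t)$ depth (Lemma \ref{lem:SODA2020}). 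This does not break your bound — only the \emph{within-batch} Gray-code steps need depth $O(1)$, and there are merely $O(2^{r_t}/r_t)$ batch transitions, contributing $O(2^{r_t}/\log r_t)=o(2^n/n)$ in total — but the accounting in your second paragraph, which demands $O(1)$ depth for \emph{every} transition $M_{t-1}\to M_t$, should be relaxed accordingly.

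Second, and more substantively, you assert without justification that $\{0,1\}^{r_t}-\{0^{r_t}\}$ can be covered by $O(2^{r_t}/r_t)$ sets each consisting of $r_t$ linearly independent vectors. This is the one genuinely nontrivial combinatorial ingredient (the paper's Lemma \ref{lem:partition}), and it does not follow from a naive argument: a uniformly random $r_t$-subset of $\mathbb{F}_2^{r_t}$ is a basis only with constant probability $\approx 0.29$, so a random equipartition fails, and a greedy extraction of bases gives no obvious control on the number of sets. The paper constructs the cover explicitly via the parity-check matrix $H=[\overline{1},\dots,\overline{n}]$ of a Hamming-type code: the solution sets of $Hx=0^k$ and $Hx=1^k$ have combined size $O(2^{r_t}/r_t)$ and their unit-vector neighborhoods $S_x=\{x\oplus e_1,\dots,x\oplus e_{r_t}\}$ cover everything, after which each $S_x$ is patched into one or two bases. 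One further detail: because these sets necessarily overlap, a suffix can recur in later batches, and re-applying its rotation would double the phase; the paper handles this with the disjoint families $F_k$ of Eq. \eqref{eq:F_k}, whereas your write-up mentions repetition only as a size concern rather than a correctness one. With the covering lemma supplied and these bookkeeping points fixed, your argument matches the paper's proof.
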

Lemmas \ref{lem:DU_with_ancillary}  and \ref{lem:DU_without_ancillary} imply Lemma \ref{lem:UCG_depth}.
\begin{lemma}\label{lem:UCG_depth}
    For $m\ge 0$, any uniformly controlled gate $V_n\in\mathbb{C}^{2^n\times 2^n}$ as in Eq. \eqref{matrix:Vn} can be implemented by a quantum circuit of depth $O\big(n+\frac{2^n}{n+m}\big)$  and size $O(2^n)$ with $m$ ancillary qubits. 
\end{lemma}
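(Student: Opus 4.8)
The plan is to reduce implementing the UCG $V_n$ to implementing a constant number of $n$-qubit diagonal unitaries of the form $\Lambda_n$, and then to invoke Lemmas \ref{lem:DU_with_ancillary} and \ref{lem:DU_without_ancillary} as black boxes, choosing which one to apply according to the value of $m$. First I would expand each $2\times 2$ block $U_k$ of $V_n$ by the single-qubit decomposition in Eq. \eqref{eq:single_qubit_gate} and collect terms exactly as in Eq. \eqref{eq:UCG}, factoring $V_n = A_1 A_2 A_3 A_4 A_5 A_6$. Here $A_2, A_4, A_6$ are $n$-qubit diagonal unitaries (products of the diagonal $R_z$-gates), $A_1$ is a diagonal phase on the first $n-1$ qubits tensored with $\mathbb{I}_1$, and $A_3 = \mathbb{I}_{n-1}\otimes(SH)$, $A_5 = \mathbb{I}_{n-1}\otimes(HS^\dagger)$ are single-qubit layers on the last qubit, each implementable in $O(1)$ depth and size. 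After pulling out the free global phase as in the proof of Lemma \ref{lem:lamda2circuit}, each of $A_1, A_2, A_4, A_6$ is of the form $\Lambda_n$ (or $\Lambda_{n-1}$) in Eq. \eqref{matrix:lambda_n}. Hence the depth and size of $V_n$ are, up to additive $O(1)$ terms, at most four times the cost of implementing a single $\Lambda_n$.

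It then remains to pick the cheapest implementation of $\Lambda_n$ for the given $m$. When $m < 2n$ I would discard the ancillary qubits and apply Lemma \ref{lem:DU_without_ancillary}, obtaining depth $O(2^n/n)$ and size $O(2^n)$. When $2n \le m \le 2^n/n$ I would apply Lemma \ref{lem:DU_with_ancillary} directly, obtaining depth $O(\log m + 2^n/m)$ and size $O(2^n)$. When $m > 2^n/n$ I would use only $2^n/n$ of the available ancillary qubits in Lemma \ref{lem:DU_with_ancillary}, obtaining depth $O(n)$ and size $O(2^n)$. In every case the size is $O(2^n)$, so four copies plus $O(1)$ single-qubit gates keep the total size at $O(2^n)$.

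The last step is to check that, in each regime, four times the single-$\Lambda_n$ depth plus $O(1)$ collapses to the claimed $O\big(n + \frac{2^n}{n+m}\big)$. For $m < 2n$ one has $n + m < 3n$, so $\frac{2^n}{n+m} = \Omega(2^n/n)$ absorbs the ancilla-free depth $O(2^n/n)$. For $2n \le m \le 2^n/n$ one has $m \ge n$, hence $n + m \le 2m$ and $\frac{2^n}{m} = O\big(\frac{2^n}{n+m}\big)$, while $\log m \le n$, so $O(\log m + 2^n/m) = O\big(n + \frac{2^n}{n+m}\big)$. For $m > 2^n/n$ one has $\frac{2^n}{n+m} < n$, so the $O(n)$ bound already matches. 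I do not expect a genuine technical obstacle here, since the two diagonal-matrix lemmas carry the real content; the only thing needing care is this boundary bookkeeping, namely respecting the hypotheses $m \ge 2n$ and $m \le 2^n/n$ of Lemma \ref{lem:DU_with_ancillary} via the capping and fall-back choices above, and verifying that the $\log m$ term always fits inside $O(n)$.
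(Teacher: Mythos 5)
Your proposal is correct and follows essentially the same route as the paper: decompose $V_n$ via Eq.~\eqref{eq:UCG} into diagonal unitaries plus $O(1)$ single-qubit layers, then invoke Lemma~\ref{lem:DU_with_ancillary} or Lemma~\ref{lem:DU_without_ancillary} according to the regime of $m$. The paper merely states this in one line (merging $A_1A_2$ into a single diagonal, hence ``3'' rather than your ``4'' copies of $\Lambda_n$, an immaterial constant), while you spell out the boundary bookkeeping explicitly; both are fine.
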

\begin{proof}
According to Eq. \eqref{eq:UCG}, every $V_n$ can be decomposed into 3 $n$-qubit diagonal unitary matrices and 4 single-qubit gates. Combining with Lemma \ref{lem:DU_with_ancillary} and \ref{lem:DU_without_ancillary}, $V_n$ can be realized by a quantum circuit of depth $O\big(n+\frac{2^n}{n+m}\big)$  and size $O(2^n)$ with $m$ ancillary qubits.
\end{proof}

Once we prove these two lemmas, we will be able to prove Theorems \ref{thm:QSP_anci} and \ref{thm:QSP_noanci}. Indeed, we can apply the next Lemma \ref{lem:partial_result} to prove Theorem \ref{thm:QSP_noanci} $(m=0)$ and the $m= O(2^n/n^2)$ part of Theorem \ref{thm:QSP_anci}. The other part $m= \Omega(2^n/n^2)$ of Theorem \ref{thm:QSP_anci} is the same as Corollary \ref{coro:QSP_moreancilla} and will be treated in Section \ref{sec:QSP_withmoreancilla}.

\begin{lemma}\label{lem:partial_result}
    For any $m\ge 0$, any $n$-qubit quantum state $\ket{\psi_v}$ can be generated by a quantum circuit with $m$ ancillary qubits, using single-qubit gates and CNOT gates, of {size $O(2^n)$ and } depth $O\big(n^2+\frac{2^n}{m+n}\big)$.
\end{lemma}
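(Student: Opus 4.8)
The plan is to prepare $\ket{\psi_v}$ within the uniformly-controlled-gate framework of Figure~\ref{fig:QSP_circuit}(a): by Lemma~\ref{lem:QSP-by-UCG} it suffices to implement the $n$ UCGs $V_1,V_2,\ldots,V_n$ in sequence, whereupon the total depth is $\sum_{j=1}^n d_j$ (with $d_j$ the depth of $V_j$) and the total size is the sum of the individual sizes. Each $V_j$ is a $j$-qubit UCG of the form in Eq.~\eqref{matrix:Vn}, so Lemma~\ref{lem:UCG_depth} already bounds its cost in terms of the number of available ancillary qubits. The one observation that makes the accounting clean is that, at the moment $V_j$ is executed, the main-register qubits $j{+}1,\ldots,n$ have not yet been touched and still hold $\ket{0}$; since the ancillae used inside Lemma~\ref{lem:UCG_depth} are clean (they begin and end in $\ket{0}$), these $n-j$ idle qubits may be borrowed as extra ancillae for $V_j$ on top of the $m$ dedicated ones.

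Concretely, I would apply Lemma~\ref{lem:UCG_depth} to the $j$-qubit gate $V_j$ with $m+(n-j)$ ancillary qubits, obtaining for $V_j$ a circuit of depth
\[
d_j = O\!\left(j + \frac{2^j}{\,j + (m+n-j)\,}\right) = O\!\left(j + \frac{2^j}{m+n}\right)
\]
and size $O(2^j)$. Summing via Lemma~\ref{lem:QSP-by-UCG} gives total depth
\[
\sum_{j=1}^n d_j = O\!\left(\sum_{j=1}^n j\right) + O\!\left(\frac{1}{m+n}\sum_{j=1}^n 2^j\right) = O\!\left(n^2 + \frac{2^n}{m+n}\right),
\]
using $\sum_{j=1}^n j = O(n^2)$ and $\sum_{j=1}^n 2^j = O(2^n)$, together with total size $\sum_{j=1}^n O(2^j) = O(2^n)$; these are exactly the claimed bounds. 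The quadratic term $O(n^2)$ is the accumulated ``$+j$'' overhead across the $n$ stages, and the second term is the accumulated exponential work divided by the total qubit budget $m+n$.

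The step I expect to need the most care is justifying the reuse of the idle main-register qubits as clean ancillae: I must verify that the construction behind Lemma~\ref{lem:UCG_depth} (which rests on Lemmas~\ref{lem:DU_with_ancillary} and~\ref{lem:DU_without_ancillary}) genuinely disentangles and resets its ancillae to $\ket{0}$, so that qubits $j{+}1,\ldots,n$ are returned to $\ket{0}$ before $V_{j+1}$ acts. If one prefers to avoid this borrowing and supply only the $m$ dedicated ancillae to each $V_j$, then $d_j = O(j + 2^j/(j+m))$, and one must separately check that $\sum_{j=1}^n 2^j/(j+m) = O(2^n/(m+n))$; this holds because the ratio of consecutive terms equals $2(j+m)/(j+1+m) \ge 4/3$ once $j+m \ge 2$, so the series is dominated up to a constant factor by its last term $2^n/(n+m)$, with the $O(1)$ slack absorbed into the $O(n^2)$ term. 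Either route yields depth $O(n^2 + 2^n/(m+n))$ and size $O(2^n)$.
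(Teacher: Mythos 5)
Your proposal is correct and follows essentially the same route as the paper: both reduce QSP to the $n$ UCGs $V_1,\ldots,V_n$ and bound each through the diagonal-unitary constructions (your invocation of Lemma~\ref{lem:UCG_depth} is exactly the paper's combination of Lemma~\ref{lem:lamda2circuit} with Lemmas~\ref{lem:DU_with_ancillary} and~\ref{lem:DU_without_ancillary}). Your second, borrowing-free route is the one to keep: the uniform bound $\sum_{j=1}^n 2^j/(j+m)=O\big(2^n/(m+n)\big)$ via the ratio test cleanly replaces the paper's explicit case analysis on $m$ (the regimes $m<2n$, $2n\le m\le 2^n/n^2$, and $m>2^n/n^2$), and the idle-qubit borrowing in your first route, while valid since the ancillae are restored to $\ket{0}$ by the Inverse stage, is unnecessary.
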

 \begin{proof}
    We prove the case $m=0$ first.
    Plugging Lemma \ref{lem:DU_without_ancillary} into Lemma \ref{lem:lamda2circuit}, we get a circuit solving QSP in size $\sum_{j=1}^nO(2^{j})+2n+1=O(2^n)$ and depth 
    $O\big(\sum_{j=1}^n \frac{2^j}{j}+ n\big) = O\big(\sum_{j=1}^{n-\lceil\log n\rceil}\frac{2^j}{j}+ \sum_{j=n-\lceil\log n\rceil+1}^n\frac{2^j}{j}\big)=O\big(\sum_{j=1}^{n-\lceil\log n\rceil}2^j+ \sum_{j=n-\lceil\log n\rceil+1}^n\frac{2^j}{n-\lceil\log n\rceil+1}\big)=O\big( \frac{2^n}{n}\big),$
    as desired.
    
    Now we prove the case $m>0$. 
    If $1\le m < 2n$, we will not use the ancillary qubits---we just invoke Theorem \ref{thm:QSP_noanci} to obtain a circuit of depth $O(2^n/n)$.
    If $2n \le m\le 2^n/n^2(\le 2^n/n)$, we can combine Lemma \ref{lem:lamda2circuit} and Lemma \ref{lem:DU_with_ancillary} to give a circuit of size $3\sum_{j=1}^nO(2^j)+2n+1=O(2^n)$ and depth 
    $O\big(\sum_{j=1}^n \big( \log m + \frac{2^j}{m}\big)+ n\big) = O\big(n^2 + \frac{2^n}{m}\big).
    $
    If $m > 2^n/n^2$, we only use the first $2^n/n^2$ ancillary qubits, then the above equality gives a circuit of depth $O(n^2)$. 
    Putting these three cases together, we obtain the claimed size upper bound of $O(2^n)$ and depth upper bound of $O\big(n^2+\frac{2^n}{m+n}\big)$.
\end{proof}

Next let us consider how to efficiently implement $\Lambda_n$, which essentially makes a phase shift on each computational basis state. Again, if we do this on each basis state, it takes at least $\Omega(2^n)$ rounds, with each round implementing an $n$-qubit controlled phase shift. One way of avoiding sequential applications of $(n-1)$-qubit controlled unitaries is to make rotations on its Fourier basis. Indeed, there are several pieces of work to synthesis a diagonal unitary matrix, and a common approach is generating all the linear functions of variables and adding corresponding rotation $R(\theta)$ gate when a new combination generated \cite{welch2014efficient,welch2014efficient2,bullock2004asymptotically}. In \cite{bullock2004asymptotically} the authors use Gray code to adjust the order of combinations so the size and depth of the circuit are $O(2^n)$. 
With ancillary qubits, we can actually achieve this with much smaller depth by carefully parallelizing the operations (Section \ref{sec:QSP_withancilla}). Interestingly, this approach turns out to inspire our construction for circuits \textit{without} ancillary qubits (Section \ref{sec:QSP_withoutancilla}), to achieve the optimal depth complexity as in Theorem \ref{thm:QSP_noanci}.

We now give more details. Suppose we can accomplish the following two tasks: 
\begin{enumerate}
    \item For every $s\in \{0,1\}^n-\{0^n\}$, make a phase shift of $\alpha_s$ on each basis $\ket{x}$ when $\langle s,x\rangle = 1$ (recall that $\langle \cdot , \cdot\rangle$ is over $\mathbb{F}_2$), i.e. 
    \begin{equation}\label{eq:task1}
     \ket{x} \to  e^{i\alpha_s\langle s,x\rangle } \ket{x}.
    \end{equation}
    
    \item Find $\{\alpha_s:s\in \Bn-\{0^n\}\}$ s.t. \begin{equation}\label{eq:alpha}
        \sum_{s\in \{0,1\}^n-\{0^n\}}\alpha_s\langle x,s\rangle = \theta(x), \quad \forall x\in \{0,1\}^n-\{0^n\}.
    \end{equation}
\end{enumerate}
Then we get
\[\ket{x} \to \prod_{ s\in \{0,1\}^n-\{0^n\}} e^{i\alpha_s\langle s,x\rangle } \ket{x} = e^{i\Sigma_s\alpha_s \langle s,x\rangle} \ket{x} = e^{i\theta(x) } \ket{x},\]  
as required in $\Lambda_n$. For notational convenience, we define $\alpha_{0^n}=0$.

{The implementations of above two tasks in Eq. \eqref{eq:task1} and Eq. \eqref{eq:alpha} are accomplished in Appendix \ref{sec:2tasks}.}




\section{Diagonal unitary implementation with ancillary qubits}
\label{sec:QSP_withancilla}
In this section, we prove Lemma \ref{lem:DU_with_ancillary}. That is, for any $m\in [2n,2^n{/n}]$, any diagonal unitary matrix $\Lambda_n\in\mathbb{C}^{2^n\times 2^n}$ as in Eq. \eqref{matrix:lambda_n} can be implemented by a quantum circuit of depth $O\left( \log m + \frac{2^n}{m}\right)$ and size $O(2^n)$ with $m$ ancillary qubits. 
Let us first give a high-level explanation of the circuit. We divide the ancillary qubits into two registers: One is used to make multiple copies of basis input bits to help on parallelization, and the other is used to generate all $n$-bit strings and  apply the rotation gates. 
State $\ket{\langle s,x\rangle}$ will be generated for all $s\in\Bn-\{0^n\}$. To reduce the depth of the circuit, these strings are split as equally as possible, and we use Gray Code to minimize the cost of generating a new $n$-bit string from an old one.  {A quantum circuit to implement $\Lambda_4$ by using 8 ancillary qubits is shown in Appendix \ref{sec:warm-up}.}

We will show how to implement $\Lambda_n$ with $m$ ancillary qubits. Let us assume $m$ to be an even number to save some floor or ceiling notation without affecting the bound. 
The framework is shown in Figure \ref{fig:Lambda_n_gray_code}. Our framework consists of {three registers and five stages}. The first $n$ qubits labeled as $x_1,x_2,\cdots,x_n$ form the \textit{input register}, the next $\frac{m}{2}$ qubits are the \textit{copy register}, and the last $\frac{m}{2}$ qubits are the \textit{phase register}.  The linear functions $\langle s, x\rangle$ of the input variables $x=x_1\ldots x_n$ are generated in the phase register. We use the copy register to make copies of $x$ for parallelizing the circuit later. Partition $s$ into a prefix $s_1$ and a suffix $s_2$. We then generate a specific function $\langle s_{1}0\cdots 0,x\rangle$ on each qubit in the phase register, and iterate other non-zero suffixes $s_2$ in the order of a  Gray code and generate $\langle s_1s_2, x\rangle$. All qubits in the copy and phase registers are initialized to $|0\rangle$.

\begin{figure}[ht]
\centerline{
\Qcircuit @C=0.6em @R=0.4em {
& \lstick{} &\lstick{\scriptstyle\ket{x_1}} & \multigate{6}{\text{Prefix~Copy}\atop \text{Stage}} & \multigate{10}{\text{Gray Initial}\atop \text{Stage}} &\multigate{6}{\text{Suffix~Copy}\atop \text{Stage}} & \multigate{10}{\text{Gray Path} \atop \text{Stage}} & \multigate{10}{\text{Inverse}\atop \text{Stage}} &\qw\\
& \lstick{}  &\lstick{\scriptstyle\vdots~~}  & \ghost{\text{Prefix~Copy}\atop \text{Stage}} & \ghost{\text{Gray Initial}\atop \text{Stage}}  &\ghost{\text{Suffix~Copy}\atop \text{Stage}} &\ghost{\text{Gray Path} \atop \text{Stage}} & \ghost{\text{Inverse}\atop \text{Stage}}  &\qw\\
& \lstick{}  &\lstick{\scriptstyle\ket{x_n}} & \ghost{\text{Prefix~Copy}\atop \text{Stage}} & \ghost{\text{Gray Initial}\atop \text{Stage}}  &\ghost{\text{Suffix~Copy}\atop \text{Stage}} & \ghost{\text{Gray Path} \atop \text{Stage}}& \ghost{\text{Inverse}\atop \text{Stage}}  &\qw \\
& \lstick{}  & \lstick{}    & & & & & &\\
& \lstick{}  &\lstick{\scriptstyle|0\rangle} & \ghost{\text{Prefix~Copy}\atop \text{Stage}} & \ghost{\text{Gray Initial}\atop \text{Stage}}  &\ghost{\text{Suffix~Copy}\atop \text{Stage}} & \ghost{\text{Gray Path} \atop \text{Stage}}& \ghost{\text{Inverse}\atop \text{Stage}} &\qw\\
&\lstick{}  &\lstick{\scriptstyle\vdots~~}  & \ghost{\text{Prefix~Copy}\atop \text{Stage}} & \ghost{\text{Gray Initial}\atop \text{Stage}} &\ghost{\text{Suffix~Copy}\atop \text{Stage}} & \ghost{\text{Gray Path} \atop \text{Stage}} & \ghost{\text{Inverse}\atop \text{Stage}} &\qw\\
& \lstick{}  &\lstick{\scriptstyle|0\rangle} &  \ghost{\text{Prefix~Copy}\atop \text{Stage}}& \ghost{\text{Gray Initial}\atop \text{Stage}}  &\ghost{\text{Suffix~Copy}\atop \text{Stage}} & \ghost{\text{Gray Path} \atop \text{Stage}} & \ghost{\text{Inverse}\atop \text{Stage}}  &\qw\\
&\lstick{}  &              & & & & & &\\
& \lstick{}  &\lstick{\scriptstyle|0\rangle} & \qw & \ghost{\text{Gray Initial}\atop \text{Stage}} &\qw &  \ghost{\text{Gray Path} \atop \text{Stage}} & \ghost{\text{Inverse}\atop \text{Stage}} &\qw\\
& \lstick{}  &\lstick{\scriptstyle\vdots~~}  & \qw & \ghost{\text{Gray Initial}\atop \text{Stage}}  & \qw & \ghost{\text{Gray Path} \atop \text{Stage}} & \ghost{\text{Inverse}\atop \text{Stage}} &\qw\\
& \lstick{}  &\lstick{\scriptstyle|0\rangle} & \qw & \ghost{\text{Gray Initial}\atop \text{Stage}}  & \qw & \ghost{\text{Gray Path} \atop \text{Stage}}& \ghost{\text{Inverse}\atop \text{Stage}} &\qw\\ 
}
}
\caption{Framework for the circuit of $\Lambda_n$ with $m$ ancillary qubits. The first $n$ qubits $\ket{x_1\cdots x_n}$ form the input register, the next $\frac{m}{2}$ qubits the copy register and the last $\frac{m}{2}$ qubits the phase register. The framework consists of five stages: Prefix Copy, Gray Initial, Suffix Copy, Gray Path and Inverse. The depth of the five stages are $O(\log m)$, $O(\log m)$, $O(\log m)$, $O\left(\frac{2^n}{m}\right)$ and $O\left(\log m+\frac{2^n}{m}\right)$,  respectively. 
}\label{fig:Lambda_n_gray_code}
\end{figure}
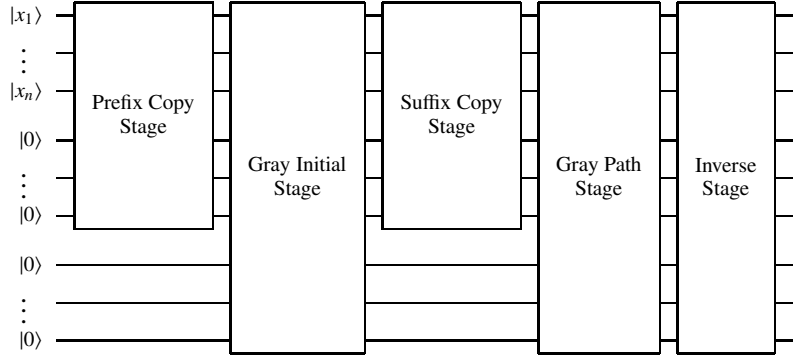


\paragraph{\underline{Stage 1: Prefix Copy}}
In this stage, we make $\left\lfloor \frac{m}{2t} \right\rfloor$ copies of each qubit  $x_1,x_2,\cdots,x_t$ in the input register, where $t = {\left\lfloor \log \frac{m}{2} \right\rfloor} < n$. More formally, the circuit implements the unitary $U_{copy,1}$ which operates on the input and copy registers only. Its effect is 
\begin{align}\label{eq:Uc1-effect}
\ket{x} \ket{0^{m/2}}  \xrightarrow{U_{copy,1}} 
\ket{x}\ket{x_{pre}} 
\end{align}
where the two parts in the ket notation are for the input and copy register, respectively, and 
\begin{align*}
    & \ket{x} = \ket{x_1x_2\cdots x_n},\\
    &\ket{x_{pre}} = \overbrace{|\underbrace{x_1\cdots x_1}_{\left\lfloor\frac{m}{2t}\right\rfloor~\text{qubits}}\underbrace{x_2\cdots x_2}_{\left\lfloor\frac{m}{2t}\right\rfloor~\text{qubits}}\cdots\underbrace{x_t\cdots x_t}_{\left\lfloor\frac{m}{2t}\right\rfloor~\text{qubits}}0\cdots 0}^{{{m/2}}~\text{qubits}}\rangle.
\end{align*}
The next lemma says that this operation can be carried out by a circuit of small depth. 
\begin{lemma}\label{lem:copy1}
    We can make $\left\lfloor \frac{m}{2t} \right\rfloor$ copies of each qubit  $x_1,x_2,\cdots,x_t$ in the input register and the copy register, by an  ($m/2$)-size circuit $U_{copy,1}$ of CNOT gates only, in depth at most $\log m$. 
\end{lemma}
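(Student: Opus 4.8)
The plan is to realize $U_{copy,1}$ as $t$ independent \emph{fan-out trees} executed in parallel, one per source qubit $x_1,\ldots,x_t$. Set $c=\lfloor \frac{m}{2t}\rfloor$, the number of copies required for each $x_i$, and partition the copy register into $t$ consecutive blocks $B_1,\ldots,B_t$ of $c$ qubits each; the remaining $\frac m2-tc$ qubits are never touched and stay at $\ket 0$. Block $B_i$ will end up holding the $c$ copies of $x_i$, while the input register is left unchanged, giving exactly the target state $\ket{x}\ket{x_{pre}}$.

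For a single source $x_i$ I would use the standard binary-doubling copy. Initially only the input qubit carries the value $x_i$. The first layer applies a CNOT from $x_i$ onto one fresh $\ket 0$ qubit of $B_i$, so two qubits now carry $x_i$; in each subsequent layer every qubit already carrying $x_i$ controls a CNOT onto a distinct fresh $\ket 0$ qubit of $B_i$, doubling the number of carriers. Within one layer the controls and the targets are pairwise disjoint, so each layer is a single CNOT layer, and after $k$ layers there are $2^k$ carriers (counting the input qubit). Hence $D:=\lceil \log(c+1)\rceil$ layers produce the desired $c$ carriers inside $B_i$, where in the final layer we simply stop once the count $c$ is reached (a single, possibly partial, CNOT layer). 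Parallelism is then immediate: the trees for different $i$ use disjoint sources $x_i$ and disjoint target blocks $B_i$, so all $t$ trees run simultaneously layer by layer, and the overall depth is still $D$.

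It remains to verify the two stated bounds. For the depth, $2^{D-1}<c+1$ forces $2^{D-1}\le c=\lfloor \frac{m}{2t}\rfloor\le \frac m2$, hence $2^{D}\le m$ and $D\le \log m$, as claimed. For the size, every carrier beyond the original input qubit is created by exactly one CNOT, so block $B_i$ costs $c$ CNOTs and the total is $tc=t\lfloor \frac{m}{2t}\rfloor\le \frac m2$.

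There is no genuine obstacle here; the only care needed is the floor/ceiling bookkeeping. One checks that in the relevant range (with $m$ even and $m\ge 2n$, so $t=\lfloor \log\frac m2\rfloor\ge 1$) the blocks $B_i$ are well defined, that the doubling can be truncated in its last layer to hit exactly $c$ copies without spilling into an extra layer, and that the integer inequalities $2^{D-1}\le c$ and $tc\le \frac m2$ hold. Each of these follows directly from the definition $c=\lfloor \frac{m}{2t}\rfloor$.
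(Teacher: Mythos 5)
Your proposal is correct and follows essentially the same approach as the paper: both use the binary-doubling fan-out of CNOT gates, run in parallel across the $t$ source qubits, with the same depth bound $\lceil\log\lfloor m/(2t)\rfloor\rceil\le\log m$ and the same size count $t\lfloor m/(2t)\rfloor\le m/2$. The only cosmetic difference is that you make the block partition of the copy register explicit, which the paper leaves implicit.
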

\begin{proof}
First, we make 1 copy of each $x_i$ in the input register to a qubit in the copy register by applying a CNOT gate. Note that the CNOT gates for different $x_i$'s are applied on different pairs of qubits, thus they can be implemented in parallel in depth 1. Next, we utilize the $x_i$ in the input register and the $x_i$ in the copy register (that we just obtained) to make two more copies of $x_i$ in the copy register, and again all these $2t$ CNOT gates can be implemented in depth 1. We continue this until we get $\lfloor m/(2t)\rfloor$ copies of each qubit  $x_1,x_2,\cdots,x_t$ in the copy register. The depth of this Copy stage is $\big \lceil \log \big \lfloor m/2t \big\rfloor \big \rceil \le \log m$. And the size of this stage is $m/2$, since each qubit in copy register is used as the target qubit of CNOT gate only once.
\end{proof}

\paragraph{\underline{Stage 2: Gray Initial}} 
In this stage, the circuit includes two steps. The first step $U_{1}$ implements $m/2$ linear functions $f_{j1}(x) = \langle s(j,1), x\rangle$ for some $n$-bit strings $s(j,1)$,
one for each qubit $j$ in the phase register. The second step implements some rotations in the phase registers. 
To elaborate on which {strings} are implemented in the first {step}, we need the following lemma and notation. Recall that we are in the parameter regime $m \in [2n,2^n/n]$. 
\begin{lemma}\label{lem:2D-array}
    Let $t = {\lfloor \log \frac{m}{2} \rfloor}$ and $\ell = 2^t$. The set $\Bn$ can be partitioned into a 2-dimensional array $\{s(j,k): j\in [\ell], k\in [2^n/\ell]\}$ of $n$-bit strings, satisfying that 
    \begin{enumerate}
        \item Strings in the first column $\{s(j,1): j\in [\ell]\}$ have the last $(n-t)$ bits being all 0, and strings in each row $\{s(j,k): k\in [2^n/\ell]\}$ share the same first $t$ bits.
        \item $\forall j\in [\ell], \forall k\in [2^n/\ell-1]$, $s(j,k)$ and $s(j,k+1)$ differ by 1 bit.
        \item For any fixed $k\in [2^n/\ell-1]$, and any $t' \in \{t+1,...,n\}$, there are at most $\big(\frac{m}{2(n-t)} + 1\big)$ many $j\in [\ell]$ s.t. $s(j,k)$ and $s(j,k+1)$ differ by the $t'$-th bit.
    \end{enumerate}
\end{lemma}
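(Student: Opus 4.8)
The plan is to let the rows of the array correspond to prefixes and the columns to suffixes, realizing each row as a (bit-permuted) Gray code path on the suffix bits. Concretely, index the $\ell = 2^t$ rows by the $2^t$ distinct $t$-bit strings $p_1,\ldots,p_\ell$, and let row $j$ consist of all strings whose first $t$ bits equal $p_j$. Within row $j$, the last $n-t$ bits (the \emph{suffix}) will run over all of $\{0,1\}^{n-t}$ along a Gray code path that starts at $0^{n-t}$; the entry $s(j,k)$ is then $p_j$ concatenated with the $k$-th suffix of that path. Since the $p_j$ exhaust all $t$-bit prefixes and each row's suffixes exhaust all $(n-t)$-bit strings, the array is a partition of $\Bn$. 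Property 1 is immediate (all first-column suffixes are $0^{n-t}$, and a whole row shares the prefix $p_j$), and Property 2 holds because consecutive suffixes along a Gray code path differ in exactly one bit while the prefix is held fixed.

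The real content of the lemma is Property 3, the requirement that the bit flipped in passing from column $k$ to column $k+1$ be \emph{evenly spread} across the rows. To arrange this I would not use the same Gray code path in every row, but rather assign to row $j$ the $(c_j, n-t)$-Gray code path on the suffix bits (positions $t+1,\ldots,n$), where the starting bit index $c_j$ is cycled as evenly as possible, e.g. $c_j \equiv (j-1) \bmod (n-t)$ under the identification of $\{t+1,\ldots,n\}$ with $\mathbb{Z}/(n-t)$. Each such path is a cyclic relabeling of the reflected binary code, hence still a Gray code path from $0^{n-t}$ enumerating all suffixes, so Properties 1 and 2 are untouched.

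For Property 3 I would argue locally at a single transition. In the reflected binary code the bit flipped at transition $k$ is the Ruler-function value $\zeta(k)$, and cyclically shifting the bit labels by $c_j$ turns this into the bit $\sigma_{c_j}(\zeta(k))$ flipped by row $j$ at transition $k$. Fixing the transition $k$ and a target bit $t'\in\{t+1,\ldots,n\}$, the condition $\sigma_{c_j}(\zeta(k)) = t'$ pins $c_j$ down to a single residue modulo $n-t$; hence the rows that flip bit $t'$ at transition $k$ are exactly those whose assigned shift lies in one residue class. Because the shifts are distributed as evenly as possible, no residue class contains more than $\lceil \ell/(n-t)\rceil$ rows, and using $\ell = 2^t \le m/2$ this is at most $\frac{m}{2(n-t)} + 1$, which is precisely the claimed bound.

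The genuinely delicate point is Property 3: once the array is set up as prefix $\times$ (Gray-coded suffix), Properties 1 and 2 are bookkeeping, but balancing the flipped bits forces the use of a different cyclic shift in each row. The one thing I would verify carefully is that cyclic relabeling of the suffix bits maps the reflected binary code onto another Gray code path still starting at $0^{n-t}$ (so all three properties survive simultaneously), and that the even-distribution assignment of shifts is compatible with the common starting suffix $0^{n-t}$ shared by the whole first column; the count itself is then the simple pigeonhole estimate above.
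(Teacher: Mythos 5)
Your proposal is correct and follows essentially the same route as the paper's proof: partition by $t$-bit prefix, enumerate suffixes in row $j$ via the cyclically shifted Gray code with shift $(j-1) \bmod (n-t)$, and bound the per-transition, per-bit count by $\lceil \ell/(n-t)\rceil \le \frac{m}{2(n-t)}+1$ via pigeonhole on the residue classes. The point you flag for verification is immediate since $0^{n-t}$ is fixed by any permutation of the bit positions.
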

The proof of Lemma \ref{lem:2D-array} is shown in Appendix \ref{sec:2D-array}.

Let us denote by $t_{jk}$ the index of the bit that $s(j,k)$ and $s(j,k+1)$ differ by. We can now describe this stage in more details. 
\begin{enumerate}
    \item The first step $U_{1}$ aims to let each qubit $j$ in the phase register have the state $\ket{f_{j1}(x)}$ at the end of this step, where $f_{j1}(x) = \langle s(j,1), x\rangle$. 
    \item The second step applies the rotation $R_{j,1} \defeq R(\alpha_{s(j,1)})$ on each qubit $j$ in the phase register. That is, the state is rotated by an phase angle of $\alpha_{s(j,1)}$ if $\langle x, s(j,1)\rangle = 1$, and left untouched otherwise. Put $R_1 = \otimes_{j\in [\ell]} R_{j,1}$.
\end{enumerate}

The next lemma gives the cost and effect of this stage. 

\begin{lemma}\label{lem:GrayInit}
    The Gray Initial Stage 
    can be implemented in depth at most $2\log m$ and in size at most $\frac{(n+1)m}{2}$ such that 
    its unitary $U_{GrayInit}$ satisfies
    \begin{equation}\label{eq:UGI-effect}
        \ket{x} \ket{x_{pre}} \ket{0^{m/2}}
        \xrightarrow{U_{GrayInit}}  e^{i \sum_{j\in [\ell]} f_{j,1}(x) \alpha_{s(j,1)} } \ket{x} \ket{x_{pre}} \ket{f_{[\ell],1}},
    \end{equation}
    where $\ket{f_{[\ell],1}} = \otimes_{j\in [\ell]} \ket{f_{j,1}(x)}$.
\end{lemma}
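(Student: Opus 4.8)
The plan is to treat the two steps of this stage — the linear-function step $U_1$ and the rotation step $R_1$ — separately, first establishing the state transformation \eqref{eq:UGI-effect} and then bounding size and depth. For correctness, write $S_j=\{i\in[t]:s(j,1)_i=1\}$; Property 1 of Lemma \ref{lem:2D-array} guarantees that every $s(j,1)$ has zero suffix, so $S_j\subseteq[t]$ and $f_{j,1}(x)=\langle s(j,1),x\rangle=\bigoplus_{i\in S_j}x_i$ depends only on $x_1,\dots,x_t$ — precisely the variables duplicated in the copy register by the Prefix Copy Stage (Lemma \ref{lem:copy1}). In $U_1$ I would use only CNOTs whose controls sit in the input/copy registers and whose targets are phase qubits; since no phase qubit is ever a control, the input and copy registers are left untouched and each phase qubit $j$ ends in $\ket{f_{j,1}(x)}$, realizing $\ket{x}\ket{x_{pre}}\ket{0^{m/2}}\to\ket{x}\ket{x_{pre}}\ket{f_{[\ell],1}}$. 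Then $R_{j,1}=R(\alpha_{s(j,1)})=\mathrm{diag}(1,e^{i\alpha_{s(j,1)}})$ acts as $R_{j,1}\ket{f_{j,1}(x)}=e^{i\alpha_{s(j,1)}f_{j,1}(x)}\ket{f_{j,1}(x)}$, and since the phase qubits are disjoint the product of these rotations multiplies the state by $e^{i\sum_{j\in[\ell]}\alpha_{s(j,1)}f_{j,1}(x)}$, exactly as claimed.

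For the size bound, each of the $\ell=2^t\le m/2$ active phase qubits receives at most $|S_j|\le t\le n$ CNOTs (one per variable in its support) followed by a single rotation, and the remaining phase qubits stay idle, so the total gate count is at most $\tfrac{m}{2}(t+1)\le\tfrac{(n+1)m}{2}$.

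The crux is the depth of $U_1$. The rotation layer $R_1=\bigotimes_{j\in[\ell]}R_{j,1}$ acts on distinct qubits and costs depth $1$, so everything reduces to scheduling the CNOTs of $U_1$ into few parallel layers. Each phase qubit $j$ is the target of exactly $|S_j|\le t$ CNOTs, which forces depth $\ge\max_j|S_j|$; the real difficulty is contention on the \emph{source} side, since the first-column prefixes range over all $2^t$ strings of $\{0,1\}^t$, so each variable $x_i$ is demanded by exactly $2^{t-1}$ functions — far more than one copy can feed in a single layer. I would resolve this using the $\lfloor m/(2t)\rfloor$ copies of each $x_i$: because $m/2\ge 2^t$ one checks that $\lfloor m/(2t)\rfloor\cdot t\ge 2^{t-1}$ in our regime, so the $2^{t-1}$ demands for $x_i$ can be spread over its copies with at most $t$ demands per copy. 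Forming the bipartite graph whose left vertices are the individual copy qubits, whose right vertices are the phase qubits, and whose edges are the scheduled (control, target) pairs, both sides then have maximum degree at most $t$, so by König's bipartite edge-coloring theorem the edge set decomposes into at most $t\le\log m$ matchings, each a layer of mutually disjoint CNOTs executable in depth $1$. Adding the rotation layer gives total depth at most $\log m+1\le 2\log m$. I expect this scheduling/edge-coloring step to be the main obstacle: the balanced duplication supplied by the Prefix Copy Stage together with the fact that the first-column prefixes are exactly the $2^t$ strings of $\{0,1\}^t$ is what keeps the maximum degree small enough for König's theorem to produce an $O(\log m)$-depth schedule, whereas without this balance one copy would be forced to feed $\Omega(2^t)$ CNOTs sequentially.
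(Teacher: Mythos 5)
Your proposal is correct: the correctness argument (CNOTs with controls in the input/copy registers and targets in the phase register, followed by the single-qubit rotations $R_{j,1}$) and the size count of at most $t+1\le n+1$ gates per phase qubit match the paper, and your degree estimates are right --- each phase qubit needs $|S_j|\le t$ controls, each $x_i$ is demanded by exactly $2^{t-1}$ of the $2^t$ first-column prefixes, and $2^{t-1}\le m/4$ demands spread over the $\lfloor m/(2t)\rfloor$ copies give at most $t$ per copy. Where you genuinely diverge from the paper is in how the CNOT schedule is produced. The paper gives an explicit round-robin construction: it processes the $2^t$ phase qubits in $\lceil 2^t/(t\lfloor m/(2t)\rfloor)\rceil$ mini-steps, splits each mini-step's targets into $t$ blocks of size $\lfloor m/(2t)\rfloor$, and in layer $r$ feeds the copies of $x_i$ to block $i+r \bmod t$; this yields depth $2t+1\le 2\log m$ without any existential argument. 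You instead invoke K\H{o}nig's edge-coloring theorem on the bipartite (copy qubit, phase qubit) demand graph of maximum degree $t$, obtaining $t$ matchings and hence depth $t+1\le\log m+1$, slightly better than the paper's constant and arguably cleaner, at the price of a non-explicit (though efficiently computable) schedule and of having to verify the left-side degree bound, which you do. One small caution: make sure the assignment of the $2^{t-1}$ demands for $x_i$ to its copies is done so that no copy exceeds degree $t$ \emph{before} applying K\H{o}nig --- you state this, and the inequality $\lfloor m/(2t)\rfloor\cdot t\ge m/2-t\ge 2^{t-1}$ does hold in the regime $m\ge 2n$, so the argument goes through.
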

\begin{proof}
We will show how to implement the first step $U_{1}$ 
such that all $\ell=2^t = 2^{\left\lfloor\log \frac{m}{2}\right\rfloor}$ linear functions of the prefix variables $x_1, \ldots, x_t$ are implemented, namely after $U_{1}$, the states of the $2^t$ qubits in the phase register are exactly $\{a_1x_1 \oplus \cdots \oplus a_t x_t: a_1, \ldots, a_t \in \{0,1\}\}$. The implementation makes each qubit $j$ in the phase register have state $\ket{f_{j,1}(x) }$. Then in the second step, each qubit $j$ adds a phase of $f_{j,1}(x) \cdot \alpha_{s(j,1)}$ to $\ket{x} \ket{x_{pre}} \ket{0^{m/2}}$. We thus have 
\begin{align}\label{eq:U1}
    \ket{x} \ket{x_{pre}} \ket{0^{m/2}} &\xrightarrow{U_1} \ket{x} \ket{x_{pre}} \ket{f_{[\ell],1}}, \\
    & \xrightarrow{R_1} e^{i \sum_{j\in [\ell]} f_{j,1}(x) \alpha_{s(j,1)} } \ket{x} \ket{x_{pre}} \ket{f_{[\ell],1}}.
\end{align}

Now let us construct a shallow circuit for the first step $U_{1}$. Recall that we have $\ell = 2^t$ qubits $j$ each with a corresponding linear function in variables $x_1, \ldots, x_t$. Since $\ell \le m/2$, the phase register has enough qubits to hold these linear functions. For a qubit $j$ in the phase register with corresponding linear function $x_{i_1}\oplus \cdots \oplus x_{i_{t'}}$ ($t'\le t$), we will use CNOT gates to copy the qubits $x_{i_1}, ..., x_{i_{t'}}$ from the work and the copy registers to qubit $j$. We just need to allocate these CNOT gates evenly to make the overall depth small. This step can be divided into $\big\lceil \frac{2^t}{t\lfloor m/(2t)\rfloor}\big\rceil $ mini-steps, each mini-step handling $t\left\lfloor\frac{m}{2t}\right\rfloor$ qubits $j$ by assigning the state $\ket{\ \langle s(j,1),x\rangle\ }$ to it. Since we have $\ell = 2^t$ qubits to handle, it needs $\big\lceil \frac{2^t}{t\lfloor m/(2t)\rfloor}\big\rceil$ mini-steps. 

For all positions $i\in [t]$ with $s(j,1)_i = 1$, we use CNOT to copy $x_i$ to qubit $j$. We have $t$ variables $x_1, \ldots, x_t$, each with $\lfloor m/(2t) \rfloor$ copies. To utilize these copies for parallelization, we break the $t\lfloor m/(2t) \rfloor$ target qubits into $t$ blocks of size $\lfloor m/(2t) \rfloor$ each. 
Each mini-step gives all needed variables for $t(\lfloor\frac{m}{2t}\rfloor+1)$ qubits $j$, in depth $t$. In the first layer, we use the $\lfloor\frac{m}{2t}\rfloor$ copies of $x_1$ as control qubits in CNOT to copy $x_1$ to the first block of target qubits $j$, use the $\lfloor\frac{m}{2t}\rfloor$ copies of $x_2$ for the second block of target qubits, and so on, to $x_t$ for the $t$-th block. Then in the second layer, we repeat the above process with a circular shift: Copy $x_1$ to block 2, $x_2$ to block 3, ..., $x_{t-1}$ to block $t$, and $x_t$ to block 1. Repeat this and we can complete this mini-step in depth $t$, such that $t \lfloor\frac{m}{2t}\rfloor$ many qubits $j$ get their needed variables. 

Since there are $\big\lceil \frac{2^t}{t\lfloor m/(2t)\rfloor}\big\rceil $ mini-steps, each of depth $t$, the total depth for $U_{1}$ is 
$\big\lceil \frac{2^t}{t\lfloor m/(2t)\rfloor}\big\rceil  \cdot t \le \frac{m/2}{m/(2t)} + t = 2t = 2 \lfloor \log(m/2) \rfloor \le 2\log m - 2.$

The rotations in the second step are on different qubits and thus can be put into one layer, thus the overall depth for Gray Initial Stage is at most $2\log m$.

The size of this stage is at most {$(n+1)m/2$}, because each qubit in the phase register has at most $n$ CNOT gates {and one $R_z$ gate} on it. 
%
\end{proof}

\paragraph{\underline{Stage 3: Suffix Copy}} 
In this stage, we first undo $U_{copy,1}$, and then make $\big\lfloor \frac{m}{2(n-t)}\big\rfloor$ copies for each of the suffix variables, namely $x_{t+1},...,x_n$. The next lemma is similar to Lemma \ref{lem:copy1} and we omit the proof.
\begin{lemma}\label{lem:copy2}
    We can make $\left\lfloor \frac{m}{2(n-t)} \right\rfloor$ copies of each qubit  $x_{t+1},x_{t+2},\cdots,x_n$ in the input register and the copy register, by applying on $\ket{x} \ket{0^{m/2}}$ an $m$-size circuit $U_{copy,2}$ of CNOT gates only, in depth at most $\log m$.
\end{lemma}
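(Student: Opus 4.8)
The plan is to mirror the proof of Lemma~\ref{lem:copy1} almost verbatim, with the $n-t$ suffix variables $x_{t+1},\ldots,x_n$ playing the role that the $t$ prefix variables $x_1,\ldots,x_t$ played there. The only arithmetic that changes is the number of distinct variables to be copied and the target number of copies per variable: here I want $\lfloor m/(2(n-t))\rfloor$ copies of each of the $n-t$ suffix variables, for a total of $(n-t)\lfloor m/(2(n-t))\rfloor \le m/2$ copies, which therefore fit exactly into the $(m/2)$-qubit copy register just as before.

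Starting from the clean state $\ket{x}\ket{0^{m/2}}$, I would apply the standard doubling (fan-out) procedure. In the first layer, one CNOT from each input qubit $x_i$ (for $i>t$) to a fresh copy-register qubit produces a second copy of every suffix variable; these CNOTs act on pairwise-disjoint qubit pairs, so the layer has depth $1$. In each subsequent layer, the $2^{r-1}$ existing copies of each $x_i$ are used as controls to create $2^{r-1}$ new copies, doubling the count to $2^r$, again in depth $1$ since the pairs remain disjoint. After $\lceil \log \lfloor m/(2(n-t))\rfloor\rceil \le \log m$ layers, every suffix variable carries the required $\lfloor m/(2(n-t))\rfloor$ copies.

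For the cost accounting, each qubit of the copy register is used as the target of a CNOT exactly once, so the copying uses at most $m/2$ gates, comfortably within the stated $m$-size budget (the remaining slack accommodates counting the $(m/2)$-gate uncomputation of $U_{copy,1}$ as part of this stage, which returns the copy register from $\ket{x_{pre}}$ to $\ket{0^{m/2}}$ before the suffix copies are made). The depth is at most $\log m$ as just argued. The parameter regime $m\in[2n,2^n/n]$ is exactly what makes the construction well-defined: $t=\lfloor\log(m/2)\rfloor<n$ forces $n-t\ge 1$, and $m\ge 2n\ge 2(n-t)$ forces $\lfloor m/(2(n-t))\rfloor\ge 1$, so each suffix variable receives at least one copy, as the subsequent Gray Path Stage requires.

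The only step needing any care---routine and identical to the situation in Lemma~\ref{lem:copy1}---is the final, partially filled doubling layer when $\lfloor m/(2(n-t))\rfloor$ is not a power of two: one simply truncates the last layer to create exactly the required number of copies rather than the next power of two, which neither increases the depth nor breaks the disjointness of the CNOTs. There is no genuine obstacle; the lemma is a direct transcription of Lemma~\ref{lem:copy1} with the pair $(t,\lfloor m/(2t)\rfloor)$ replaced by $(n-t,\lfloor m/(2(n-t))\rfloor)$, together with the uncomputation of the prefix copies.
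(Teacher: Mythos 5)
Your proposal is correct and matches what the paper intends: the paper explicitly omits the proof of this lemma, stating only that it is ``similar to Lemma \ref{lem:copy1},'' and your argument is precisely that adaptation, with the pair $(t,\lfloor m/(2t)\rfloor)$ replaced by $(n-t,\lfloor m/(2(n-t))\rfloor)$ and the same fan-out/doubling construction giving depth $\lceil\log\lfloor m/(2(n-t))\rfloor\rceil\le\log m$ and at most $m/2\le m$ CNOT gates. Your side remarks on why the copies fit in the $(m/2)$-qubit copy register and why $m\ge 2n$ guarantees at least one copy per suffix variable are accurate and consistent with the paper's accounting of the Suffix Copy Stage.
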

Define 
\[
\ket{x_{suf}}\defeq 
|\overbrace{
\underbrace{x_{t+1}\cdots x_{t+1}}_{\left\lfloor\frac{m}{2(n-t)}\right\rfloor~\text{qubits}} \cdots\underbrace{x_n\cdots x_n}_{\left\lfloor\frac{m}{2(n-t)}\right\rfloor~\text{qubits}}0\cdots 0
}^{{m/2}~\text{qubits}}\rangle,
\]
then the effect of $U_{copy,2}$ is 
\[
    \ket{x} \ket{0^{m/2}} \xrightarrow{U_{copy,2}} \ket{x}\ket{x_{suf}}.
\]
The operator of this stage is $U_{copy,2} U_{copy,1}^\dagger$, and the depth is at most $2\log m$ and the size is at most $m$. The effect of this stage $U_{copy,2}  U_{copy,1}^\dagger$ is 
\begin{equation}\label{eq:suf-copy}
    \ket{x} \ket{x_{pre}} 
    \xrightarrow{U_{copy,1}^\dagger} \ket{x} \ket{0^{m/2}} \xrightarrow{U_{copy,2}} \ket{x}\ket{x_{suf}}.
\end{equation}

\paragraph{\underline{Stage 4: Gray Path}} 
This stage contains $2^n/\ell- 1$ phases, indexed by $k = 2, 3, \ldots, 2^n/\ell$. The previous Gray Initial Stage can be also viewed as the phase $k=1$. We single it out as a stage because it implements linear functions from scratch, while each phase in the Gray Path Stage implements linear functions only by a small update from the previous phase. 

In each phase $k$ in this stage, the circuit has two steps: 
\begin{enumerate}
    \item Step $k.1$ is a unitary circuit $U_{k}$ that applies a CNOT gate on each qubit $j\in [\ell]$ in the phase register, controlled by $x_{t_{j,(k-1)}}$, the bit where $s(j,k-1)$ and $s(j,k)$ differ. 
    \item Step $k.2$ applies the rotation gate $R(\alpha_{s(j,k)})$ on qubit $j$. Put $R_k = \otimes_{j\in [\ell]} R(\alpha_{s(j,k)})$.
\end{enumerate}  
\begin{lemma}\label{lem:GrayPath}
    The phase $k$ of the Gray Path Stage implements 
\begin{equation}\label{eq:GrayPath}
\ket{x}\ket{x_{suf}} \ket{f_{[\ell],k-1}}  \xrightarrow{U_k} \ket{x}\ket{x_{suf}} \ket{f_{[\ell],k}} \xrightarrow{R_k} e^{i \sum_{j\in [\ell]} f_{j,k}(x) \alpha_{s(j,k)} } \ket{x}\ket{x_{suf}} \ket{f_{[\ell],k}}, \end{equation}
    where $f_{j,k}(x)=\langle{s(j,k)},x\rangle$ and $\ket{f_{[\ell],k}}=\otimes_{j\in[\ell]}\ket{f_{j,k}(x)}$. The depth and size of the whole Gray Path Stage are at most ${ 2}\cdot 2^{n}/\ell$ and $2^{n+1}$. 
\end{lemma}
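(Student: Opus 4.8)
The plan is to establish Eq.~\eqref{eq:GrayPath} one phase at a time and then sum the per-phase cost over all $2^n/\ell-1$ phases, $k=2,\ldots,2^n/\ell$. The correctness of a single phase is the easy part, and the real work is showing that each phase fits into depth $2$.

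First I would verify a single phase $k$. At the start of the phase each qubit $j\in[\ell]$ in the phase register holds $\ket{f_{j,k-1}(x)}$. By Property~2 of Lemma~\ref{lem:2D-array}, $s(j,k)$ and $s(j,k-1)$ differ in exactly one coordinate $t_{j,(k-1)}$; writing $s(j,k)=s(j,k-1)\oplus e_{t_{j,(k-1)}}$ over $\mathbb{F}_2$ and expanding the inner product gives $f_{j,k}(x)=f_{j,k-1}(x)\oplus x_{t_{j,(k-1)}}$. Hence the CNOT of Step $k.1$, controlled by $x_{t_{j,(k-1)}}$ and targeting qubit $j$, sends $\ket{f_{j,k-1}(x)}$ to $\ket{f_{j,k}(x)}$, which is exactly $U_k$. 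The rotation $R(\alpha_{s(j,k)})$ of Step $k.2$ then multiplies the state by $e^{i f_{j,k}(x)\alpha_{s(j,k)}}$, and taking the product over all $j$ yields the claimed phase factor, establishing Eq.~\eqref{eq:GrayPath}.

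Next I would bound the depth of a single phase. Step $k.2$ applies $\ell$ rotations on pairwise distinct qubits, so it has depth $1$. For Step $k.1$ the $\ell$ CNOTs have pairwise distinct targets (one per qubit $j$), so the only obstruction to a single layer is a collision among the control qubits. By Property~1 of Lemma~\ref{lem:2D-array} all strings in a row share the same first $t$ bits, hence every differing bit $t_{j,(k-1)}$ lies in the suffix $\{t+1,\ldots,n\}$, so every control is a copy of some suffix variable $x_{t'}$. By Property~3, for each fixed $t'$ at most $\frac{m}{2(n-t)}+1$ of the qubits $j$ require $x_{t'}$ as control. By Lemma~\ref{lem:copy2} there are $\lfloor\frac{m}{2(n-t)}\rfloor$ copies of $x_{t'}$ in the copy register, and together with the original $x_{t'}$ in the input register this gives $\lfloor\frac{m}{2(n-t)}\rfloor+1$ usable control sources. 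Since the demand is an integer bounded by $\frac{m}{2(n-t)}+1$, it is at most $\lfloor\frac{m}{2(n-t)}\rfloor+1$, which exactly matches the supply; thus I can assign a distinct control source to each CNOT, so Step $k.1$ occupies a single layer and each phase has depth at most $2$.

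Finally I would aggregate. Summing over the $2^n/\ell-1$ phases gives total depth at most $2(2^n/\ell-1)\le 2\cdot 2^n/\ell$, and since each phase uses $\ell$ CNOT gates and $\ell$ rotation gates, the total size is $2\ell(2^n/\ell-1)=2^{n+1}-2\ell\le 2^{n+1}$. The hard part will be the single-layer parallelization of Step $k.1$: it rests on the tight accounting between the per-bit demand guaranteed by Property~3 and the number of copies produced in the Suffix Copy stage, where the original input variable is precisely what closes the off-by-one gap. Everything else is routine bookkeeping.
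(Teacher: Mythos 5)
Your proposal is correct and follows essentially the same argument as the paper: correctness via the single-bit-difference property of Lemma \ref{lem:2D-array}, depth-1 parallelization of Step $k.1$ by matching the per-variable control demand from Property~3 against the $\lfloor m/(2(n-t))\rfloor$ copies plus the original input qubit, depth 1 for the rotations, and summation over the $2^n/\ell-1$ phases. Your explicit handling of the integrality step closing the off-by-one gap between $\frac{m}{2(n-t)}+1$ and $\lfloor\frac{m}{2(n-t)}\rfloor+1$ is a point the paper glosses over but is the same accounting.
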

\begin{proof}
The operation can be easily seen in a similar way as that for Lemma \ref{lem:GrayInit}. Next we show the depth bound. 
{The Gray Path stage repeats step $k.1$-$k.2$ for $2^n/\ell-1$ times. Since $s(j,k-1)$ and $s(j,k)$  differ by only 1 bit by Lemma \ref{lem:2D-array}, one CNOT gate suffices to implement the function $\langle x, s(j,k)\rangle$ from $\langle x, s(j,k-1)\rangle$ in the previous phase: The control qubit is $x_{t_{j,(k-1)}}$ and the target qubit is $j$. Moreover, the third property in Lemma \ref{lem:2D-array} shows that each variables $x_i$ is used as a control qubit for at most $\big(\big\lfloor\frac{m}{2(n-t)}\big\rfloor+1\big)$ different $j\in [\ell]$. Since we have { $\big(\big\lfloor \frac{m}{2(n-t)}\big\rfloor+1\big) $} copies in the input register and the copy register, these CNOT gates in step $k.1$ can be implemented in depth $ 1$. 

The step $k.2$ consists of only single qubit gates, which can be all paralleled in depth 1. Thus the total depth of Gray Path stage is at most $2^n/\ell \cdot {(1+1)\le 2}\cdot 2^{n}/\ell$.}

{ The size of this stage is $2^{n+1}$ since each linear combination of input variables is generated once and applied single-qubit phase-shift gates $R_k$. The number of linear combinations of input variables is $2^n$ , so the size is $2^{n+1}$.}
\end{proof}

\paragraph{\underline{Stage 5: Inverse}} 
In this stage, the circuit applies {$U_{copy,1}^\dagger U_1^\dagger U_{copy,1} U_{copy,2}^\dagger U_{2}^{\dagger} \cdots U_{2^n/\ell}^\dagger $}.
\begin{lemma}\label{lem:inverse}
    The depth and size of the Inverse Stage are at most $O(\log m + 2^n /m)$ and $\frac{m}{2}+\frac{nm}{2}+m+2^{n}=2^{n}+\frac{3m+nm}{2}$. The effect of this stage is 
    \begin{equation}\label{eq:inverse}
        \ket{x} \ket{x_{suf}} \ket{f_{[\ell],2^n/\ell}} \xrightarrow{U_{Inverse}}  \ket{x} \ket{0^{m/2}} \ket{0^{m/2}}.
    \end{equation}
\end{lemma}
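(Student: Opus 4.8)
The plan is to establish the effect \eqref{eq:inverse} by tracing the action of $U_{Inverse}$ through its constituent operators right-to-left, invoking the effects already computed in Lemmas \ref{lem:copy1}, \ref{lem:GrayInit}, \ref{lem:copy2}, and \ref{lem:GrayPath}, and then to read off the depth and size by summing the contributions of these pieces. The guiding observation is that $U_{Inverse}$ rewinds only the CNOT-type operators $U_{copy,1}, U_{copy,2}, U_1, U_2, \ldots, U_{2^n/\ell}$ that built up the contents of the copy and phase registers, while the phase rotations $R_1, \ldots, R_{2^n/\ell}$ are left in place; since the accumulated phase $e^{i\theta(x)}$ depends only on the input register $\ket{x}$ and no operator in $U_{Inverse}$ touches $\ket{x}$, that phase survives untouched.

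First I would rewind the Gray Path. The rightmost block $U_{2^n/\ell}^\dagger \cdots U_2^\dagger$ maps $\ket{f_{[\ell],2^n/\ell}} \to \ket{f_{[\ell],2^n/\ell-1}} \to \cdots \to \ket{f_{[\ell],1}}$ one phase at a time by Lemma \ref{lem:GrayPath}, and each step is legitimate because the suffix copies in $\ket{x_{suf}}$ supply the control bits $x_{t_{j,(k-1)}}$. Then $U_{copy,2}^\dagger$ clears $\ket{x_{suf}}$ back to $\ket{0^{m/2}}$ by Lemma \ref{lem:copy2}. The delicate point now is that resetting the phase register requires undoing $U_1$, which was defined using the \emph{prefix} copies $\ket{x_{pre}}$; these were already erased in the Suffix Copy Stage, so $U_{Inverse}$ must regenerate them. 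This is exactly the role of $U_{copy,1}$ (not its inverse) in the middle of the product: applied to $\ket{x}\ket{0^{m/2}}$ it restores $\ket{x}\ket{x_{pre}}$ by Lemma \ref{lem:copy1}. With the prefix copies back in place, $U_1^\dagger$ resets the phase register from $\ket{f_{[\ell],1}}$ to $\ket{0^{m/2}}$ (Lemma \ref{lem:GrayInit}), and finally $U_{copy,1}^\dagger$ erases $\ket{x_{pre}}$, leaving $\ket{x}\ket{0^{m/2}}\ket{0^{m/2}}$ as claimed.

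For the cost bounds I would simply add the per-operator figures from the cited lemmas. Each of the four copy/Gray-Initial operators $U_{copy,1}^\dagger$, $U_{copy,1}$, $U_{copy,2}^\dagger$, $U_1^\dagger$ has depth $O(\log m)$, while the Gray-Path inverse $U_{2^n/\ell}^\dagger \cdots U_2^\dagger$ is a stack of $2^n/\ell - 1$ single-CNOT layers and hence has depth $O(2^n/\ell) = O(2^n/m)$, using $\ell = 2^{\lfloor \log(m/2)\rfloor} = \Theta(m)$; summing gives the claimed $O(\log m + 2^n/m)$. For size, the three CNOT copy operations each contribute $m/2$ (totaling $3m/2 = \frac{m}{2}+m$), the operator $U_1^\dagger$ contributes at most $nm/2$, and the Gray-Path inverse contributes one CNOT per phase qubit in each of the $2^n/\ell-1$ phases, at most $2^n$ in all; adding these yields the stated $2^n + \frac{3m+nm}{2}$, which is $O(2^n)$ since $m \le 2^n/n$ forces $nm/2 \le 2^n/2$.

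The one genuinely non-mechanical point is the register bookkeeping in the middle of the product: I would verify carefully that the copy register is in state $\ket{0^{m/2}}$ immediately before $U_{copy,1}$ is applied---a fact delivered by the $U_{copy,2}^\dagger$ step one position to its right---so that $U_{copy,1}$ produces precisely $\ket{x_{pre}}$ and $U_1^\dagger$ then acts on the correct control qubits. Everything else is a direct concatenation of the effects already proved in the preceding lemmas, so I do not expect any real difficulty beyond this consistency check.
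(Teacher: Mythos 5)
Your proposal is correct and follows essentially the same route as the paper: rewind the Gray Path via $U_{2}^{\dagger}\cdots U_{2^n/\ell}^{\dagger}$, swap the suffix copies back to prefix copies via $U_{copy,1}U_{copy,2}^{\dagger}$, clear the phase register with $U_1^{\dagger}$, clear the copy register with $U_{copy,1}^{\dagger}$, and sum the per-stage depths and sizes from Lemmas \ref{lem:copy1}, \ref{lem:GrayInit}, \ref{lem:copy2} and \ref{lem:GrayPath}. The point you single out as delicate --- that $U_{copy,1}$ (not its inverse) must appear in the middle to regenerate $\ket{x_{pre}}$ before $U_1^{\dagger}$ can act --- is exactly the step the paper's chain of maps encodes, so no further comment is needed.
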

The proof of Lemma \ref{lem:inverse} is shown in Appendix \ref{sec:inverse}.

\paragraph{Putting things together} After explaining all the five stages, we are ready to put them together to see the overall depth and operation of the circuit. 
\begin{lemma}\label{lem:puttingtogether_ancilla}
    The circuit implements the operation in Eq. \eqref{matrix:lambda_n} in depth $O(\log m + 2^n /m)$ { and in size $3\cdot 2^{n}+nm+{\frac{7}{2}}m$}.
\end{lemma}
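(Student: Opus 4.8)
The plan is to compose the five stages already analyzed above, tracking both the evolution of the three registers and the phase accumulated along the way, and then simply to add up the per-stage depth and size bounds. Concretely, I would chain together the effect equations \eqref{eq:Uc1-effect}, \eqref{eq:UGI-effect}, \eqref{eq:suf-copy}, \eqref{eq:GrayPath} and \eqref{eq:inverse}, starting from the initial state $\ket{x}\ket{0^{m/2}}\ket{0^{m/2}}$ of the input, copy, and phase registers, and check that the net action on the input register is $\ket{x}\mapsto e^{i\theta(x)}\ket{x}$.

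First I would trace the state through the pipeline. The Prefix Copy Stage sends the copy register to $\ket{x_{pre}}$ by \eqref{eq:Uc1-effect}; the Gray Initial Stage (phase $k=1$) writes $\ket{f_{[\ell],1}}$ into the phase register and introduces the factor $e^{i\sum_{j\in[\ell]} f_{j,1}(x)\alpha_{s(j,1)}}$ by \eqref{eq:UGI-effect}; the Suffix Copy Stage rewrites the copy register from $\ket{x_{pre}}$ to $\ket{x_{suf}}$ via \eqref{eq:suf-copy} without disturbing the phase register or the accumulated scalar; each phase $k=2,\dots,2^n/\ell$ of the Gray Path Stage advances $\ket{f_{[\ell],k-1}}\to\ket{f_{[\ell],k}}$ and contributes $e^{i\sum_{j\in[\ell]} f_{j,k}(x)\alpha_{s(j,k)}}$ by \eqref{eq:GrayPath}; finally the Inverse Stage restores both ancillary registers to $\ket{0^{m/2}}\ket{0^{m/2}}$ by \eqref{eq:inverse}. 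Multiplying the phase factors over all $k$ gives a total phase $\exp\big(i\sum_{k=1}^{2^n/\ell}\sum_{j\in[\ell]}\langle s(j,k),x\rangle\,\alpha_{s(j,k)}\big)$.

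The key correctness step is to invoke Lemma \ref{lem:2D-array}: the strings $\{s(j,k): j\in[\ell],\,k\in[2^n/\ell]\}$ form a partition of $\Bn$, so the double sum ranges over every string exactly once and the accumulated phase equals $\exp\big(i\sum_{s\in\Bn}\langle s,x\rangle\alpha_s\big)$. Since $\alpha_{0^n}=0$ (and $\langle 0^n,x\rangle=0$ in any case), the sum is effectively over $\Bn-\{0^n\}$, and Eq. \eqref{eq:alpha} turns it into exactly $e^{i\theta(x)}$. Because the ancillae are returned to $\ket{0^m}$, the overall operator is $\ket{x}\ket{0^m}\mapsto e^{i\theta(x)}\ket{x}\ket{0^m}$, which is precisely $\Lambda_n$ of \eqref{matrix:lambda_n} acting on the input register.

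It then remains to sum the resources. For depth, Lemmas \ref{lem:copy1}, \ref{lem:GrayInit}, \ref{lem:copy2}, \ref{lem:GrayPath} and \ref{lem:inverse} give $O(\log m)$, $O(\log m)$, $O(\log m)$, $O(2^n/\ell)$ and $O(\log m + 2^n/m)$, respectively; since $\ell=2^{\lfloor\log(m/2)\rfloor}=\Theta(m)$ we have $2^n/\ell=O(2^n/m)$, and the total is $O(\log m + 2^n/m)$. For size, adding the per-stage bounds $m/2$, $(n+1)m/2$, $m$, $2^{n+1}$ and $2^n+(3m+nm)/2$ collects the $2^n$-terms into $3\cdot 2^n$, the $nm$-terms into $nm$, and the $m$-terms into $\tfrac{7}{2}m$, yielding the claimed $3\cdot 2^n + nm + \tfrac{7}{2}m$. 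I expect the only point needing genuine care to be the phase bookkeeping in the third paragraph — guaranteeing that every string is counted once and paired with its correct coefficient $\alpha_s$ — rather than any individual stage; once Lemma \ref{lem:2D-array} is in hand, the argument is just the assembly of already-established per-stage claims.
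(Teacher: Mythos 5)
Your proposal is correct and follows essentially the same route as the paper's own proof: chaining the per-stage effect equations \eqref{eq:Uc1-effect}, \eqref{eq:UGI-effect}, \eqref{eq:suf-copy}, \eqref{eq:GrayPath}, \eqref{eq:inverse}, invoking Lemma \ref{lem:2D-array} so that the accumulated phase sums over all of $\Bn$ exactly once, applying Eq. \eqref{eq:alpha} to obtain $e^{i\theta(x)}$, and adding the per-stage depth and size bounds. The resource arithmetic ($3\cdot 2^n$ from $2^{n+1}+2^n$, $nm$ from two halves, $\tfrac{7}{2}m$ from $\tfrac{m}{2}+\tfrac{m}{2}+m+\tfrac{3m}{2}$) matches the paper's accounting.
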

{The proof of Lemma \ref{lem:puttingtogether_ancilla} is shown in Appendix \ref{sec:puttingtogether_ancilla}.}
In summary, $\Lambda_n$ can be implemented in $O\big(\log m+\frac{2^n}{m}\big)$ depth and size $3\cdot 2^n +nm+{\frac{7}{2}m}$ with $m\in[2n,2^n/n]$ ancillary qubits, proving Lemma \ref{lem:DU_with_ancillary}.

\section{Diagonal unitary implementation without ancillary qubits}
\label{sec:QSP_withoutancilla}
In this section, we prove Lemma \ref{lem:DU_without_ancillary}. That is, any diagonal unitary $\Lambda_n\in\mathbb{C}^{2^n\times 2^n}$ as in Eq. \eqref{matrix:lambda_n} can be implemented by a quantum circuit of depth $O\left(2^n/n\right)$ and size $O(2^n)$ without ancillary qubits.
In Section \ref{sec:framework_DU_withoutancilla}, we present the framework of our circuit and the functionalities of the operators inside. We then prove the correctness and analyze the depth of our circuit in Section \ref{sec:correctness}. Finally, we give the detailed construction of some operators in Section \ref{sec:G_k}.

\subsection{Framework and functionalities}
\label{sec:framework_DU_withoutancilla}

The framework of our circuit implementing $\Lambda_n$ is a recursive procedure shown in Figure \ref{fig:framework_DU_withoutancilla}.
\begin{figure}[!ht]
\centerline{
\Qcircuit @C=1em @R=0.8em {
\lstick{}& & & \lstick{\scriptstyle\ket{x_1}} & \multigate{7}{{\scriptstyle\Lambda_n}} &\qw & & &  \qw & \multigate{7}{\scriptstyle \mathcal{G}_1} & \multigate{7}{ \scriptstyle\mathcal{G}_2} & \qw & ... &  &\qw & \multigate{7}{\scriptstyle \mathcal{G}_{\ell}} &\qw & \multigate{3}{{\scriptstyle\Lambda_{r_c}}} &\qw\\
\lstick{}& & &\lstick{\scriptstyle\ket{x_2}}&\ghost{\scriptstyle\Lambda_n}&\qw & &  &\qw & \ghost{\scriptstyle\mathcal{G}_1} & \ghost{\scriptstyle\mathcal{G}_2} &\qw & ... & &\qw & \ghost{\scriptstyle\mathcal{G}_{\ell}} &\qw & \ghost{{\scriptstyle\Lambda_{r_c}}} &\qw\\
\lstick{}& & & \lstick{\scriptstyle\vdots} &\ghost{\scriptstyle\Lambda_n}&\qw & & & \qw &  \ghost{\scriptstyle\mathcal{G}_1} & \ghost{\scriptstyle\mathcal{G}_2} &\qw & ... & &\qw & \ghost{\scriptstyle\mathcal{G}_{\ell}} &\qw & \ghost{{\scriptstyle\Lambda_{{r_c}}}} &\qw\\
\lstick{}& &  &\lstick{\scriptstyle\ket{x_{r_c}}}&\ghost{\scriptstyle \Lambda_n}&\qw & & & \qw &  \ghost{\scriptstyle\mathcal{G}_1} & \ghost{\scriptstyle\mathcal{G}_2} &\qw & ... & & \qw &\ghost{\scriptstyle\mathcal{G}_{\ell}} &\qw &\ghost{{\scriptstyle\Lambda_{r_c}}} &\qw
\\
\lstick{} & & &\lstick{\scriptstyle\ket{x_{r_c+1}}}&\ghost{\scriptstyle\Lambda_n}&\qw & ~~~~~~~~~~~\textbf{\Large=}~~~~~~~~~~~& & \qw & \ghost{\scriptstyle\mathcal{G}_1} & \ghost{\scriptstyle\mathcal{G}_2} &\qw & ... & & \qw & \ghost{\scriptstyle\mathcal{G}_{\ell}} &\qw & \multigate{3}{\scriptstyle\mathcal{R}} &\qw\\
\lstick{} & & & \lstick{\scriptstyle\ket{x_{r_c+2}}}&\ghost{\Lambda_n}&\qw & & & \qw  & \ghost{\scriptstyle\mathcal{G}_1} & \ghost{\scriptstyle\mathcal{G}_2} &\qw & ... & &\qw & \ghost{\scriptstyle\mathcal{G}_{\ell}} &\qw & \ghost{\scriptstyle\mathcal{R}} &\qw\\
\lstick{} & & & \lstick{\scriptstyle\vdots}&\ghost{\scriptstyle\Lambda_n}&\qw & & & \qw & \ghost{\scriptstyle\mathcal{G}_1} & \ghost{\scriptstyle\mathcal{G}_2} &\qw & ... & &\qw & \ghost{\scriptstyle\mathcal{G}_{\ell}} &\qw & \ghost{\scriptstyle\mathcal{R}}&\qw\\
\lstick{}& &&\lstick{\scriptstyle\ket{x_n}}&\ghost{\scriptstyle\Lambda_n}&\qw  & & & \qw & \ghost{\scriptstyle\mathcal{G}_1} & \ghost{\scriptstyle\mathcal{G}_2} &\qw & ... & &\qw &\ghost{\scriptstyle\mathcal{G}_{\ell}}&\qw &\ghost{\scriptstyle\mathcal{R}}&\qw
\\
}
}
\caption{A circuit framework to implement an $n$-qubit unitary diagonal matrix $\Lambda_n$, where $r_t=\left\lfloor n/2\right\rfloor$, $r_c=n-r_t=\left\lceil n/2\right\rceil$ and $\ell\le  \frac{2^{r_t+2}}{r_t+1}-1$. The first $r_c$ qubits are control register and the last $r_t$ qubits are target register.
The depth of the operator $\mathcal{G}_k$ is $O(2^{r_c})$ for each $k\in[\ell]$ and the depth of the operator $\mathcal{R}$ is $O(r_t/\log r_t)$}. The $r_c$-qubit diagonal unitary matrix $\Lambda_{r_c}$ is implemented recursively.
    \label{fig:framework_DU_withoutancilla}
\end{figure}
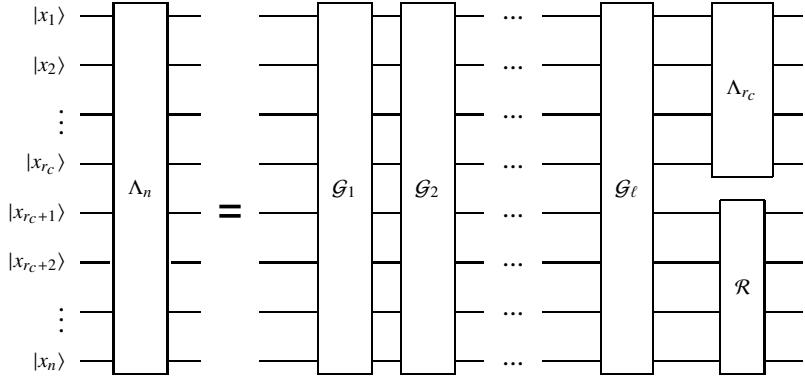


The $n$-qubit work register is divided into two registers: A \emph{control register} consisting of the first $r_c$ qubits, and a \emph{target register} consisting of the last $r_t$ qubits. The circuit has the following components.
\begin{enumerate}
    \item A sequence of $n$-qubit unitary operators $\mathcal{G}_1,\ldots,\mathcal{G}_\ell$, the detailed construction of which will be given in Section \ref{sec:G_k}.
    \item An $r_t$-qubit unitary operator $\mathcal{R}$, which resets the state in the target register to the input value $\ket{x_{r_c+1}, \ldots , x_n}$.
    \item An $r_c$-qubit diagonal unitary operator $\Lambda_{r_c}$, which is  implemented recursively.
\end{enumerate} 
The parameters are set as follows: 
$r_t=\lfloor n/2 \rfloor\approx n/2,\quad  r_c=n-r_t \approx n/2, \quad \text{ and } \quad \ell \le \frac{2^{r_t+2}}{r_t+1}-1 \approx \frac{ 2^{n/2+3}}{n}.$

%

Next we describe the function of each operator in Figure \ref{fig:framework_DU_withoutancilla}, for which it suffices to specify their effects on an arbitrary computational basis state
\[\ket{x} = \ket{x_1x_2\cdots x_{r_c} x_{r_c+1} \cdots x_n} = \ket{\underbrace{x_{control}}_{r_c \text{~qubits}}}\ket{\underbrace{x_{target}}_{r_t \text{~qubits}}},
\]
where $x\in\Bn$. Let us first highlight some key similarities and differences between this circuit and the one presented in the previous section. Recall that in Section \ref{sec:QSP_withancilla}, an $n$-bit string $s\in \{0,1\}^n-\{0^n\}$ is broken into two parts, a ${\left\lfloor\log (\frac{m}{2})\right\rfloor}$-bit prefix and an $\left(n-{\lfloor \log(\frac{m}{2}) \rfloor}\right)$-bit suffix. In the Gray Initial Stage there, we use $2^{\lfloor \log(m/2)\rfloor}$ qubits in the phase register to enumerate all possible ${\lfloor \log(m/2)\rfloor}$-bit prefixes, one prefix on each phase qubit $j$. Then on each such qubit $j$ we enumerate all $(n-{\lfloor \log(m/2)\rfloor})$-bit suffixes in the Gray Path Stage.
%
In this section, we again break $s$ into a prefix and a suffix, and enumerate all prefixes and all suffixes to run over all $n$-bit strings. However, due to the lack of the ancillary qubits, the circuit here differs from the last one in the following two aspects.
\begin{enumerate}
    \item In Section \ref{sec:QSP_withancilla}, $s\in\{0,1\}^n-\{0^n\}$ is generated 
    in the phase register, which is initialized to $|0\rangle$.  In this section, $s=ct$
    , in which $c$ is the $r_c$-bit prefix and $t$ is the $r_t$-bit suffix. The state $\ket{\langle s, x\rangle }$ is generated 
    in target register, whose initial state is $|x_j\rangle$ for some $j\in\{r_c+1, r_c+2, \ldots, r_n\}$. Hence, we enumerate $s$ recursively in this section. That is, we first generate $s=ct$ for $t\neq 0^{r_t}$ and then generate $c0^{r_t}$ recursively. 
    \item In Section \ref{sec:QSP_withancilla}, there are $2^{\lfloor \log(m/2)\rfloor}~(\le \frac{m}{2})$ prefixes which can be enumerated in $\frac{m}{2}$ qubits in phase register exactly. In this section, $2^{r_t}-1~(\approx 2^{n/2})$ suffixes should be generated in $r_t$ qubits in target register. As we only have $r_t$ qubits, the small space is insufficient to enumerate all $2^{r_t}-1$ suffixes. Thus we need to enumerate them in many stages, and $r_t$ suffixes in each stage; in other words, we pay the price of time to compensate the shortage of space. It turns out that the transition from one stage to another can be made in a low depth if the suffixes enumerated in each stage are linearly independent as vectors in $\{0,1\}^{r_t}$. Thus we need carefully divide $2^{r_t}-1$ suffixes into $\ell$ sets $T^{(1)}, \ldots, T^{(\ell)}$ with $T^{(k)}=\{t_1^{(k)},t_2^{(k)},\ldots,t_{r_t}^{(k)}\}$ each $t_a^{(k)}\neq 0^{r_t}$ for $a\in [r_t]$ and $k \in [\ell]$, and the strings in each $T^{(k)}$ linearly independent. We allow overlap between these sets, but maintain the total number $\ell$ of sets only a constant times of $(2^{r_t}-1)/r_t$, so that the overall depth is still under the control. As the sets have overlaps, a suffix may appear more than once, so we need to note this and avoid repeatedly applying rotation  when the suffix appears multiple times. 
\end{enumerate}

We now show how to implement the above high-level ideas. We will need to find sets $T^{(1)},T^{(2)},\ldots,T^{(\ell)}$ satisfying the following two key properties.
\begin{enumerate}
    \item For each $k\in[\ell]$, the set $T^{(k)} = \big\{t_1^{(k)},t_2^{(k)},\ldots,t_{r_t}^{(k)}\big\}$ contains $r_t$ vectors from $\{0,1\}^{r_t}$ that are linearly independent over the field $\mathbb{F}_2$.
    \item The collection of these sets covers all the $r_t$-bit strings except for $0^{r_t}$, i.e.  $\bigcup_{k\in[\ell]}T^{(k)} = \{0,1\}^{r_t}-\{0^{r_t}\}$.
\end{enumerate}
{The constructions of sets $T^{(1)},\ldots,T^{(\ell)}$ are shown in Appendix \ref{sec:partition}.} 
For each $k\in[\ell]\cup\{0\}$, define an $r_t$-qubit state
\begin{equation} \label{eq:yk}
\ket {y^{(k)}} = \ket{y_1^{(k)}y_2^{(k)}\cdots y_{r_t}^{(k)}},  \text{ where } \quad
    y_j^{(k)} =\left\{\begin{array}{ll}
       x_{r_c+j}  & \text{if~} k=0, \\
       \langle {0^{r_c}t_j^{(k)}},x\rangle  &  \text{if~} k\in[\ell].
    \end{array}\right.
\end{equation}
Namely, $y^{(0)}$ is the same as $x_{target}$ (the suffix of $x$), and other $y_j^{(k)}$ are linear functions of variables in $x_{target}$ with coefficients given by $t_j^{(k)}$.
Next, let us define disjoint families $F_1,\ldots,F_\ell$ which apply the rotation when a suffix appears for the first time.
\begin{equation}\label{eq:F_k}
\begin{array}{ll}
   F_1=\big\{ct:\ t\in T^{(1)},c\in\{0,1\}^{r_c}\big\},  \\
         F_k=\big\{ct:\ t\in T^{(k)},c\in\{0,1\}^{r_c}\big\}-\bigcup\limits_{d\in[k-1]}F_{d}, ~ 2\le k\le \ell.
\end{array}
\end{equation}
These families of sets $F_1,F_2,\cdots,F_\ell$ satisfy
$F_i\cap F_j =\emptyset$ for all $i\neq j \in[\ell]$ and 
\begin{equation}\label{eq:set_eq}
    \bigcup_{k\in [\ell]} F_k 
   = \B^{r_c}\times \bigcup\limits_{k\in [\ell]} T^{(k)}
   =\B^{r_c}\times (\B^{r_t} - \{0^{r_t}\}) 
   =  \Bn-\{c0^{r_t}:\ c\in\{0,1\}^{r_c}\}.       
\end{equation}

With the above concepts, we can now show the desired effect of the operators $\mathcal{G}_k$, $\mathcal{R}$ and $\Lambda_{r_c}$. 
\begin{enumerate}
    \item For $k\in[\ell]$, 
    \begin{equation}\label{eq:Gk}
        \mathcal{G}_k\ket{x_{control}}\ket{y^{(k-1)}}=e^{i\sum\limits_{s \in F_k}\langle s,x\rangle \alpha_s } \ket{x_{control}}\ket{y^{(k)}},
    \end{equation}
    where $\alpha_s$ is determined by Eq. \eqref{eq:alpha}. In words, $\mathcal{G}_k$ has two effects: (1) It puts a phase and 
    (2) it transits from the stage $k-1$ to the stage $k$. 

    \item The transformation $\mathcal{R}$ acts on the target register and resets the suffix state as follows
    \begin{equation}\label{eq:reset}
        \mathcal{R}\ket{y^{(\ell)}}=\ket{y^{(0)}}.
    \end{equation}
    As a map on ${\{0,1\}^{r_t}}$ (instead of $\{\ket{x}: x\in  {\{0,1\}^{r_t}}\}$), $\mathcal R$ is an invertible linear transformation over $\mathbb{F}_2$.
    \item The operator $\Lambda_{r_c}$ is an $r_c$-qubit diagonal matrix satisfying that 
    \begin{equation}\label{eq:Lambda_rc}
        \Lambda_{r_c}\ket{x_{control}}= e^{i\sum\limits_{c\in \{0,1\}^{r_c}-\{0^{r_c}\}}\langle c0^{r_t},x\rangle\alpha_{c0^{r_t}}}\ket{x_{control}},
    \end{equation}
and will be implemented recursively. 
\end{enumerate}
We will define these operators and show these properties in Section \ref{sec:G_k}.
\subsection{Correctness and depth}
\label{sec:correctness}
In this section, we will prove the correctness and analyze the depth of the circuit. We will need a fact about the depth of invertible linear transformation from \cite{jiang2020optimal} (Theorem 1). The original version says that any CNOT circuit, a circuit consisting of only CNOT gates, on $n$ qubits can be compressed into  $O(n/\log n)$ depth. But note that any $n$-dimensional invertible linear transformation over $\mathbb{F}_2$ can be implemented by a CNOT circuit {\cite{patel2008optimal}}. We thus have the following result.
\begin{lemma} 
Suppose that $U\in \mathbb{F}_2^{n \times n}$ is an invertible linear transformation over $\mathbb{F}_2$. Then as a $2^n\times 2^n$ unitary matrix which permutes computational basis $\{\ket{x}: x\in \Bn\}$, the map $U$ can be realized by a CNOT circuit of depth at most $O(\frac{n}{\log n})$ { and size at most $O(\frac{n^2}{\log n})$} without ancillary qubits.
 \label{lem:SODA2020}
\end{lemma}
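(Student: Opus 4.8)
The plan is to combine two known facts, exactly as foreshadowed in the paragraph preceding the statement. First I would recall that CNOT circuits and invertible $\mathbb{F}_2$-linear transformations are two descriptions of the same family of operations. A single CNOT gate with control $a$ and target $b$ sends a computational basis state $\ket{x}$ to $\ket{x'}$, where $x'_b = x_b \oplus x_a$ and all other coordinates are unchanged; this is precisely left multiplication by the elementary transvection matrix $I + E_{ba}$ over $\mathbb{F}_2$. Composing gates therefore realizes a product of such matrices, and since each factor is invertible, any CNOT circuit implements an invertible linear map $x \mapsto Mx$ on $\mathbb{F}_2^n$, viewed as a permutation of the computational basis. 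Conversely, the Patel--Markov--Hayes synthesis result \cite{patel2008optimal} shows that every invertible $M\in\mathbb{F}_2^{n\times n}$ arises this way, so the given $U$ is computed by some concrete CNOT circuit $C_U$.

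The second ingredient is the depth-compression theorem of \cite{jiang2020optimal} (Theorem 1), which states that any CNOT circuit on $n$ qubits can be rewritten, without ancillae and while computing the same linear map, as a CNOT circuit of depth $O(n/\log n)$. Applying this to the circuit $C_U$ obtained in the first step yields a CNOT circuit of depth $O(n/\log n)$ that permutes the computational basis exactly according to $U$, which is the claimed depth bound.

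For the size bound I would simply observe that within any single layer of a CNOT circuit the gates act on pairwise disjoint qubit pairs, so each layer contains at most $\lfloor n/2\rfloor$ gates. Hence the total number of gates is at most $(\text{depth})\cdot\lfloor n/2\rfloor = O(n/\log n)\cdot n = O(n^2/\log n)$, giving the stated size.

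There is essentially no deep obstacle here, since both halves are quoted results; the only point requiring a line of care is the correspondence in the first paragraph---verifying that the map a CNOT induces on basis states is indeed $\mathbb{F}_2$-linear and invertible, so that Theorem 1 of \cite{jiang2020optimal} is legitimately applicable to the circuit $C_U$ produced by \cite{patel2008optimal} and that the compressed circuit still computes exactly the target permutation $U$ rather than merely matching its depth. Confirming that the compression in \cite{jiang2020optimal} preserves the realized linear transformation (which is precisely what its statement guarantees) closes this gap and completes the argument.
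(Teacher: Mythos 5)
Your proposal is correct and follows exactly the paper's route: cite \cite{patel2008optimal} to realize any invertible $\mathbb{F}_2$-linear map as a CNOT circuit, then apply the depth-compression theorem of \cite{jiang2020optimal} to get depth $O(n/\log n)$, with the size bound following from the at-most-$\lfloor n/2\rfloor$-gates-per-layer observation. The paper states this lemma with only the same two citations and no further argument, so your write-up is, if anything, slightly more explicit about the CNOT-gate/transvection correspondence.
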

As mentioned in Section \ref{sec:framework_DU_withoutancilla}, $\mathcal R$ is an invertible linear transformation on the computational basis variables, thus the above lemma immediately implies the following depth upper bounds for $\mathcal{R}$.
\begin{lemma} \label{lem:R}
The operator $\mathcal{R}$ can be realized by an $O(\frac{r_t}{\log r_t})$-depth   { and $O(\frac{r_t^2}{\log r_t})$-size} CNOT circuit without ancillary qubits.
\end{lemma}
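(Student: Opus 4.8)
The plan is to recognize $\mathcal{R}$ as, precisely, an invertible linear transformation over $\mathbb{F}_2$ acting on the $r_t$ target-register variables, and then to invoke Lemma~\ref{lem:SODA2020} directly with $n$ set to $r_t$. Once the linearity and invertibility are made explicit, the depth and size bounds are immediate corollaries of that lemma, so essentially all the work is in verifying the structural claim that $\mathcal{R}$ is linear and bijective.

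First I would make the linear structure explicit. By Eq.~\eqref{eq:yk}, the components of $y^{(\ell)}$ are $y_j^{(\ell)} = \langle 0^{r_c}t_j^{(\ell)}, x\rangle = \langle t_j^{(\ell)}, x_{target}\rangle$, since the first $r_c$ coordinates of $0^{r_c}t_j^{(\ell)}$ vanish. Hence over $\mathbb{F}_2$ we have $y^{(\ell)} = M_\ell\, x_{target}$, where $M_\ell\in\mathbb{F}_2^{r_t\times r_t}$ is the matrix whose $j$-th row is $t_j^{(\ell)}$. Because $T^{(\ell)} = \{t_1^{(\ell)},\ldots,t_{r_t}^{(\ell)}\}$ is by construction a set of $r_t$ vectors from $\{0,1\}^{r_t}$ that are linearly independent over $\mathbb{F}_2$, the matrix $M_\ell$ is invertible.

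Next, recall from Eq.~\eqref{eq:reset} that $\mathcal{R}$ sends $\ket{y^{(\ell)}}$ to $\ket{y^{(0)}} = \ket{x_{target}}$. As a map on the label set $\{0,1\}^{r_t}$, $\mathcal{R}$ therefore acts as $y^{(\ell)} \mapsto x_{target} = M_\ell^{-1}\, y^{(\ell)}$, i.e.\ it is exactly the linear transformation $M_\ell^{-1}$, which is again invertible over $\mathbb{F}_2$. Since $M_\ell^{-1}$ is a bijection on $\{0,1\}^{r_t}$, the induced permutation of the computational basis $\{\ket{x}:x\in\{0,1\}^{r_t}\}$ is a genuine unitary, so Lemma~\ref{lem:SODA2020} applies with $n=r_t$ and yields a CNOT circuit realizing $\mathcal{R}$ of depth $O(r_t/\log r_t)$ and size $O(r_t^2/\log r_t)$, using no ancillary qubits, as claimed.

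I do not anticipate any genuine obstacle in this argument; the statement is essentially a corollary of Lemma~\ref{lem:SODA2020}. The only subtlety worth flagging is that the entire proof hinges on the linear-independence property of $T^{(\ell)}$, which is guaranteed by the construction of the sets $T^{(1)},\ldots,T^{(\ell)}$ deferred to Appendix~\ref{sec:partition}. That structural property is exactly what makes $M_\ell$ (and hence $M_\ell^{-1}$) invertible, so the real content lies there rather than in the present lemma.
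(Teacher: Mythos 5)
Your proposal is correct and follows exactly the paper's argument: the paper defines $\mathcal R \ket{y} = \ket{(\hat T^{(\ell)})^{-1} y}$ (your $M_\ell^{-1}$), notes that linear independence of $T^{(\ell)}$ makes this an invertible linear map over $\mathbb{F}_2$, and then invokes Lemma~\ref{lem:SODA2020} for the depth and size bounds. No differences worth noting.
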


The depth of $\mathcal{G}_k$ will be easily seen from its construction in Section \ref{sec:G_k}.
\begin{lemma}\label{lem:Gk}
The operator $\mathcal{G}_k$ can be realized by an $O( 2^{r_c})$-depth {and $O(r_c 2^{r_c+1})$-size} quantum circuit using single-qubit and CNOT gates without ancillary qubits. 
\end{lemma}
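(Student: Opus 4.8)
The plan is to realize $\mathcal{G}_k$ as the composition of two sub-circuits matching the two effects in Eq.~\eqref{eq:Gk}: a \emph{transition} block that rewrites the target register from $\ket{y^{(k-1)}}$ to $\ket{y^{(k)}}$, followed by a \emph{phase-walk} block that installs the phase $e^{i\sum_{s\in F_k}\langle s,x\rangle\alpha_s}$ while leaving the target register in $\ket{y^{(k)}}$. For the transition block, note that by Eq.~\eqref{eq:yk} both $y^{(k-1)}$ and $y^{(k)}$ are obtained from $x_{target}$ by applying the matrices whose rows are the bases $T^{(k-1)}$ and $T^{(k)}$ (for $k=1$ one uses the standard basis $y^{(0)}=x_{target}$). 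Since those matrices are invertible over $\mathbb{F}_2$, the map $y^{(k-1)}\mapsto y^{(k)}$ is an invertible linear transformation on the $r_t$-qubit target register, so Lemma~\ref{lem:SODA2020} realizes it in depth $O(r_t/\log r_t)$ and size $O(r_t^2/\log r_t)$, which will be dominated by the phase-walk block.

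The phase-walk block handles all strings $ct$ with $t\in T^{(k)}$ and $c\in\{0,1\}^{r_c}$ in parallel across the $r_t$ target qubits. After the transition, target qubit $j$ holds $y_j^{(k)}=\langle 0^{r_c}t_j^{(k)},x\rangle$, which is the value $\langle ct_j^{(k)},x\rangle$ for $c=0^{r_c}$. I would let each target qubit $j$ traverse a Hamiltonian (Gray code) cycle over the $2^{r_c}$ values of the prefix $c$: at the vertex labelled $c$, qubit $j$ holds $\langle c,x_{control}\rangle\oplus y_j^{(k)}=\langle ct_j^{(k)},x\rangle$, so applying $R(\alpha_{ct_j^{(k)}})$ (and $R(0)$ whenever $ct_j^{(k)}\notin F_k$, to skip the already-covered strings without disturbing the walk) installs exactly the phase $e^{i\alpha_{ct_j^{(k)}}\langle ct_j^{(k)},x\rangle}$; moving along one edge of the cycle flips a single bit $i$ of $c$, implemented by one CNOT from control qubit $i$ into target qubit $j$. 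Because the cycle returns $c$ to $0^{r_c}$, each target qubit ends holding $y_j^{(k)}$ again, giving the stated final state and the product over all $c$ of the individual phases, which is the exponent in Eq.~\eqref{eq:Gk}.

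The main obstacle, and the point requiring care, is parallelizing the $r_t$ update CNOTs at each of the $2^{r_c}$ steps into depth $O(1)$: the CNOTs have targets on distinct qubits, so they merge into one layer precisely when the control bits they touch at that step are pairwise distinct. To guarantee this I would assign target qubit $j$ a $(k_j,r_c)$-Gray code cycle with pairwise distinct cyclic offsets $k_1,\dots,k_{r_t}\in[r_c]$ (available since $r_t=\lfloor n/2\rfloor\le\lceil n/2\rceil=r_c$). At a fixed step all these codes flip the control bit dictated by the same ruler function, cyclically relabelled by their offset, so the flipped indices are a cyclic shift of one value and hence distinct across $j$. This is exactly the no-ancilla analogue of the even-distribution property (property~3 of Lemma~\ref{lem:2D-array}) used in Section~\ref{sec:QSP_withancilla}. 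With this schedule each step costs one CNOT layer plus one single-qubit rotation layer, so the phase walk has depth $2\cdot 2^{r_c}=O(2^{r_c})$ and uses $r_t$ gates of each type per step, i.e.\ size $O(r_t 2^{r_c})=O(r_c 2^{r_c+1})$. Adding the dominated transition cost yields depth $O(2^{r_c})$ and size $O(r_c 2^{r_c+1})$, proving Lemma~\ref{lem:Gk}.
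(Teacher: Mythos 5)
Your proposal is correct and follows essentially the same route as the paper: a Generate/transition block realized as an invertible $\mathbb{F}_2$-linear map via Lemma~\ref{lem:SODA2020}, followed by a Gray-code walk over prefixes in which each target qubit $i$ uses a distinct cyclically-shifted $(i,r_c)$-Gray code cycle so that the $r_t$ update CNOTs at each step have distinct control qubits and fit in one layer, with rotations applied only for strings in $F_k$ to avoid double-counting overlapping suffix sets. The depth count $2\cdot 2^{r_c}$ and size $O(r_c 2^{r_c+1})$ match the paper's accounting.
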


Now we are ready to prove the correctness and depth of the whole circuit. The correctness of the circuit framework in Figure \ref{fig:framework_DU_withoutancilla} is shown in Appendix \ref{sec:correctness_withoutancilla}.
\begin{lemma}
Any diagonal unitary matrix $\Lambda_n$ can be realized by the quantum circuit  $(\Lambda_{r_c}\otimes\mathcal{R})\mathcal{G}_{\ell}\mathcal{G}_{\ell-1}\cdots\mathcal{G}_{1}$ as in Figure \ref{fig:framework_DU_withoutancilla}, which has depth $O(2^n/n)$ and size $2^{n+3}+O\big(\frac{n^2}{\log n}\big)$ and uses no ancillary qubits.
\end{lemma}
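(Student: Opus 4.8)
The statement bundles two claims: that the composite circuit $(\Lambda_{r_c}\otimes\mathcal{R})\mathcal{G}_\ell\cdots\mathcal{G}_1$ of Figure~\ref{fig:framework_DU_withoutancilla} indeed realizes $\Lambda_n$, and that its depth and size meet the stated bounds. The plan is to establish correctness by propagating an arbitrary computational basis state $\ket{x}=\ket{x_{control}}\ket{x_{target}}$ through the circuit, tracking both the target register and the accumulated global phase, and then to read off the resources by summing the per-operator bounds from Lemma~\ref{lem:Gk} and Lemma~\ref{lem:R} and solving the resulting recurrences in the recursion depth.

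For correctness I would start from $\ket{x_{control}}\ket{y^{(0)}}$, using that $y^{(0)}=x_{target}$ by \eqref{eq:yk}. Applying $\mathcal{G}_1,\ldots,\mathcal{G}_\ell$ in turn and invoking \eqref{eq:Gk} repeatedly, the control register is left untouched, the target register is driven through $y^{(1)},\ldots,y^{(\ell)}$, and the phases multiply to $\exp\big(i\sum_{k\in[\ell]}\sum_{s\in F_k}\alpha_s\langle s,x\rangle\big)$. Because the families $F_k$ are pairwise disjoint and satisfy \eqref{eq:set_eq}, this exponent collapses to $\sum_{s\in\Bn-\{c0^{r_t}:c\}}\alpha_s\langle s,x\rangle$, i.e.\ the contribution of every $s$ with a nonzero suffix, counted exactly once. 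I would then apply $\Lambda_{r_c}\otimes\mathcal{R}$: since the two factors act on disjoint registers they commute, $\mathcal{R}$ resets the target to $y^{(0)}=x_{target}$ by \eqref{eq:reset}, and $\Lambda_{r_c}$ contributes the phase $\sum_{c\neq 0^{r_c}}\alpha_{c0^{r_t}}\langle c0^{r_t},x\rangle$ by \eqref{eq:Lambda_rc}, supplying exactly the strings with zero suffix and nonzero prefix. Adding the two phase sums covers all $s\neq 0^n$ (and $\alpha_{0^n}=0$), so by \eqref{eq:alpha} the total phase is $\theta(x)$ and the output is $e^{i\theta(x)}\ket{x}$, as required of $\Lambda_n$.

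For the resource bounds I would argue by the recursion depth. At the top level the $\mathcal{G}_k$ layers dominate: there are $\ell\le \frac{2^{r_t+2}}{r_t+1}-1=O(2^{r_t}/r_t)$ of them, each of depth $O(2^{r_c})$ by Lemma~\ref{lem:Gk}, giving $O\big(\frac{2^{r_t}}{r_t}2^{r_c}\big)=O(2^n/n)$ since $r_t+r_c=n$ and $r_t=\lfloor n/2\rfloor$; the parallel block $\Lambda_{r_c}\otimes\mathcal{R}$ adds only $O(r_t/\log r_t)$ from Lemma~\ref{lem:R} at this level. Writing $D(n)$ for the depth, this yields $D(n)=O(2^n/n)+D(\lceil n/2\rceil)$, whose solution is $O(2^n/n)$ because the top term dominates the geometrically shrinking recursive terms. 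For size, summing the $\ell$ copies of the $O(r_c2^{r_c+1})$-size operators $\mathcal{G}_k$ against $\ell\le\frac{2^{r_t+2}}{r_t+1}-1$ gives the leading $\frac{r_c}{r_t+1}2^{n+3}\le 2^{n+3}$, while $\mathcal{R}$ contributes $O(r_t^2/\log r_t)=O(n^2/\log n)$; unrolling $S(n)=2^{n+3}+O(n^2/\log n)+S(\lceil n/2\rceil)$ keeps the leading term $2^{n+3}$ with the lower-order overhead summing across levels to $O(n^2/\log n)$.

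The main obstacle is the phase bookkeeping, not the depth arithmetic. The delicate points are: (i) that the disjointness of the $F_k$ together with the covering identity \eqref{eq:set_eq} guarantees each nonzero-suffix phase $\alpha_s$ is applied exactly once and none is omitted, which requires carefully reconciling the definition \eqref{eq:F_k} with the overlap between the sets $T^{(k)}$; (ii) that the reset $\mathcal{R}$ is a legitimate operation precisely because each $T^{(k)}$ is linearly independent, so that every transition $y^{(k-1)}\mapsto y^{(k)}$ and finally $y^{(\ell)}\mapsto y^{(0)}$ is an invertible $\mathbb{F}_2$-linear map realizable by CNOTs via Lemma~\ref{lem:SODA2020}, returning the target register to $x_{target}$ so the recursive $\Lambda_{r_c}$ sees the correct input; and (iii) that the zero-suffix strings, which the $\mathcal{G}_k$ deliberately skip, are supplied precisely and without overlap by $\Lambda_{r_c}$. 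Verifying that these three facts fit together so that the exponents telescope to $\theta(x)$ is the crux; once it is in place, the depth and size estimates follow from routine summation and solving the recursion.
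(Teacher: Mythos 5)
Your proposal is correct and follows essentially the same route as the paper: correctness is established by propagating a basis state through $\mathcal{G}_1,\ldots,\mathcal{G}_\ell$, $\mathcal{R}$, and $\Lambda_{r_c}$ using the disjointness of the $F_k$ and the covering identity \eqref{eq:set_eq} (the paper does this in Appendix \ref{sec:correctness_withoutancilla}), and the depth and size bounds come from the same recurrences $D(n)\le D(\lceil n/2\rceil)+O(2^n/n)$ and $S(n)\le S(\lceil n/2\rceil)+2^{n+3}-2^{n/2+3}+O(n^2/\log n)$ built from Lemmas \ref{lem:Gk} and \ref{lem:R}.
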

\begin{proof}
We prove that the circuit has depth $D(n) = O(2^n/n)$. 
Lemma \ref{lem:Gk} shows $\mathcal{G}_k$ can be realized in depth at most $\lambda_1\cdot 2^{r_c}$ for a constant $\lambda_1>0$ and Lemma \ref{lem:R} shows $\mathcal{R}$ can be implemented in depth at most $\lambda_2\cdot \frac{r_t}{\log r_t}$ without ancillary qubits for a constant $\lambda_2>0$. Therefore, $D(n)$ satisfies the following recurrence
\[
\begin{array}{ll}
    D(n) & \le \max \big\{D(r_c),~\lambda_2\cdot\frac{r_t}{\log r_t}\big\}+\lambda_1\cdot2^{r_c}\cdot \ell\\
    & \le D(\lceil n/2 \rceil)+ \frac{\lambda_2\lceil n/2 \rceil}{\log \lceil n/2 \rceil}+\lambda_1 2^{\lceil n/2
\rceil} \big( \frac{2^{\lfloor n/2 \rfloor+2}}{\lfloor n/2 \rfloor+1}-1\big)\\
     & = D(\lceil n/2 \rceil)+O(2^n/n).\\
\end{array}
\]
Solving the above recursive relation, we obtain the bound $D(n)=O(2^n/n)$ as desired. The size of this circuit $S(n)$ satisfies $S(n)\le S(n/2)+(2^{n+3}-2^{n/2+3})+O\big(\frac{n^2}{\log n}\big)\le 2^{n+3} + O\big(\frac{n^2}{\log n}\big)$.
\end{proof}

\subsection{Construction of $\mathcal{G}_k$ and $\mathcal R$}
\label{sec:G_k}

In this section, we will show how to construct operator $\mathcal{G}_k$, which consists of two stages: Generate Stage and Gray Path Stage, see Figure \ref{fig:all_com_tar}. Along the way, we will also show the construction of $\mathcal R$.


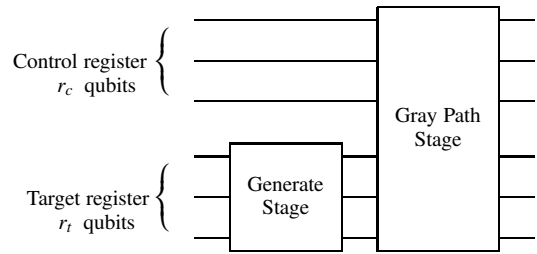
\begin{figure}[!ht]
    \centerline{
\Qcircuit @C=1.2em @R=0.5em {
& \qw & \multigate{6}{\text{Gray Path}\atop \text{Stage}} & \qw\\
& \qw & \ghost{\text{Gray Path}\atop \text{Stage}} & \qw \\
& \qw & \ghost{\text{Gray Path}\atop \text{Stage}} & \qw \inputgroupv{2}{2}{2.2em}{0.5em}{\text{Control register~~~~~~~~~~~~~~~~~~~~~~~~} \atop r_c \text{~~qubits}~~~~~~~~~~~~~}\\
& &  & \\
& \multigate{2}{ \text{Generate}\atop \text{Stage}} & \ghost{\text{Gray Path}\atop \text{Stage}} & \qw \\
& \ghost{\text{Generate}\atop \text{Stage}} & \ghost{\text{Gray Path}\atop \text{Stage}} & \qw \\
& \ghost{\text{Generate}\atop \text{Stage}} & \ghost{\text{Gray Path}\atop \text{Stage}} & \qw \inputgroupv{6}{6}{2.2em}{0.5em}{\text{Target register~~~~~~~~~~~~~~~~~~~~~} \atop r_t \text{~~qubits}~~~~~~~~~~~~~}\\
}
}
\caption{Implementation of operator 
$\mathcal{G}_k$, which consists of Generate Stage and Gray Path Stage. The depth of Generate Stage is $O\left(\frac{r_t}{\log r_t}\right)$ and the depth of Gray Path Stage is $O(2^{r_c})$. 
}
\label{fig:all_com_tar}
\end{figure}

\paragraph{Generate Stage} In this stage, we implement operator $U_{Gen}^{(k)}$, such that
\begin{equation} \label{eq:y_k-1}
\ket{y^{(k-1)}} \xrightarrow{U_{Gen}^{(k)}}\ket{y^{(k)}}, ~k\in[\ell],   
\end{equation}
\noindent where $y^{(k-1)}$ and $y^{(k)}$ are defined in Eq. \eqref{eq:yk} and determined by ${T}^{(k-1)}$ and ${T}^{(k)}$, respectively. For $k\in[\ell]$, recall that ${T}^{(k)} = \{t_1^{(k)},\cdots,t_{r_t}^{(k)}\}$. Fix this ordering, view each {$t_i^{(k)}$} as a column vector, and define a matrix $\hat{T}^{(k)}=[t_1^{(k)},\cdots,t_{r_t}^{(k)}]^T\in\{0,1\}^{r_t\times r_t}$ for $k\in[\ell]$, with special case $\hat{T}^{(0)} \defeq I_{r_t}$. Then the vectors $y^{(k)}$ can be rewritten as
\begin{equation}\label{eq:y-T}
 y^{(k)}=\hat{T}^{(k)}x_{target}, \quad\forall k\in [r_t]\cup \{0\}.
\end{equation} 
Since $t_1^{(k)},t_2^{(k)},\cdots,t_{r_t}^{(k)}$ are linearly independent over $\mathbb{F}_2$, $\hat{T}^{(k)}$ is an invertible linear transformation over $\mathbb{F}_2$. Now define a unitary $U_{Gen}^{(k)}$ by
$U_{Gen}^{(k)}\ket{ y} = \ket{ \hat{T}^{(k)}(\hat{T}^{(k-1)})^{-1} y}$, where the matrix-vector multiplication at the right hand side is over $\mathbb{F}_2$. From Eq. \eqref{eq:y-T}, we see that 
\begin{equation*}
    U_{Gen}^{(k)} \ket{y^{(k-1)}} = \ket{\hat{T}^{(k)}(\hat{T}^{(k-1)})^{-1} y^{(k-1)}} = \ket{\hat{T}^{(k)}x_{target}} = \ket{y^{(k)}}
\end{equation*}
satisfying Eq. \eqref{eq:y_k-1}.
Also note that when viewed as a linear transformation over $\mathbb{F}_2$, $U_{Gen}^{(k)}$ is invertible. Thus according to Lemma \ref{lem:SODA2020}, the following depth upper bound applies.
\begin{lemma}\label{lem:generate_withoutancilla}
The Generate Stage unitary $U_{Gen}^{(k)}$ can be realized by an $O\big(\frac{r_t}{\log r_t}\big)$-depth and $O\big(\frac{r_t^2}{\log r_t}\big)$-size CNOT circuit without ancillary qubits.
\end{lemma}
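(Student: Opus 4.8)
The plan is to recognize $U_{Gen}^{(k)}$ as the basis-permutation unitary induced by an invertible $\mathbb{F}_2$-linear map, and then to invoke the CNOT-circuit compression result of Lemma \ref{lem:SODA2020}. The correctness of $U_{Gen}^{(k)}$ (i.e.\ that it satisfies Eq.\ \eqref{eq:y_k-1}) is already settled by the construction above, so the only thing left to establish is the claimed depth and size.

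First I would verify that the matrix $M \defeq \hat{T}^{(k)}(\hat{T}^{(k-1)})^{-1}\in\{0,1\}^{r_t\times r_t}$ is well-defined and invertible over $\mathbb{F}_2$. For every $k\in[\ell]$ the set $T^{(k)}$ consists of $r_t$ vectors that are linearly independent over $\mathbb{F}_2$, so $\hat{T}^{(k)}$ is invertible; the same holds for $\hat{T}^{(k-1)}$ when $k\ge 2$, while for the boundary case $k=1$ we have $\hat{T}^{(0)}=I_{r_t}$, which is trivially invertible. Hence $(\hat{T}^{(k-1)})^{-1}$ always exists, and $M$, being a product of two invertible matrices over $\mathbb{F}_2$, is itself invertible. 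In particular $M$ acts as a bijection on $\{0,1\}^{r_t}$, so by the definition $U_{Gen}^{(k)}\ket{y}=\ket{My}$ (with the matrix-vector product taken over $\mathbb{F}_2$), the operator $U_{Gen}^{(k)}$ is exactly the $2^{r_t}\times 2^{r_t}$ unitary that permutes the computational basis $\{\ket{y}: y\in\{0,1\}^{r_t}\}$ according to the invertible linear map $M$, as in Eq.\ \eqref{eq:y-T}.

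Second, this is precisely the hypothesis of Lemma \ref{lem:SODA2020} with $n$ replaced by $r_t$. Applying that lemma yields a CNOT circuit of depth $O(r_t/\log r_t)$ and size $O(r_t^2/\log r_t)$ realizing $U_{Gen}^{(k)}$ without ancillary qubits, which is exactly the assertion. The argument is therefore essentially a direct corollary, and I expect no substantial obstacle once the construction from Section \ref{sec:G_k} is in place; the only points requiring care are confirming invertibility of the composite map in the boundary case $k=1$ and noting that an invertible $\mathbb{F}_2$-linear map induces a genuine permutation of the computational basis (so that the cited CNOT-compression bound applies verbatim), both of which follow immediately from the definitions.
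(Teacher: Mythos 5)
Your proposal is correct and follows essentially the same route as the paper: identify $U_{Gen}^{(k)}$ with the invertible $\mathbb{F}_2$-linear map $\hat{T}^{(k)}(\hat{T}^{(k-1)})^{-1}$ and apply the CNOT-circuit compression bound of Lemma \ref{lem:SODA2020} with $n$ replaced by $r_t$. The extra care you take with the boundary case $k=1$ and with invertibility of the product is a fine (if slightly more explicit) rendering of the paper's one-line argument.
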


Similar to the discussion of $U^{(k)}_{Gen}$, operator $\mathcal{R}$ can be defined by 
$\mathcal R \ket{y} = \ket{ (\hat{T}^{(\ell)})^{-1} y},$
then 
$\mathcal R \ket{y^{(\ell)}} = \ket{(\hat{T}^{(\ell)})^{-1}  y^{(\ell)}} = \ket{x_{target}} = \ket{y^{(0)}}.$
Thus $\mathcal{R}$ can be also viewed an invertible linear transformation over $\mathbb{F}_2$. Applying Lemma \ref{lem:SODA2020} gives the bound in Lemma \ref{lem:R}.


\paragraph{Gray Path Stage} This stage implements the following operator
\begin{equation}\label{eq:Graycodepath}
\ket{x_{control}}\ket{y^{(k)}}\xrightarrow{U_{GrayPath}}e^{i\sum\limits_{s \in F_k}\langle s,x\rangle \alpha_s
} \ket{x_{control}}\ket{y^{(k)}},
\end{equation}
where $k\in [\ell]$ and $F_k$ is defined in Eq. \eqref{eq:F_k}.
The Gray Path Stage in this section is similar to the Gray Path Stage in  Section \ref{sec:QSP_withancilla}, though we need to use a Gray code cycle here instead of a Gray code path.
For every $i\in[r_t]$, let $c^i_1,c^i_2,\cdots,c^i_{2^{r_c}-1}, c^i_{2^{r_c}}$
denote the $i$-Gray code of $r_c$ bits starting at $c_1^i=0^{r_c}$ for $i\in[r_t]$. 
Let $h_{ij}$ denote the index of the bit that $c^i_{j-1}$ and $c^i_{j}$ differ for each $j\in\{2,3,\ldots,2^{r_c}\}$ and $h_{i1}$ the index of the bit that $c^i_1$ and $c^i_{2^{r_c}}$ differ. For the $i$-Gray code cycle of $r_c$ bits, 
\begin{equation}\label{eq:index}
    h_{ij}=\left\{\begin{array}{ll}
    (r_c+i-2\mod r_c)+1, & \text{if~} j = 1\\
    (\zeta(j-1)+i-2\mod r_c)+1, & \text{if~} j \neq 1
    \end{array}\right.
\end{equation}
The exact form of $h_{ij}$ is not crucial; the important fact to be used later is that the indices $h_{1p},h_{2p},\ldots,h_{r_tp}$ are all different.

This stage consists of $2^{r_c}+1$ phases.
\begin{enumerate}
    \item In phase 1, circuit $C_1$ applies a rotation $R(\alpha_{0^{r_c}t_i^{(k)}})$ on the $i$-th qubit in the target register for all $i\in[r_t]$ if the string $0^{r_c}t_i^{(k)} \in F_k$, where $\alpha_{0^{r_c}t_i^{(k)}}$ is defined in Eq. \eqref{eq:alpha}.
    \item In phase $p\in\{2,\ldots,2^{r_c}\}$, circuit $C_{p}$ consists of 2 steps:\begin{enumerate}
        \item Step $p.1$ is a unitary that, for all $i\in[r_t]$, applies a CNOT gate on the $i$-th qubit in target register, controlled by the $h_{ip}$-th qubit in control register.
        \item Step $p.2$ is a unitary that, for all $i\in[r_t]$, applies a rotation $R(\alpha_{c^i_pt_{i}^{(k)}})$ on the $i$-th qubit in target register if $c^i_pt_{i}^{(k)}\in F_k$, where $\alpha_{c_p^{i}t_i^{(k)}}$ is defined in Eq. \eqref{eq:alpha}.
    \end{enumerate}
    \item In phase $2^{r_c}+1$, circuit $C_{2^{r_c}+1}$ implements a unitary 
    that, for all $i\in[r_t]$, applies a CNOT gate on the $i$-th qubit in target register, controlled by the $h_{i1}$-th qubit in control register .
\end{enumerate}
The next lemma gives the correctness and depth of this constructed circuit. {The proof of Lemma \ref{lem:graypath_withoutancilla} is shown in Appendix \ref{sec:graypath_withoutancilla}.}
\begin{lemma}\label{lem:graypath_withoutancilla}
The quantum circuit defined above is of depth $O(2^{r_c})$ and size $O(r_c2^{r_c+1})$, and implements Gray Path Stage $U_{GrayPath}$ in Eq. \eqref{eq:Graycodepath}.
\end{lemma}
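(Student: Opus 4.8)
The plan is to establish correctness by tracking, phase by phase, the bit stored in each target qubit as a linear function of $x$, and then to bound the depth using the parallelization afforded by the distinctness of the indices $h_{ip}$. The central object is an invariant describing the target register throughout the Gray Path Stage.

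First I would set up the invariant: immediately after Step $p.1$ (for $p\ge 2$), and after Phase $1$ in the base case, the $i$-th target qubit holds the bit $\langle c_p^i t_i^{(k)},x\rangle$. For the base case $p=1$, the Generate Stage leaves qubit $i$ in state $\ket{y_i^{(k)}}=\ket{\langle 0^{r_c}t_i^{(k)},x\rangle}$, and since $c_1^i=0^{r_c}$ this equals $\langle c_1^i t_i^{(k)},x\rangle$. For the inductive step, the control register is never acted upon, so its $h_{ip}$-th qubit permanently holds $x_{h_{ip}}$, and the CNOT in Step $p.1$ XORs $x_{h_{ip}}$ onto the target qubit. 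Using that $c_p^i$ and $c_{p-1}^i$ differ exactly in coordinate $h_{ip}$ (from the definition of the $i$-Gray code cycle and Eq. \eqref{eq:index}), I would verify $\langle c_p^i,x_{control}\rangle=\langle c_{p-1}^i,x_{control}\rangle\oplus x_{h_{ip}}$, so the stored value updates from $\langle c_{p-1}^i t_i^{(k)},x\rangle$ to $\langle c_p^i t_i^{(k)},x\rangle$, closing the induction.

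Given the invariant, the phase bookkeeping is routine: in Step $p.2$ (and Phase $1$) the rotation $R(\alpha_{c_p^i t_i^{(k)}})$ multiplies the amplitude by $e^{i\alpha_{c_p^i t_i^{(k)}}\langle c_p^i t_i^{(k)},x\rangle}$ exactly when $c_p^i t_i^{(k)}\in F_k$. Since $\{c_p^i:p\in[2^{r_c}]\}=\{0,1\}^{r_c}$ (the Gray code cycle is Hamiltonian) and $\{t_i^{(k)}:i\in[r_t]\}=T^{(k)}$, the strings $c_p^i t_i^{(k)}$ ranging over all pairs $(p,i)$ enumerate $\{0,1\}^{r_c}\times T^{(k)}\supseteq F_k$ each exactly once, so the membership test applies each rotation $\alpha_s$ precisely for $s\in F_k$, accumulating the total phase $e^{i\sum_{s\in F_k}\langle s,x\rangle\alpha_s}$ demanded by Eq. \eqref{eq:Graycodepath}. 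To see that the target register returns to $\ket{y^{(k)}}$, note that after Phase $2^{r_c}$ qubit $i$ holds $\langle c_{2^{r_c}}^i t_i^{(k)},x\rangle$, and Phase $2^{r_c}+1$ XORs $x_{h_{i1}}$ onto it; since $c_1^i$ and $c_{2^{r_c}}^i$ differ exactly in coordinate $h_{i1}$, this resets the qubit to $\langle c_1^i t_i^{(k)},x\rangle=y_i^{(k)}$, while the control register stays untouched.

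For the resource bounds, the main point is the parallelization of the CNOT layers. In each Step $p.1$ the $r_t$ CNOTs have target qubits $1,\ldots,r_t$ (all distinct, in the target register) and control qubits $h_{1p},\ldots,h_{r_tp}$, which by the fact highlighted after Eq. \eqref{eq:index} are pairwise distinct and lie in the control register; hence no two CNOTs share a qubit and the layer has depth $1$. Each rotation layer is a tensor product of single-qubit gates, also depth $1$. Summing over the $2^{r_c}+1$ phases gives depth $1+2(2^{r_c}-1)+1=O(2^{r_c})$, and since each phase contributes at most $r_t$ CNOTs and $r_t$ rotations, the size is $O(r_t 2^{r_c})=O(r_c 2^{r_c+1})$. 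The only genuinely delicate point is precisely the distinctness of the $h_{ip}$ across $i$ for fixed $p$, since that is what makes each CNOT layer realizable in unit depth; everything else reduces to the Gray-code difference structure and the bookkeeping of which strings land in $F_k$.
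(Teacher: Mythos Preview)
Your proposal is correct and follows essentially the same approach as the paper's proof: both track the target-register contents through the phases (the paper via states $\ket{f_p}$, you via an explicit invariant), use the Gray-code adjacency to justify each CNOT update, rely on the distinctness of $h_{1p},\ldots,h_{r_tp}$ for depth-$1$ CNOT layers, and arrive at the same depth count $1+2(2^{r_c}-1)+1$ and size bound. The only cosmetic difference is that the paper packages the per-phase rotations into auxiliary sets $F_k^{(p)}$ whereas you argue directly that the pairs $(p,i)$ enumerate $\{0,1\}^{r_c}\times T^{(k)}\supseteq F_k$ without repetition.
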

According to Lemma \ref{lem:generate_withoutancilla} and Lemma \ref{lem:graypath_withoutancilla}, operator $\mathcal{G}_k$ can be implemented in depth $O(2^{r_c})+O(\frac{r_t}{\log r_t})=O(2^{r_c})$. And the size of the circuit is at most $O(\frac{n^2}{\log n})+r_c2^{r_c+1}=O(r_c2^{r_c+1})$. This completes the proof of  Lemma \ref{lem:Gk}.

\section{Quantum state preparation with $\Omega(2^n/n^2)$ ancillary qubits}
\label{sec:QSP_withmoreancilla}
In this section, we will introduce a different framework that can improve the upper bound in Section \ref{sec:QSP_withancilla} when the number of ancillary qubits $m=\Omega(2^n/n^2)$. 
In Section \ref{sec:new_framework}, we will present the framework, and in Section \ref{sec:new_framework_correctness}, we will give implementation details with the depth and correctness analyzed.

In the following, we will use $e_i\in \{0,1\}^{2^n}$ to denote the vector where the $i$-th bit is 1 and all other bits are 0. It is a unary encoding of $i\in \{0,1,\ldots,2^n-1\}$, and $\ket{e_i}$ is the corresponding $2^n$-qubit state. We use $n$-qubit state $\ket{i} = \ket{i_0 i_1 \cdots i_{n-1}} \in (\{\ket{0}, \ket{1}\})^{\otimes n}$ to denote the binary encoding of $i$, where $i_0,\cdots,i_{n-1}\in\{0,1\}$ and $i=\sum_{j=0}^{n-1} i_j\cdot 2^j$.

\subsection{New framework for quantum state preparation}
\label{sec:new_framework}

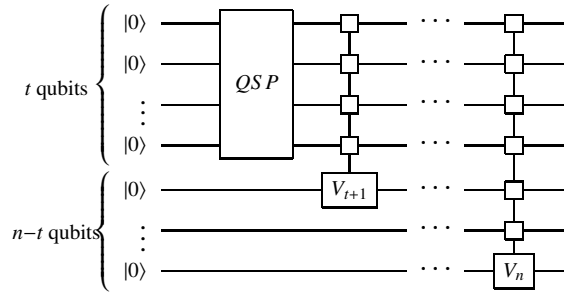
\begin{figure}[ht]

\centerline{
\Qcircuit @C=1em @R=0.5em {
\lstick{\scriptstyle\ket{0}} & \qw  & \multigate{3}{\scriptstyle QSP} &\gate{} \qwx[1] &\qw& \cdots & &\gate{} \qwx[1] &  \qw\\
\lstick{\scriptstyle\ket{0}} & \qw   & \ghost{\scriptstyle QSP} &\gate{} \qwx[1] &\qw& \cdots & &\gate{} \qwx[1]&  \qw\\
\vdots~~~~~&  \qw   & \ghost{\scriptstyle QSP}& \gate{} \qwx[1] &\qw& \cdots & &\gate{} \qwx[1]& \qw\\
\lstick{\scriptstyle\ket{0}} &  \qw   & \ghost{\scriptstyle QSP} &\gate{} \qwx[1] &\qw& \cdots & &\gate{} \qwx[1]&  \qw \\
\lstick{\scriptstyle\ket{0}} &  \qw & \qw  & \gate{\scriptstyle V_{t+1}} &\qw& \cdots & &\gate{} \qwx[1]&  \qw \\
\vdots~~~~~ &  \qw & \qw  & \qw &\qw& \cdots & &\gate{} \qwx[1]&  \qw \inputgroupv{2}{3}{4em}{0.9 em}{ \scriptstyle t \text{~qubits}~~~~~~~~~~~~~~~} \inputgroupv{6}{6}{4em}{0.1 em}{ \scriptstyle n-t \text{~qubits}~~~~~~~~~~~~~~~}\\
\lstick{\scriptstyle\ket{0}} &  \qw & \qw & \qw &\qw& \cdots& &\gate{\scriptstyle V_n}&  \qw \\
}
}
\label{fig:new_one}
\caption{A new circuit framework to prepare an $n$-qubit quantum state $\ket{\psi_{v}}$ with $m\in[\Omega(2^n/n^2),3\cdot2^n]$. Let $t=\left\lfloor\log(m/3)\right\rfloor$. The first $t$-qubit unitary $QSP$ implement the same transformation as the first $t$ UCGs in Figure \ref{fig:QSP_circuit}(a).  The last $n-t$ UCGs in are the same as the last $n-t$ UCGs in Figure \ref{fig:QSP_circuit}(a).}
\label{fig:new_framework}
\end{figure}

The quantum circuit in Section \ref{sec:preliminaries} for quantum state preparation consists of $n$ UCGs $V_1,V_2,\ldots,V_n$ (Figure \ref{fig:QSP_circuit}(a)). In Section \ref{sec:QSP_withancilla}, we showed that any $j$-qubit UCG $V_j$ can be implemented by a quantum circuit of depth $O\big(j+\frac{2^j}{m+j}\big)$ with $m$ ancillary qubits. Summing this up over $j\in [n]$ gives the $O(n^2 + 2^n/m)$ upper bound for QSP, and this quadratic term seems hard to be improved within the framework of \cite{grover2002creating}. In the new framework, we first generate the quantum state in the unary encoding $\sum_i v_i \ket{e_i}$ using the result in \cite{johri2021nearest}, and then make an encoding transform $\ket{e_i} \to \ket{i}$, from the unary encoding to the binary encoding. 

Two issues need to be handled here. The first one is the need to design an encoding transform circuit that has small depth and size, using ancillary qubits efficiently. We will give an optimal construction in Section \ref{sec:new_framework_correctness}. The second issue is
that the unary encoding itself needs $2^n$ qubits, and the encoding transform also needs $O(2^n)$ qubits, which may be beyond $m$, the number of ancillary qubits that are available in the first place.
To handle this, we will use a hybrid method. We break the generation into a prefix part and a suffix part, where the length of the prefix is whatever $m$ can support. We prepare the prefix part by unary QSP construction in \cite{johri2021nearest} and our encoding transformation, and then employ the methods in Section \ref{sec:QSP_withancilla} for the suffix part.

{ Our new circuit framework for QSP in the parameter regime $m=\Omega(2^n/n^2)$ is shown in Figure \ref{fig:new_framework}. Let $t=\lfloor \log (m/3)\rfloor$. 
In previous framework, the first $t$ UCGs are a QSP circuit to prepare a $t$-qubit quantum state 
$\ket{\psi_{v}^{(t)}}=\sum_{k=0}^{2^{t}} v'_k\ket{k}\text{, where }v'_k=\sqrt{\sum_{j=0}^{2^{n-t}-1}|v_{2^{n-t}k+j}|^2}.$
In the new framework, we introduce a new $t$-qubit QSP circuit to replace the first $t$ UCGs. The new QSP circuit consists of the following steps. 
\begin{enumerate}
    \item Generate a $2^{t}$-qubit quantum state $\ket{\psi'_{v}}=\sum_{k=0}^{2^{t}-1} v'_k\ket{e_k}$, where $e_k\in\{0,1\}^{2^{t}}$ and by the quantum circuit in \cite{johri2021nearest}. 
    \item Applying $U_{t}$ to $\ket{\psi'_{v}}$, we can obtain $\ket{\psi_{v}^{(t)}}$ with $2m/3$ ancillary qubits, where $U_{t}$ is the unitary transformation $U_{t}:\ket{e_i}\to\ket{i}\ket{0^{2^{t}-{t}}}$ for all $i\in\{0\}\cup [2^{t}-1]$. 
    \item Realize the last $n-t$ UCGs by Eq. \eqref{eq:UCG} and Lemma \ref{lem:DU_with_ancillary}. 
\end{enumerate}}

\subsection{Implementation and analysis}
\label{sec:new_framework_correctness}\label{lem:linear_depth_tof}
Now we give a more detailed implementation and analyze the correctness and cost of the algorithm. 
First, in \cite{johri2021nearest} it is shown that QSP with the unary encoding can be implemented efficiently. 
\begin{lemma}\label{lem:johr}
Given a vector $v=(v_0,v_1,\ldots, v_{2^n-1})^T\in\mathbb{C}^{2^n}$ with unit $\ell_2$-norm, any $2^n$-qubit quantum state
$
\ket{\psi'_{v}}=\sum_{k=0}^{2^n-1}v_k\ket{e_k}
$
can be prepared from the initial state $\ket{0}^{\otimes 2^n}$ by a quantum circuit using single-qubit gates and CNOT gates of depth $O(n)$
{ and size $O(2^n)$} { without} ancillary qubits. 
\label{lem:unary_qsp}
\end{lemma}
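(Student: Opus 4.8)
The plan is to prepare the unary-encoded state by a balanced binary tree of amplitude-splitting gates, and to decouple the task of getting the magnitudes $|v_k|$ from that of installing the phases $\arg(v_k)$. Writing $v_k = |v_k|\,e^{i\varphi_k}$, I would first build a circuit producing $\sum_k |v_k|\,\ket{e_k}$ using only real rotations and CNOTs, and then append a single depth-$1$ layer of phase gates $R(\varphi_k)$, one on each wire $k$. This last step works because in every basis vector $\ket{e_k}$ exactly one qubit (namely qubit $k$) equals $1$: applying $R(\varphi_j)$ on wire $j$ for all $j$ in parallel multiplies $\ket{e_k}$ by precisely $e^{i\varphi_k}$ and leaves the other components untouched, since $R(\varphi)$ fixes $\ket{0}$.

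For the magnitudes, I would identify the $2^n$ wires with the leaves of a complete binary tree of depth $n$, representing each node $u$ by the leftmost leaf of its subtree, and set $W(u)=\sum_{k\in\mathrm{leaves}(u)}|v_k|^2$. Starting from $\ket{0^{2^n}}$, flip wire $0$ to obtain $\ket{e_0}$, which carries unit amplitude on the root's representative. Then process the tree level by level, $\ell=0,1,\ldots,n-1$. At level $\ell$, for each node $u$ with representative wire $a$ and right-child representative wire $b$, apply a two-qubit splitter $S_u$ on $(a,b)$ that sends $\ket{10}\mapsto \sqrt{W(u_L)/W(u)}\,\ket{10}+\sqrt{W(u_R)/W(u)}\,\ket{01}$ and fixes $\ket{00}$ (the angle is $0$ when $W(u)=0$, and the action on $\ket{01},\ket{11}$ is irrelevant since those inputs never arise, wire $b$ being guaranteed $0$ and the total Hamming weight being $1$). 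This moves the amplitude held by $u$ to its two children in the correct ratio.

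The two features that I would emphasize are parallelism and locality. The $2^\ell$ splitters at level $\ell$ act on pairwise-disjoint wire pairs, because distinct nodes at the same level own disjoint leaf ranges and the representative-by-leftmost-leaf convention puts both wires of $S_u$ inside $u$'s own range; hence they compose into a single circuit layer. Each $S_u$ is a fixed two-qubit unitary and is therefore realizable exactly by a constant number of single-qubit gates and CNOTs (at most three CNOTs suffice for any two-qubit gate). Thus each of the $n$ levels has constant depth, giving total depth $O(n)$, while the number of splitters is $\sum_{\ell=0}^{n-1}2^\ell=2^n-1$, yielding size $O(2^n)$; the final phase layer adds depth $1$ and size $2^n$. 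Correctness follows by induction on $\ell$ with the invariant that after level $\ell$ the state equals $\sum_{u\text{ at level }\ell}\sqrt{W(u)}\,\ket{e_{\mathrm{rep}(u)}}$: the splitter at $u$ reproduces the invariant one level down, and at $\ell=n$ each leaf is its own representative, giving $\sum_k|v_k|\,\ket{e_k}$.

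The part I expect to require the most care is the node-to-wire bookkeeping that guarantees disjointness of the splitter pairs within each level, since that disjointness is exactly what buys depth $O(n)$ rather than $O(2^n)$; the leftmost-leaf convention is what makes it transparent. Everything else — the exact synthesis of the fixed two-qubit splitter from single-qubit gates and CNOTs, and the parallel phase layer — is routine, so I do not anticipate a deeper obstacle, and the resulting depth $O(n)$, size $O(2^n)$, and zero ancillae match the claimed bounds.
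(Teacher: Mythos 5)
Your construction is correct: the level-by-level binary tree of two-qubit splitters acting on disjoint wire pairs (with the leftmost-leaf representative convention ensuring disjointness within each level), followed by a single parallel layer of phase gates, does yield depth $O(n)$, size $O(2^n)$, and no ancillae. The paper does not prove this lemma itself but imports it from the cited reference \cite{johri2021nearest}, and your argument is essentially the standard divide-and-conquer cascade of partial-swap (beam-splitter-type) gates used there, so there is nothing further to reconcile.
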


Next we consider the encoding transformation. 
\begin{lemma}\label{lem:enc-trans}
The following unitary transformation on $2^n$ qubits
\begin{equation}\label{eq:unary_transform}
\ket{e_i}\to|i\rangle\ket{0^{2^n-n}},\forall i\in\{0\}\cup [2^n-1], e_i\in\{0,1\}^{2^n},
\end{equation}
can be implemented by a quantum circuit using single-qubit gates and CNOT gate with  {$2^{n+1}$} ancillary qubits, of depth $O(n)$ {and size $O(2^n)$}.
\end{lemma}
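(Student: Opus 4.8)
The plan is to reduce the $2^n$-qubit unary-to-binary transform to one \emph{matrix encoding} step followed by two unary-to-binary transforms on registers of only $2^{\lceil n/2\rceil}$ qubits. Write $n_1=\lceil n/2\rceil$, $n_2=\lfloor n/2\rfloor$, and for $i\in\{0,\dots,2^n-1\}$ split $i=i_h 2^{n_2}+i_l$ with $i_h\in\{0,\dots,2^{n_1}-1\}$ and $i_l\in\{0,\dots,2^{n_2}-1\}$. View the $2^n$ qubits of $\ket{e_i}$ as a $2^{n_1}\times 2^{n_2}$ array $A_{s,t}$, so that the unique $1$ sits at $(s,t)=(i_h,i_l)$. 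The first goal is the matrix encoding $M_n:\ket{e_i}\mapsto\ket{e_{i_h}}\ket{e_{i_l}}$ into a row register $R$ of $2^{n_1}$ qubits and a column register $C$ of $2^{n_2}$ qubits (with $A$ reset to $0$); afterwards I apply a small unary-to-binary transform to $R$ and to $C$ in parallel to obtain $\ket{i_h}$ and $\ket{i_l}$, and finally place $\ket{i_h}\ket{i_l}=\ket{i}$ into the first $n$ output wires.

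To build $M_n$ I first compute the marginals. Since at most one entry of $A$ equals $1$, the parity of row $s$ equals $[s=i_h]=(e_{i_h})_s$ and the parity of column $t$ equals $[t=i_l]=(e_{i_l})_t$. For each row $s$ I run a balanced CNOT parity tree of depth $O(n_2)$ that accumulates the row parity into one wire, copy it out into $R_s$ with a single CNOT, and run the tree backwards to restore that row; the $2^{n_1}$ rows live on disjoint wires and are processed in parallel, so $R$ is produced in depth $O(n)$ with $A$ restored. Doing the same for the columns fills $C$. The delicate step is to now reset $A$ in low depth: since $A_{s,t}=R_s\wedge C_t$, a Toffoli with controls $R_s,C_t$ and target $A_{s,t}$ clears $A_{s,t}$, and there are $2^n$ such Toffolis on distinct targets. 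To run them in depth $1$ I fan each $R_s$ out to $2^{n_2}$ copies and each $C_t$ out to $2^{n_1}$ copies, producing two $2^{n_1}\times2^{n_2}$ control arrays; this fan-out is exactly where the $2^{n+1}$ ancillary qubits are spent. After the Toffolis zero $A$, I uncompute the fan-out, all in depth $O(n)$.

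For the two unary-to-binary transforms on the $2^{\lceil n/2\rceil}$-qubit registers $R$ and $C$, the registers are now small enough to use a direct parallel method. To turn $\ket{e_j}$ (on $2^k$ qubits, $k\le n_1$) into $\ket{j}$, I note that the $b$-th output bit is the parity of the $2^{k-1}$ wires whose index has a $1$ in position $b$; fanning each input wire out to its $\le k$ relevant output-bit parities costs $O(k2^k)=o(2^n)$ ancillae and lets all $k$ parities, hence $\ket{j}$, be computed in depth $O(k)$ while preserving $\ket{e_j}$, after which I uncompute $\ket{e_j}$ by running the analogous binary-to-unary AND-tree circuit in reverse. Applying this to $R$ and $C$ in parallel yields $\ket{i_h}$ and $\ket{i_l}$ in depth $O(n)$; I then copy them into the first $n$ output wires with $n$ CNOTs and clear $R,C$ by CNOT-ing the now-duplicated bits back, completing the transform with all ancillae returned to $\ket{0}$.

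Summing the stages, the depth is $O(n)$ and the size is $O(2^n)$: the marginal trees, the fan-out, and the $2^n$ Toffolis each contribute $O(2^n)$ gates, while the two small transforms contribute only $o(2^n)$. The ancilla count is dominated by the $2^{n+1}$ fan-out wires of $M_n$, with the registers $R,C$ and the $o(2^n)$ ancillae of the small transforms reusable within this budget. I expect the main obstacle to be the low-depth reset of the $2^n$-qubit register: computing the marginals is routine, but clearing $A$ requires $2^n$ simultaneous Toffolis, and making their shared controls available in parallel is precisely what forces, and is afforded by, the $2^{n+1}$ ancillae; some care is also needed to restore $A$ exactly after each marginal tree so that the Toffoli controls still match the surviving $1$ of $A$.
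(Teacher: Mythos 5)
Your proposal is correct and follows essentially the same route as the paper's proof: reduce to the matrix encoding $\ket{e_i}\mapsto\ket{e_{i_h}}\ket{e_{i_l}}$ by computing row/column parities, clear the $2^n$-qubit register with $2^n$ parallel Toffolis whose controls are made available by an $O(2^{n+1})$-qubit fan-out, and finish with two unary-to-binary conversions on the $2^{\lceil n/2\rceil}$-qubit registers. The only differences (in-place parity trees versus the paper's compressed fan-in CNOTs, and a reversed decoder versus the paper's $\mathrm{Tof}_s$ gates for uncomputing the small unary registers) are minor implementation variants of the same construction.
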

{The proof of Lemma \ref{lem:enc-trans} is shown in Appendix \ref{sec:enc-trans}.}
Now we are ready to give the hybrid algorithm and cost analysis.
\begin{lemma}\label{lem:QSP_moreancilla}
    For any $m\in [\Omega(2^n/n^2),3\cdot2^{n}]$, any $n$-qubit quantum state $\ket{\psi_v}$ can be generated by a quantum circuit, using single-qubit gates and CNOT gates, of depth $O\big(n(n-\log (m/3) +1)+\frac{2^n}{m}\big)$ and size $O(2^{n})$  with $m$ ancillary qubits.
\end{lemma}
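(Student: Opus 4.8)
The plan is to instantiate the hybrid scheme of Figure \ref{fig:new_framework} with the split taken at $t=\lfloor\log(m/3)\rfloor$: first prepare the $t$-qubit prefix state $\ket{\psi_v^{(t)}}=\sum_{k=0}^{2^t-1}v'_k\ket{k}$ (whose amplitudes $v'_k$ are the marginal square-roots of $v$) via the unary route, and then complete the preparation with the last $n-t$ uniformly controlled gates. Concretely, I would first invoke Lemma \ref{lem:johr} to build the $2^t$-qubit unary-encoded state $\ket{\psi'_v}=\sum_k v'_k\ket{e_k}$ in depth $O(t)$ and size $O(2^t)$ (using $2^t$ qubits and no additional ancillae), and then apply the encoding transform $U_t:\ket{e_i}\mapsto\ket{i}\ket{0^{2^t-t}}$ of Lemma \ref{lem:enc-trans} in depth $O(t)$ and size $O(2^t)$. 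This lands the prefix in binary form $\ket{\psi_v^{(t)}}$ on the first $t$ qubits.

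The first point to verify is that the entire prefix construction fits inside the $m$ ancillary qubits. The unary register occupies $2^t$ qubits (of which $t$ double as the prefix work qubits), and $U_t$ consumes a further $2^{t+1}$ ancillae, for a peak count of $(2^t-t)+2^{t+1}=3\cdot 2^t-t\le m-t<m$. This is precisely why the split is placed at $t=\lfloor\log(m/3)\rfloor$ and where the factor $3$ comes from. Moreover $U_t$ returns the $2^t-t$ spare unary qubits and its $2^{t+1}$ ancillae to $\ket{0}$, so after the prefix step every ancillary qubit is reset and disentangled, and the full budget of $m$ qubits is again available.

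For the suffix I would apply $V_{t+1},\ldots,V_n$ sequentially, using Lemma \ref{lem:UCG_depth} to implement each $V_j$ in depth $O\big(j+\frac{2^j}{j+m}\big)$ and size $O(2^j)$ with the $m$ ancillary qubits. Summing the depths,
\[
\sum_{j=t+1}^n O\Big(j+\tfrac{2^j}{j+m}\Big)=O\big(n(n-t)\big)+O\Big(\tfrac{1}{m}\textstyle\sum_{j=t+1}^n 2^j\Big)=O\big(n(n-\log(m/3)+1)\big)+O\big(2^n/m\big),
\]
where the first term uses the fact that there are $n-t=\Theta(n-\log(m/3)+1)$ summands each of size $O(n)$, and the second uses $j+m\ge m$. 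The prefix depth $O(t)=O(\log m)$ is dominated by the first term, giving the claimed total depth, and the sizes add to $O(2^t)+\sum_{j=t+1}^n O(2^j)=O(2^n)$ since $m\le 3\cdot 2^n$.

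The main obstacle is the ancillary-qubit bookkeeping rather than the depth arithmetic: one must confirm that the largest prefix length $t$ permitted by the budget still simultaneously accommodates the unary register and the $U_t$ ancillae, and, crucially, that both Lemma \ref{lem:johr} and Lemma \ref{lem:enc-trans} clean up their workspace so the suffix UCGs may reuse all $m$ qubits. Correctness then follows from the Grover--Rudolph decomposition underlying Figure \ref{fig:QSP_circuit}: the unary-plus-transform step reproduces exactly the marginal amplitudes $v'_k$ that the first $t$ UCGs $V_1,\ldots,V_t$ would have produced, after which $V_{t+1},\ldots,V_n$ install the remaining conditional amplitudes and phases to yield $\ket{\psi_v}$.
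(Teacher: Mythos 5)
Your proposal is correct and follows essentially the same route as the paper's proof: the same split at $t=\lfloor\log(m/3)\rfloor$, the same use of Lemma \ref{lem:johr} followed by the encoding transform of Lemma \ref{lem:enc-trans} for the prefix, and the same sequential application of $V_{t+1},\ldots,V_n$ via Lemma \ref{lem:UCG_depth} with the identical depth summation. Your ancilla bookkeeping ($3\cdot 2^t - t \le m$) is a slightly sharper version of the paper's $2^t + 2^{t+1}\le m$, but the argument is otherwise the same.
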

\begin{proof}
Let $t=\lfloor\log\frac{m}{3})\rfloor$. Define a quantum state $\ket{\psi_{v}^{(t)}} = \sum_{i=0}^{2^{t}-1} v'_i \ket{i}$, where $v'_i=\sqrt{\sum_{j=0}^{2^{n-t}-1}|v_{i\cdot 2^{n-t}+j}|^2}$. Note that $\ket{\psi_{v}^{(t)}} = V_t V_{t-1} \cdots V_1 \ket{0}^{\otimes n}$, the state after we apply the first $t$ UCGs in Figure \ref{fig:QSP_circuit}(a). 

According to Lemma \ref{lem:johr}, given the unit vector $v' = (v_0', \ldots, v_{2^t-1}')$, we can prepare a $2^t$-qubit quantum state $\ket{\psi'_v}=\sum_{i=0}^{2^t-1}v_i'\ket{e_i}$ by a quantum circuit of depth $O(t) = O(n)$  and size $O(2^t) = O(2^n)$. The resulting state is on $2^t$ qubits. Then we apply the unitary transform Eq. \eqref{eq:unary_transform} in Lemma \ref{lem:enc-trans} to transform the unary encoding to a binary encoding and obtain  $\ket{\psi_v^{(t)}} = \sum_{i=0}^{2^t-1} v_i' \ket{i}$. This transformation has depth $O(t)$ and size $O(2^t)$, and need $2^{t+1}$ ancillary qubits.
The whole process can be carried out in a work space of $2^t + 2^{t+1} \le m$ qubits.

To change $\ket{\psi_{v}^{(t)}}$ to the final target state $\ket{\psi_{v}}$, what is left is to apply $V_{t+1}, \ldots, V_n$ to $\ket{\psi_{v}^{(t)}}$. By Lemma \ref{lem:UCG_depth}, each $V_j$ can be implemented by a circuit of depth $O(j+\frac{2^j}{m})$ and size $O(2^j)$ by $m$ ancillary qubits. Hence $V_n\cdots V_{t+1}$ can be realized by a quantum circuit of depth $\sum_{j=t +1}^{n}O\big(j+\frac{2^j}{m}\big) = O\big(n(n-\lfloor\log(m/3)\rfloor)+\frac{2^n}{m}\big)$, { and size $\sum_{j=t+1}^n O(2^j)=O(2^n)$}, with $m$ ancillary qubits.

Combining the two steps, we see that the total depth and size of this quantum state preparation circuit are $O\big(n(n-\log (m/3) +1)+\frac{2^n}{m}\big)$ and $O(2^n)$, respectively.
\end{proof}
Note that when $m = 3\cdot 2^n$, the depth bound becomes $O(n)$. And if $m$ is even larger, then we can choose to only use $3\cdot 2^n$ of them. Thus we have the following result, {which is Theorem \ref{thm:QSP_anci} in the parameter regime $m = \Omega(2^n/n^2)$.}
\begin{corollary}\label{coro:QSP_moreancilla}
For a circuit preparing an $n$-qubit quantum state with $m=\Omega(2^n/n^2)$ ancillary qubits, the minimum depth $D_{\textsc{QSP}}(n,m)$ for different ranges of $m$ are characterized as follows:
\[
   \left\{ \begin{array}{ll}
         O(2^n/m), & \text{if~} m\in[\Omega(2^n/n^2),O(2^n/(n\log n))],\\
          O(n\log n), & \text{if~} m\in[\omega(2^n/(n\log n)),o(2^n)],\\
          O(n), & \text{if~} m=\Omega(2^n).
    \end{array}\right.
\]
\end{corollary}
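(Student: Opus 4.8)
The plan is to derive all three regimes from the single depth bound already established in Lemma \ref{lem:QSP_moreancilla}, namely depth $O\big(n(n-\log(m/3)+1)+\frac{2^n}{m}\big)$ and size $O(2^n)$ valid for every $m\in[\Omega(2^n/n^2),3\cdot 2^n]$. Since the size bound is uniformly $O(2^n)$ across the whole range, the only remaining task is to simplify the depth expression in each of the three parameter windows. Writing $\Delta = n-\log(m/3)+1$ for the residual term (roughly the number of suffix UCGs still handled by the Section \ref{sec:QSP_withancilla} method), the depth reads $O(n\Delta + 2^n/m)$, so I would bound $\Delta$ and $2^n/m$ separately in each case and check which term dominates.

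For the first regime $m\in[\Omega(2^n/n^2),O(2^n/(n\log n))]$, the lower bound $m=\Omega(2^n/n^2)$ gives $\log m \ge n-2\log n-O(1)$, hence $\Delta = O(\log n)$ and $n\Delta = O(n\log n)$; the upper bound $m=O(2^n/(n\log n))$ gives $2^n/m = \Omega(n\log n)$. Therefore $n\Delta = O(2^n/m)$, and the depth collapses to the claimed $O(2^n/m)$. For the second regime $m\in[\omega(2^n/(n\log n)),o(2^n)]$, the lower bound $m=\omega(2^n/(n\log n))$ forces $\log m \ge n-\log n-\log\log n$ for large $n$, so again $\Delta = O(\log n)$ and $n\Delta = O(n\log n)$, while the same lower bound yields $2^n/m = o(n\log n)$; the depth is thus $O(n\log n)$. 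For the third regime $m=\Omega(2^n)$, I would simply restrict to using $3\cdot 2^n$ of the available ancillary qubits (extra qubits can always be left idle), so that $\log(m/3)=n$, $\Delta = 1$, $n\Delta = O(n)$ and $2^n/m = O(1)$, giving depth $O(n)$.

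The substantive content of the result lives entirely in Lemma \ref{lem:QSP_moreancilla}, which in turn rests on the unary-encoding QSP of \cite{johri2021nearest} (Lemma \ref{lem:johr}), the encoding-transform circuit of Lemma \ref{lem:enc-trans}, and the UCG bound of Lemma \ref{lem:UCG_depth}; I would treat all of these as given. Consequently the corollary itself presents no genuine obstacle --- it is a case analysis with asymptotic bookkeeping. The one place demanding care is the first regime: the term $n\Delta$ superficially looks like it could grow to $\Theta(n\log n)$ and compete with $2^n/m$, so one must verify that it is genuinely \emph{dominated} by $2^n/m$. This hinges on keeping both inequalities on $m$ in play simultaneously --- using $m=\Omega(2^n/n^2)$ to cap $\Delta$ at $O(\log n)$ while using $m=O(2^n/(n\log n))$ to push $2^n/m$ up to $\Omega(n\log n)$ --- and it is this joint use of the two-sided constraint on $m$ that is the only subtlety.
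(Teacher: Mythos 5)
Your proposal is correct and follows essentially the same route as the paper: the corollary is obtained directly from Lemma \ref{lem:QSP_moreancilla} by bounding the two terms $n(n-\log(m/3)+1)$ and $2^n/m$ in each parameter window, with the $m=\Omega(2^n)$ case handled by capping the number of ancillary qubits used at $3\cdot 2^n$. Your write-up is in fact more explicit than the paper's (which only spells out the $m=\Omega(2^n)$ case), and your observation that the two-sided constraint on $m$ is needed simultaneously in the first regime is exactly the right point of care.
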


\section{Extensions and implications}
\label{sec:extensions}



\subsection{Implications on optimality of unitary depth compression}
In this section, we will show that our results for QSP can be applied to general unitary synthesis. The proofs of Theorem \ref{thm:unitary} and Corollary \ref{coro:unitary} are shown in Appendix \ref{sec:US_depth}.

\begin{theorem}\label{thm:unitary}
Any unitary matrix $U\in\mathbb{C}^{2^n\times 2^n} $ can be implemented by a quantum circuit of depth $O\big(n2^n+\frac{4^n}{m+n}\big)$ {and size $O(4^n)$} with $m \le 2^n$ ancillary qubits.
\end{theorem}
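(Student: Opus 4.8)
The plan is to reduce the synthesis of a general unitary to a product of uniformly controlled gates and then invoke Lemma \ref{lem:UCG_depth} on each factor. First I would appeal to the recursive cosine--sine (quantum Shannon) decomposition of \cite{mottonen2005decompositions,vartiainen2004efficient} to write any $U\in\mathbb{C}^{2^n\times 2^n}$ as a product
\[
U = W^{(1)} W^{(2)} \cdots W^{(L)}, \qquad L = O(2^n),
\]
where each $W^{(i)}$ is an $n$-qubit uniformly controlled single-qubit gate of the form in Eq.~\eqref{matrix:Vn}, i.e. a block-diagonal operator applying a $2\times 2$ unitary on one target qubit conditioned on the computational-basis value of the remaining $n-1$ control qubits. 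Two sanity checks support the count $L=O(2^n)$: a degrees-of-freedom match, since $U(2^n)$ carries $\Theta(4^n)$ real parameters while each full-width UCG carries $\Theta(2^n)$ of them; and the CNOT tally, since $O(2^n)$ full-width UCGs with $\approx 2^{n-1}$ CNOTs each reproduce the known $\Theta(4^n)$ (indeed $\tfrac{23}{48}4^n$) CNOT bound of \cite{mottonen2005decompositions}.

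Second I would apply Lemma \ref{lem:UCG_depth} to each factor $W^{(i)}$, which implements an $n$-qubit UCG in depth $O\big(n+\tfrac{2^n}{n+m}\big)$ and size $O(2^n)$ using $m$ ancillary qubits. Here two points need care. The factors $W^{(i)}$ target different single qubits, whereas Lemma \ref{lem:UCG_depth} is stated with the last qubit as target; but the UCG construction is symmetric in the choice of target, so relabeling wires (which costs no gates) realizes a UCG with any designated target at the same asymptotic cost. More importantly, I would use \emph{ancilla reuse}: the UCG circuits restore every ancillary qubit to $\ket{0}$ at the end (the diagonal-unitary subroutine of Section \ref{sec:QSP_withancilla} does so in its Inverse Stage, and the ancilla-free construction uses none), so the same pool of $m$ ancillas can be reused across all $L$ factors applied in sequence. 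This keeps the total ancilla count at $m\le 2^n$ rather than letting it grow with $L$.

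Finally I would sum the costs over the $L=O(2^n)$ sequential factors. The depth is
\[
O(2^n)\cdot O\Big(n+\tfrac{2^n}{n+m}\Big)=O\Big(n2^n+\tfrac{4^n}{m+n}\Big),
\]
and the size is $O(2^n)\cdot O(2^n)=O(4^n)$, matching the claimed bounds; the restriction $m\le 2^n$ is precisely the regime in which additional ancillas still shorten a single UCG.

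The hard part will be the first step: justifying cleanly that $U$ factors into only $O(2^n)$ \emph{full-width} UCGs, as opposed to the $\Theta(4^n)$ smaller uniformly controlled rotations of all widths that a naively unrolled recursive CSD produces (implementing those individually would only give the weaker depth $O(4^n)$). I would lean on the exact factor count of \cite{mottonen2005decompositions}, verifying that one CSD round peels off a single uniformly controlled $R_y$ layer together with two multiplexors that are re-expressed as full-width UCGs targeting the freed qubit, so that the number of full-width UCGs obeys a recursion with solution $O(2^n)$. The remaining ingredients---the symmetry in the target qubit and the ancilla-restoration bookkeeping that enables reuse---are routine once this factorization is in hand.
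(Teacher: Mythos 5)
Your proposal is correct and follows essentially the same route as the paper: the appendix proof invokes exactly the cosine--sine decomposition of \cite{mottonen2005decompositions} (stated as Lemma \ref{lem:CSD}) to factor $U$ into $O(2^n)$ full-width UCGs with varying target qubits, notes that the UCG implementations are insensitive to which qubit is the target, and then applies the $O\big(n+\frac{2^n}{n+m}\big)$-depth, $O(2^n)$-size UCG construction sequentially. Your additional remarks on ancilla restoration and reuse are consistent with the paper's constructions (the Inverse Stage resets all ancillas), so no gap remains.
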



In \cite{shende2004minimal}, it was shown that one needs at least $\Omega\left(4^n\right)$ CNOT gates to implement an arbitrary $n$-qubit unitary matrix without ancillary qubits. In the proof, the authors first put the circuit in a form that all single-qubit gates are immediately before either a CNOT gate or the output. It is known that such a CNOT gate together with its two single-qubit incoming neighbor gates can be specified by 4 free real parameters, and that each single-qubit gate right before the output has 3 free real parameters. Thus overall the circuit has $4k+3n$ parameters where $k$ is the number of CNOT gates. To generate all $n$-qubit states, the set of which is known to have dimension $4^n-1$, we need $4k+3n \ge 4^n-1$. Thus the bound follows. {This argument basically applies to quantum circuits with ancillary qubits as well, as stated in the next corollary, which shows that our circuit construction for general unitary matrices is asymptotically optimal for $m = O(2^n/n)$.}

\begin{corollary}\label{coro:unitary}
The minimum circuit depth $D_{\textsc{Unitary}}(n,m)$ for an arbitrary $n$-qubit unitary with $m$ ancillary qubits satisfies
\[
\begin{cases}
    D_{\textsc{Unitary}}(n,m) = \Theta\big(\frac{4^n}{m+n}\big), & \text{if } m=O(2^n/n), \\
    D_{\textsc{Unitary}}(n,m) \in \left[ \Omega(n),  O(n2^n)\right], & \text{if }m = \omega(2^n/n).
\end{cases}
\]
\end{corollary}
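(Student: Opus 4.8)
The plan is to obtain Corollary \ref{coro:unitary} by combining the upper bound of Theorem \ref{thm:unitary} with two lower bounds---a gate-counting bound extending \cite{shende2004minimal} and the state-preparation bound of Theorem \ref{thm:lowerbound_QSP}---and then simplifying the asymptotics separately in the two parameter regimes for $m$.

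For the upper bounds, I would start from the depth $O\big(n2^n + \frac{4^n}{m+n}\big)$ of Theorem \ref{thm:unitary} and decide which term dominates. When $m = O(2^n/n)$ we have $m+n = O(2^n/n)$, so $\frac{4^n}{m+n} = \Omega(n2^n)$; that is, the first term is absorbed into the second and the upper bound collapses to $O\big(\frac{4^n}{m+n}\big)$. (This holds throughout the regime, including small $m$, since $n2^n(m+n) = O(n^2 2^n) = O(4^n)$.) When $m = \omega(2^n/n)$ the comparison reverses: $\frac{4^n}{m+n} = o(n2^n)$, so the bound becomes $O(n2^n)$. This settles both upper bounds.

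For the matching lower bound in the first regime, I would adapt the parameter-counting argument recalled just before the corollary to the presence of $m$ ancillary qubits. Putting the $(n+m)$-qubit circuit into the normal form in which every single-qubit gate sits immediately before a CNOT or the output, a circuit with $k$ CNOT gates is described by $4k + 3(n+m)$ real parameters; since the achievable target set has dimension $4^n-1$, surjectivity forces $4k + 3(n+m) \ge 4^n - 1$. Because $m \le 2^n$ we have $3(n+m) = O(2^n) = o(4^n)$, so $k = \Omega(4^n)$. To convert this gate count into a depth bound, I note that each layer of a circuit on $n+m$ qubits contains at most $\lfloor (n+m)/2 \rfloor$ disjoint two-qubit gates, hence a depth-$D$ circuit has at most $D(n+m)/2$ CNOT gates; combined with $k = \Omega(4^n)$ this yields $D = \Omega\big(\frac{4^n}{m+n}\big)$, matching the upper bound and giving $\Theta\big(\frac{4^n}{m+n}\big)$ when $m = O(2^n/n)$. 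The weaker bound $\Omega(n)$, valid for all $m$, follows because a unitary-synthesis circuit of depth $D$ applied to $\ket{0}^{\otimes n}$ prepares an arbitrary target state in depth $D$, so $D_{\textsc{Unitary}}(n,m) \ge D_{\textsc{QSP}}(n,m) = \Omega(n)$ by Theorem \ref{thm:lowerbound_QSP}; together with the $O(n2^n)$ upper bound this gives the interval stated for $m = \omega(2^n/n)$.

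The main obstacle is making the ancilla-assisted parameter count rigorous. Without ancillas the normal-form decomposition and the dimension count are standard, but with $m$ extra qubits the circuit implements a $2^{n+m}\times 2^{n+m}$ unitary whose restriction to the ancilla-$\ket{0}$ subspace must equal the target $U$; I must argue that this restriction map, viewed as a function of the circuit's $4k+3(n+m)$ parameters, still has image of dimension at least $4^n - 1$, so that the inequality $4k+3(n+m)\ge 4^n-1$ genuinely persists. The key point is that introducing ancillas enlarges the parameter budget only by the additive term $3(n+m) = o(4^n)$ while the target dimension remains $\Theta(4^n)$, so the asymptotics of $k$, and hence of $D$, are unaffected.
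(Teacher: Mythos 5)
Your proposal is correct and follows essentially the same route as the paper's proof: the upper bound comes from Theorem \ref{thm:unitary}, the matching lower bound from extending the parameter count of \cite{shende2004minimal} to $4k+3(n+m)\ge 4^n-1$ together with the observation that each layer holds at most $(n+m)/2$ CNOT gates, and the $\Omega(n)$ bound from Theorem \ref{thm:lowerbound_QSP} since QSP is a special case of unitary synthesis. Your explicit regime-by-regime simplification of the upper bound and your remark on making the ancilla-assisted dimension count rigorous are details the paper leaves implicit, but they do not change the argument.
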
 

\subsection{Decomposition with Clifford + T gate set}
The quantum gate set $\{{CNOT},H,S,T\}$, sometimes called Clifford+T gate set, is a universal gate set in that any unitary matrix can be approximately implemented using these gates only. The gates in this set all have a fault-tolerant implementation, thus the gate set is considered as one of the most promising candidates for practical quantum computing.  In this section we consider the circuits using only
the gates in this set. 
\begin{definition}[$\epsilon$-approximation]
For any $\epsilon>0$, a unitary matrix $U$ is $\epsilon$-approximated by another unitary matrix $V$ if 
\[\|U-V\|_2 \defeq \max_{\||\psi\rangle\|_2=1} \|(U-V)|\psi\rangle\|_2 < \epsilon.\]
\end{definition}


We can extend our results on the exact implementations of state preparation and unitary to their approximate versions. 

The following two corollaries are circuit implementations for quantum state preparation (Corollary \ref{coro:psi_apprx}) and unitary synthesis (Corollary \ref{coro:unitary_apprx}). The Corollary \ref{coro:psi_apprx} is a restatement of Corollary \ref{coro:approx_QSP}. {The proofs are shown in Appendix \ref{sec:clifford_decomposition}.}
\begin{corollary}\label{coro:psi_apprx}
    For any $n$-qubit target state $\ket{\psi_v}$ and $\epsilon>0$, one can prepare a state $\ket{\psi'_v}$ which is $\epsilon$-close to $\ket{\psi_v}$ in $\ell_2$-distance, by a quantum circuit consisting of $\{CNOT,H,S,T\}$ gates of depth 
    {\[\left\{\begin{array}{ll}
        O\big(\frac{2^n\log(2^n/\epsilon)}{m+n}\big) &  \text{if~}m=O(2^n/(n\log n)),\\
         O(n\log n\log(2^n/\epsilon))& \text{if~}m\in [\omega(2^n/(n\log n),o(2^n)],\\
         O(n \log(2^n/\epsilon))& \text{if~}m=\Omega(2^n),\\
    \end{array}\right.\]}
where $m$ is the number of ancillary qubits.
\end{corollary}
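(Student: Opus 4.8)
The plan is to take the exact, deterministic QSP circuit guaranteed by Corollary \ref{corol:tight_depth} (equivalently, Theorems \ref{thm:QSP_anci} and \ref{thm:QSP_noanci}) and to replace its continuous rotations by finite Clifford+T approximations while controlling the accumulated error. The starting observation is that the only non-Clifford gates in our construction are single-qubit $z$-rotations $R_z(\theta)$: the phase-shift gates $R(\alpha_s)=\mathrm{diag}(1,e^{i\alpha_s})$ used to realize the diagonal operators $\Lambda_k$ differ from $R_z(\alpha_s)$ only by an irrelevant global phase, and the decomposition in Eq. \eqref{eq:single_qubit_gate} already writes every single-qubit gate in terms of $R_z$ rotations together with the fixed Clifford gates $H,S,S^\dagger$. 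The CNOT gates and these fixed Clifford gates already lie in $\{CNOT,H,S,T\}$ and need no approximation.

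First I would invoke a variant of the Solovay--Kitaev theorem specialized to $z$-rotations (the Ross--Selinger / Kliuchnikov--Maslov--Mosca synthesis): for any precision $\delta>0$, any $R_z(\theta)$ can be $\delta$-approximated in operator norm by an \emph{ancilla-free} single-qubit Clifford+T circuit of depth $O(\log(1/\delta))$. The crucial feature, and the reason the final bound carries only a single logarithmic factor $\log(2^n/\epsilon)$ rather than the polylogarithmic factor of generic Solovay--Kitaev, is precisely that $z$-rotations admit this optimal single-log approximation. Replacing every $R_z$ gate in the exact circuit by such a block multiplies the depth of each rotation layer by $O(\log(1/\delta))$ while leaving CNOT layers untouched, so the approximate circuit has depth $O(D\cdot\log(1/\delta))$, where $D$ is the exact depth from Corollary \ref{corol:tight_depth}, and it reuses the same $m$ ancillary qubits.

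Next I would bound the accumulated error. Writing the ideal circuit as a product of layers and the approximate circuit as the corresponding product with each $R_z$ replaced by its $\delta$-approximation, subadditivity of operator-norm error under composition and tensor products gives a total error at most (number of approximated gates) $\times\delta$. Since only the $R_z$ gates are approximated and the circuit size is $O(2^n)$ in every regime, this is $O(2^n)\cdot\delta$. Choosing $\delta=\Theta(\epsilon/2^n)$ makes $\big\||\psi'_v\rangle-|\psi_v\rangle\big\|\le\epsilon$ and yields $\log(1/\delta)=O(\log(2^n/\epsilon))$. Substituting this together with the three depth bounds $D\in\{O(2^n/(m+n)),\,O(n\log n),\,O(n)\}$ from Corollary \ref{corol:tight_depth} into $O(D\log(1/\delta))$ reproduces exactly the three claimed bounds.

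The main obstacle is the error-accounting step, since the single-log factor in the statement is not robust to sloppiness: using generic Solovay--Kitaev, or spending the error budget per \emph{layer} (roughly $D$ factors) rather than per \emph{gate} (roughly $2^n$ factors), would change either the exponent or the argument of the logarithm. One must verify that (i) every non-Clifford gate genuinely reduces to a $z$-rotation, so the optimal single-log synthesis applies, and (ii) the correct count for the union bound on errors is the total gate count $O(2^n)$, which gives the uniform factor $\log(2^n/\epsilon)$ across all three regimes. A final subtlety to confirm is that the $z$-rotation synthesis uses no ancillas, so that the ancilla budget $m$ is preserved throughout.
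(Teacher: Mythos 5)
Your proposal is correct and follows essentially the same route as the paper: start from the exact circuit of Theorem \ref{thm:QSP_anci}, reduce all single-qubit gates to $R_z$ rotations via Eq. \eqref{eq:single_qubit_gate}, apply the ancilla-free Ross--Selinger synthesis (the paper's Lemma \ref{lem:sk}) at precision $\Theta(\epsilon/2^n)$, and union-bound the error over the $O(2^n)$ approximated gates to get the uniform $\log(2^n/\epsilon)$ depth overhead. Your explicit warnings about why per-gate (rather than per-layer) error budgeting and the single-log $z$-rotation synthesis are both needed match the implicit structure of the paper's argument.
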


The following is an implementation of a unitary matrix.
\begin{corollary}\label{coro:unitary_apprx}
Any $n$-qubit general unitary matrix can be implemented by a circuit, using the $\{CNOT,H,S,T\}$ gate set, of depth $O\big(n2^n+\frac{4^n\log(4^n/\epsilon)}{m+n}\big)$ with $m$ ancillary qubits.
\end{corollary}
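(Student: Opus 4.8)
The plan is to start from the exact circuit guaranteed by Theorem~\ref{thm:unitary}, which implements $U$ precisely in depth $O\big(n2^n+\frac{4^n}{m+n}\big)$ and size $O(4^n)$ using arbitrary single-qubit gates and CNOT gates, and then replace each arbitrary single-qubit gate by a finite $\{H,S,T\}$-word that approximates it, while leaving all CNOT gates and all Clifford single-qubit gates (which already lie in the target set) untouched. By the decomposition in Eq.~\eqref{eq:single_qubit_gate}, every arbitrary single-qubit gate is a product of a constant number of $z$-rotations $R_z(\cdot)$ interleaved with the Clifford gates $S,H,S^\dagger$; hence it suffices to approximate $z$-rotations, for which a variant of the Solovay--Kitaev theorem (e.g.\ the number-theoretic synthesis of $z$-rotations, which achieves the optimal exponent) produces an $\{H,S,T\}$-word of length $O(\log(1/\delta))$ that is $\delta$-close in operator norm.

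Next I would fix the per-gate precision by an error-accumulation argument, exactly as in the proof behind Corollary~\ref{coro:psi_apprx}. Since the operator norm is subadditive under composition of unitaries, replacing each of the $N=O(4^n)$ arbitrary gates by a $\delta$-approximation yields, by telescoping, a global error of at most $N\delta$. Choosing $\delta=\Theta(\epsilon/4^n)$ makes this at most $\epsilon$, so the resulting $\{CNOT,H,S,T\}$-circuit $\epsilon$-approximates $U$. With this choice the per-gate blow-up factor is $\log(1/\delta)=O(\log(4^n/\epsilon))=O(n+\log(1/\epsilon))$.

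The delicate part of the depth accounting, and what I expect to be the \emph{main obstacle}, is verifying that the factor $\log(4^n/\epsilon)$ multiplies only the second term $\frac{4^n}{m+n}$ and not the first term $n2^n$. For this one must inspect where the arbitrary-angle rotations actually reside in the construction: in the diagonal-unitary blocks of Sections~\ref{sec:QSP_withancilla} and~\ref{sec:QSP_withoutancilla}, the rotation gates $R(\alpha_s)$ occupy only a constant fraction of the layers (the remaining layers being CNOT copies and Gray-code updates), whereas the $O(n2^n)$ portion of the exact depth is Clifford/CNOT structure needing no approximation. Expanding only the rotation-bearing layers by $O(\log(4^n/\epsilon))$ and leaving the Clifford/CNOT layers intact gives depth $O\big(n2^n+\frac{4^n}{m+n}\log(4^n/\epsilon)\big)$.

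Finally I would confirm that no stray lower-order term survives. The $O(\log m)=O(n)$-depth copy/initialization stages of each diagonal-unitary block contain a bounded number of rotation layers, so expanding them contributes at most an additive $O(2^n\log(1/\epsilon))$; but under the standing hypothesis $m\le 2^n$ we have $\frac{4^n}{m+n}=\Omega(2^n)$, hence $\frac{4^n}{m+n}\log(4^n/\epsilon)=\Omega\big(2^n\log(1/\epsilon)\big)$ absorbs this term and yields the claimed bound $O\big(n2^n+\frac{4^n\log(4^n/\epsilon)}{m+n}\big)$. As a coarser fallback, should the layer-by-layer bookkeeping prove fragile, one may simply multiply the entire exact depth by $O(\log(4^n/\epsilon))$; this still delivers the correct dependence in the regime $m=O(2^n/n)$ where the second term dominates and the bound is of interest.
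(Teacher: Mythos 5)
Your proposal is correct and follows essentially the same route as the paper: the paper also starts from the exact construction (CSD into $O(2^n)$ UCGs, each reduced to diagonal unitaries built from CNOTs and rotations), replaces each rotation by an $O(\log(4^n/\epsilon))$-length $\{H,T\}$ word via the Ross--Selinger/Solovay--Kitaev lemma with per-gate precision $\Theta(\epsilon/4^n)$, and observes that the logarithmic blow-up hits only the rotation-bearing layers, so only the $\frac{4^n}{m+n}$ term acquires the factor while the $O(n2^n)$ Clifford/CNOT overhead does not. The only difference is organizational---the paper modularizes the depth accounting through its approximate-diagonal-unitary and approximate-UCG lemmas rather than flattening the whole circuit---and your absorption of the lower-order $O(2^n\log(1/\epsilon))$ term using $m\le 2^n$ matches the paper's bookkeeping.
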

\noindent{\bf Remark.} Our circuits for general states can be also extended to circuits for sparse states. See details in Appendix \ref{sec:sparse_QSP}.

\section{Conclusion}
\label{sec:conclusions}
In this paper, we have shown that an arbitrary $n$-qubit quantum state can be prepared by a quantum circuit consisting of single-qubit gates and CNOT gates with $m=O(2^n)$ ancillary qubits, of depth $O\big(n\log n+\frac{2^n}{n+m}\big)$ and size $O(2^n)$. 
The bound is improved to $O(n)$ if we have more ancillary qubits, and all these bounds are tight (up to a logarithmic factor in a small range of $m$). 
These results can be applied to reduce the depth of the circuit of general unitary to $O\big(n2^n+\frac{4^n}{m+n}\big)$ with $m$ ancillary qubits, which is optimal when $m = O(2^n/n)$. The results can be extended to approximate state preparation by circuit using the Clifford+T gate set.

Many questions are left open for future studies. An immediate one is to close the gap for unitary synthesis for large $m$ in Corollary \ref{coro:unitary}. One can also put more practical restrictions into consideration. For instance, we assume that two-qubit gates can be applied on any two qubits. Though this all-to-all connection is indeed the case for certain quantum computer implementations (such qubits made of trapped ions), some others (such as superconducting qubits) can only support nearest neighbor interactions, and it is interesting to study QSP for that case. Another direction is to take various noises into account, and see how much that affects the complexity. We call for more studies of state preparation and circuit synthesis, and hope that methods and techniques developed in this paper can be used to design efficient circuits in those extended models. 

\bibliographystyle{unsrt}
\bibliography{state_preparation}
\appendix
\section{Circuit depth lower bound}
\label{sec:QSP_lowerbound}
In this section we prove Theorem \ref{thm:lowerbound_QSP}. In \cite{plesch2011quantum}, the authors presented a depth lower bound of $\Omega\left(\frac{2^n}{n}\right)$ for quantum circuits without ancillary  qubits. This can be extended to a lower bound of $\Omega\left(\frac{2^n}{n+m}\right)$ for circuits with $m$ ancillary qubits. Next we prove the linear lower bound.

\begin{lemma}
\label{lem:lower_bound_QSP}
Almost all $n$-qubit quantum states need a quantum circuit of depth at least $n-\log n - O(1)$ to prepare, even if the circuit uses arbitrary single- and double-qubit gates, regardless of the number of ancillary qubits.
\end{lemma}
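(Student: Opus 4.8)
The plan is to use a dimension-counting (parameter-counting) argument comparing the number of states reachable by shallow circuits against the dimension of the set of all $n$-qubit states. First I would fix a circuit model: a depth-$d$ circuit using arbitrary single- and two-qubit gates, acting on $n+m$ qubits (the $n$ work qubits plus $m$ ancillae), all initialized to $\ket{0}$, with the ancillae required to return to a fixed state (or be disentangled) at the end so that the $n$-qubit work register holds $\ket{\psi_v}$. The key observation is that in each layer of depth, every qubit participates in at most one two-qubit gate, so each layer consists of at most $\lfloor (n+m)/2\rfloor$ two-qubit gates (single-qubit gates can be absorbed). Over $d$ layers the circuit therefore contains at most $N \defeq d\cdot\lfloor (n+m)/2\rfloor$ two-qubit gates.

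Next I would count real parameters. Each two-qubit gate from $U(4)$ is specified by at most $16$ real parameters, so the total family of depth-$d$ circuits is parametrized by at most $16N = O\big(d(n+m)\big)$ real parameters. Consequently the set of $n$-qubit states producible by such circuits lies in the image of a smooth map from $\mathbb{R}^{16N}$ into the state space, hence forms a set of real dimension at most $16N$. On the other hand, the set of all $n$-qubit pure states $\{\ket{\psi_v}: \|v\|_2=1\}$, viewed up to global phase, is the complex projective space $\mathbb{CP}^{2^n-1}$, which has real dimension $2(2^n-1)$. For the circuit family to cover (even almost) all states, we need $16N \ge 2(2^n-1)$, i.e.
\[
16\, d\,\Big\lfloor \frac{n+m}{2}\Big\rfloor \ge 2(2^n-1).
\]
The crucial point—and the reason the bound is independent of $m$—is that this naive inequality still contains $m$ and would only give $d = \Omega(2^n/(n+m))$, which degrades as $m$ grows. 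To obtain the $m$-independent linear bound $\Omega(n)$ I would instead argue via a \emph{lightcone} restriction: the final state of the $n$ work qubits depends only on gates inside the backward lightcone of those $n$ qubits. Since each layer at most doubles the number of qubits that can influence a fixed output qubit, after $d$ layers each work qubit is influenced by at most $2^d$ input/ancilla qubits, and the $n$ work qubits are jointly influenced by at most $n\cdot 2^d$ qubits. The number of two-qubit gates inside this lightcone is then $O(n\,2^d\cdot d)$ rather than $O(d(n+m))$, removing the dependence on $m$ entirely.

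The main obstacle will be formalizing this lightcone parameter count cleanly enough to extract the clean constant $n-\log n - O(1)$. Concretely, I would bound the number of gates in the lightcone of the $n$ output qubits by $O(n\,2^d)$, giving at most $O(n\,2^d)$ real parameters; matching this against the required dimension $2(2^n-1)$ yields $n\,2^{d}\cdot C \ge 2^{n+1}$ for some constant $C$, hence $2^d \ge 2^{n+1}/(Cn)$ and therefore $d \ge (n+1) - \log n - \log C = n - \log n - O(1)$. The word \emph{almost all} is handled by noting that the image of a smooth map from a lower-dimensional parameter space has measure zero in $\mathbb{CP}^{2^n-1}$ whenever the parameter dimension is strictly smaller, so any $d$ below the threshold fails to reach all but a measure-zero set of states. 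The delicate bookkeeping is ensuring the lightcone bound is tight up to constants and that absorbing single-qubit gates and handling the ancilla-reset condition do not inflate the parameter count; I expect this to be the part requiring the most care, while the overall dimension-versus-coverage skeleton is standard.
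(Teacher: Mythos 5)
Your proposal is correct and follows essentially the same route as the paper: restrict attention to the backward lightcone of the $n$ output qubits (which contains only $O(n\cdot 2^d)$ gates and hence $O(n\cdot 2^d)$ real parameters, independent of $m$), compare against the dimension of the space of $n$-qubit states, and conclude via the measure-zero image of a smooth map from a lower-dimensional parameter space. The only cosmetic differences are that the paper formalizes the lightcone through a time-space graph and argues that gates outside it can be peeled off one by one without disturbing $\ket{\psi}$, and it phrases the target dimension via the unit sphere rather than $\mathbb{CP}^{2^n-1}$; neither affects the bound.
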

\begin{proof}
As two adjacent single-qubit gates on the same qubit can be compressed into one, we can assume that a circuit with minimum depth has double-qubit gates and single-qubit gates appearing in alternative layers. Without loss of generality, assume that the odd layers contain only double-qubit gates and the even layers contain only single-qubit gates. Suppose the circuit depth is $D$, i.e. it has $D$ layers of gates. Note that a single- or double-qubit gate can be represented by $O(1)$ real parameters. 

Consider a time-space directed graph $G=(V,E)$ with $D+1$ layers $L_1, \ldots, L_{D+1}$ of nodes, corresponding to the $D+1$ time steps separated by the $D$ layers $U_1, \ldots, U_D$ of gates. There are $n+m$ nodes in each layer of $G$,  corresponding to the $n+m$ qubits in the circuit with $n$ input qubits and $m$ ancillary qubits. Edges appear only between nodes in adjacent layers $L_i, L_{i+1}$, and two nodes $(v_i,v_{i+1})\in E$ if $v_i\in L_i$, $v_{i+1}\in L_{i+1}$, and the two corresponding qubits of $v_i$ and $v_{i+1}$ are among the input and output qubits for some gate in $U_i$, the $i$-th layer of gates in the circuit. (Thus each single-qubit gate induces one edge, and each double-qubit gate induces 4 edges.) All edges are in the direction from input to output of the gate.  

Since the circuit generates the state, we have $U\ket{0^{n+m}} = \ket{\psi}\otimes \ket{\phi}$, where $\ket{\psi}$ is the target $n$-qubit state. Define the \emph{light cone} of $\ket{\psi}$ to be the nodes in $G$ that can reach, by walking along the directed edges, the nodes in $L_{D+1}$ corresponding to the qubits in $\ket{\psi}$. Intuitively, only gates within this region contributes to the generation of $\ket{\psi}$. Indeed, we can remove gates outside the light cone, from the last layer to the first, one by one. Each removal of a gate $U$ is equivalent to applying $U^\dagger$ at the end, which only affects $\ket{\phi}$ (and $\ket{\psi}$ remains unchanged.) Thus we can remove all gates outside the light cone yet the remaining circuit still generates $\ket{\psi}$.

As each node $v_{i+1}$ in the graph $G$ connects to at most 2 nodes in the previous layer $L_i$, the region contains at most $O(n \cdot 2^D)$ nodes, thereby also at most $O(n \cdot 2^D)$ gates in the circuit. As each gate can be fully specified by $O(1)$ real parameters, the circuit has at most $O(n \cdot 2^D)$ parameters. If $D \le n -\log n - \omega(1)$, the number of parameters is strictly smaller than $2^n-1$, the dimension of unit sphere $S$ for all possible $\ket{\psi}$. Then the circuit as a map from the parameters to the generated state, has its image a measure-zero subset of $S$. Since there are only finitely many layouts in a $D$-layer circuit, the union of these images still has measure 0, thus almost all $n$-qubit quantum states cannot be generated by circuits of depth $n-\log n - \omega(1)$.
\end{proof}

\section{Quantum state preparation via Binary search tree}
\label{sec:app_BST}
The framework of quantum state preparation is illustrated by a vector
\[    \nu=\left(\sqrt{0.03},\sqrt{0.07},\sqrt{0.15},\sqrt{0.05},\sqrt{0.1},\right.\left.\sqrt{0.3},\sqrt{0.2},\sqrt{0.1}\right)^T \in\mathbb{C}^8.
\]
The corresponding quantum state is
a $3$-qubit quantum state 
\begin{align*}
    |\psi_\nu\rangle = & \sqrt{0.03}|000\rangle+\sqrt{0.07}|001\rangle+\sqrt{0.15}|010\rangle+\sqrt{0.05}|011\rangle\\
    & +\sqrt{0.1}|100\rangle+\sqrt{0.3}|101\rangle+\sqrt{0.2}|110\rangle+\sqrt{0.1}|111\rangle.
\end{align*}
The amplitudes of $|\psi_v\rangle$ are stored in the leaf nodes of the corresponding Binary Search Tree. Every internal node stores the square root of sum of squares of its child nodes. The root node stores the $\ell_2$-norm of the vector.

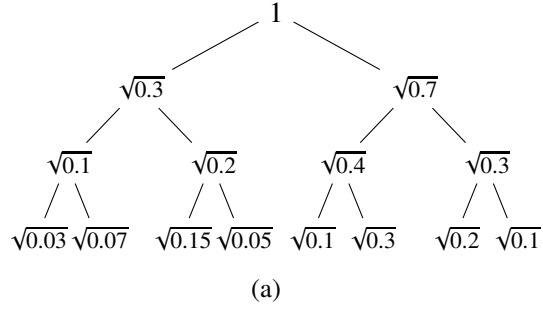
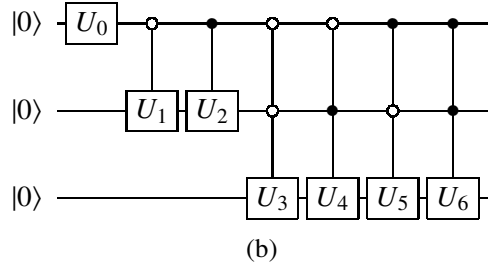
\begin{figure}[!ht]
\centering
\subfloat[]{{
    \centering
    \begin{tikzpicture} [level distance=1cm,
        level 1/.style={sibling distance=3.6cm},
        level 2/.style={sibling distance=1.9cm},
        level 3/.style={sibling distance=0.8cm},]
        \node {$1$}
            child {node[scale=0.8] {$\sqrt{0.3}$}
                child {node[scale=0.8] {$\sqrt{0.1}$} child{node[scale=0.8] {$\sqrt{0.03}$}} child{node[scale=0.8]{$\sqrt{0.07}$}}}
                child {node[scale=0.8] {$\sqrt{0.2}$} child{node[scale=0.8] {$\sqrt{0.15}$}}child{node[scale=0.8]{$\sqrt{0.05}$}}}
            }
            child {node[scale=0.8] {$\sqrt{0.7}$}
                child {node[scale=0.8] {$\sqrt{0.4}$} child{node[scale=0.8] {$\sqrt{0.1}$}}child{node[scale=0.8]{$\sqrt{0.3}$}}}
                child {node[scale=0.8] {$\sqrt{0.3}$} child{node[scale=0.8] {$\sqrt{0.2}$}}child{node[scale=0.8]{$\sqrt{0.1}$}}}
            };
    \end{tikzpicture}
}}    \label{fig:bst}
\hfill
\subfloat[]{{
\centerline{
\Qcircuit @C=.3em @R=1.6em {
&  &  &  &  &  &  &  &  \\
\lstick{\ket{0}} & \gate{U_0} & \ctrlo{1}  & \ctrl{1} & \ctrlo{1} & \ctrlo{1} & \ctrl{1} & \ctrl{1} & \qw \\
\lstick{\ket{0}} & \qw & \gate{U_1} & \gate{U_2} & \ctrlo{1} & \ctrl{1} & \ctrlo{1} & \ctrl{1} & \qw\\
\lstick{\ket{0}} & \qw & \qw & \qw & \gate{U_3} & \gate{U_4} &\gate{U_5} & \gate{U_6} &\qw \\
}
}}}
\label{fig:circuit_psi}
\caption{(a) Binanry search tree for vector $\nu=\left(\sqrt{0.03},\sqrt{0.07},\sqrt{0.15},\sqrt{0.05},\sqrt{0.1},\sqrt{0.3},\sqrt{0.2},\sqrt{0.1}\right)^T \in\mathbb{C}^8$. (b) The quantum circuit to prepare the 3-qubit state $|\psi_\nu \rangle$. The single-qubit gates used in this circuit are defined as $U_i = R_y(2\theta_i)$ for $i\in\{0,1,\ldots,6\}$. The value of $\theta_i$ is shown as follow : $\theta_0 = \arccos\left({\sqrt{0.3/1}}\right)$, $\theta_1 = \arccos\left({\sqrt{0.1/0.3}}\right)$, $\theta_2 = \arccos\left({\sqrt{0.4/0.7}}\right)$, $\theta_3 = \arccos\left({\sqrt{0.03/0.1}}\right)$, $\theta_4 = \arccos\left({\sqrt{0.15/0.2}}\right)$, $\theta_5 = \arccos\left({\sqrt{0.1/0.4}}\right)$, $\theta_6 = \arccos\left({\sqrt{0.2/0.3}}\right)$. 
}
\label{fig:framework}
\end{figure}


Based on the Binary Search Tree in Figure \ref{fig:framework}(a), the QSP circuit can be designed layer-by-layer and branch-by-branch, as in Figure \ref{fig:framework}(b).

\section{Implementations of tasks in Eq. \eqref{eq:task1} and Eq. \eqref{eq:alpha}}
\label{sec:2tasks}

 The first task (Eq. \eqref{eq:task1}) can be completed by the combination of the circuit in Figure \ref{fig:f_circuit}, a fact formalized as Lemma \ref{lem:f_circuit}, which can be easily verified.
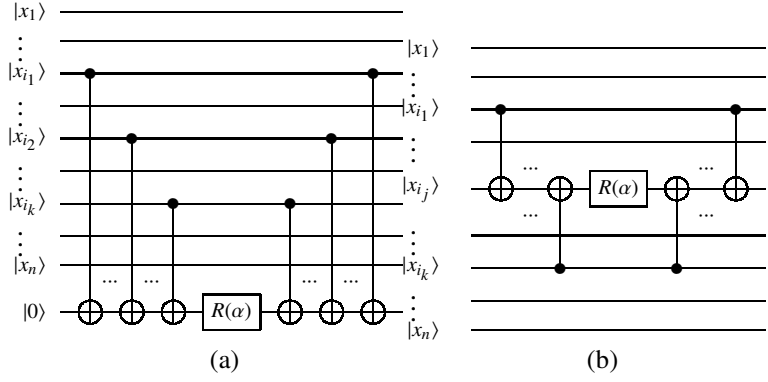
\begin{figure}[ht]
\centering
\subfloat[]{
\centering
\Qcircuit @C=.6em @R=1em {
\lstick{\scriptstyle \ket{x_1}}  & \qw & \qw & \qw & \qw & \qw & \qw & \qw& \qw\\
\lstick{\scriptstyle\vdots~~~}  & \qw  & \qw & \qw & \qw & \qw & \qw & \qw & \qw\\
\lstick{\scriptstyle\ket{x_{i_1}}}  & \ctrl{7} & \qw & \qw & \qw & \qw & \qw & \ctrl{7}& \qw\\
\lstick{\scriptstyle\vdots~~~}  & \qw & \qw & \qw & \qw & \qw & \qw & \qw& \qw\\
\lstick{\scriptstyle\ket{x_{i_2}}}  & \qw & \ctrl{5} & \qw & \qw & \qw & \ctrl{5} & \qw& \qw\\
\lstick{\scriptstyle\vdots~~~}  & \qw & \qw & \qw & \qw & \qw & \qw & \qw & \qw\\
\lstick{\scriptstyle\ket{x_{i_k}}}  & \qw & \qw & \ctrl{3} & \qw & \ctrl{3} & \qw & \qw& \qw\\
\lstick{\scriptstyle\vdots~~~}  & \qw & \qw & \qw & \qw & \qw & \qw & \qw & \qw\\
\lstick{\scriptstyle\ket{x_n}}  & \qw & \dstick{\scriptstyle\cdots~~~~~~}\qw & \dstick{\scriptstyle\cdots~~~~~~}\qw & \qw & \qw &\dstick{\scriptstyle\cdots~~~~~~} \qw & \dstick{\scriptstyle\cdots~~~~~~}\qw & \qw\\
\lstick{\scriptstyle\ket{0}}  &  \targ & \targ & \targ & \gate{ \scriptstyle R(\alpha)}& \targ &\targ & \targ & \qw\\
}
}
~~~~~
\subfloat[]{
\Qcircuit @C=.6em @R=1em {
\lstick{\scriptstyle\ket{x_1}} & & \qw & \qw & \qw & \qw & \qw & \qw & \qw& \qw\\
\lstick{\scriptstyle\vdots~~~} & & \qw  & \qw & \qw & \qw & \qw & \qw & \qw & \qw\\
\lstick{\scriptstyle\ket{x_{i_1}}} & & \ctrl{2} & \qw & \qw & \qw & \qw & \qw & \ctrl{2}& \qw\\
\lstick{\scriptstyle\vdots~~~} & & \qw & \qw & \qw & \qw & \qw & \qw & \qw& \qw\\
\lstick{\scriptstyle\ket{x_{i_j}}} & & \targ & \ustick{\scriptstyle\cdots}\qw & \targ &  \gate{\scriptstyle R(\alpha)} & \targ &\ustick{\scriptstyle\cdots} \qw & \targ& \qw\\
\lstick{\scriptstyle\vdots~~~} & & \qw & \ustick{\scriptstyle\cdots}\qw & \qw & \qw & \qw &\ustick{\scriptstyle\cdots} \qw & \qw & \qw\\
\lstick{\scriptstyle\ket{x_{i_k}}} & & \qw & \qw & \ctrl{-2} & \qw & \ctrl{-2} & \qw & \qw& \qw\\
\lstick{\scriptstyle\vdots~~~} & & \qw & \qw & \qw & \qw & \qw & \qw & \qw & \qw\\
\lstick{\scriptstyle\ket{x_n}} & & \qw & \qw & \qw & \qw & \qw & \qw & \qw & \qw\\
}
}
\caption{A quantum circuit to implement transformation $\ket{x_1x_2\cdots x_n} \to e^{i \langle s,x\rangle \alpha}\ket{x_1x_2\cdots x_n}$ with string $s=s_1s_2\cdots s_n\in\{0,1\}^n$ being the indicator vector of set $S = \{i_1, \ldots, i_k\} \subseteq [n]$, i.e.  $s_j=1$ if $j\in S$ and $s_j=0$ otherwise. (a) A quantum circuit with an ancillary qubit initialized as $\ket{0}$. The index set of controlled qubit of CNOT gates is $S$.  (b) A quantum circuit without ancillary qubits, where $i_{j}$ is an arbitrary element in $S$. The index set of the controlled qubit of CNOT gates is $S-\{i_j\}$ and the index of target qubit is $i_j$.}\label{fig:f_circuit}
\end{figure}
\begin{lemma}\label{lem:f_circuit}
Let $x = x_1 x_2 \ldots x_n$, $s = s_1 s_2 ... s_n \in\{0,1\}^n$, and $S = \{i_1, \ldots, i_k\} = \{i: s_i = 1\} \subseteq [n]$. The circuits in Figure \ref{fig:f_circuit} realize the following transformation:
\[\ket{x_1x_2\cdots x_n} \to e^{i\langle s,x\rangle \alpha}\ket{x_1 x_2 \cdots x_n}.\]
\end{lemma}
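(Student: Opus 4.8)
The plan is to verify both circuits by tracking an arbitrary computational basis state $\ket{x}=\ket{x_1\cdots x_n}$ through the gates, exploiting two elementary facts: that the inner product is a parity, $\langle s,x\rangle=\bigoplus_{i\in S}x_i$, and that the phase-shift gate acts diagonally as $R(\alpha)\ket{b}=e^{i\alpha b}\ket{b}$ for $b\in\{0,1\}$. Consequently, applying $R(\alpha)$ to any qubit that currently holds the single bit $\langle s,x\rangle$ produces exactly the scalar factor $e^{i\alpha\langle s,x\rangle}$. The whole argument then reduces to checking that each circuit computes $\langle s,x\rangle$ into the qubit on which $R(\alpha)$ acts, and that the post-rotation gates cleanly uncompute it.

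For circuit (a) I would start from $\ket{x}\ket{0}$ on the input register plus ancilla. The first block consists of CNOT gates each controlled by some $x_i$ with $i\in S$ and targeting the ancilla; since CNOTs into a common target accumulate XOR, the ancilla evolves to $\ket{\bigoplus_{i\in S}x_i}=\ket{\langle s,x\rangle}$ while the input register is untouched. Applying $R(\alpha)$ on the ancilla contributes $e^{i\alpha\langle s,x\rangle}$, and the second (identical) block flips the ancilla by the same parity, returning it to $\ket{0}$ because $\langle s,x\rangle\oplus\langle s,x\rangle=0$. Discarding the restored ancilla yields $\ket{x}\mapsto e^{i\alpha\langle s,x\rangle}\ket{x}$.

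For circuit (b) I would fix any $i_j\in S$ and begin with $\ket{x}$. The CNOT gates controlled by $\{x_i:i\in S-\{i_j\}\}$ and targeting qubit $i_j$ replace $x_{i_j}$ by $x_{i_j}\oplus\bigoplus_{i\in S-\{i_j\}}x_i=\bigoplus_{i\in S}x_i=\langle s,x\rangle$, leaving all other qubits fixed because every control index is distinct from the target $i_j$. Applying $R(\alpha)$ on qubit $i_j$ then contributes $e^{i\alpha\langle s,x\rangle}$, and the mirrored block of CNOTs XORs the same value back in, restoring qubit $i_j$ to $x_{i_j}$. Hence $\ket{x}\mapsto e^{i\alpha\langle s,x\rangle}\ket{x}$, as claimed.

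Both verifications are routine, so there is no substantive obstacle; the only point deserving a moment of care is confirming that a \emph{single} $R(\alpha)$ gate realizes the desired phase, which holds precisely because $\langle s,x\rangle\in\{0,1\}$ makes the exponent either $0$ or $\alpha$, matching the two diagonal entries of $R(\alpha)$. The correctness of the uncomputation in each case is immediate from the involutivity of CNOT and the fact that the intervening diagonal phase gate does not alter the parity bit.
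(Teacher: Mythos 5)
Your proposal is correct and matches the paper's intent exactly: the paper simply asserts that Lemma \ref{lem:f_circuit} ``can be easily verified,'' and your verification --- CNOTs accumulate the parity $\bigoplus_{i\in S}x_i=\langle s,x\rangle$ into the target qubit, the diagonal gate $R(\alpha)$ contributes the factor $e^{i\alpha\langle s,x\rangle}$, and the mirrored CNOTs uncompute the parity --- is precisely the routine check being alluded to, carried out for both the ancilla-assisted circuit (a) and the in-place circuit (b).
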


The second task (Eq. \eqref{eq:alpha}) is accomplished as follows. Based on Lemma \ref{lem:f_circuit}, one can implement transformation Eq. \eqref{eq:task1} by using $2^n-1$ circuits with parameters $\alpha_s$ for all $s\in\Bn-\{0^n\}$ in Figure \ref{fig:f_circuit}.
To determine parameters $\alpha_s$ in Eq. \eqref{eq:alpha}, the second task is essentially asking whether the $(2^n-1)\times (2^n-1)$ matrix $A$ defined by 
\begin{align}\label{eq:matrix-A}
    A(x,s) = \langle x,s\rangle, \quad x,s\in \Bn-\{0^n\}
\end{align}
is invertible. The answer is affirmative and the inverse is given by the following lemma, which can be easily verified \cite{welch2014efficient2}.
\begin{lemma}
    The matrix $A$ defined as in Eq. \eqref{eq:matrix-A} is invertible, and its inverse is $2^{1-n}(2A - \bf{J})$, where ${\bf J}\in\mathbb{R}^{(2^{n}-1)\times (2^{n}-1)}$ is the all-one matrix. 
    \label{lem:fwt}
\end{lemma}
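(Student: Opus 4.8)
The plan is to verify the proposed formula by a direct matrix multiplication: setting $B \defeq 2^{1-n}(2A-\mathbf{J})$, I will check that $AB = I$, where $I$ is the $(2^n-1)\times(2^n-1)$ identity. Since $A$ is square, a right inverse is automatically a two-sided inverse, so this single identity certifies both that $A$ is invertible and that $A^{-1}=B$. Expanding $AB$ and clearing the factor $2^{1-n}$, the whole claim reduces to the one matrix identity $2A^2 - A\mathbf{J} = 2^{n-1} I$ over matrices indexed by $\Bn-\{0^n\}$.

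The key device is to linearize each entry of $A$ through the additive character of $\mathbb{F}_2^n$, writing $A(x,s)=\langle x,s\rangle = \tfrac12\big(1-(-1)^{\langle x,s\rangle}\big)$ (the $\mathbb{F}_2$-inner product read as the real number $0$ or $1$), and then to exploit the orthogonality relation $\sum_{s\in\Bn}(-1)^{\langle a,s\rangle}=2^n\,[a=0^n]$. Because all sums here run over $s\neq 0^n$, I would first record the punctured versions $\sum_{s\neq 0^n}(-1)^{\langle a,s\rangle}=2^n[a=0^n]-1$ and $\sum_{s\neq 0^n}1=2^n-1$, which will be applied repeatedly.

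First I would compute $(A^2)(x,z)=\sum_{s\neq 0^n}\langle x,s\rangle\langle s,z\rangle$ by expanding the product of the two linearized factors into four terms; using bilinearity in the exponent, $(-1)^{\langle x,s\rangle}(-1)^{\langle z,s\rangle}=(-1)^{\langle x\oplus z,s\rangle}$, and applying the punctured orthogonality sums, the single-character terms collapse to constants and the joint term produces $2^n[x=z]-1$. The corrections combine to give $(A^2)(x,z)=2^{n-2}\big(1+[x=z]\big)$, i.e. $A^2=2^{n-2}(\mathbf{J}+I)$. Next, $(A\mathbf{J})(x,z)=\sum_{s\neq 0^n}\langle x,s\rangle=2^{n-1}$ for every $x,z$ (using $x\neq 0^n$), so $A\mathbf{J}=2^{n-1}\mathbf{J}$. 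Substituting both, $2A^2-A\mathbf{J}=2^{n-1}(\mathbf{J}+I)-2^{n-1}\mathbf{J}=2^{n-1}I$, which is exactly the target identity; multiplying by $2^{1-n}$ gives $AB=I$.

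The only genuinely computational step is the evaluation of $A^2$ via the character-sum expansion, and the place that demands care is the bookkeeping of the ``$-1$'' terms arising from excluding $s=0^n$ from each sum. These corrections must cancel exactly for the right-hand side to come out as a clean multiple of the identity, so a sign error or off-by-one in the punctured sums is the natural pitfall; everything else is routine substitution and does not depend on $x,z$ beyond the indicator $[x=z]$.
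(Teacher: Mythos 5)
Your proposal is correct: the punctured orthogonality sums are applied with the right signs, the computations $A^2 = 2^{n-2}(\mathbf{J}+I)$ and $A\mathbf{J}=2^{n-1}\mathbf{J}$ both check out, and together they give $A\cdot 2^{1-n}(2A-\mathbf{J})=I$ as claimed. The paper itself supplies no proof for this lemma (it only asserts the identity "can be easily verified" and cites a reference), and your character-sum argument is exactly the standard verification that assertion defers to, so there is nothing to reconcile.
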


{This gives a way to compute the parameters $\alpha_s$ efficiently on a classical computer.} 
\begin{lemma}
    For QSP problem, given a unit vector $v=(v_0,v_1,\cdots,v_{2^n-1})^T\in\mathbb{C}^{2^n}$, the values of $\{\alpha_s: s\in\{0,1\}^n-\{0^n\}\}$ in Eq. \eqref{eq:alpha} can be calculated on a classical computer using $O(n2^n)$ time and $O(n2^n)$ space.
\end{lemma}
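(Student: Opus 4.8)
The plan is to recognize the defining system Eq.~\eqref{eq:alpha} as a single matrix--vector equation $A\alpha = \theta$, where $A$ is the $(2^n-1)\times(2^n-1)$ matrix from Eq.~\eqref{eq:matrix-A}, the vector $\theta = (\theta(x))_{x\neq 0^n}$ collects the target phase angles, and $\alpha = (\alpha_s)_{s\neq 0^n}$ is the unknown. By Lemma~\ref{lem:fwt}, $A$ is invertible with $A^{-1} = 2^{1-n}(2A - \mathbf{J})$, so the solution is $\alpha = A^{-1}\theta$, i.e.
\begin{equation*}
\alpha_s = 2^{1-n}\sum_{x\neq 0^n}\bigl(2\langle s,x\rangle - 1\bigr)\,\theta(x),\qquad s\in\{0,1\}^n-\{0^n\}.
\end{equation*}
Evaluating this formula directly for every $s$ would cost $\Theta(4^n)$ operations, so the crux is to exploit the structure of the kernel $2\langle s,x\rangle-1$.

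First I would rewrite the kernel using the identity $2\langle s,x\rangle - 1 = -(-1)^{\langle s,x\rangle}$, which exhibits the sum above as (minus) a Walsh--Hadamard transform. Concretely, I extend $\theta$ to a vector $\hat\theta\in\mathbb{R}^{2^n}$ by declaring $\hat\theta(0^n)=0$ (the $x=0^n$ term already contributes nothing to $A^{-1}\theta$), and let $H$ denote the $2^n\times 2^n$ Walsh--Hadamard matrix $H(s,x)=(-1)^{\langle s,x\rangle}$. A one-line check then shows $\alpha_s = -2^{1-n}\,(H\hat\theta)_s$ for every $s\neq 0^n$, so computing all the $\alpha_s$ reduces to performing a single Walsh--Hadamard transform of the length-$2^n$ vector $\hat\theta$ and discarding the $s=0^n$ coordinate.

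Next I would invoke the Fast Walsh--Hadamard Transform: its radix-$2$ butterfly decomposition evaluates $H\hat\theta$ in $n$ levels, each level performing $2^{n-1}$ add/subtract butterflies, for a total of $O(n2^n)$ arithmetic operations, after which a scalar multiplication by $-2^{1-n}$ yields all the desired values. For the complexity bookkeeping, the working array holds $O(2^n)$ numbers; charging $O(n)$ bits per stored number gives total space $O(n2^n)$, and the $O(n2^n)$ butterfly operations give time $O(n2^n)$, matching the claimed bounds.

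The main obstacle is purely the quadratic cost of the naive inversion; everything hinges on spotting that the inverse supplied by Lemma~\ref{lem:fwt} is, up to the affine shift $2A-\mathbf{J}$, exactly the Hadamard matrix, so that the FWHT butterfly recursion applies and collapses the cost from $\Theta(4^n)$ to $O(n2^n)$. A secondary point to treat carefully is the bookkeeping around the excluded index $0^n$, which the zero-padding $\hat\theta(0^n)=0$ resolves cleanly, together with the convention that each real amplitude/angle is stored in $O(n)$ bits so that the factor of $n$ in the stated space and time bounds is accounted for.
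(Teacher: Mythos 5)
Your core computation is correct and is exactly the paper's key step: recognizing via Lemma~\ref{lem:fwt} that $2A-\mathbf{J}$ is (up to sign and the excluded index $0^n$) the Walsh--Hadamard matrix, and invoking the fast Walsh--Hadamard transform to collapse the naive $\Theta(4^n)$ inversion to $O(n2^n)$. Your sign identity $2\langle s,x\rangle-1=-(-1)^{\langle s,x\rangle}$ and the zero-padding $\hat\theta(0^n)=0$ are a clean and slightly more explicit way of writing what the paper asserts in one line, and the complexity bookkeeping matches.

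There is, however, a gap between what you prove and what the lemma states. The lemma takes as input the amplitude vector $v$, not the phase vector $\theta$; you begin from $\theta$ as if it were given. In the paper's proof, two preprocessing steps precede the FWHT: first, the rotation angles of the uniformly controlled gates $V_1,\ldots,V_n$ are computed from $v$ via the binary-tree recursion (as in the Grover--Rudolph construction), in $O(2^n)$ time and space; second, each $V_j$ is decomposed according to Eq.~\eqref{eq:UCG} into three diagonal unitaries plus fixed single-qubit gates, which is what actually produces the angle vectors $\boldsymbol\theta$ for each $\Lambda_j$, again in total $O(2^n)$ time and space. Only then is the FWHT applied (once per diagonal matrix, summing to $O(n2^n)$ overall). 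These omitted steps are routine and do not threaten the claimed bounds, but without them your argument proves a statement about inverting Eq.~\eqref{eq:alpha} for a given $\theta$ rather than the lemma as stated, whose input is $v$.
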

\begin{proof}
We calculate these $\alpha_s$ in three steps. 
\begin{enumerate}
    \item Our QSP circuit consists of $n$ UCGs $V_1,V_2,\cdots,V_n$ as in Figure \ref{fig:QSP_circuit}(a). We calculate all parameters of $V_1,\ldots,V_n$ in time $O(2^n)$ and in space $O(2^n)$ using binary trees    \cite{grover2002creating,kerenidis2017quantum}. 
    \item Secondly, we decompose all UCGs into diagonal unitary matrices and some single-qubit operations according to Eq. \eqref{eq:UCG}. As in the proof of Lemma \ref{lem:lamda2circuit}, we decompose every single-qubit gate in $V_j$ into $R_z$ gates, $S$ gates and $H$ gates in time $O(1)$ and space $O(1)$, and UCG $V_j$ can be decomposed into 3 diagonal unitary matrices and two $H$ gates and $S$ gates in time $O(2^j)$ and space $O(2^j)$. Hence, the total time and space of this step are $\sum_{j=1}^nO(2^j)=O(2^n)$.
    \item Thirdly, for every diagonal unitary matrix $\Lambda_j$ with diagonal element $e^{i\theta(x)}$ for all $x\in\{0,1\}^j-\{0^j\}$, we calculate parameters $\alpha_s$ in Eq. \eqref{eq:alpha}. Let 
    \begin{align*}
        &\boldsymbol \alpha\defeq(\alpha_{0\cdots 01},\alpha_{0\cdots 10},\ldots,\alpha_{1\cdots 11})^T\in\mathbb{R}^{2^n-1},\\
        & \boldsymbol \theta\defeq(\theta(0\cdots 01),\theta(0\cdots 10),\ldots,\theta(1\cdots 11))^T\in\mathbb{R}^{2^n-1}.
    \end{align*}
Based on Eq. \eqref{eq:alpha} and lemma \ref{lem:fwt}, we have $\boldsymbol \alpha=2^{1-n}(2A - \bf{J})\boldsymbol \theta$. Notice that $(2A - \bf{J})\boldsymbol \theta$ is just the Walsh-Hadamard transform on $\boldsymbol \theta$. Thus the vector $\boldsymbol\alpha$ can be calculated by fast Walsh-Hadamard transform algorithm \cite{fino1976unified}, which costs $O(n2^n)$ time and $O(n2^n)$ space.
\end{enumerate} 
Adding these costs up, we see that the values of $\alpha_s$ can be calculated in $O(n2^n)$ time and $O(n2^n)$ space.
\end{proof}

\section{Warm-up example: Implement $\Lambda_4$ using 8 ancillary qubits}
\label{sec:warm-up}
In this section, we will show how to implement $\Lambda_4$ with $8$ ancillary qubits based on a Gray code, which can help to understand the general case.
Our construction of quantum circuit for $\Lambda_4$ is shown in Figure \ref{fig:example_Lambda_4}. 
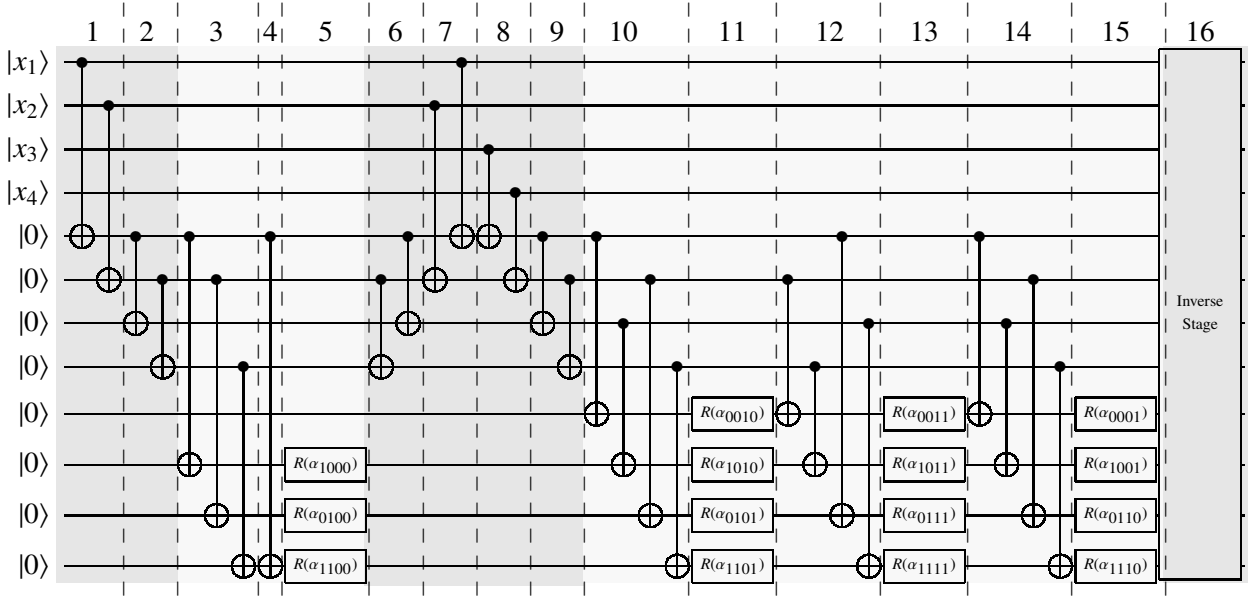
\begin{figure*}
\centerline{
\begin{tabular}{c}
\begin{pgfpicture}{0em}{0em}{0em}{0em}
\color{lgray}
\pgfrect[fill]{\pgfpoint{0.0em}{-0.5em}}{\pgfpoint{4.15em}{-18.5em}}
\color{llgray}
\pgfrect[fill]{\pgfpoint{4.15em}{-0.5em}}{\pgfpoint{8em}{-18.5em}}
\color{lgray}
\pgfrect[fill]{\pgfpoint{10.5em}{-0.5em}}{\pgfpoint{7.5em}{-18.6em}}
\color{llgray}
\pgfrect[fill]{\pgfpoint{18em}{-0.5em}}{\pgfpoint{22em}{-18.5em}}
\color{lgray}
\pgfrect[fill]{\pgfpoint{37.8em}{-0.5em}}{\pgfpoint{3em}{-18.5em}}
\end{pgfpicture}
\Qcircuit @C=0.1em @R=0.6em {
 &  & ~~~1 & \barrier[-0.4 em]{12}  & ~~~2  & \barrier[-0.4 em]{12}  &  &  3& \barrier[-0.4 em]{12}  \barrier[0.4 em]{12}& 4 & \barrier[-0.4 em]{12}  5 &  ~~~~6  & \barrier[-0.4 em]{12}  & ~~~7 & \barrier[-0.4 em]{12}  & ~~~~8 & \barrier[-0.4 em]{12}  & ~~~~9  & \barrier[-0.4 em]{12}  &  & 10  & \barrier[0.4 em]{12} &   & \barrier[-0.4em]{12} 11 &  & ~~~~12  & \barrier[0.4 em]{12}  &  & \barrier[-0.4 em]{12} 13 &   & ~~~14  & \barrier[0.4 em]{12}  &  \barrier[1.72 em]{12}  & 15  & 16 & \\
\lstick{\ket{x_1}} & \qw & \ctrl{4} & \qw &  \qw &  \qw &  \qw &  \qw &  \qw &  \qw &   \qw &  \qw &  \qw & \qw & \ctrl{4} & \qw & \qw & \qw & \qw & \qw & \qw & \qw & \qw & \qw &  \qw &  \qw &  \qw &  \qw &  \qw &  \qw &  \qw &  \qw &  \qw &  \qw & \multigate{11}{\scriptscriptstyle \text{Inverse}\atop \scriptscriptstyle\text{Stage}} & \qw\\
\lstick{\ket{x_2}} & \qw & \qw & \ctrl{4} & \qw &  \qw &  \qw &  \qw &  \qw &  \qw &  \qw &  \qw &  \qw & \ctrl{4} & \qw & \qw & \qw & \qw & \qw & \qw & \qw & \qw & \qw & \qw &  \qw &  \qw &  \qw &  \qw &  \qw &  \qw &  \qw &  \qw &  \qw &  \qw & \ghost{\scriptscriptstyle\text{Inverse}\atop \scriptscriptstyle\text{Stage}} & \qw\\
\lstick{\ket{x_3}} & \qw & \qw &  \qw &  \qw &  \qw &  \qw &  \qw &  \qw &  \qw &  \qw &  \qw &  \qw & \qw & \qw & \ctrl{2} & \qw & \qw & \qw & \qw & \qw & \qw & \qw & \qw &  \qw &  \qw &  \qw &  \qw &  \qw &  \qw &  \qw &  \qw &  \qw &  \qw & \ghost{ \scriptscriptstyle \text{Inverse}\atop \scriptscriptstyle\text{Stage}} & \qw\\
\lstick{\ket{x_4}} & \qw & \qw &  \qw &  \qw &  \qw &  \qw &  \qw &  \qw &  \qw &  \qw &  \qw &  \qw & \qw & \qw & \qw & \ctrl{2} & \qw & \qw & \qw & \qw & \qw & \qw & \qw &  \qw &  \qw &  \qw &  \qw &  \qw &  \qw &  \qw &  \qw &  \qw &  \qw & \ghost{\scriptscriptstyle\text{Inverse}\atop \scriptscriptstyle\text{Stage}} & \qw\\
\lstick{\ket{0}} & \qw & \targ & \qw & \ctrl{2} &  \qw & \ctrl{5} &  \qw &  \qw & \ctrl{7}&  \qw &  \qw & \ctrl{2} & \qw & \targ & \targ & \qw &\ctrl{2} & \qw & \ctrl{4} & \qw & \qw & \qw & \qw &  \qw &  \qw &  \ctrl{6} &  \qw &  \qw &  \ctrl{4} &  \qw &  \qw &  \qw &  \qw & \ghost{\scriptscriptstyle\text{Inverse}\atop \scriptscriptstyle\text{Stage}} & \qw\\
\lstick{\ket{0}} & \qw & \qw & \targ &  \qw & \ctrl{2} &  \qw & \ctrl{5} &  \qw &  \qw &  \qw & \ctrl{2} & \qw & \targ & \qw & \qw & \targ & \qw & \ctrl{2} & \qw & \qw & \ctrl{5} & \qw & \qw &  \ctrl{3} &  \qw &  \qw &  \qw &  \qw &  \qw &  \qw &  \ctrl{5} &  \qw &  \qw & \ghost{\scriptscriptstyle\text{Inverse}\atop \scriptscriptstyle\text{Stage}} & \qw\\
\lstick{\ket{0}} & \qw & \qw &  \qw & \targ &  \qw &  \qw &  \qw &  \qw &  \qw &  \qw &  \qw & \targ & \qw & \qw & \qw & \qw & \targ & \qw & \qw & \ctrl{3} & \qw & \qw & \qw &  \qw &  \qw &  \qw &  \ctrl{5} &  \qw &  \qw &  \ctrl{3} &  \qw &  \qw &  \qw & \ghost{\scriptscriptstyle\text{Inverse}\atop \scriptscriptstyle\text{Stage}} & \qw\\
\lstick{\ket{0}} & \qw & \qw &  \qw &  \qw &  \targ & \qw &  \qw & \ctrl{4} &  \qw &  \qw & \targ &  \qw & \qw & \qw & \qw & \qw & \qw & \targ & \qw & \qw & \qw & \ctrl{4} & \qw &  \qw &  \ctrl{2} &  \qw &  \qw &  \qw &  \qw &  \qw &  \qw &  \ctrl{4} &  \qw & \ghost{\scriptscriptstyle\text{Inverse}\atop \scriptscriptstyle\text{Stage}} & \qw\\
\lstick{\ket{0}} & \qw & \qw &  \qw &  \qw &  \qw &  \qw &  \qw &  \qw &  \qw &   \qw &  \qw & \qw & \qw & \qw & \qw & \qw & \qw & \qw & \targ & \qw & \qw & \qw & \gate{\scriptscriptstyle R(\alpha_{0010})} &  \targ &  \qw &  \qw &  \qw &  \gate{\scriptscriptstyle R(\alpha_{0011})} &  \targ &  \qw &  \qw &  \qw & \gate{\scriptscriptstyle R(\alpha_{0001})} & \ghost{\scriptscriptstyle\text{Inverse}\atop \scriptscriptstyle\text{Stage}} & \qw\\
\lstick{\ket{0}} & \qw & \qw &  \qw &  \qw &  \qw & \targ &  \qw &  \qw &  \qw & \gate{{\scriptscriptstyle R(\alpha_{1000})}}& \qw & \qw & \qw & \qw & \qw & \qw & \qw & \qw & \qw & \targ & \qw & \qw & \gate{\scriptscriptstyle R(\alpha_{1010})} &  \qw &  \targ &  \qw &  \qw & \gate{\scriptscriptstyle R(\alpha_{1011})} &  \qw &  \targ &  \qw &  \qw & \gate{\scriptscriptstyle R(\alpha_{1001})} & \ghost{\scriptscriptstyle\text{Inverse}\atop \scriptscriptstyle\text{Stage}} & \qw\\
\lstick{\ket{0}} & \qw & \qw &  \qw &  \qw &  \qw &  \qw & \targ &  \qw &   \qw & \gate{\scriptscriptstyle R(\alpha_{0100})} &  \qw & \qw & \qw & \qw & \qw & \qw & \qw & \qw & \qw & \qw &\targ & \qw & \gate{\scriptscriptstyle R(\alpha_{0101})} &  \qw &  \qw &  \targ & \qw & \gate{\scriptscriptstyle R(\alpha_{0111})} &  \qw &  \qw &  \targ &  \qw & \gate{\scriptscriptstyle R(\alpha_{0110})} & \ghost{\scriptscriptstyle\text{Inverse}\atop \scriptscriptstyle\text{Stage}} &\qw\\
\lstick{\ket{0}} & \qw & \qw &  \qw &  \qw &   \qw &  \qw &  \qw &  \targ & \targ &  \gate{\scriptscriptstyle R(\alpha_{1100})} &  \qw & \qw & \qw & \qw & \qw & \qw & \qw & \qw & \qw & \qw & \qw & \targ & \gate{\scriptscriptstyle R(\alpha_{1101})} &  \qw &  \qw &  \qw &  \targ & \gate{\scriptscriptstyle R(\alpha_{1111})} & \qw &  \qw &  \qw &  \targ & \gate{\scriptscriptstyle R(\alpha_{1110})} & \ghost{\scriptscriptstyle\text{Inverse}\atop \scriptscriptstyle\text{Stage}} & \qw\\
}
\end{tabular}
}
    \caption{Implementation of $\Lambda_4$ with 8 ancillary qubits. The first 4 qubits form the input register, the next 4 qubits form the copy register and the last 4 qubits form the phase register. Step 1-2 are Prefix Copy Stage; Step 3-5 are Gray Initial Stage; Step 6-9 are Suffix Copy Stage; Step 10-15 are Gray Path Stage; Step 16 is Inverse stage. All parameters $\alpha_s$ of phase gate $R(\alpha_s)$ for $s\in\{0,1\}^4$ are determined by Eq. \eqref{eq:alpha}. Step 16 consists of inverse quantum circuits of Step 14,12,10,9,8,7,6,4,3,2 and 1 in that order. Step 1-15 are quantum circuits of depth 1 and Step 16 is quantum circuits of depth 11.}
    \label{fig:example_Lambda_4}
\end{figure*}
All the parameters $\alpha_s$ for $s\in\{0,1\}^4$ are defined in Eq. \eqref{eq:alpha}. In Figure \ref{fig:example_Lambda_4}, the first four qubits initialized as $\ket{x}:=\ket{x_1x_2 x_3 x_4}$ constitute the input register; the next four qubits form the copy register; and the last four qubits form the phase register. All qubits in the copy and the phase register are initialized to $|0\rangle$.

For any 4-bit string $s=s_1s_2s_3s_4\in\{0,1\}^4$, $s_1s_2$ is the prefix and $s_3s_4$ is the suffix.
In Step 1-2, we make two copies of prefix $\ket{x_1 },\ket{x_2}$ into the copy register, i.e.,
\begin{align*}
    \underbrace{\ket{x_1x_2x_3x_4}}_{\text{input register}}\underbrace{\ket{0000}}_{\text{copy register}}\to\ket{x_1x_2x_3x_4}\ket{x_1x_2x_1x_2}. &\text{(Step 1-2)}
\end{align*}
In Step 3-5, we generate all $4$-bit strings whose suffixes are all $00$ in the phase register by CNOT gates.
Namely, we realize the following transformation
\begin{align*}
    &\overbrace{\ket{x_1x_2x_3x_4}}^{\text{input register}}\overbrace{\ket{x_1x_2x_1x_2}}^{\text{copy register}}\overbrace{\ket{0000}}^{\text{phase register}}\\
    \to &\ket{x_1x_2x_3x_4}\ket{x_1x_2x_1x_2}\\
    &\ket{\langle0000,x\rangle,\langle1000,x\rangle,\langle0100,x\rangle,\langle 1100,x\rangle} &  \text{(Step 3-4)}\\
    \to& e^{i\sum_{s\in\{0,1\}^2}\langle s00,x\rangle \alpha_{s00}}\ket{x_1x_2x_3x_4}\ket{x_1x_2x_1x_2}\\
    &\ket{\langle0000,x\rangle,\langle1000,x\rangle,\langle0100,x\rangle,\langle 1100,x\rangle}. &  \text{(Step 5)}
\end{align*}

In Step 6-9, we transform the copy register to initial state and make two copies of suffix $x_3x_4$:
\begin{align*}
&\overbrace{\ket{x_1x_2x_3x_4}}^{\text{input register}}\overbrace{\ket{x_1x_2x_1x_2}}^{\text{copy register}}\\
 \to&\ket{x_1x_2x_3x_4}\ket{0000} & \text{(Step 6-7)}\\
 \to& \ket{x_1x_2x_3x_4}\ket{x_3x_4x_3x_4}. & \text{(Step 8-9)}
\end{align*}
Up to now, we have generated all prefixes for suffix ``00'', and next we  need to generate all the other $4$-bit strings. 
In order to reduce the number of CNOT gates, we consider two forms of $2$-bit Gray code. The $1$-Gray code and $2$-Gray code starting from $00$ are
\[00, 10, 11, 01 \text{~~and~~} 00, 01, 11, 10.\] 
If in $1$-Gray code and $2$-Gray code, we list all the bits changed between adjacent strings, we get two lists: $1,2,1$ and $2,1,2$.
To obtain parallelism, in the first and second qubit in the phase register we generate suffixes $00,10,11,01$ ($1$-Gray code) in that order.
In the third and fourth qubit in the phase register we generate suffixes $00,01,11,10$ ($2$-Gray code) in that order.
In Step 10-15, we implement the following transformation
\begin{align*}
& e^{i\sum_{s\in\{0,1\}^2} \langle s00,x\rangle\alpha_{s00}}\ket{x_1x_2x_3x_4}\ket{x_3x_4x_3x_4}\\
&\ket{\langle0000,x\rangle,\langle1000,x\rangle,\langle0100,x\rangle,\langle1100,x\rangle} \\    
 \to& e^{i\sum_{s\in\{0,1\}^4-\{0^4\}}\langle s,x\rangle\alpha_s}\ket{x_1x_2x_3x_4}\ket{x_3x_4x_3x_4}\\
 &\ket{\langle0001,x\rangle,\langle1001,x\rangle,\langle0110,x\rangle,\langle1110,x\rangle} &\text{(Step 10-15)}\\  
 =&e^{i\theta(x)}\ket{x_1x_2x_3x_4}\ket{x_3x_4x_3x_4}\\
 &\ket{\langle0001,x\rangle,\langle1001,x\rangle,\langle0110,x\rangle,\langle1110,x\rangle}. 
\end{align*}
Every step can be realized by a quantum circuit in depth 1.
Step 16 consists of inverse quantum circuits of Step 14,12,10,9,8,7,6,4,3,2 and 1 in order. The total depth of Step 16 is 11. It transforms copy register and phase register to their initial states. Therefore, it implements the following transformation
\begin{align*}
   & e^{i\theta(x)}\ket{x_1x_2x_3x_4}\ket{x_3x_4x_3x_4}\\
   &\ket{\langle0001,x\rangle,\langle1001,x\rangle,\langle0110,x\rangle,\langle1110,x\rangle} \\
    \to & e^{i\theta(x)}\ket{x}\ket{0000}\ket{0000} &\text{(Step 16)} 
\end{align*}
As discussed above, the quantum circuit in Figure \ref{fig:example_Lambda_4} is an implementation of $\Lambda_4$ with 8 ancillary qubits.

\section{Proof of Lemma \ref{lem:2D-array}}
\label{sec:2D-array}
{\noindent\bf Lemma \ref{lem:2D-array}}
\emph{
    Let $t = {\lfloor \log \frac{m}{2} \rfloor}$ and $\ell = 2^t$. The set $\Bn$ can be partitioned into a 2-dimensional array $\{s(j,k): j\in [\ell], k\in [2^n/\ell]\}$ of $n$-bit strings, satisfying that 
    \begin{enumerate}
        \item Strings in the first column $\{s(j,1): j\in [\ell]\}$ have the last $(n-t)$ bits being all 0, and strings in each row $\{s(j,k): k\in [2^n/\ell]\}$ share the same first $t$ bits.
        \item $\forall j\in [\ell], \forall k\in [2^n/\ell-1]$, $s(j,k)$ and $s(j,k+1)$ differ by 1 bit.
        \item For any fixed $k\in [2^n/\ell-1]$, and any $t' \in \{t+1,...,n\}$, there are at most $\big(\frac{m}{2(n-t)} + 1\big)$ many $j\in [\ell]$ s.t. $s(j,k)$ and $s(j,k+1)$ differ by the $t'$-th bit.
    \end{enumerate}
}
\begin{proof}
Consider each $n$-bit string as two parts, a $t$-bit prefix followed by an $(n-t)$-bit suffix. We let $\{s(j,1): j\in [\ell]\}$ run over all $\ell$ possible prefixes, and for each fixed $j\in [\ell]$, the collection of $\{s(j,k): k\in [2^n/\ell]\}$ run over all possible suffixes. Thus $\{s(j,k): j\in [\ell], k\in [2^n/\ell]\}$ form a partition of $\Bn$, and the first condition is satisfied. 

Now for the $j$-th set of suffixes $\{s(j,k):k\in [2^n/\ell]\}$, we identify it with $\B^{n-t}$, and apply the $(j',n-t)$-Gray code and Lemma \ref{lem:GrayCode} to it, where $j' = ((j-1) \ mod\ (n-t)) + 1\in \{1,...,n-t\}$. 
For any $k\in [2^n/\ell-1]$ and any $t'\in \{t+1,...,n\}$, let us see how many $j\in [\ell]$ have that $s(j,k)$ and $s(j,k+1)$ differ by bit $t'$. When $j$ runs over $[n-t]$, $s(j,k)$ and $s(j,k+1)$ differ by bit $t'$ exactly once. When $j$ runs over $\{n-t+1,...,2(n-t)$, $s(j,k)$ and $s(j,k+1)$ differ by bit $t'$ again exactly once. Repeating this we can see that  when $j$ runs over all $[\ell]$, $s(j,k)$ and $s(j,k+1)$ differ by bit $t'$ for at most $\lceil \ell/(n-t) \rceil \le m/2(n-t)+1$ times. 
\end{proof}

\section{Proof of Lemma \ref{lem:inverse}}
\label{sec:inverse}
{\noindent\bf Lemma \ref{lem:inverse}}
    \emph{The depth and size of the Inverse Stage are at most $O(\log m + 2^n /m)$ and $\frac{m}{2}+\frac{nm}{2}+m+2^{n}=2^{n}+\frac{3m+nm}{2}$. The effect of this stage is 
    \begin{equation*}
        \ket{x} \ket{x_{suf}} \ket{f_{[\ell],2^n/\ell}} \xrightarrow{U_{Inverse}}  \ket{x} \ket{0^{m/2}} \ket{0^{m/2}}.
    \end{equation*}}

\begin{proof}
    The depth is just the summation of the CNOT depth of the first four stages, which is $O\big(\log m + 2\log m + 2\log m + 2^n/\ell\big) = O\big(\log m + 2^n /m\big).$
    { The analyze of size is similar by adding up the sizes in previous stages.}
    The effect is shown as follows, which holds by Lemma \ref{lem:GrayPath}, Eq. \eqref{eq:suf-copy}, Eq. \eqref{eq:U1} and Eq. \eqref{eq:Uc1-effect}. 
    \begin{align*}
    &\ket{x} \ket{x_{suf}} \ket{f_{[\ell],2^n/\ell}} 
    \xrightarrow{U_{2}^{\dagger} \cdots U_{2^n/\ell}^\dagger}  \ket{x} \ket{x_{suf}} \ket{f_{[\ell],1}} 
     \xrightarrow{U_{copy,1} U_{copy,2}^\dagger} \\ 
     &\ket{x} \ket{x_{pre}} \ket{f_{[\ell],1}} 
     \xrightarrow{U_1^\dagger}  \ket{x} \ket{x_{pre}} \ket{0^{\frac{m}{2}}}
    \xrightarrow{U_{copy,1}^\dagger}  \ket{x} \ket{0^{\frac{m}{2}}} \ket{0^{\frac{m}{2}}}
    \end{align*}
    \end{proof}

\section{Proof of Lemma \ref{lem:puttingtogether_ancilla}}
\label{sec:puttingtogether_ancilla}
\noindent
{\bf Lemma \ref{lem:puttingtogether_ancilla}}
\emph{The circuit implements the operation in  Eq. \eqref{matrix:lambda_n} in depth $O(\log m + 2^n /m)$ and in size $3\cdot 2^{n}+nm+{\frac{7}{2}}m$.}
\begin{proof}
    For the depth, simply adding up the depth  and the size of the five stages gives the bound. Next we analyze the operation step by step as follows. The three registers are the input, copy and phase registers, respectively. 
    \begin{align*}
        &\ket{x}\ket{0^{m/2}}\ket{0^{m/2}} \\
        & \xrightarrow{U_{copy,1}} \ket{x}\ket{x_{pre}}\ket{0^{m/2}} & \text{(Eq. \eqref{eq:Uc1-effect})} \\
        & \xrightarrow{U_{GrayInit}}  e^{i \sum\limits_{j\in [\ell]} f_{j,1}(x) \alpha_{s(j,1)} } \ket{x} \ket{x_{pre}} \ket{f_{[\ell],1}} & \text{(Eq. \eqref{eq:UGI-effect})} \\
        & \xrightarrow{U_{copy,2}U_{copy,1}^\dagger} e^{i \sum\limits_{j\in [\ell]} f_{j,1}(x) \alpha_{s(j,1)} } \ket{x} \ket{x_{suf}} \ket{f_{[\ell],1}} & \text{(Eq. \eqref{eq:suf-copy})} \\
        & \xrightarrow{R_2 U_2} e^{i \sum\limits_{j\in [\ell]\atop k\in[2]} f_{j,k}(x) \alpha_{s(j,k)}} \ket{x} \ket{x_{suf}} \ket{f_{[\ell],2}} & \text{(Eq. \eqref{eq:GrayPath})} \\
        & \quad \vdots & \\
        & \xrightarrow{R_{\frac{2^n}{\ell}} U_{\frac{2^n}{\ell}}} e^{i \sum\limits_{j\in [\ell]\atop k\in [\frac{2^n}{\ell}]} f_{j,k}(x) \alpha_{s(j,k)} } \ket{x} \ket{x_{suf}} \ket{f_{[\ell],\frac{2^n}{\ell}}} & \text{(Eq. \eqref{eq:GrayPath})} \\
        & \qquad = e^{i\sum\limits_{s\in \{0,1\}^n} \langle x,s\rangle \alpha_{s}} \ket{x} \ket{x_{suf}} \ket{f_{[\ell],2^n/\ell}} & (\text{Lem \ref{lem:2D-array}} ) \\
        & \qquad = e^{i\theta(x)} \ket{x} \ket{x_{suf}} \ket{f_{[\ell],2^n/\ell}} & \text{(Eq. \eqref{eq:alpha})} \\
        & \xrightarrow{U_{Inverse}} e^{i\theta(x)} \ket{x}\ket{0^{m/2}} \ket{0^{m/2}} & \text{(Eq. \eqref{eq:inverse})} \\
        & \qquad = \Lambda_n \ket{x}\ket{0^{m/2}}\ket{0^{m/2}} 
    \end{align*}
\end{proof}
\section{Construction of linearly independent sets}
\label{sec:partition}
What remains for completing the Generate Stage is the construction of sets $T^{(1)},T^{(2)},\ldots,T^{(\ell)}$, which we will show next.
\begin{lemma}\label{lem:partition}
There exist sets  $T^{(1)},T^{(2)},\cdots,T^{(\ell)} \subseteq \{0,1\}^n-\{0^n\}$, for some integer $\ell \le \frac{2^{n+2}}{n+1}-1$, such that:
    \begin{enumerate}
        \item For any $i\in[\ell]$, $|T^{(i)}|=n$;
        \item For any $i\in[\ell]$, the Boolean vectors in $T^{(i)}=\{{t^{(i)}_1},{t^{(i)}_2},\cdots,{t^{(i)}_n}\}$ are linearly independent over $\mathbb{F}_2$;
        \item $\bigcup_{i\in[\ell]} T^{(i)}= \{0,1\}^{n} - \{0^n\} $.
    \end{enumerate}
\end{lemma}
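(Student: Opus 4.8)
The plan is to exploit the finite-field structure of $\mathbb{F}_{2^n}$. I would fix a primitive element $\gamma$ of $\mathbb{F}_{2^n}$, i.e.\ a root of a degree-$n$ primitive polynomial over $\mathbb{F}_2$, and identify the vector space $\{0,1\}^n$ with $\mathbb{F}_{2^n}$ as $\mathbb{F}_2$-vector spaces. Since $\gamma$ generates the multiplicative group $\mathbb{F}_{2^n}^{\times}$, the nonzero vectors are exactly the consecutive powers $\gamma^0, \gamma^1, \ldots, \gamma^{2^n-2}$, listed without repetition. The sets $T^{(i)}$ will then be chosen as length-$n$ windows of consecutive powers, with exponents read modulo $2^n-1$.

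The crux is the following observation, which I would isolate as a claim: for every exponent $a$, the $n$ consecutive powers $\{\gamma^a, \gamma^{a+1}, \ldots, \gamma^{a+n-1}\}$ are linearly independent over $\mathbb{F}_2$. To see this, write this set as $\gamma^a \cdot \{1, \gamma, \ldots, \gamma^{n-1}\}$. The polynomial basis $\{1, \gamma, \ldots, \gamma^{n-1}\}$ is linearly independent because the minimal polynomial of $\gamma$ has degree $n$, and multiplication by the nonzero field element $\gamma^a$ is an invertible $\mathbb{F}_2$-linear map on $\mathbb{F}_{2^n}$, hence carries a basis to a basis. This is the main step, and essentially all the difficulty is concentrated in recognizing that ``consecutive powers'' automatically enjoy the basis property once one passes to the field picture; everything else is bookkeeping.

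With the claim in hand, I would set $\ell = \lceil (2^n-1)/n \rceil$ and define $T^{(i)} = \{\gamma^{(i-1)n}, \gamma^{(i-1)n+1}, \ldots, \gamma^{(i-1)n+n-1}\}$ for $i \in [\ell]$. Property~1 ($|T^{(i)}|=n$) and Property~2 (linear independence) are immediate from the claim, since linearly independent vectors are in particular distinct. For Property~3, the exponents appearing across all windows are $0,1,\ldots,\ell n - 1$, which cover every residue in $\{0,1,\ldots,2^n-2\}$ modulo $2^n-1$ precisely because $\ell n \ge 2^n-1$; hence $\bigcup_{i\in[\ell]} T^{(i)} = \{0,1\}^n - \{0^n\}$, with overlaps being harmless as the statement permits them.

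Finally I would verify the counting bound. Using $\ell \le (2^n-1)/n + 1 \le 2^n/n + 1$, it remains to check $2^n/n + 2 \le 2^{n+2}/(n+1)$, equivalently $2^n \cdot \frac{3n-1}{n(n+1)} \ge 2$, which holds for every $n \ge 1$ (and is comfortably loose for large $n$). Thus $\ell \le 2^{n+2}/(n+1) - 1$, as required. The only nontrivial external ingredient is the existence of a degree-$n$ primitive polynomial over $\mathbb{F}_2$, a standard fact; alternatively one may phrase the entire argument concretely via the companion matrix $M$ of such a polynomial, setting $T^{(i)} = \{M^{(i-1)n}e_1, \ldots, M^{(i-1)n+n-1}e_1\}$, where the windowed independence follows from applying the invertible map $M^{(i-1)n}$ to the standard basis, and coverage follows from the maximal-period (LFSR) property of $M$.
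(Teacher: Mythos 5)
Your proof is correct, but it takes a genuinely different route from the paper's. The paper argues combinatorially: it builds a small ``covering'' set $L\subseteq\{0,1\}^n$ with $|L|\le 2^{n+1}/(n+1)$ using the parity-check matrix $H=[\overline{1},\ldots,\overline{n}]$ of a Hamming-type code (so that every string is in $L$ or differs from some $x\in L$ in one coordinate), then for each $x\in L$ splits the near-basis $\{x\oplus e_1,\ldots,x\oplus e_n\}\cup\{x\}-\{0^n\}$ into two linearly independent pieces and completes each to a full basis, yielding $\ell\le 2^{n+2}/(n+1)-1$ sets. You instead identify $\{0,1\}^n$ with $\mathbb{F}_{2^n}$, take a primitive element $\gamma$, and use the single observation that any window of $n$ consecutive powers equals $\gamma^a\cdot\{1,\gamma,\ldots,\gamma^{n-1}\}$ and is therefore a basis; coverage is then automatic from the cyclicity of $\mathbb{F}_{2^n}^{\times}$. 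All the steps check out: $\gamma$ lies in no proper subfield, so $\{1,\gamma,\ldots,\gamma^{n-1}\}$ really is a basis; wrap-around windows are still of the form $\gamma^a\cdot\{1,\ldots,\gamma^{n-1}\}$ so independence (and hence distinctness) survives; and your arithmetic $2^n(3n-1)/(n(n+1))\ge 2$ for $n\ge1$ confirms the stated bound on $\ell$. What each approach buys: yours is shorter, conceptually cleaner, and gives a roughly four-times smaller $\ell\approx(2^n-1)/n$, which is essentially optimal since $\ell\ge(2^n-1)/n$ is forced by counting; the paper's construction is elementary and self-contained (only XORs of standard basis vectors and rank arguments, no finite-field machinery), and it produces sets built from weight-one perturbations of a few centers, which is closer in spirit to the Gray-code bookkeeping used elsewhere in the paper. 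Since the downstream use only needs the three listed properties and the bound on $\ell$, your construction could be substituted directly and would only improve the hidden constants.
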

\begin{proof}
For any $n$-bit vector $x\in \{0,1\}^n$, let $S_x=\{x\oplus e_1,x\oplus e_2,\cdots,x\oplus e_{n}\}$.
%
Firstly, we construct a set $L\subseteq \{0,1\}^n$
which satisfies $|L|\le \frac{2^n+1}{n+1}$ and $\{0,1\}^n=(\bigcup_{x\in L} S_x)\cup L$. Let $k = \lceil\log{(n+1)}\rceil$. 
For $t\in[n]$, denote the $k$-bit binary representation of integer $t$ by $t_{k}\cdots t_2 t_1$, where $t_1,\ldots,t_{k}\in\{0,1\}$ and $t=\sum_{i=1}^{k}t_i2^{i-1}$. 
We use a bar to denote the corresponding column vector, i.e. 
\[\overline{t}=[t_1, t_2, \ldots, t_{k}]^{T}\in\{0,1\}^{k}.\]
Define a $k\times n$ Boolean matrix $H$ by concatenating vectors $\overline{1},\overline{2},\cdots,\overline{n}$ together, i.e. 
\[
H=[\overline{1},\overline{2},\cdots,\overline{n}] \in\{0,1\}^{k\times n}.
\]
Note that the $k$-dimensional identity matrix  $I_k = [\overline{2^{0}},\overline{2^1},\ldots,\overline{2^{k-1}}]$ is a submatrix of $H$, therefore $H$ is full row rank, i.e. $\text{rank}(H) = k$.
Define sets 
\begin{align} \label{equ:all-b}
   & L^{(0)} = \{x\in \Bn: Hx = 0^k\},\nonumber\\
    &L^{(1)} = \{x\in \Bn: Hx = 1^k\},
\end{align}
and 
\begin{align} \label{equ:lastbit-b}
    &A^{(0)} = \{x\in \Bn: (Hx)_k = 0\},\nonumber\\
    &A^{(1)} = \{x\in \Bn: (Hx)_k = 1\}.
\end{align}
For each $x\in \Bn$, the last bit of $Hx$ is either 0 or 1, thus $A^{(0)}\cup A^{(1)} = \Bn$. Also note that for each $b\in \B$, $L^{(b)}$ requires all bits being $b$ and $A^{(b)}$ only requires the last bit being $b$, thus $L^{(b)} \subseteq A^{(b)}$.

%
Now we will show 
\[A^{(0)}\subseteq L^{(0)}\cup \left(\cup_{x\in L^{(0)}}S_x\right) \quad \text{and} \quad  A^{(1)}\subseteq L^{(1)}\cup \left(\cup_{x\in L^{(1)}}S_x\right).\]
For any $y\in A^{(0)}-L^{(0)}$, consider $\overline{t}=Hy$: Since it satisfies $\overline{t}_{k}=0$ and $t_{k-1}\cdots t_1 \ne 0^{k-1}$, we have that 
\[1\le t \le \sum_{i=1}^{k-1} 2^{i-1} < 2^{k-1} = 2^{\lceil \log(n+1)\rceil -1} <  2^{\log(n+1)} = n+1.\]
Therefore, $1\le t\le n$, and we can thus use $He_{t}=\overline{t}$ to obtain the following equality
\[
H(y\oplus e_{t})=Hy \oplus He_{t}=\overline{t}\oplus \overline{t}=0^{k}.
\]
Therefore, $y\oplus e_{t}\in L^{(0)}$. That is, for any $y\in A^{(0)}-L^{(0)}$, there exists an $x\in L^{(0)}$ s.t. $y=x\oplus e_{t}$ for some $t\in [n]$.
Hence, 
\[A^{(0)}\subseteq L^{(0)}\cup\left(\cup_{x\in L^{(0)}, \ t\in [n]}\  \{x\oplus e_{t}\}\right) = L^{(0)}\cup\left(\cup_{x\in L^{(0)}}S_x\right).\] 

For any $y\in A^{(1)}-L^{(1)}$, $\overline{t}=Hy$ satisfies $\overline{t}_{k}=1$ and $t_{k-1}...t_1 \ne 1^{k-1}$. It looks symmetric to the $A^{(0)}-L^{(0)}$ case but there is a technicality that the corresponding integer $t$ may be outside the range $[n]$. To remedy this, define $\overline{{t}'}=\overline{t}\oplus 1^{k}$ (and let ${t}'$ be the integer corresponding to vector $\overline{{t}'}$). Now that  $\overline{t'}_{k}=0$ and $t'_{k-1}...t'_1 \ne 0^{k-1}$, and we know $t'\in [n]$. Thus we can again get  $He_{{t}'}=\overline{{t}'}$ and, in turn,  
\[
H(y\oplus e_{{t}'}) = Hy \oplus He_{{t}'} = \overline{t} \oplus \overline{t'} = \overline{t}\oplus \overline{t}\oplus 1^{k}=1^{k}.
\] 
Therefore, $y\oplus e_{{t}'}\in L^{(1)}$, and 
\[A^{(1)}\subseteq L^{(1)}\cup\left(\cup_{x\in L^{(1)}, \ t'\in [n]}\  \{x\oplus e_{t'}\}\right) = L^{(1)}\cup\left(\cup_{x\in L^{(1)}}S_x\right).\] 
Let $L=L^{(0)}\cup L^{(1)}$. We have 
\begin{align*}
    &\{0,1\}^n=A^{(0)}\cup A^{(1)}
    \subseteq L^{(0)}\cup \left(\cup_{x\in L^{(0)}}S_x\right)\cup L^{(1)}\cup \left(\cup_{x\in L^{(1)}}S_x\right)
    = L \cup \left(\cup_{x\in L}S_x\right)\subseteq \{0,1\}^n.
\end{align*}
Recall that $\text{rank}(H) = k$ over field $\mathbb{F}_2$, the size of solution set $L^{(b)}$ is $|L^{(b)}| = 2^{n-k} = 2^{n-\lceil\log(n+1)\rceil} \le \frac{2^n}{n+1}$, for each $b\in \B$. Thus $|L|= |L^{(0)}|+|L^{(1)}|\le \frac{ 2^{n+1}}{n+1}$. 

We have constructed a set $L\subseteq\{0,1\}^n$ of size at most $\frac{ 2^{n+1}}{n+1}$ satisfying $L\cup (\cup_{x\in L}S_x) = \Bn$. We will now use this set $L$ to construct $\ell \le \frac{2^{n+2}}{n+1}-1$ sets $T^{(i)}$ which satisfy the three properties in the statement of the present lemma.

Since $0^n$ is a solution of $Hx = 0^k$, it holds that $0^n\in L^{(0)}\subseteq L$. Note that the vectors in $S_{0^n}=\{e_1,e_2,\ldots,e_n\}$ are linearly independent. For any $x\in L$ and $x\neq 0^n$, let us construct two sets of linearly independent vectors $S_x^{(0)}$ and $S_x^{(1)}$. 
Since $\text{rank}[x\oplus e_1,x\oplus e_2,\cdots,x\oplus e_{n}]\ge n-1$ over field $\mathbb{F}_2$, we can select $n-1$ linearly independent vectors from $S_x$ to form a set $S_x^{(0)}\subseteq S_x$. Let $S^{(1)}_x=(S_x-S^{(0)}_x-\{0^n\})\cup \{x\}$. It is not hard to verify that if $x = e_j$ for some $j\in [n]$, then $S^{(1)}_x = \{x\} = \{e_j\}$; if $x \notin \{0^n, e_1, \ldots, e_n\}$, then  $S^{(1)}_x = \{x, x\oplus e_j\}$ for some $j\in[n]$. In any case, the vector(s) in $S^{(1)}_x$ are linearly independent (the same for $S^{(0)}_x$), and it holds that $S_x^{(0)}\cup S_x^{(1)} = S_x\cup\{x\}-\{0^n\}$. Thus for each $b\in \B$, we can always extend the set $S^{(b)}_x$ to $T^{(b)}_x$ of $n$ linearly independent vectors by adding some  vectors. 
Recalling $\{0,1\}^n=(\cup_{x\in L}S_x)\cup L$, we have
\begin{align*}
    &\{0,1\}^n-\{0^n\} \\
    & =\left(\cup_{x\in L} S_x\right)\cup L-\{0^n\} \\
     &=\cup_{x\in L}(S_x\cup\{x\})-\{0^n\}\\
     &=\left(\cup_{x\in L-\{0^n\}}(S_x\cup\{x\}-\{0^n\})\right)\cup S_{0^n}\\
     &=\left(\cup_{x\in L-\{0^n\}}(S_x^{(0)}\cup S_x^{(1)})\right)\cup S_{0^n}\\
     & =\left(\cup_{x\in L-\{0^n\}} S_x^{(0)}\right)\cup \left(\cup_{x\in L-\{0^n\}} S_x^{(1)}\right)\cup S_{0^n}\\
     & \subseteq \left(\cup_{x\in L-\{0^n\}} T_x^{(0)}\right)\cup \left(\cup_{x\in L-\{0^n\}} T_x^{(1)}\right)\cup S_{0^n}\\
     &\subseteq\{0,1\}^n-\{0^n\}.    
\end{align*}
Now collect $\{T_x^{(0)}: x\in L-\{0^n\}\}$, $\{T_x^{(1)}: x\in L-\{0^n\}\}$ and $S_{0^n}$ as our sets $T^{(1)}, \ldots, T^{(\ell)}$. Since $|L|\le \frac{2^{n+1}}{n+1}$ and $0^n\in L$, the collection contains $\ell \le 2\cdot (\frac{2^{n+1}}{n+1} - 1) + 1 = \frac{2^{n+2}}{n+1} - 1$ sets, each consisting of $n$ linearly independent vectors, and the collection of all these vectors is exactly $\{0,1\}^n-\{0^n\}$. This completes the proof.
\end{proof}

\section{Correctness of circuit framework in Figure \ref{fig:framework_DU_withoutancilla}}
\label{sec:correctness_withoutancilla}
In this section, we shown the correctness of circuit framework in Figure \ref{fig:framework_DU_withoutancilla}. For any input state $|x\rangle$, the quantum circuit $(\Lambda_{r_c}\otimes\mathcal{R})\mathcal{G}_{\ell}\mathcal{G}_{\ell-1}\cdots\mathcal{G}_{1}$ makes the following sequence of operations.
    \begin{align*}
     \ket{x}&  = \ket{x_{control}} \ket{y^{(0)}} \\
      &  \xrightarrow{\mathcal{G}_1} e^{i\sum_{s\in F_1}\langle s,x\rangle\alpha_s}\ket{x_{control}} \ket{y^{(1)}}\\
  &  \xrightarrow{\mathcal{G}_2} e^{i\sum_{s\in F_1\cup F_2 }\langle s,x\rangle\alpha_s}\ket{x_{control}} \ket{y^{(2)}} \\
    & ~~~~\vdots \\
  &  \xrightarrow{\mathcal{G}_\ell} e^{i\sum_{s\in\bigcup_{k\in[\ell]}F_k }\langle s,x\rangle\alpha_s}\ket{x_{control}} \ket{y^{(\ell)}}\\
   &  \xrightarrow{\mathbb{I}_{r_c}\otimes \mathcal{R}} e^{i\sum_{s\in\bigcup_{k\in[\ell]}F_k }\langle s,x\rangle\alpha_s}\ket{x_{control}} \ket{y^{(0)}}\\
 &  \xrightarrow{\Lambda_{r_c}\otimes \mathbb{I}_{r_t}} e^{i\sum_{s\in\left(\bigcup_{k\in[\ell]}F_k\right)\cup \left(\left\{c0^{r_t}\right\}_{c\in\{0,1\}^{r_c}-\{0^{r_c}\}}\right) }\langle s,x\rangle\alpha_s}\\
 &~~~~~~~~~~~~\ket{x_{control}} \ket{y^{(0)}}\\
    & = e^{i\sum_{s\in\{0,1\}^n-\{0^{n}\} }\langle s,x\rangle\alpha_s}\ket{x_{control}} \ket{y^{(0)}}\\
    & = e^{i\theta(x)}\ket{x_{control}}\ket{y^{(0)}} \\
    &= e^{i\theta(x)}\ket{x}
    \end{align*}
The first equation holds by Eq. \eqref{eq:yk}. For arbitrary $k\in[\ell]$, unitary transformation $\mathcal{G}_k$ holds by Eq.\eqref{eq:Gk} and $F_{j}\cap F_k = \emptyset,$ for $j\in[k-1]$. Unitary transformation $\mathcal{R}$ holds by Eq. \eqref{eq:reset}. Unitary transformation $\Lambda_{r_c}$ holds by Eq. \eqref{eq:Lambda_rc}. The last two equations hold by Eq. \eqref{eq:set_eq} and  Eq. \eqref{eq:alpha}), respectively.

\section{Proof of Lemma \ref{lem:graypath_withoutancilla}}
\label{sec:graypath_withoutancilla}
\noindent
{\bf Lemma \ref{lem:graypath_withoutancilla}}
\emph{The quantum circuit defined above is of depth $O(2^{r_c})$ and of size $r_c2^{r_c+1}$, and implements Gray Path Stage $U_{GrayPath}$ in Eq. \eqref{eq:Graycodepath}.
}
\begin{proof}
We first show the correctness. For each $p\in[2^{r_c}]$, let us define a set $F_k^{(p)}$ by
\begin{equation}\label{eq:Fkp}
 F_k^{(p)}=\left\{s:\ s\in F_k\text{~and~}s=c_{p}^it_i^{(k)}\text{ for some }i\in[r_t]\right\} .
\end{equation}
By definition of $F_k$ in Eq. \eqref{eq:F_k}, the collection of $F_k^{(p)}$'s satisfy
\begin{align}
    & F_k^{(i)}\cap F_{k}^{(j)}=\emptyset \text{~for~all~} i\neq j\in[2^{r_{c}}], \label{eq:Fkp_1}\\
    & F_k=\bigcup_{p\in[2^{r_c}]}F_k^{(p)}.\label{eq:Fkp_2}
\end{align}
Now we can see how the Gray Path Stage $U_{GrayPath}$ in Eq. \eqref{eq:Graycodepath} is realized by the above quantum circuit $C_1,C_2,\ldots,C_{2^{r_c}+1}$ step by step as follows. {For $j\in[2^{r_c}]$, define $\ket{f_j}:=\ket{\langle c_j^1t_{1}^{(k)},x\rangle,\langle c_j^2t_{2}^{(k)},x\rangle,\cdots,\langle c_j^{r_t}t_{r_t}^{(k)},x\rangle}$. Note that $\ket{f_1}=\ket{\langle 0^{r_c}t_{1}^{(k)},x\rangle,\langle 0^{r_c}t_{2}^{(k)},x\rangle,\cdots,\langle 0^{r_c}t_{r_t}^{(k)},x\rangle}$ since $c_1^{i}=0^{r_c}$ for all $ i\in[r_t]$. }
\begin{align*}
     &\ket{x_{control}}\ket{y^{(k)}} &\\
     &= \ket{x_{control}}\ket{f_1} &(\text{Eq. \eqref{eq:yk}})\\
     & \xrightarrow{C_1} e^{i\sum_{s\in F_{k}^{(1)}}\langle s,x\rangle\alpha_s}\ket{x_{control}} \ket{f_1} &(\text{Eq. \eqref{eq:Fkp}}) \\
     &\xrightarrow{C_2} e^{i\sum_{s\in F_{k}^{(1)}\cup F_{k}^{(2)}}\langle s,x\rangle\alpha_s}\ket{x_{control}} \ket{f_2} &(\text{Eq. \eqref{eq:Fkp},\eqref{eq:Fkp_1}})\\
     & ~~~~\vdots  \\
     &\xrightarrow{C_{2^{r_c}}} e^{i\sum_{s\in \bigcup_{p\in[2^{r_c}]}F_k^{(p)}}\langle s,x\rangle\alpha_s}\ket{x_{control}} \ket{f_{2^{r_c}}} &(\text{Eq. \eqref{eq:Fkp}, \eqref{eq:Fkp_1}})\\
     & = e^{i\sum_{s\in F_k}\langle s,x\rangle\alpha_s}\ket{x_{control}} \ket{f_{2^{r_c}}}&
      (\text{Eq. \eqref{eq:Fkp_2}})\\
     &\xrightarrow{C_{2^{r_c}+1}} e^{i\sum_{s\in F_k}\langle s,x\rangle\alpha_s}\ket{x_{control}}\ket{f_1}\\
     &= e^{i\sum_{s\in F_k}\langle s,x\rangle\alpha_s}\ket{x_{control}} \ket{y^{(k)}} &(\text{Eq. \eqref{eq:yk}})
\end{align*}

Next we analyze the depth. Phase 1 consists of rotations applied on different qubits in the target register, thus can be made in a single depth.
In each phase $p\in \{2,3,\ldots,2^{r_c}\}$, since $c_{p-1}^{i}$ and $c_p^i$ differ by only 1 bit, one CNOT gate suffices to implement the function $\langle c_p^{i}t_i^{(k)},x\rangle$ from $\langle c_{p-1}^{i}t_i^{(k)},x\rangle$ in the previous phase.
The control and target qubit of this CNOT gate is the $h_{ip}$-th qubit in control register and the $i$-th qubit in target register.
{According to Eq. \eqref{eq:index}, indices $h_{1p},h_{2p},\ldots,h_{r_tp}$ of control qubits are all different, and therefore, all the CNOT gates in step $p.1$ can be implemented in depth $1$.}
The rotations in step $p.2$ are on different qubits and thus fit in one depth as well. Similarly, phase $2^{r_c}+1$ can also be implemented in depth $1$.
Thus the total depth of Gray Path Stage is at most $1+2\cdot (2^{r_c}-1)+1=2\cdot 2^{r_c}$.
{The size of Gray Path Stage is at most $r_c \cdot 2\cdot 2^{r_c}= r_c2^{r_c+1}$.}
\end{proof}

\section{Proof of Lemma \ref{lem:enc-trans}}
\label{sec:enc-trans}
The Toffoli gate is a 3-qubit CCNOT gate where we flip the basis $\ket{0},\ket{1}$ of (i.e. apply $X$ gate to) the third qubit conditioned on the first two qubits are both on $\ket{1}$. This can be extended to an $n$-qubit Toffoli gate, which applies the $X$ gate to the last qubit conditioned on the first $(n-1)$ qubits all being on $\ket{1}$. An $n$-qubit Toffoli gate can be implemented by a circuit of $O(n)$ size and depth { \cite{multi-controlled-gate}}. 
\begin{lemma} \label{lem:tof}
An $n$-qubit Toffoli gate can be implemented by a quantum circuit of depth and size $O(n)$ without ancillary qubits.
\end{lemma}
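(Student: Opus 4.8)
The plan is to realize the $(n-1)$-controlled $X$ gate, with controls on qubits $1,\dots,n-1$ and target on qubit $n$, using only $O(n)$ Toffoli (two-controlled $X$) gates and no qubits beyond the $n$ it acts on. The first thing to notice is that the difficulty lies entirely in the ancilla-free requirement: were even one clean work qubit available, one could compute the conjunction $x_1\wedge\cdots\wedge x_{n-1}$ into it by a balanced binary tree of Toffoli gates, flip the target conditioned on that qubit, and then uncompute the tree, yielding depth $O(\log n)$ and size $O(n)$. Thus all the work is in eliminating the scratch space while keeping both measures linear.

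First I would invoke the dirty-ancilla technique of Barenco et al. \cite{barenco1995elementary}: an $m$-controlled $X$ can be written as a ``V-chain'' of $2(m-1)$ Toffoli gates that uses $m-2$ \emph{borrowed} qubits in completely arbitrary (possibly entangled) states. The forward half of the chain folds the controls pairwise into the borrowed qubits and the backward half mirrors it, so the borrowed qubits are returned exactly to their input states while the target is flipped precisely when all $m$ controls are $1$. To remove the last dedicated work qubit I would split the $n-1$ controls into two groups of nearly equal size and let each group serve as the borrowed workspace for the conjunction over the other group, composing two nested V-chains together with their inverses; this is the route by which \cite{multi-controlled-gate} obtains the ancilla-free bound, and it is the step I would rely on. Because each group has size $\approx n/2$, each chain has enough borrowed qubits, no clean ancilla is ever allocated, and the whole circuit contains $O(n)$ Toffoli gates arranged in $O(n)$ sequential layers, giving depth $O(n)$.

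Finally, replacing each Toffoli by its standard constant-size, constant-depth decomposition into single-qubit and CNOT gates leaves both the depth and the size at $O(n)$. The main obstacle I anticipate is precisely the bootstrapping of workspace from the controls themselves: one must check that the borrowed qubits (which are genuine data qubits here) are restored perfectly despite their unknown states, and that the two nested V-chains compose to the exact conjunction of all $n-1$ controls rather than to a relative-phase variant such as a Margolus gate. Once this correctness bookkeeping is verified, the linear depth and size bounds follow immediately by counting the Toffoli layers.
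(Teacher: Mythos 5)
Your proposal is correct and follows essentially the same route as the paper, which does not prove this lemma itself but simply cites \cite{multi-controlled-gate} (whose construction is exactly the Barenco-style V-chains with the controls split into two halves borrowing each other as dirty workspace). Your reconstruction of the argument behind that citation, including the caveat about restoring borrowed qubits and avoiding relative-phase variants, matches the intended proof.
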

The next result we need says that cascading CNOT gates with the same target qubit can be exponentially compressed \cite{moore2001parallel}.  
\begin{lemma}\label{lem:stair_circuit}
    Let $C$ be a quantum circuit consisting of $n$ CNOT gates with the same target qubit and distinct controlled qubits. Then $C$ can be compressed to $O(\log n)$ depth {and $O(n)$ size} without using ancillary qubits. 
\end{lemma}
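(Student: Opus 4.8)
The plan is to recognize that the staircase circuit $C$, consisting of CNOT gates all targeting a single qubit $T$ with distinct control qubits $c_1, \dots, c_n$, computes the parity
\[
\ket{c_1 \cdots c_n}\ket{t} \;\longmapsto\; \ket{c_1 \cdots c_n}\ket{t \oplus c_1 \oplus \cdots \oplus c_n},
\]
leaving the controls unchanged and XOR-ing their total parity into $T$. Hence it suffices to compute this parity in a scratch location in $O(\log n)$ depth, write it into $T$, and then uncompute to restore the controls. Since ancillary qubits are not allowed, the control qubits themselves must serve as the scratch space, so the uncomputation step is essential.

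First I would build a \emph{balanced binary reduction tree} over the control qubits. In round $1$, apply in parallel a CNOT from $c_{2j-1}$ into $c_{2j}$ for each $j$, so that each $c_{2j}$ holds $c_{2j-1}\oplus c_{2j}$; since these CNOTs act on pairwise disjoint qubits they fit in a single layer. Repeating this pairing on the qubits currently holding partial parities (those with even index, then those with index divisible by $4$, and so on), after $\lceil \log n\rceil$ layers a single designated qubit holds the full parity $c_1\oplus\cdots\oplus c_n$. When $n$ is not a power of two the tree is merely unbalanced on its last level, which changes nothing asymptotically. Call this forward circuit $U_{\mathrm{tree}}$; it uses $n-1$ CNOTs arranged in $\lceil\log n\rceil$ layers.

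Next I would apply one CNOT from the designated qubit into $T$, setting $T \mapsto t \oplus (c_1\oplus\cdots\oplus c_n)$ in depth $1$, and then apply $U_{\mathrm{tree}}^{-1} = U_{\mathrm{tree}}$ (CNOT is an involution, so running the tree in reverse restores each $c_i$ to its original value). Because the middle CNOT writes only into $T$, it commutes with the control-register computation and does not interfere with the uncomputation; the net effect is exactly the parity map above, i.e.\ the same unitary as $C$. The total depth is $\lceil\log n\rceil + 1 + \lceil\log n\rceil = O(\log n)$ and the total size is $2(n-1)+1 = O(n)$, as claimed.

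The main point requiring care — more a bookkeeping subtlety than a genuine obstacle — is verifying that each reduction layer acts on disjoint qubit pairs so that it parallelizes to depth $1$, and that the set of ``surviving'' qubits at each level is indexed consistently when $n$ is not a power of $2$. Once disjointness is checked, the correctness of the parity computation and its uncomputation is immediate, since each CNOT simply XORs its control into its target over $\mathbb{F}_2$ and these operations are linear and reversible.
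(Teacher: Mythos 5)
Your construction is correct, and it is essentially the standard argument behind this fact: the paper itself does not prove Lemma \ref{lem:stair_circuit} but cites it from \cite{moore2001parallel}, and the balanced binary reduction tree (compute the parity $c_1\oplus\cdots\oplus c_n$ into one control qubit in $\lceil\log n\rceil$ disjoint layers, CNOT it into the target, then uncompute) is precisely the argument that reference uses. Depth $2\lceil\log n\rceil+1$ and size $2(n-1)+1$ match the claimed $O(\log n)$ and $O(n)$.

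One small correction: the identity $U_{\mathrm{tree}}^{-1}=U_{\mathrm{tree}}$ is false as stated. Although each CNOT is an involution, the layers of the tree do not commute, so the tree circuit is not its own inverse (already for $n=4$ one checks that applying the two layers twice in the same order does not restore $c_4$). What is true, and what your parenthetical ``running the tree in reverse'' correctly describes, is that $U_{\mathrm{tree}}^{-1}$ is obtained by applying the same CNOT layers in reverse order; this has identical depth and size, so the asymptotic bounds and the correctness of the uncomputation are unaffected. With that phrasing fixed, the proof is complete.
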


Recall the description of Lemma \ref{lem:enc-trans}.

\noindent
{\bf Lemma \ref{lem:enc-trans}}
\emph{The following unitary transformation on $2^n$ qubits
\begin{equation*}
\ket{e_i}\to|i\rangle\ket{0^{2^n-n}},\text{~for~all~}i\in\{0\}\cup [2^n-1], e_i\in\{0,1\}^{2^n},
\end{equation*}
can be implemented by a quantum circuit using single-qubit gates and CNOT gate with  {$2^{n+1}$} ancillary qubits, of depth $O(n)$ {and size $O(2^n)$}.}

\begin{proof}
We will implement Eq. \eqref{eq:unary_transform} with $2^{n+1}$ ancillary qubits in three steps:
\begin{enumerate}
\renewcommand{\labelenumi}{Step \theenumi:}
\setlength{\itemindent}{2.2 em}
\setlength{\leftmargin}{1in}
    \item  
    
    {  $\underbrace{\ket{e_i}}_{2^n\atop\text{~qubits}}\ket{0^{2^{n+1}}} \to\ket{0^{2^n}}\underbrace{\ket{e_s}}_{2^{n/2}\atop\text{~qubits}}\underbrace{\ket{e_t}}_{2^{n/2}\atop\text{~qubits}}\ket{0^{2^{n+1}-2\cdot 2^{n/2}}}$ for all $s,t\in\{0\}\cup [2^{n/2}-1]$ and $i=s\cdot 2^{n/2}+t$.}
    
    \item  
    
    { $\ket{0^{2^n}}\ket{e_s}\ket{e_t}\ket{0^{2^{n+1}-2\cdot 2^{n/2}}}\to \ket{0^{2^n}}\ket{i}\ket{0^{2^{n+1}-n}}$ for all $s,t\in\{0\}\cup [2^{n/2}-1]$ and $i=s\cdot 2^{n/2}+t$.}
    
    \item  
    
    {$\ket{0^{2^n}}\ket{i}\ket{0^{2^{n+1}-n}}\to \ket{i}\ket{0^{3\cdot2^n-n}}$ for all $i\in\{0\}\cup [2^n-1]$.}
\end{enumerate}

{ In these three steps, the first $2^n$ qubits are  called register $A$. The last $ 2^{n+1}$ are ancillary qubits, which are initialized as $\ket{0}$ and called register $B$. Let the first $2^{n/2}$ qubits of register $B$ be register $B_1$ and the second $2^{n/2}$ qubits of register $B$ be register $B_2$.}

Firstly, we implement Step 1 by a quantum circuit of depth $O(n)$ {and size $O(2^n)$} with { $2^{n+1}$} ancillary qubits.  Step 1 consists of two phases.
\begin{itemize}
    \item Step 1a: 
    { $\ket{e_i}\ket{0^{2^{n+1}}}\to\ket{e_i}\ket{e_s}\ket{e_t}\ket{0^{2^{n+1}-2\cdot 2^{n/2}}}$} for all $s,t\in\{0\}\cup [2^{n/2}-1]$ and $i=s\cdot 2^{n/2}+t$.

Let $CNOT^i_{j,(k)}$ denote a CNOT gate whose controlled qubit is the $i$-th qubit of register $A$, and target qubit is the $j$-th qubit of register $B_k$  for $k\in[2]$.
Let $CNOT^i_{s,t}=CNOT^{i}_{s,(1)}CNOT^i_{t,(2)}$. Therefore, Step 1a can be realized by a CNOT circuit as follows:
\begin{align*}
    &\prod_{s,t=0}^{2^{n/2}-1} CNOT^{s\cdot 2^{n/2}+t}_{s,t}\\
    =&\left[\prod_{s,t=0}^{2^{n/2}-1}  CNOT^{s\cdot 2^{n/2}+t}_{s,(1)}\right]\left[\prod_{s,t=0}^{2^{n/2}-1}  CNOT^{s\cdot 2^{n/2}+t}_{t,(2)}\right]\\
    =&\left[\prod_{s=0}^{2^{n/2}-1}  \left(\prod_{t=0}^{2^{n/2}-1} CNOT^{s\cdot 2^{n/2}+t}_{s,(1)}\right) \right]\cdot \left[\prod_{t=0}^{2^{n/2}-1}\left(\prod_{s=0}^{2^{n/2}-1} CNOT^{s\cdot 2^{n/2}+t}_{t,(2)}\right)\right].
\end{align*}
For every $s\in\{0\}\cup[2^{n/2}-1]$, all CNOT gates in $\mathcal{C'}_s\defeq\prod_{t=0}^{2^{n/2}-1} CNOT^{s\cdot 2^{n/2}+t}_{s,(1)}$ have different controlled qubits and the same target qubit. According to Lemma \ref{lem:stair_circuit}, $\mathcal{C'}_s$ can be implemented by a circuit of depth $O(n)$ and size $O(2^{n/2})$ without ancillary qubits. For all $s\in\{0\}\cup[2^{n/2}-1]$, $C_s'$ act on different qubits. Therefore, they can be paralleled and $\prod_s\mathcal{C'}_s$ can be implemented by a circuit of depth $O(n)$ and size $O(2^{n})$
without ancillary qubits.
By similar discussion, $\prod_t\mathcal{C''}_t$ can be implemented by a circuit of depth $O(n)$ and size $O(2^{n})$
without ancillary qubits, where $\mathcal{C''}_t\defeq\prod_{t=0}^{2^{n/2}-1} CNOT^{s\cdot 2^{n/2}+t}_{t,(2)}$. So Step 1a can be implemented by a quantum circuit of depth $O(n)$ {and size $O(2^{n})$} without ancillary qubits.
    \item Step 1b: 
    { $\ket{e_i}\ket{e_s}\ket{e_t}\ket{0^{2^{n+1}-2\cdot 2^{n/2}}}\to \ket{0^{{2^n}}}\ket{e_s}\ket{e_t}\ket{0^{2^{n+1}-2\cdot {2^{n/2}}}}$ for all $s,t\in\{0\}\cup [2^{n/2}-1]$ and $i=s\cdot 2^{n/2}+t$.}
    Let $\textsf{T}^{s,t}_{i}$ denotes a 3-qubit Toffoli gate, whose controlled qubits are the $s$-th qubit in register $B_1$ and $t$-th qubit in register $B_2$, and the target qubit is the $i$-th qubit in register $A$. The unitary transform of Step 1b is realized by applying all Toffoli gates $\textsf{T}^{s,t}_{s\cdot 2^{n/2}+t}$. To reduce the circuit depth, we make { $2^{n/2}-1$} copies of register $B_1,B_2$ in last $2^{n+1}$ qubits of register $B$: 
    {     \[\ket{e_i}\ket{e_s}\ket{e_t}\ket{0^{2^{n+1}-2\cdot 2^{n/2}}}\to \ket{e_i}\underbrace{\ket{e_s}\ket{e_t}\cdots \ket{e_s}\ket{e_t}}_{2^{n/2}~\text{copies of~}\ket{e_s}\ket{e_t}}\]}
    for all $s,t\in\{0\}\cup [2^{n/2}-1]$ and $i=s\cdot 2^{n/2}+t$. This transformation can be parallelized to depth $O(n)$ (by Lemma \ref{lem:copy1}) {and size $O(2^{n})$}. Because there are $2^{n/2}$ copies of $\ket{e_s}\ket{e_t}$, all Toffoli gates $\textsf{T}^{s,t}_{s\cdot 2^{n/2}+t}$ are on distinct control and target qubits, thus can be executed in parallel in depth $O(1)$. Finally, restore the last  {$2^{n+1}-2\cdot 2^{n/2}$} qubits of register $B$ to all-zero state in $O(n)$ depth. Overall, Step 1b can be implemented by a quantum circuit of depth $O(n)$ {and size $O(2^{n})$} with { $2^{n+1}$} ancillary qubits.
\end{itemize}
Secondly, Step 2 can be realized by a circuit of depth $O(n)$ {and size $O(n2^{n/2})$} with $O(n2^{n/2})$ ancillary qubits. Now, we rewrite the transformation of Step 2:
{\[\ket{0^{2^n}}\ket{e_s}\ket{e_t}\ket{0^{2^{n+1}-2\cdot 2^{n/2}}}\to \ket{0^{2^n}}\underbrace{\ket{s}}_{\frac{n}{2}~\text{qubits}}\underbrace{\ket{t}}_{\frac{n}{2}~\text{qubits}}\ket{0^{2^{n+1}-n}}\]}
for all $s,t\in\{0\}\cup [2^{n/2}-1]$. If we can realize transformation 
\begin{equation}\label{eq:unary_tranformation1}
\ket{e_s}\ket{0^k}\to\ket{s}\ket{0^{2^k}},\text{for~} e_s\in\{0,1\}^{2^k},~s\in\{0\}\cup[2^k-1]
\end{equation}
by a quantum circuit of depth $O(k)$ {and size $O(k2^k)$} with {$ k2^{k}$} ancillary qubits, then we can implement Step 2 by a quantum circuit of depth $O(n)$ {and size $O(n2^{n/2})$} with {$ (n/2)2^{n/2}$} ancillary qubits.

We will implement Eq. \eqref{eq:unary_tranformation1} with {$ k2^{k}$} ancillary qubits in three steps:
\begin{itemize}
    \item [] Step 2a: $\ket{e_s}\ket{0^k}\ket{0^{k2^k}}\to\ket{e_s}\ket{s}\ket{0^{k2^k}}$ for all $s\in\{0\}\cup [2^k-1]$;
    \item [] Step 2b: $\ket{e_s}\ket{s}\ket{0^{k2^k}}\to\ket{0^{2^k}}\ket{s}\ket{0^{k2^k}}$ for all $s\in\{0\}\cup [2^k-1]$;
    \item [] Step 2c: $\ket{0^{2^k}}\ket{s}\ket{0^{k2^k}}\to \ket{s}\ket{0^{2^k}}\ket{0^{k2^k}}$ for all $s\in\{0\}\cup [2^k-1]$.
\end{itemize}
The first $2^k$ qubits are called register $A$ and the second $n$ qubits are called register $B$. The last $k2^k$ qubits are ancillary qubits, which are called register $C$. Let $CNOT^s_j$ denote a CNOT gate, whose controlled qubit is the $s$-th qubit in register $A$ and target qubit is the $j$-th qubit in register $B$ for all $i\in\{0\}\cup[2^k-1]$ and $j\in\{0\}\cup [n-1]$. Define $CNOT^s_{S_s}\defeq \prod_{j\in S_s}CNOT^s_j$, where $S_s\defeq \left\{j|s_j=1\text{~for~}j\in \{0\}\cup [k-1]\right\}$. 
\begin{itemize}
    \item Step 2a: Firstly, we implement Step 2a by a quantum circuit with $k2^k$ ancillary qubits of depth $O(k)$. It can be easily verified that Step 2a can be implemented by a CNOT circuit
\begin{align*}
&\prod_{s\in\{0\}\cup [2^k-1]}CNOT^s_{S_s}=\prod_{s\in\{0\}\cup [2^k-1]}\prod_{j\in S_s}CNOT^s_j= \prod_{s\in \{0\}\cup [k-1]}\prod_{s:s\in\{0\}\cup [2^k-1],s_j=1} CNOT^s_j.
\end{align*}
For every $j\in \{0\}\cup [k-1]$, CNOT circuit $C_j\defeq\prod_{s:s\in\{0\}\cup [2^k-1],s_j=1} CNOT^s_j$ consists of $2^{k-1}$ CNOT gates. All these CNOT gates have distinct control qubits and the same target qubit. According to Lemma \ref{lem:stair_circuit}, $C_j$ can be parallelized to depth $O(k)$ {and size $O(2^k)$} without ancillary qubits. Step 2a consists of $C_0,C_1,\ldots,C_{k-1}$ and the target qubits of CNOT gates in $C_t$ and $C_\ell$ are different if $t\neq \ell$. In order to reduce the depth of Step 2a, we make $k$ copies of register $A$ in ancillary qubits (register $C$). Then $C_0,C_1,\ldots,C_{k-1}$ can be implemented simultaneously using the $k$ copies of register $A$. 
Thus $\prod_{j=0}^{k-1} C_j$ can be implemented simultaneously in depth $O(k)$ {and size $O(k2^k)$}. Finally, we reset register $C$ back to 0 in {depth $O(k)$ and size $O(2^k)$}. Step 1 is summarized as follows:
\begin{align*}
    &\ket{e_s}\ket{0^k}\ket{0^{k2^k}} \\
    \to & \ket{e_s}\ket{0^k}\underbrace{\ket{e_s}\cdots\ket{e_s}}_{k ~\text{copies~of}~\ket{e_s}}
    & (\text{Lemma \ref{lem:copy1}, depth  $O(\log k)$, {size $O(k2^k)$}})\\
    \to & \ket{e_s}\ket{s} \ket{e_s}\cdots\ket{e_s}
    &(\text{Lemma \ref{lem:stair_circuit}, depth  $O(k)$, {size $O(k2^k)$}})\\
    \to & \ket{e_s}\ket{s} \ket{0^{k2^k}} 
    &  (\text{Lemma \ref{lem:copy1}, depth  $O(\log k)$, {size $O(k2^k)$}})
\end{align*}
The total depth and size of Step 2a are $O(\log k)+O(k)+O(\log k)=O(k)$ { and $O(k2^k)$}, respectively.

\item Step 2b: Secondly, we implement Step 2b by a quantum circuit with $k2^k$ ancillary qubits of depth $O(k)$ {and size $O(k2^k)$}. Define an $(k+1)$-qubit quantum gate $\text{Tof}_s$ acting on register $B$ and the $s$-th qubit in register $A$:
\begin{align*}
    \text{Tof}_s\ket{x}|j\rangle=\ket{x\oplus \delta_{sj}}\ket{j}, \text{~~for~}s, j\in\{0\}\cup [2^k-1],
\end{align*}
where $|x\rangle$ is the $s$-th qubit in register $A$ and $\ket{j}$ is in register $B$. That is, conditioned on the state in register $B$ is $\ket{s}$, $\text{Tof}_s$ flips the $s$-th qubit in register A. Step 2b is just  $\prod_{s=0}^{2^k-1} \text{Tof}_s$.

Any $\text{Tof}_s$ can be implemented by $[I\otimes(\otimes_{j=0}^{k-1} X^{s_j})]\text{Tof}_{2^k-1}[I\otimes(\otimes_{j=0}^{k-1} X^{s_j})]$, where Toffoli gate $\text{Tof}_{2^k-1}$ can be implemented in depth $O(k)$ and size $O(k)$ without ancillary qubits (Lemma \ref{lem:tof}). Therefore $\text{Tof}_s$ can be realized by an $O(k)$-depth { and $O(k)$-size} quantum circuit without ancillary qubits. To realize $\text{Tof}_0,\ldots,\text{Tof}_{2^k-1}$ simultaneously, we make $2^k$ copies of register $B$ in register $C$ depth $O(k)$ { and $O(k2^k)$}. Then  $\text{Tof}_0,\ldots,\text{Tof}_{2^k-1}$ can be implemented simultaneously by using these copies. Finally, we reset register $C$ back to 0 in depth $O(k)$ { and size $O(k2^k)$}. Step 2b can be summarized as follows:
\begin{align*}
    &\ket{e_s}\ket{s}\ket{0^{k2^k}} \nonumber\\
    \to & \ket{e_s}\ket{s}\underbrace{\ket{s}\cdots\ket{s}}_{2^k \text{~copies~of~}\ket{s}} 
    & (\text{Lemma \ref{lem:copy1}, depth  $O(k)$, size~} {O(k2^k)}
    )\\
    \to & \ket{0^{2^k}}\ket{s} \ket{s}\cdots\ket{s}
    &(\text{depth  $O(k)$, size }{ O(k2^k)})\\
    \to & \ket{0^{2^k}}\ket{s} \ket{0^{k2^k}} 
    &  (\text{Lemma \ref{lem:copy1}, depth  $O(k)$, size } { O(k2^k)})
\end{align*}
The total depth and size of Step 2b are $O(k)+O(k)+O(k)=O(k)$ { and $O(k2^k)$}, respectively.

\item Step 2c: Thirdly, for Step 2c, we swap the first $k$ qubits in register $A$ and register $B$ by $k$ swap gates. Hence, Step 2c can be implemented in depth $O(1)$ { and size $O(k)$}.
\end{itemize}
Thirdly, for Step 3, we swap the first $n$ qubit in register $A$ and register $B$ in depth $O(1)$ and size $O(n)$ without ancillary qubits by swap gates.
\end{proof}

\section{Circuit depth for general unitary synthesis}
\label{sec:US_depth}

In Eq. \eqref{matrix:Vn}, we called a $(2\times 2)$-block diagonal matrix $V_j$ a $j$-qubit UCG. In the view of a circuit, this is a multiple controlled gate where the target qubit is the last one and the conditions are on the first $j-1$ qubits. But this target qubit can actually be any one, and all the implementations in Section \ref{sec:QSP_withancilla} and Section \ref{sec:QSP_withoutancilla} still apply.
Let $V^n_k$ denote an $n$-qubit UCG whose index of target qubit is $k$. By repeatedly applying cosine-sine decomposition, one can factor an arbitrary unitary matrix $U$ into a sequence of UCGs as follows \cite{mottonen2005decompositions}. Recall that the Ruler function $\zeta(n)$ is defined as $\zeta(n)=\max\{k:2^{k-1}|n\}$. 
\begin{lemma}\label{lem:CSD}
Any $n$-qubit unitary matrix $U\in\mathbb{C}^{2^n\times 2^n}$ can be decomposed as $U=V^n_{n}(0) \cdot \prod_{i=1}^{2^{n-1}-1} V^n_{n-\zeta(i)}(i)  \cdot V^n_{n}(2^{n-1}),$
where different $i$ in $V_k^n(i)$ denote different forms of $n$-qubit UCGs despite the same target qubit $k$.
\end{lemma}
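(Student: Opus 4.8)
The plan is to prove Lemma \ref{lem:CSD} by induction on $n$, using the cosine--sine decomposition (CSD) as the single workhorse. Recall that CSD asserts that any $2^n\times 2^n$ unitary $U$, written as a $2\times 2$ array of $2^{n-1}\times 2^{n-1}$ blocks according to the value of one fixed qubit $q$, factors as $U=\mathrm{diag}(L_0,L_1)\cdot R_{cs}\cdot \mathrm{diag}(R_0,R_1)$, where $R_{cs}=\bigl(\begin{smallmatrix} C & -S\\ S & C\end{smallmatrix}\bigr)$ with $C,S$ real diagonal and $C^2+S^2=\mathbb{I}_{n-1}$, and $L_b,R_b$ are $(n-1)$-qubit unitaries. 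The central factor $R_{cs}$ is a $Y$-rotation of qubit $q$ whose angle depends on the remaining $n-1$ qubits, i.e.\ exactly a uniformly controlled gate $V^n_q$ in the sense of Eq.\ \eqref{matrix:Vn} with target $q$. The two outer factors $\mathrm{diag}(L_0,L_1)$ and $\mathrm{diag}(R_0,R_1)$ are block-diagonal: each is a \emph{multiplexed} $(n-1)$-qubit unitary, selected by qubit $q$.

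First I would set up the recursion. A multiplexed $(n-1)$-qubit unitary $\mathrm{diag}(A_0,A_1)$ is handled by decomposing $A_0$ and $A_1$ with one \emph{common} circuit topology (the same sequence of target qubits, only the $2\times2$ block entries differing) via the inductive hypothesis; fusing the $j$-th UCG of $A_0$ with the $j$-th UCG of $A_1$ turns each pair into a single $n$-qubit UCG with the same target and with qubit $q$ adjoined as an extra control. Thus a multiplexor costs no more UCGs than a single $(n-1)$-qubit unitary. Unfolding the CSD recursion, I would then track the target of the rotation produced at each node, read left to right. Choosing the splitting qubit at each level so that successive targets follow a reflected-binary (Gray) order, as in Lemma \ref{lem:GrayCode}, the target of the $i$-th UCG becomes $n-\zeta(i)$, with a $V^n_n$ surfacing at each of the two ends; this is precisely the index pattern in the statement.

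The delicate step, and the one I expect to be the main obstacle, is the bookkeeping of the \emph{number} of gates. A bare CSD recursion produces $2^n-1$ UCGs, a factor of two above the parameter count $4^n$ of $U(2^n)$ (a single $V^n_k$ carries $4\cdot 2^{n-1}=2^{n+1}$ real parameters, and $(2^{n-1}+1)\cdot 2^{n+1}\approx 4^n$). To descend to the claimed $2^{n-1}+1$ factors one must merge roughly half of the gates: whenever two adjacent UCGs in the flattened product share a target qubit, they are both block-diagonal with the same block pattern and compose into a single UCG. The role of the Gray-code ordering of the split qubits is exactly to line up matching targets at the junctions between the recursively produced sub-products, so that each junction contributes one merge; verifying that these merges occur as needed, that the leftover boundary gates collapse into the two end factors $V^n_n(0)$ and $V^n_n(2^{n-1})$, and that the surviving targets are enumerated by $n-\zeta(i)$ is the crux of the argument. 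The base cases $n=1,2$ (where $V^1_1(0)V^1_1(1)$ collapses to a single one-qubit gate, and $U=V^2_2(0)V^2_1(1)V^2_2(2)$ since $\zeta(1)=1$) anchor the induction and should be checked directly.
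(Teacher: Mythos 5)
The paper does not actually prove this lemma: it is imported from \cite{mottonen2005decompositions} with the one-line remark that it follows ``by repeatedly applying cosine--sine decomposition,'' so the only fair comparison is with that cited construction. Your skeleton matches it: CSD splits $U$ into two multiplexors sandwiching a uniformly controlled $R_y$; a multiplexor $\mathrm{diag}(A_0,A_1)$ is decomposed blockwise with a common topology, so fusing corresponding factors adjoins the select qubit as an extra control at no cost in the number of factors; and the in-order traversal of the recursion tree produces the ruler pattern $n-\zeta(i)$ for the targets of the CS factors. Your base cases are also right.

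The gap is exactly where you locate it, and your proposed fix cannot close it. Unrolling the recursion gives $2^n-1$ factors: $2^{n-1}$ generic UCGs coming from the leaves, all with target equal to the one qubit that is \emph{never} chosen as a split qubit (qubit $n$), strictly alternating with $2^{n-1}-1$ uniformly controlled rotations whose targets are the split qubits $n-\zeta(i)\neq n$. Hence no two adjacent factors in the flattened product ever share a target, regardless of how you order the splits --- the Gray-code ordering only decides \emph{which} split qubit appears at each junction and can never make it coincide with the leaf target --- so the merges you are counting on never fire. Reaching $2^{n-1}+1$ factors would require absorbing each interior generic target-$n$ UCG (carrying $2^{n+1}$ free real parameters) into an adjacent uniformly controlled rotation of a different target, producing a single UCG with only $2^{n+1}$ parameters in place of a pair with $2^{n+1}+2^{n-1}$; a dimension count (and a direct check at $n=2$, where a generic element of $U(4)$ is not a product of one target-$2$ and one target-$1$ UCG) shows this is generically impossible. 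Note that the paper itself quotes ``$2^n-1$ $n$-qubit UCGs'' for this very lemma in the proof of Corollary \ref{coro:unitary_apprx}, and every application only uses the $O(2^n)$ count. I would therefore prove the version your recursion actually delivers --- $2^{n-1}$ generic UCGs with target $n$ interleaved with $2^{n-1}-1$ uniformly controlled rotations with targets $n-\zeta(i)$, i.e.\ $2^n-1$ factors in total --- which is the statement in \cite{mottonen2005decompositions} and suffices verbatim for Theorem \ref{thm:unitary} and Corollary \ref{coro:unitary}.
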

The proofs of Theorem \ref{thm:unitary} and Corollary \ref{coro:unitary} in Section \ref{sec:extensions} are shown as follows.

{\noindent\bf Theorem \ref{thm:unitary}}
\emph{Any unitary matrix $U\in\mathbb{C}^{2^n\times 2^n} $ can be implemented by a quantum circuit of depth $O\big(n2^n+\frac{4^n}{m+n}\big)$ {and size $O(4^n)$} with $m \le 2^n$ ancillary qubits.
}
\begin{proof}
Based on Eq. \eqref{eq:UCG}, Lemma \ref{lem:DU_with_ancillary} and Lemma \ref{lem:DU_without_ancillary}, given $m \le 2^n$ ancillary qubits, any $n$-qubit UCG $V_k^n(i)$  can be implemented by a circuit of {size $O(2^n)$ and }depth $O\big(n+\frac{2^n}{n+m}\big)$.
Since Lemma \ref{lem:CSD} shows that any $U$ can be decomposed into $O(2^n)$ many $n$-qubit UCGs, a circuit can simply implement them sequentially 
to realize $U$, yielding a circuit of {size $O(2^n)\cdot O(2^n)=O(4^n)$ and} depth $O(2^n)\cdot  O\big(n+\frac{2^n}{m+n}\big)=O\big(n2^n+\frac{4^n}{m+n}\big)$. 
\end{proof}

{\noindent \bf Corollary \ref{coro:unitary}}
\emph{The minimum circuit depth $D_{\textsc{Unitary}}(n,m)$ for an arbitrary $n$-qubit unitary with $m$ ancillary qubits satisfies
\[
\begin{cases}
    D_{\textsc{Unitary}}(n,m) = \Theta\big(\frac{4^n}{m+n}\big), & \text{if } m=O(2^n/n), \\
    D_{\textsc{Unitary}}(n,m) \in \left[ \Omega(n),  O(n2^n)\right], & \text{if }m = \omega(2^n/n).
\end{cases}
\]}

\begin{proof}
The lower bound for the number of CNOT gates is $\Omega(4^n)$ by a similar argument. The only difference is that with $m$ ancillary qubits, the number of single-qubit gates right before the output is at most $n+m$ instead of $n$. We thus have $4k+3(n+m) \ge 4^n-1$. When $m = O\big(2^n/n\big)$, this still gives $k = \Omega(4^n)$. Since each layer can have at most $(m+n)/2$ CNOT gates, we know that it needs at least $\Omega\big(\frac{4^n}{m+n}\big)$ depth for any circuit of $n$ input qubits and $m$ ancillary qubits.
Theorem \ref{thm:lowerbound_QSP} shows a lower bound $\Omega(n)$ for an $n$-qubit QSP circuit, which is a special case of a circuit for an $n$-qubit unitary matrix.
Therefore, we get a depth lower bound of $\Omega\big(\max\big\{n,\frac{4^n}{n+m}\big\}\big)$ for an $n$-qubit unitary matrix. Putting the depth upper bound $O\big(n2^n+\frac{4^n}{n+m}\big)$ and lower bound $\Omega\big(\max\big\{n,\frac{4^n}{n+m}\big\}\big)$ together, we complete the proof.
\end{proof}

\section{Decomposition with Clifford + T gate set}
\label{sec:clifford_decomposition}
\begin{lemma}[\cite{ross2015optimal}]\label{lem:sk}
For $\epsilon>0$, any rotation $R_z(\theta)\in\mathbb{C}^{2\times 2}$ can be $\epsilon$-approximated by a quantum circuit consisting of $O(\log (1/\epsilon))$ many $H$ and $T$ gates, without ancillary qubits.
\end{lemma}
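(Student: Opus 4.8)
The plan is to prove this via the exact-synthesis, number-theoretic route of Ross and Selinger, rather than the generic Solovay--Kitaev iteration; the latter is elementary (group commutators plus net refinement) but would only give $O(\log^{c}(1/\epsilon))$ gates for some $c>1$, missing the claimed linear dependence on $\log(1/\epsilon)$. The starting point is the exact characterization of ancilla-free single-qubit Clifford$+T$ operators due to Kliuchnikov--Maslov--Mosca: a $2\times 2$ unitary is exactly synthesizable over $\{H,T\}$ (up to a global phase) if and only if all its entries lie in the ring $\mathbb{Z}[\omega]$, where $\omega=e^{i\pi/4}$, and moreover its $T$-count equals, up to an additive constant, the least integer $k$ such that $\sqrt{2}^{\,k}U$ has entries in $\mathbb{Z}[i]$. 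Consequently the total number of $\{H,T\}$ gates of such an exactly representable $U$ is $O(k)$, since between consecutive $T$ gates one needs only $O(1)$ Clifford gates and every single-qubit Clifford is a word in $H$ and $S=T^{2}$.

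With this in hand I would reduce the approximation of $R_z(\theta)=\mathrm{diag}(e^{-i\theta/2},e^{i\theta/2})$ to a two-dimensional grid problem over $\mathbb{Z}[\omega]$. Seeking $U$ with entries in $\tfrac{1}{\sqrt{2}^{\,k}}\mathbb{Z}[\omega]$ of the form with diagonal entry $u/\sqrt{2}^{\,k}$ and off-diagonal entry $t/\sqrt{2}^{\,k}$, it suffices that $u\in\mathbb{Z}[\omega]$ lie in a thin arc-shaped region of radius $\le\sqrt{2}^{\,k}$ around angle $-\theta/2$ (so that $u/\sqrt{2}^{\,k}$ is within $O(\epsilon)$ of $e^{-i\theta/2}$), subject to its image $u^{\bullet}$ under the ring automorphism $\sqrt{2}\mapsto-\sqrt{2}$ lying in a fixed bounded disk. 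The carry-out is then: (i) set $k=\lceil c\log(1/\epsilon)\rceil$ for a suitable constant $c$; (ii) enumerate solutions of this grid problem, whose solvability is guaranteed once the product of the areas of the two regions is $\Omega(1)$, which holds at $k\sim\log(1/\epsilon)$; (iii) for each candidate $u$, complete it to a unitary by solving the relative-norm equation $t\,\overline{t}=2^{k}-u\,\overline{u}$ inside $\mathbb{Z}[\sqrt{2}]$ for $t\in\mathbb{Z}[\omega]$ (the unitarity constraint $|u|^{2}+|t|^{2}=2^{k}$); and (iv) feed the resulting exactly representable unitary into the exact-synthesis procedure, obtaining a circuit of $T$-count $O(k)=O(\log(1/\epsilon))$, hence $O(\log(1/\epsilon))$ gates from $\{H,T\}$ in total.

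The hardest part is step (iii): establishing that the Diophantine relative-norm equation $t\,\overline{t}=2^{k}-u\,\overline{u}$ actually admits a solution $t\in\mathbb{Z}[\omega]$ of the required magnitude, and that among the candidates produced in step (ii) at least one solvable instance occurs. This rests on the arithmetic of $\mathbb{Z}[\omega]$---that it is a principal ideal domain with an explicitly known unit group, together with the splitting behaviour of rational primes in it---so that $2^{k}-u\,\overline{u}$ can be factored and recombined as a norm. For the \emph{existence} statement asserted by the lemma it is enough to control this solvability together with the covering estimate in step (ii), whereas the accompanying \emph{efficient} synthesis algorithm additionally requires a number-theoretic input to factor the right-hand side in expected polynomial time. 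The remaining ingredients are routine: the covering estimate follows from the density of admissible grid points at scale $2^{-k}\sim\epsilon$, and the gate-count bookkeeping in step (iv) follows directly from the exact-synthesis $T$-count formula and $S=T^{2}$. This matches the information-theoretic lower bound $\Omega(\log(1/\epsilon))$, obtained by noting that $N$ gates drawn from a fixed finite set produce at most $2^{O(N)}$ distinct operators, which must $\epsilon$-cover the one-parameter family $\{R_z(\theta)\}$.
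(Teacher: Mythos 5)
The paper does not prove this lemma at all---it is imported verbatim as a black-box citation to Ross--Selinger \cite{ross2015optimal}---and your sketch is a faithful reconstruction of exactly that source's argument (KMM exact synthesis over $\mathbb{Z}[\omega]$, the two-dimensional grid problem, the relative-norm equation $t\bar t = 2^k - u\bar u$, and $k=O(\log(1/\epsilon))$), so there is nothing in the paper to diverge from. The only slip is minor: the denominator exponent governing the $T$-count is the least $k$ with $\sqrt{2}^{\,k}U$ having entries in $\mathbb{Z}[\omega]$ (equivalently $\mathbb{Z}[i,1/\sqrt{2}]$ before clearing), not $\mathbb{Z}[i]$, but this does not affect the $O(\log(1/\epsilon))$ conclusion.
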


Based on this lemma, it is not hard to extend our results on the exact implementation of diagonal unitary matrix $\Lambda_n$ to its approximate version (Lemma \ref{lem:diagonal_approx}). This in turn gives approximate realization of UCGs $V_n$ (Lemma \ref{lem:UCG_apprx}), state preparation (Corollary \ref{coro:psi_apprx}), and unitary operation (Corollary \ref{coro:unitary_apprx}).

\begin{lemma}\label{lem:diagonal_approx}
Any $n$-qubit diagonal unitary matrix $\Lambda_n$ can be $\epsilon$-approximated by a quantum circuit of depth $O\left(n+\frac{2^n\log(2^n/\epsilon)}{m+n}\right)$, using the Clifford+T gate set with $m$ ancillary qubits.
\end{lemma}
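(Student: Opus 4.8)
The plan is to take the \emph{exact} circuit for $\Lambda_n$ furnished by Lemma~\ref{lem:DU_with_ancillary} (when $m \ge 2n$, using at most $2^n/n$ of the ancillary qubits) or by Lemma~\ref{lem:DU_without_ancillary} (when $m < 2n$), which together give an exact implementation of depth $O\big(n + \frac{2^n}{n+m}\big)$ and size $O(2^n)$ using only CNOT gates and single-qubit phase rotations $R(\alpha_s)$, and then to discretize it into the gate set $\{CNOT,H,S,T\}$ gate by gate. The CNOT gates already lie in this set and are kept untouched. Each rotation $R(\alpha_s)=\mathrm{diag}(1,e^{i\alpha_s})$ equals $R_z(\alpha_s)$ up to a scalar, so by Lemma~\ref{lem:sk} it can be $\delta$-approximated by an $O(\log(1/\delta))$-gate circuit of $H$ and $T$ gates \emph{without} extra ancillary qubits; the scalars dropped in passing from $R(\alpha_s)$ to $R_z(\alpha_s)$ combine into a single global phase on the whole operation, which is immaterial for state preparation and, if an exact operator is desired, can be corrected by one additional approximated phase gate contributing only $O(\log(1/\epsilon))$ to the depth.

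Next I would control the accumulated error. Let $V$ be the exact circuit and $\tilde V$ the circuit obtained by replacing each of its $N = O(2^n)$ rotation gates by a $\delta$-approximation. Writing $V$ and $\tilde V$ as products of the same sequence of layers and invoking the triangle inequality together with unitary invariance and submultiplicativity of the operator norm, the per-gate errors add telescopically, giving $\|V-\tilde V\|_2 \le N\delta$. Choosing $\delta = \Theta(\epsilon/2^n)$ yields $\|V-\tilde V\|_2 < \epsilon$, and each rotation is then realized by $O(\log(1/\delta)) = O(\log(2^n/\epsilon))$ Clifford+T gates.

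For the depth, the crucial point is that the $\log(2^n/\epsilon)$ blow-up applies only to the rotation layers, not to the CNOT layers. The CNOT gates retain their exact depth $O\big(n + \frac{2^n}{n+m}\big)$, while the single-qubit rotations are confined to the Gray~Path and Gray~Initial stages (and, in the ancilla-free case, to the Gray~Path substages of the $\mathcal{G}_k$), whose total number of rotation \emph{layers} is only $O\big(\frac{2^n}{n+m}\big)$. Expanding each such layer into its depth-$O(\log(2^n/\epsilon))$ discretization while preserving the parallelism across distinct qubits, the overall depth becomes $O\big(n + \frac{2^n}{n+m}\big) + O\big(\frac{2^n}{n+m}\big)\cdot O(\log(2^n/\epsilon)) = O\big(n + \frac{2^n\log(2^n/\epsilon)}{n+m}\big)$, since the bare CNOT term $\frac{2^n}{n+m}$ is absorbed into the rotation term.

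The main obstacle is precisely this depth bookkeeping: a naive bound multiplying the \emph{entire} exact depth by $\log(2^n/\epsilon)$ would produce an inferior $n\log(2^n/\epsilon)$ term, so I must argue that the additive $O(n)$ overhead is purely Clifford and therefore escapes the $\log$ factor. This requires locating exactly where the non-Clifford rotations live in the constructions of Sections~\ref{sec:QSP_withancilla} and~\ref{sec:QSP_withoutancilla} and confirming that the copy, generate, reset, and inverse stages — as well as the recursion overhead in the ancilla-free case — contribute only CNOT layers. A secondary, routine point is checking that the linear error accumulation survives the interleaving of exact CNOT layers with approximated rotation layers, for which the layered triangle-inequality argument above suffices.
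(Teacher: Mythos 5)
Your proposal is correct and follows essentially the same route as the paper's proof: take the exact circuit from Lemmas~\ref{lem:DU_with_ancillary}/\ref{lem:DU_without_ancillary}, replace each of the $O(2^n)$ rotations by an $O(\log(2^n/\epsilon))$-gate Clifford+T approximation via Lemma~\ref{lem:sk} with per-gate error $\epsilon/2^n$, and observe that only the $O\big(\frac{2^n}{n+m}\big)$ rotation layers incur the logarithmic depth blow-up while the CNOT layers contribute the additive $O(n)$ unchanged. Your treatment of the global phase and the telescoping error bound is somewhat more explicit than the paper's, but the argument is the same.
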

\begin{proof}
In Section \ref{sec:QSP_withancilla} and Section \ref{sec:QSP_withoutancilla}, our quantum circuits for $\Lambda_n$ consist of only CNOT gates and $2^n-1$ rotation gates $R(\alpha)$ for $\alpha\in\mathbb{R}$. 
By Lemma \ref{lem:sk}, every $R(\alpha)$ can be $(\epsilon/2^n)$-approximated by $O(\log(2^n/\epsilon))$ $H$ and $T$ gates up to a global phase. The overall accuracy of the circuit can then be seen from a union bound.

If $m\in[2n,2^n]$, the total circuit depth of $\Lambda_n$ is $O(\log m) +O(\log m + \log(2^n/\epsilon))+O(\log m)+O(2^n\log(2^n/\epsilon)/m)=O(\log m + 2^n\log(2^n/\epsilon)/m)=O(n+2^n\log(2^n/\epsilon)/m).$
If $m\le 2n$, the depth of the circuit implementing the diagonal unitary matrix is $O(2^n\log(2^n/\epsilon)/n)$.
Putting these two results together, we complete the proof.
\end{proof}

\begin{corollary}\label{lem:UCG_apprx}
Any $n$-qubit UCG $V_n$ can be $\epsilon$-approximated  by a quantum circuit using the Clifford+T gate set, of depth $O\left(n+\frac{2^n\log(2^n/\epsilon)}{m+n}\right)$ with $m$ ancillary qubits.
\end{corollary}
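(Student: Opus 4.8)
The plan is to leverage the exact decomposition of a UCG into diagonal unitaries and single-qubit Clifford gates, and then invoke the approximate diagonal-unitary implementation from Lemma \ref{lem:diagonal_approx}. Concretely, I would start from the decomposition in Eq. \eqref{eq:UCG}, which writes $V_n = A_1 A_2 A_3 A_4 A_5 A_6$, where $A_1,A_2,A_4,A_6$ are each $n$-qubit diagonal unitary matrices (with $A_2,A_4,A_6$ being $R_z$-block-diagonal, hence diagonal in the computational basis, and $A_1$ a diagonal phase), while $A_3 = \mathbb{I}_{n-1}\otimes(SH)$ and $A_5 = \mathbb{I}_{n-1}\otimes(HS^{\dagger})$ act only on the last qubit. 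The key observation is that $A_3$ and $A_5$ are already built from the gates $S$ and $H$, which lie in the Clifford+T set, so they are realized \emph{exactly} in depth $O(1)$ with no approximation needed.

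For the four diagonal factors, I would apply Lemma \ref{lem:diagonal_approx} to each, requesting accuracy $\epsilon/4$. This produces, for each $i\in\{1,2,4,6\}$, a Clifford+T circuit $\tilde A_i$ with $\|A_i - \tilde A_i\|_2 \le \epsilon/4$ and depth $O\big(n + \frac{2^n\log(2^{n+2}/\epsilon)}{m+n}\big) = O\big(n + \frac{2^n\log(2^n/\epsilon)}{m+n}\big)$, since replacing $\epsilon$ by $\epsilon/4$ only shifts $\log(2^n/\epsilon)$ by an additive constant and is absorbed into the $O(\cdot)$.

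The remaining step is to control the error of the composed circuit. Here I would use the standard subadditivity of approximation error under composition of unitaries: if $\|U_j - \tilde U_j\|_2 \le \delta_j$ for each $j$, then $\|U_1\cdots U_k - \tilde U_1\cdots\tilde U_k\|_2 \le \sum_j \delta_j$, which follows from a telescoping argument together with $\|U\|_2 = 1$ for any unitary $U$. Applying this with the four diagonal factors each at error $\epsilon/4$ and the two exact Clifford factors at error $0$ yields total error at most $4\cdot(\epsilon/4)=\epsilon$, so the resulting circuit $\epsilon$-approximates $V_n$.

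Finally, the depth is the sum of a constant number (six) of factors, of which four have depth $O\big(n + \frac{2^n\log(2^n/\epsilon)}{m+n}\big)$ and two have depth $O(1)$; hence the overall depth is $O\big(n + \frac{2^n\log(2^n/\epsilon)}{m+n}\big)$, as claimed. I do not expect a genuine obstacle here, since the statement is essentially bookkeeping layered on top of Lemma \ref{lem:diagonal_approx}; the only points requiring a little care are the error-composition inequality and checking that the constant rescaling of $\epsilon$ leaves the logarithmic factor unchanged up to constants.
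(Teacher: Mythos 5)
Your proposal is correct and follows essentially the same route as the paper: decompose $V_n$ via Eq.~\eqref{eq:UCG} into diagonal factors plus the exact Clifford factors $A_3,A_5$, approximate each diagonal factor to accuracy $\epsilon/O(1)$ using Lemma~\ref{lem:diagonal_approx}, and combine via subadditivity of the operator-norm error. The only cosmetic difference is that the paper groups the diagonal part into three $\Lambda_n$'s and uses $\epsilon/3$ rather than your four factors at $\epsilon/4$, which does not affect the bound.
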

\begin{proof}
From Eq. \eqref{eq:UCG}, we can see that any $n$-qubit UCG can be decomposed into three $n$-qubit diagonal unitary matrices, two $S$ gates and two $H$ gates. By Lemma \ref{lem:diagonal_approx}, every $\Lambda_n$ can be $(\epsilon/3)$-approximated by a quantum circuit of depth $O\left(n+\frac{2^n\log(3\cdot2^n/\epsilon)}{m+n}\right)$ with $m$ ancillary qubits. Hence, $V_n$ can be $\epsilon$-approximated by a circuit of depth $3\times O\left(n+\frac{2^n\log(3\cdot2^n/\epsilon)}{m+n}\right)+2+2=O\left(n+\frac{2^n\log(2^n/\epsilon)}{m+n}\right)$.
\end{proof}

The approximate implementations of state preparation and general unitary matrix are shown in Corollary \ref{coro:psi_apprx} and Corollary \ref{coro:unitary_apprx}.

\noindent
{\bf Corollary \ref{coro:psi_apprx}}
\emph{
    For any $n$-qubit target state $\ket{\psi_v}$ and $\epsilon>0$, one can prepare a state $\ket{\psi'_v}$ which is $\epsilon$-close to $\ket{\psi_v}$ in $\ell_2$-distance, by a quantum circuit consisting of $\{CNOT,H,S,T\}$ gates of depth 
    {\[\left\{\begin{array}{ll}
        O\left(\frac{2^n\log(2^n/\epsilon)}{m+n}\right) &  \text{if~}m=O(2^n/(n\log n)),\\
         O(n\log n\log(2^n/\epsilon))& \text{if~}m\in [\omega(2^n/(n\log n),o(2^n)],\\
         O(n \log(2^n/\epsilon))& \text{if~}m=\Omega(2^n),\\
    \end{array}\right.\]}
where $m$ is the number of ancillary qubits.
}
\begin{proof}
{According to Theorem \ref{thm:QSP_anci}, any $n$-qubit quantum state $\ket{\psi_v}$ can be prepared by a quantum circuit $QSP$, using single-gates and CNOT gates, of size $c\cdot 2^n$ for some constant $c>0$ and depth
\[d=\left\{\begin{array}{ll}
        O\left(\frac{2^n}{m+n}\right) &  \text{if~}m=O(2^n/(n\log n)),\\
         O(n\log n)& \text{if~}m\in [\omega(2^n/(n\log n),o(2^n)],\\
         O(n)& \text{if~}m=\Omega(2^n),\\
    \end{array}\right.\]
Based on Eq. \eqref{eq:single_qubit_gate} and Lemma \ref{lem:sk}, every single-qubit gate can be $(\epsilon/c2^n)$-approximated by a quantum circuit consisting of $O(\log((2^n)/\epsilon))$ Clifford+T gates. Approximate all single-qubit gates in this way. The depth of the new quantum circuit $QSP'$ consisting of Clifford+T gates is $d\times O(\log(2^n/\epsilon))=O(d\log(2^n/\epsilon))$. And circuit $QSP'$ prepare a quantum state $\ket{\psi'_v}$ satisfying
\[\|\ket{\psi_v}-\ket{\psi_v'}\|_2=\|(QSP-QSP')\ket{0}^{\otimes n}\|_2\le \frac{\epsilon}{c2^n} \times c2^n= \epsilon.\]}
\end{proof}

\noindent{\bf Corollary \ref{coro:unitary_apprx}}
\emph{Any $n$-qubit general unitary matrix can be implemented by a quantum circuit, using the $\{CNOT,H,S,T\}$ gate set, of depth $O\left(n2^n+\frac{4^n\log(4^n/\epsilon)}{m+n}\right)$ with $m$ ancillary qubits.}
\begin{proof}
Lemma \ref{lem:CSD} shows that any $U\in\mathbb{C}^{2^n\times 2^n}$ can be decomposed into $2^n-1$ $n$-qubit UCGs. According to Lemma \ref{lem:UCG_apprx}, an $n$-qubit UCG $V_n$ can be $\epsilon/(2^{n}-1)$-approximated by a quantum circuit $V_n'$ consisting of $\{CNOT,H,S,T\}$ gates in depth $O\left(n+\frac{2^n\log(4^n/\epsilon)}{m+n}\right)$ with $m$ ancillary qubits. Hence, $U$ can be $\epsilon$-approximated by a quantum circuit in depth $O\left(n+\frac{2^n\log(4^n/\epsilon)}{m+n}\right)\times (2^n+1)=O\left(n2^n+\frac{4^n\log(4^n/\epsilon)}{m+n}\right)$.
\end{proof}

\section{Sparse quantum state preparation}
\label{sec:sparse_QSP}
A vector $v=(v_0,v_1,\ldots,v_{2^n-1})\in \mathbb{C}^{2^n}$ is said to be $s$-sparse if there are at most $s$ nonzero elements in $v$. In this section, we consider how to efficiently prepare $s$-sparse states.


\begin{lemma}\label{lem:CNOT_sum}
The unitary transformation defined by
\begin{equation}\label{eq:CNOT_sum}
    \ket{x_1x_2\cdots x_n}\ket{t}\to \ket{x_1x_2\cdots x_n}\ket{\bigoplus_{i=1}^n x_i \oplus t}
\end{equation}
$\forall x_1,\ldots,x_n,t\in\B$, can be implemented in depth $O(\log(n))$.
\end{lemma}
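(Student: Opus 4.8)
The plan is to realize the map in Eq.~\eqref{eq:CNOT_sum} directly as a cascade of $n$ CNOT gates and then invoke the depth-compression result already established in Lemma~\ref{lem:stair_circuit}. First I would observe that, for each $i\in[n]$, a CNOT gate with control qubit $x_i$ and target the last qubit (carrying $t$) sends $\ket{x_1\cdots x_n}\ket{t}$ to $\ket{x_1\cdots x_n}\ket{t\oplus x_i}$, leaving all control qubits unchanged. Since CNOT gates sharing a fixed target commute, composing these $n$ gates in any order yields exactly $\ket{x_1\cdots x_n}\ket{t\oplus\bigoplus_{i=1}^n x_i}$, which is the desired transformation.

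The circuit produced this way consists of precisely $n$ CNOT gates that have a common target qubit and $n$ distinct control qubits $x_1,\ldots,x_n$. This is exactly the setting of Lemma~\ref{lem:stair_circuit}, which guarantees that such a cascade can be compressed, without any ancillary qubit, into an equivalent circuit of depth $O(\log n)$ (and size $O(n)$). Applying that lemma immediately gives the claimed $O(\log n)$ depth bound, completing the argument.

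Because the construction is a direct instantiation of an already-proved compression lemma, I do not expect any genuine obstacle; the only point worth checking is that the common-target CNOT cascade really implements the parity map while restoring all control bits, which is immediate from the commutativity and self-inverse behaviour of CNOT on a fixed control–target pair. For completeness I might recall the mechanism behind Lemma~\ref{lem:stair_circuit}: one accumulates the parity of $x_1,\ldots,x_n$ through a balanced binary XOR tree in $O(\log n)$ layers, XORs the accumulated value into the target, and then runs the tree in reverse to restore $x_1,\ldots,x_n$, so the entire procedure fits in $O(\log n)$ depth and uses no ancilla.
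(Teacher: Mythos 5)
Your proposal is correct and matches the paper's construction: the paper proves this lemma by exhibiting exactly the balanced in-place XOR tree (Figure \ref{fig:CNOT_sum}) that you describe in your final paragraph, which is the same mechanism underlying Lemma \ref{lem:stair_circuit}. Reducing to Lemma \ref{lem:stair_circuit} via the $n$-CNOT common-target cascade is a clean and fully valid way to reach the same $O(\log n)$ depth bound.
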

\begin{proof}
The circuit implementation of Eq. \eqref{eq:CNOT_sum} is shown in Figure \ref{fig:CNOT_sum}.
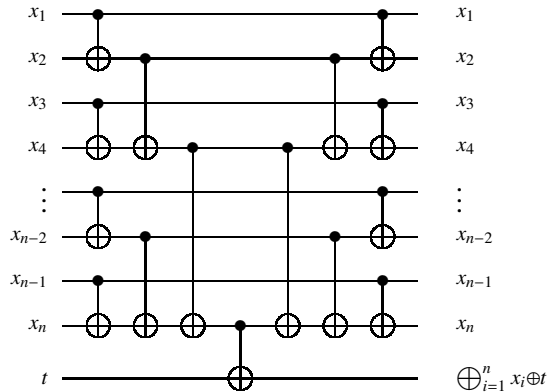
\begin{figure}[!hbt]
    \centerline{
    \Qcircuit @C=0.8em @R=1em {
    \lstick{\scriptstyle x_1}&\ctrl{1} & \qw & \qw & \qw &\qw &\qw & \ctrl{1} & \qw & \rstick{\scriptstyle x_1}\\
   \lstick{\scriptstyle x_2} &\targ & \ctrl{2} &\qw & \qw &\qw &\ctrl{2} & \targ &\qw &  \rstick{\scriptstyle x_2}\\
    \lstick{\scriptstyle x_3}&\ctrl{1} & \qw &\qw & \qw &\qw &\qw &\ctrl{1} &\qw &  \rstick{\scriptstyle x_3}\\
    \lstick{\scriptstyle x_4}&\targ & \targ & \ctrl{4} &\qw &\ctrl{4}& \targ &\targ &\qw &  \rstick{\scriptstyle x_4}\\
    \lstick{\scriptstyle \vdots}&\ctrl{1} & \qw & \qw & \qw &\qw &\qw &\ctrl{1} &\qw &  \rstick{\scriptstyle \vdots}\\
    \lstick{\scriptstyle x_{n-2}}&\targ & \ctrl{2} &\qw & \qw &\qw &\ctrl{2} &\targ &\qw &  \rstick{\scriptstyle x_{n-2}}\\
    \lstick{\scriptstyle x_{n-1}}&\ctrl{1} & \qw &\qw & \qw &\qw &\qw &\ctrl{1} &\qw & \rstick{\scriptstyle x_{n-1}}\\
    \lstick{\scriptstyle x_{n}}&\targ & \targ &\targ &\ctrl{1} &\targ & \targ &\targ &\qw &\rstick{\scriptstyle x_{n}}\\
    \lstick{\scriptstyle t}&\qw & \qw &\qw &\targ &\qw & \qw &\qw &\qw & \rstick{\scriptstyle \bigoplus_{i=1}^n x_i \oplus t}\\
    }
    }
    \caption{An $O(\log(n))$-depth circuit implementation of Eq. \eqref{eq:CNOT_sum}. }
    \label{fig:CNOT_sum}
\end{figure}
\end{proof}

\begin{lemma}\label{lem:perm_sparse}
Suppose that we are given two sets $S_1\subseteq\B^{n_1}$ and $S_2\subseteq\{0,1\}^{n_2}$, both of size $s$, and also given a bijection $P:S_1\to S_2$. Then a unitary transformation satisfying
\begin{equation}\label{eq:perm_sparse}
\ket{x}\ket{0^{n_2}}\to \ket{x}\ket{P(x)}, \forall x\in S_1
\end{equation}
can be implemented by a quantum circuit of depth $O\big(n_2\log(m)+\frac{(n_1+\log(m))sn_1n_2}{m}\big)$, using $m~(\ge 2n_1)$ ancillary qubits.
\end{lemma}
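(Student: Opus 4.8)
The plan is to implement the bijection $P:S_1\to S_2$ by a sequence of controlled operations, one for each of the $s$ elements in the domain, and to parallelize them using the ancillary qubits. First I would treat each pair $(x, P(x))$ with $x\in S_1$ separately: conditioned on the input register holding the specific string $x$, we want to write $P(x)\in\{0,1\}^{n_2}$ into the (initially zero) output register. Detecting whether the input register equals a fixed $x$ is an $n_1$-qubit Toffoli-type test (after flipping the zero-coordinates of $x$ by $X$ gates), which by Lemma \ref{lem:tof} costs depth and size $O(n_1)$ to compute a single-bit indicator; conditioned on that indicator we apply CNOT/$X$ gates to set the $\le n_2$ one-bits of $P(x)$, then uncompute the indicator. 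Done serially this gives depth $O(s(n_1+n_2))$, far too large, so the whole point is to spread the $s$ conditional writes across the $m$ ancillary qubits.

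The key parallelization step is to make many copies of the input register $\ket{x}$ so that the $s$ equality tests can be performed simultaneously. With $m$ ancillary qubits I can afford roughly $m/n_1$ copies of the $n_1$-bit input (via Lemma \ref{lem:copy1}, in depth $O(\log m)$), so I can handle about $m/(n_1+\log m)$ of the conditional writes in one batch, and there will be $O\big(s(n_1+\log m)/m\big)$ batches. Within each batch the equality-tests and conditional writes act on disjoint qubits and hence run in parallel; each such test-and-write costs depth $O(n_1)$ for the Toffoli plus $O(\log m)$ for the copy/uncopy bookkeeping. The $n_2$ output bits of each $P(x)$ must be written onto the single shared $n_2$-qubit output register, and since multiple conditional writes in a batch may target the same output bit, I must serialize over the $n_2$ output positions (or use Lemma \ref{lem:CNOT_sum} / Lemma \ref{lem:stair_circuit} to combine the contributions to a fixed output bit in depth $O(\log m)$ rather than linearly). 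This accounts for the $n_2\log m$ term and for the $n_2$ factor multiplying the batch count, giving the total depth
\[
O\Big(n_2\log m + \tfrac{(n_1+\log m)\,s\,n_1 n_2}{m}\Big).
\]

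Concretely the order of operations I would carry out is: (1) copy the input register $\lfloor m/(2n_1)\rfloor$ times into the ancillary space using Lemma \ref{lem:copy1}; (2) partition $S_1$ into batches of size $\Theta(m/(n_1+\log m))$; (3) for each batch, compute all equality indicators in parallel with $n_1$-qubit Toffoli gates (Lemma \ref{lem:tof}); (4) using these indicators as controls, write the output strings $P(x)$ into the output register, looping over the $n_2$ output coordinates and compressing the multiple controls feeding each coordinate via the logarithmic-depth CNOT-combining of Lemma \ref{lem:stair_circuit} together with Lemma \ref{lem:CNOT_sum}; (5) uncompute the indicators and, after all batches, uncopy the input register. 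The correctness is immediate from the fact that exactly one indicator fires for each actual input $x\in S_1$, so exactly the bits of $P(x)$ get written.

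The main obstacle I anticipate is the \emph{write-collision} bookkeeping in step (4): because several elements processed in the same batch may map to output strings that share one-bits, the conditional contributions to a single output qubit cannot be applied by independent CNOTs in depth $1$, and naive serialization would reintroduce a factor of $s$ or $n_2$ in depth. Handling this cleanly is exactly where Lemma \ref{lem:stair_circuit} (exponential compression of cascaded CNOTs onto a common target) and the evenness of the copy distribution must be invoked, and I would need to verify that the number of distinct controls targeting any fixed output bit within a batch is bounded so that the compression yields only an $O(\log m)$ depth overhead per output coordinate. A secondary technical point is ensuring the copies are allocated so that no ancillary qubit is simultaneously needed as a control for two different conditional writes in the same layer; this is the same even-distribution requirement that underlies the Gray-code partition of Lemma \ref{lem:2D-array}, and I expect to reuse that accounting here.
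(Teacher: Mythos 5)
Your proposal is correct and follows essentially the same route as the paper's proof: make $\lfloor m/(2n_1)\rfloor$ copies of the input register, batch the $s$ equality tests as parallel $n_1$-qubit Toffoli indicators, fold the indicator contributions (of which at most one fires) into each output bit with a logarithmic-depth CNOT tree, and uncompute, looping over the $n_2$ output coordinates. The write-collision issue you flag is resolved exactly as you suggest via Lemma \ref{lem:CNOT_sum}, and no Gray-code-style balancing is needed since each batch element owns its own disjoint copy of the input.
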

\begin{proof}
Define an $(n_1+1)$-qubit unitary ${\sf Tof}^x_a\ket{z}\ket{b}:=\ket{z}\ket{a\cdot \delta_{xz}\oplus b }$ for any $x,z\in\B^{n_1}$ and $a,b\in \B$, where $\delta_{xz}=1$ if $x=z$ and $\delta_{xz}=0$ if $x\neq z$. Unitary ${\sf Tof}^x_a$ can be implemented by an $n_1$-qubit Toffoli gate and at most $2n_1$ Pauli $X$ gates. According to Lemma \ref{lem:tof}, ${\sf Tof}^x_a$ can be implemented in depth $O(n_1)$.

Let $\{x(1),x(2),\ldots, x(s)\}$ be the elements of $S_1$, and $P(x(i))=y(i)$.  Denote $y(i)=y_1(i)y_2(i)\cdots y_{n_2}(i)$, and we will compute $y(i)$ bit by bit.

We start by computing $y_1(i)$, the first bit of $y(i)$, and then all other bits can be computed similarly. More precisely, we aim to implement the following unitary transformation $U_1$:
\[\ket{x(i)}\ket{0}\xrightarrow{U_1} \ket{x(i)}\ket{y_1(i)},\forall i\in[s].\]
Unitary $U_1$ can be implemented in 3 steps.
\begin{itemize}
    \item Step 1: make $\frac{m}{2n_1}$ copies of $\ket{x(i)}$ using $\frac{m}{2}$ ancillary qubits, i.e., implement the following unitary transformation
    \[\ket{x(i)}\ket{0}\ket{0^{m/2}}\to \ket{x(i)}\ket{0}\ket{x(i)x(i)\cdots x(i)}, \forall i \in[s]. \]
    This step can be implemented in depth $O(\log(m))$ by Lemma \ref{lem:copy1}.
    
    \item Step 2: implement the following unitary transformation using $\frac{m}{2n_1}$ ancillary qubits
    \begin{align*}
        & \ket{x(i)}\ket{0}\ket{x(i)x(i)\cdots x(i)}\ket{0^{\frac{m}{2n_1}}}
    \to \ket{x(i)}\ket{y_1(i)}\ket{x(i)x(i)\cdots x(i)}\ket{0^{\frac{m}{2n_1}}},\forall i \in[s].
    \end{align*}
    We divide set $S_1$ into $\frac{s}{m/2n_1}=\frac{2sn_1}{m}$ parts $S_1^{(1)},\ldots,S_1^{(\frac{2sn_1}{m})}$. The size of each part is $\frac{m}{2n_1}$. For the first part $S_1^{(1)}:=\{x(i): i\in[\frac{m}{2n_1}]\}$, we implement the following unitary transformation:
    \begin{equation}\label{eq:part1}
       \ket{x(i)}\ket{0}\ket{x(i)x(i)\cdots x(i)}\ket{0^{\frac{m}{2n_1}}}
    \to \left\{\begin{array}{ll}
         \ket{x(i)}\ket{y_1(i)}\ket{x(i)\cdots x(i)}\ket{0^{\frac{m}{2n_1}}},    & \text{if~}x(i) \in S_1^{(1)} \\
          \ket{x(i)}\ket{0}\ket{x(i)\cdots x(i)}\ket{0^{\frac{m}{2n_1}}},   & \text{otherwise.}
        \end{array}\right.
    \end{equation}
    Namely, we compute $y_1(i)$ for those $x(i)\in S_1^{(1)}$ (and keep other $x(i)$ untouched). This can be achieved by the following three sub-steps.
    In step 2.1, we apply unitaries ${\sf Tof}^{x(1)}_{y_1(1)}, { \sf Tof}^{x(2)}_{y_1(2)},\ldots,  { \sf Tof}^{x(\frac{m}{2n_1})}_{y_1(\frac{m}{2n_1})}$. For each ${ \sf Tof}^{x(i)}_{y_1(i)}$,  the control qubits are the $i$-th copy of $x(i)$ and the target qubit is the $i$-th qubit of ancillary qubits (those in $\ket{0^{\frac{m}{2n_1}}}$). Therefore, they can be realized in parallel by a circuit of depth $O(n_1)$. The effect of this step 2.1 is
    \begin{align*}
        & \ket{x(i)}\ket{0}\ket{x(i)\cdots x(i)}\ket{0^{\frac{m}{2n_1}}} \to \ket{x(i)}\ket{0}\ket{x(i)\cdots x(i)}\ket{0^{i-1}y_1(i)0^{\frac{m}{2n_1}-i}},\quad\forall x(i) \in S_1^{(1)}.
    \end{align*}
    In step 2.2, we implement the following unitary transformation
    \begin{align*}
        & \ket{x(i)}\ket{0}\ket{x(i)x(i)\cdots x(i)}\ket{0^{i-1}y_1(i)0^{\frac{m}{2n_1}-i}} \to \ket{x(i)}\ket{y_1(i)}\ket{x(i)x(i)\cdots x(i)}\ket{0^{i-1}y_1(i)0^{\frac{m}{2n_1}-i}},
    &\forall x(i) \in S_1^{(1)}.
    \end{align*}
    in depth $O(\log(m/(2n_1)))$ according to Lemma \ref{lem:CNOT_sum}.
    In step 2.3, we restore the ancillary qubits by an inverse circuit of step 2.1 of depth $O(n_1)$, i.e., we realize the following unitary transformation:
    \begin{align*}
        & \ket{x(i)}\ket{y_1(i)}\ket{x(i)x(i)\cdots x(i)}\ket{0^{i-1}y_1(i)0^{\frac{m}{2n_1}-i}}\to \ket{x(i)}\ket{y_1(i)}\ket{x(i)x(i)\cdots x(i)}\ket{0^{\frac{m}{2n_1}}}, \forall x(i) \in S_1^{(1)}.
    \end{align*}
    We can verify that step 2.1, 2.2 and 2.3 realize unitary in Eq. \eqref{eq:part1} by circuit of depth $O(n_1+\log(m/n_1))=O(n_1+\log(m))$. By the same discussion, for every $j\in[\frac{2sn_1}{m}]$,
        \begin{equation}\label{eq:partj}
         \ket{x(i)}\ket{0}\ket{x(i)x(i)\cdots x(i)}\ket{0^{\frac{m}{2n_1}}}
    \to \ket{x(i)}\ket{y_1(i)}\ket{x(i)x(i)\cdots x(i)}\ket{0^{\frac{m}{2n_1}}},\forall x(i) \in S_1^{(j)}.
    \end{equation}
    can be implemented in depth $O(n_1+\log(m))$.
    In summary, step 2 can be implemented in depth $O\big(n_1+\log(m)\big)\cdot\frac{2sn_1}{m}=O\big(\frac{(n_1+\log(m))sn_1}{m}\big)$.
    
    \item Step 3: restore the ancillary qubits by an inverse circuit of step 1, of depth $O(\log(m))$.
\end{itemize}
In summary, unitary $U_1$ can be implement in depth $2\cdot O(\log(m))+O\big(\frac{(n_1+\log(m))sn_1}{m}\big)=O\big(\log(m)+\frac{(n_1+\log(m))sn_1}{m}\big)$.
By similar discussion of $U_1$, for all $j\in[n_2]$, the following unitary $U_j$
\[\ket{x(i)}\ket{0}\xrightarrow{U_j} \ket{x(i)}\ket{y_j(i)},\forall i\in[s].\]
can be also implemented in depth $O\big(\log(m)+\frac{(n_1+\log(m))sn_1}{m}\big)$. 
By applying $U_1,U_2,\ldots, U_{n_2}$, we compute all $n_2$ bits of $y(i)$. 
This implements unitary in Eq. \eqref{eq:perm_sparse} and the total depth is $n_2\cdot O\big(\log(m)+\frac{(n_1+\log(m))sn_1}{m}\big)=O\big(n_2\log(m)+\frac{(n_1+\log(m))sn_1n_2}{m}\big)$.
\end{proof}

\begin{lemma}[\cite{malvetti2021quantum}]\label{lem:sparse_QSP_without_ancilla}
Any $n$-qubit $s$-sparse quantum state can be prepared by a quantum circuit of size $O(ns)$, using no ancillary qubits.
\end{lemma}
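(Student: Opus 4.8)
The plan is to prove the statement by constructing the preparation circuit in reverse: instead of creating the $s$-sparse state $\ket{\psi_v}=\sum_{k\in S}v_k\ket{k}$ (where $S\subseteq\Bn$, $|S|\le s$) from $\ket{0^n}$, I will build an $O(ns)$-size circuit that \emph{disentangles} $\ket{\psi_v}$ into the single basis state $\ket{0^n}$; reversing the order of the gates and replacing each by its inverse then prepares $\ket{\psi_v}$ with the same size. The heart of the construction is a \emph{merge} subroutine: given any state supported on a set $T$ of computational-basis strings with $|T|=t\ge 2$, it reduces the support to size $t-1$ using only $O(n)$ single-qubit and CNOT gates. Applying the merge $t-1$ times collapses the support to a single string, which at most $n$ Pauli-$X$ gates then send to $\ket{0^n}$. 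The overall count is $(s-1)\cdot O(n)+O(n)=O(ns)$.

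I would realize one merge as follows. Choose two strings $x,y\in T$, let $D=\{i:x_i\ne y_i\}$ be their set of differing coordinates, and fix a target index $d\in D$. For each $j\in D\setminus\{d\}$ apply a CNOT with control $d$ and target $j$, choosing its polarity so that $x$ and $y$ become equal on coordinate $j$; this uses $|D|-1\le n-1$ CNOTs. Since a CNOT merely permutes the computational basis, the state is now supported on a new set $T'$ on which the images $x',y'$ of $x,y$ agree on every coordinate except $d$, i.e.\ in the qubit-$d$ factor the pair reads $\big(v_x\ket{x'_d}+v_y\ket{y'_d}\big)\otimes\ket{x'|_{-d}}$. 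The key structural fact is that no third string $z\in T'$ can satisfy $z|_{-d}=x'|_{-d}$, for otherwise $z\in\{x',y'\}$. Hence a single-qubit unitary $U$ applied to qubit $d$ and controlled on the remaining $n-1$ qubits being fixed to the values $x'|_{-d}$ acts nontrivially only on the pair $\{x',y'\}$; taking $U$ to rotate the normalized vector $(v_x,v_y)$ to a basis state zeroes the $y'$-amplitude and shrinks the support by exactly one.

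For the gate count, the alignment costs $O(n)$ CNOTs, and the controlled rotation is an $(n-1)$-fold controlled single-qubit gate, which by the ABC-decomposition $U=e^{i\alpha}AXBXC$ reduces to a constant number of multiply-controlled-$X$ gates plus single-qubit gates; invoking Lemma~\ref{lem:tof}, each such multiply-controlled-$X$ has size $O(n)$ without ancillary qubits, so the whole merge is $O(n)$ and the construction is $O(ns)$. The main point needing care, and the one I expect to be the real obstacle, is precisely this ancilla-free realization of a multiply-controlled \emph{arbitrary} single-qubit rotation in size $O(n)$: Lemma~\ref{lem:tof} supplies it for the $X$ gate, and I would verify that the ABC-decomposition transfers the bound to general $U$ without introducing ancillas. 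The remaining bookkeeping — that the CNOT alignment only permutes amplitudes (so $U$ is well-defined from $v_x,v_y$), that each merge touches no other string of $T'$, and that inverting the full disentangling circuit prepares $\ket{\psi_v}$ exactly — is routine and I would check it last.
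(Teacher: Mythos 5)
The paper does not prove this lemma at all; it is imported verbatim from \cite{malvetti2021quantum}, so there is no in-paper proof to compare against. Your reconstruction is the standard support-reduction ("merge") argument and is essentially the one used in that reference: align two support strings by CNOTs so they differ in one coordinate, observe that no third support string can share the resulting $(n-1)$-bit pattern, and collapse the pair with a multiply-controlled single-qubit rotation; $s-1$ merges at $O(n)$ gates each give $O(ns)$. The one obstacle you flag --- an ancilla-free $O(n)$-size multiply-controlled \emph{arbitrary} single-qubit gate --- is not actually needed: the merging unitary can always be chosen in $SU(2)$, since $U=\frac{1}{\|v\|}\bigl(\begin{smallmatrix}\bar v_x & \bar v_y\\ -v_y & v_x\end{smallmatrix}\bigr)$ has determinant $1$ and sends $v_x\ket{0}+v_y\ket{1}$ to $\|v\|\,\ket{0}$. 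For $U\in SU(2)$ the ABC-decomposition has no residual controlled global phase, so the controlled-$U$ is exactly two multiply-controlled-$X$ gates (each $O(n)$ by Lemma~\ref{lem:tof}) plus three single-qubit gates and $O(n)$ polarity-flipping $X$ gates; the only price is that the disentangling circuit maps $\ket{\psi_v}$ to $\ket{0^n}$ up to a global phase, which is irrelevant (or fixable by one extra phase gate) after inversion. With that observation your argument is complete and correct.
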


\begin{theorem}
For any $m\ge 0$, any $n$-qubit $s$-sparse quantum state can be prepared by a quantum circuit of depth $O(n\log(sn)+\frac{s\log(s)n^2}{n+m})$, using $m$ ancillary qubits.
\end{theorem}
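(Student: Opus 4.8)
The plan is to reduce $s$-sparse preparation to a \emph{dense} preparation on $\lceil\log s\rceil$ qubits followed by two applications of the sparse index-mapping unitary of Lemma~\ref{lem:perm_sparse}, using Lemma~\ref{lem:sparse_QSP_without_ancilla} as the base case when few ancillas are available. Write $S=\{k_1,\dots,k_s\}\subseteq\{0,1\}^n$ for the support of $v$ (the indices with $v_k\ne0$), fix $q=\lceil\log s\rceil\le n$, and identify the compressed labels with $\{0,1,\dots,s-1\}\subseteq\{0,1\}^q$ through the bijection $j\mapsto k_j$; a classical preprocessing step determines $S$ and this bijection.

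First I would dispose of the small-ancilla regime $m<3n$. Here I invoke Lemma~\ref{lem:sparse_QSP_without_ancilla} to obtain an ancilla-free circuit of size $O(ns)$, hence of depth $O(ns)$. Since $\log s\ge1$ for $s\ge2$ (the case $s=1$ being a single basis state, preparable in depth $O(1)$) and $n+m<4n$, one checks $O(ns)\le O\!\big(s\log(s)\,n\big)\le O\!\big(\tfrac{s\log(s)\,n^2}{n+m}\big)$, so the target bound holds. For the remainder assume $m\ge3n$, and set $m'=\min\{m,\,sn^2\}\ge 3n$, using only $m'$ of the available ancillas.

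In this main regime I would use a compression register of $q$ ancillas, the $n$ output qubits as the target register, and the remaining $m'-q\ge2n$ ancillas as workspace, proceeding in three stages. \textbf{(i) Compress:} prepare the dense $q$-qubit state $\sum_{j=0}^{s-1}v_{k_j}\ket{j}$ on the compression register by Theorem~\ref{thm:QSP_anci} (with the workspace as ancillas), in depth $O\!\big(\log s+\tfrac{s}{m'}\big)$. \textbf{(ii) Expand:} apply Lemma~\ref{lem:perm_sparse} with $n_1=q$, $n_2=n$ and bijection $j\mapsto k_j$ to realize $\ket{j}\ket{0^n}\to\ket{j}\ket{k_j}$, producing $\sum_j v_{k_j}\ket{j}\ket{k_j}$, in depth $O\!\big(n\log m'+\tfrac{(\log s+\log m')\,s\log s\,n}{m'}\big)$. \textbf{(iii) Uncompute:} apply the inverse of Lemma~\ref{lem:perm_sparse} with $n_1=n$, $n_2=q$ and bijection $k_j\mapsto j$ to erase the label, $\ket{j}\ket{k_j}\to\ket{0^q}\ket{k_j}$, in depth $O\!\big(\log s\,\log m'+\tfrac{(n+\log m')\,s\,n\log s}{m'}\big)$. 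The compression register returns to $\ket{0^q}$ and the target register holds $\sum_j v_{k_j}\ket{k_j}=\ket{\psi_v}$. Tracking the state through (i)--(iii) is routine given the effects stated for Lemma~\ref{lem:perm_sparse} and Theorem~\ref{thm:QSP_anci}.

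The remaining work, and the step I expect to be most delicate, is the depth accounting — specifically controlling the $\log m$ factors and the ancilla budget so that all three stages fit under $O\!\big(n\log(sn)+\tfrac{s\log(s)\,n^2}{n+m}\big)$. The cap $m'=\min\{m,sn^2\}$ is the key device: it forces $\log m'=O(\log(sn))$, while the dominant inverse-$m$ term (from stage (iii)) satisfies $\tfrac{s\log s\,n^2}{m'}\le O\!\big(\tfrac{s\log s\,n^2}{n+m}\big)$ when $m\le sn^2$ directly (using $m\ge3n$, so $n+m\le\tfrac43 m$), and collapses to $O(\log s)=O(n\log(sn))$ when $m>sn^2$; the inverse-$m$ term of stage (ii) is smaller by a factor $O(\log(sn)/n)=O(1)$ and is thus absorbed. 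With $\log m'=O(\log(sn))$, $q=O(\log(sn))$ and $\log s\le n$, stage (i) is $O(n\log(sn))$, and the $\log m'$-parts of stages (ii)--(iii) are $O(n\log(sn))$ and $O(\log s\cdot\log(sn))=O(n\log(sn))$ respectively. Summing the three stages then yields the claimed bound $O\!\big(n\log(sn)+\tfrac{s\log(s)\,n^2}{n+m}\big)$.
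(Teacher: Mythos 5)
Your proposal is correct and follows essentially the same route as the paper's proof: reduce to dense QSP on $\lceil\log s\rceil$ qubits, then apply Lemma~\ref{lem:perm_sparse} once to compute the sparse index $P(x)$ and once (inverted, with the roles of $S_1$ and $S_2$ swapped) to uncompute the compressed label, with Lemma~\ref{lem:sparse_QSP_without_ancilla} handling $m<3n$. The only differences are cosmetic — you cap the ancilla budget at $\min\{m,sn^2\}$ where the paper caps at $\frac{s\log(s)n}{\log s+\log n}$, and you make the $m<3n$ bookkeeping explicit — and both choices yield the stated depth.
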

\begin{proof}
For simplicity, we assume $\log (s)$ is an integer and $n'=\log(s)$. Define $S_1 = \B^{n'}$, and $S_2$ to be all $s$-sparse strings in $\B^n$. Let $P$ be any bijection from $S_1$ to $S_2$. Any $n$-qubit $s$-sparse quantum state can be represented as $\ket{\psi}=\sum\limits_{x\in\{0,1\}^{n'}}v_x\ket{P(x)}$.
\begin{itemize}
\item {\bf Case 1: $m\ge 3n$.}
First, we prepare an $n'$-qubit quantum state \[\ket{\psi'}=\sum\limits_{x\in\{0,1\}^{n'}}v_x\ket{x},\]
which can be implemented in depth $O((n')^2+\frac{2^{n'}}{n'+m})=O(\log^2(s)+\frac{s}{\log(s)+m})$ using $m$ ancillary qubtis by Lemma \ref{lem:partial_result}.

Second, we implement the following unitary transformations and then we complete preparing the state $\ket{\psi}$.
\begin{align}
    \ket{x0^{n-n'}}\ket{0^n}\to &\ket{x0^{n-n'}} \ket{P(x)}  \label{eq:perm_base1}\\
\to & \ket{0^n} \ket{P(x)} \label{eq:perm_base2}\\
\to & \ket{P(x)}\ket{0^n},\forall x\in\B^{n'}. \label{eq:perm_base3}
\end{align}
Based on Lemma \ref{lem:perm_sparse}, using $m$ ancillary qubits, Eq. \eqref{eq:perm_base1} can be implemented in depth $O(n\log(m)+\frac{(\log(s)+\log(m))s\log(s) n}{m})$. Eq. \eqref{eq:perm_base2}  can be viewed as a similar process by Lemma \ref{lem:perm_sparse}, though we need to swap $S_1$ and $S_2$ and reverse the direction of $P$. This can be implemented in depth $O(\log(s)\log(m)+\frac{(n+\log(m))s\log(s) n}{m})$, respectively. Eq. \eqref{eq:perm_base3} can be implemented by $n$ swap gates in depth 1. Therefore, if $m\le \frac{s\log(s)n}{\log(s)+\log(n)}$, the total depth is 
$O(n\log(m)+\frac{(\log(s)+\log(m))s\log(s) n}{m})+O(\log(s)\log(m)+\frac{(n+\log(m))s\log(s) n}{m})+1=O(n\log(sn)+\frac{s\log(s)n^2}{m})$. If $m> \frac{s\log(s)n}{\log(s)+\log(n)}$, we use only $\frac{s\log(s)n}{\log(s)+\log(n)}$ ancillary qubits and the total depth is $O(n\log(sn))$. Combining these two cases, the total depth is $O(n\log(sn)+\frac{s\log(s)n^2}{m})$.


    \item {\bf Case 2: $m < 3n$.} If $m<3n$, we do not use ancillary qubits. According to Lemma \ref{lem:sparse_QSP_without_ancilla}, $\ket{\psi}$ can be prepared in depth $O(ns)$.
\end{itemize}

Combining the above two cases, the circuit depth for $\ket{\psi}$ is  $O(n\log(sn)+\frac{s\log(s)n^2}{n+m})$.
\end{proof}

\end{document}